\def\llncs{0}
\def\fullpage{1}
\def\anonymous{0}
\def\authnote{1}
\def\notxfont{0}
\def\submission{0}
\def\cameraready{0}
\def\arxiv{0}
\def\anonymous{1}
\def\llncs{1}
\def\llncs{1}
\def\authnote{0}
\def\anonymous{1}
\def\llncs{1}
\def\authnote{0}
\definecolor{darkblue}{rgb}{0,0,0.6}
\definecolor{darkgreen}{rgb}{0,0.5,0}
\definecolor{maroon}{rgb}{0.5,0.1,0.1}
\definecolor{dpurple}{rgb}{0.2,0,0.65}
\newtheoremstyle{thicktheorem}%
{\topsep}
{\topsep}
{\itshape}{}%
{\bfseries}%
{.}
{ }%
{\thmname{#1}\thmnumber{ #2}%
		\thmnote{ (#3)}%
}
\newtheoremstyle{remark}%name
{\topsep}
{\topsep}
	{}%body font
	{}%indent amount
	{}%theorem head font
	{.}%punctuation after theorem head
	{ }%space after theorem head
	{\textit{\thmname{#1}}\thmnumber{ #2}%theorem head specs
			\thmnote{ (#3)}%
	}
	\theoremstyle{thicktheorem}
	\newtheorem{theorem}{Theorem}[section]
	\newtheorem{lemma}[theorem]{Lemma}
	\newtheorem{proposition}[theorem]{Proposition}
	\newtheorem{definition}[theorem]{Definition}
	\newtheorem{game}[theorem]{Game}
	\theoremstyle{remark}
	\newtheorem{remark}[theorem]{Remark}
	\crefname{theorem}{Theorem}{Theorems}
	\crefname{assumption}{Assumption}{Assumptions}
	\crefname{construction}{Construction}{Constructions}
	\crefname{corollary}{Corollary}{Corollaries}
	\crefname{conjecture}{Conjecture}{Conjectures}
	\crefname{definition}{Definition}{Definitions}
	\crefname{exmaple}{Example}{Examples}
	\crefname{experiment}{Experiment}{Experiments}
	\crefname{counterexample}{Counterexample}{Counterexamples}
	\crefname{lemma}{Lemma}{Lemmata}
	\crefname{observation}{Observation}{Observations}
	\crefname{proposition}{Proposition}{Propositions}
	\crefname{remark}{Remark}{Remarks}
	\crefname{claim}{Claim}{Claims}
	\crefname{fact}{Fact}{Facts}
	\crefname{note}{Note}{Notes}
 \crefname{appendix}{App.}{Appendices}
 \crefname{section}{Sec.}{Sections}
\renewcommand*{\backref}[1]{}
	\renewcommand*{\backref}[1]{(Cited on page~#1.)}
\newcommand{\mor}[1]{}
\newcommand{\taiga}[1]{}
\newcommand{\ryo}[1]{}
\newcommand{\takashi}[1]{}
\newcommand{\mor}[1]{$\ll$\textsf{\color{red} Tomoyuki: { #1}}$\gg$}
\newcommand{\taiga}[1]{$\ll$\textsf{\color{magenta} Taiga: { #1}}$\gg$}
\newcommand{\takashi}[1]{$\ll$\textsf{\color{orange} Takashi: { #1}}$\gg$}
\newcommand{\ryo}[1]{$\ll$\textsf{\color{darkgreen} Ryo: { #1}}$\gg$}
\newcommand{\ot}{\mathsf{ot}}
\newcommand{\Open}{\mathsf{Open}}
\newcommand{\ONE}{\mathsf{ONE}}
\newcommand{\one}{\mathsf{one}}
\newcommand{\Modify}{\mathsf{Modify}}
\newcommand{\nad}{\mathsf{nad}}
\newcommand{\NAD}{\mathsf{NAD}}
\newcommand{\state}{\mathsf{st}}
\newcommand{\Garble}{\algo{Grbl}}
\newcommand{\GC}{\mathsf{GC}}
\newcommand{\gc}{\mathsf{gc}}
\newcommand{\Delete}{\algo{Del}}
\newcommand{\cert}{\keys{cert}}
\newcommand{\NCE}{\mathsf{NCE}}
\newcommand{\nce}{\mathsf{nce}}
\newcommand{\Fake}{\algo{Fake}}
\newcommand{\Reveal}{\algo{Reveal}}
\newcommand{\Sim}{\algo{Sim}}
\newcommand{\Samp}{\algo{Samp}}
\newcommand{\lrun}{\leftarrow}
\newcommand{\la}{\leftarrow}
\newcommand{\ra}{\rightarrow}
\newcommand{\seteq}{\coloneqq}
\newcommand{\cA}{\mathcal{A}}
\newcommand{\cB}{\mathcal{B}}
\newcommand{\cD}{\mathcal{D}}
\newcommand{\cM}{\mathcal{M}}
\newcommand{\cQ}{\mathcal{Q}}
\def\makeuppercase#1{
\expandafter\newcommand\csname tl#1\endcsname{\widetilde{#1}}
}
\def\makelowercase#1{
\expandafter\newcommand\csname tl#1\endcsname{\widetilde{#1}}
}
\newcommand{\N}{\mathbb{N}}
\newcommand{\R}{\mathbb{R}}
\newcommand{\Ms}{\mathcal{M}}
\newcommand{\Cs}{\mathcal{C}}
\newcommand{\Ks}{\mathcal{K}}
\newcommand{\secp}{\lambda}
\newcommand{\crs}{\mathsf{crs}}
\newcommand{\aux}{\mathsf{aux}}
\newcommand{\adva}[2]{\mathsf{Adv}_{#1}^{\mathsf{#2}}}
\newcommand{\advb}[3]{\mathsf{Adv}_{#1}^{\mathsf{#2} \mbox{-} \mathsf{#3}}}
\newcommand{\advc}[4]{\mathsf{Adv}_{#1}^{\mathsf{#2} \mbox{-} \mathsf{#3} \mbox{-} \mathsf{#4}}}
\newcommand{\advd}[5]{\mathsf{Adv}_{#1}^{\mathsf{#2} \mbox{-} \mathsf{#3} \mbox{-} \mathsf{#4} \mbox{-} \mathsf{#5}}}
\newcommand{\Expt}[2]{\mathsf{Exp}_{#1}^{#2}}
\newcommand{\expa}[3]{\mathsf{Exp}_{#1}^{ \mathsf{#2} \mbox{-} \mathsf{#3}}}
\newcommand{\expb}[4]{\mathsf{Exp}_{#1}^{ \mathsf{#2} \mbox{-} \mathsf{#3} \mbox{-} \mathsf{#4}}}
\newcommand{\expc}[4]{\mathsf{Exp}_{#1}^{ \mathsf{#2} \mbox{-} \mathsf{#3} \mbox{-} \mathsf{#4}}}
\newcommand{\expd}[5]{\mathsf{Exp}_{#1}^{\mathsf{#2} \mbox{-} \mathsf{#3} \mbox{-} \mathsf{#4} \mbox{-} \mathsf{#5}}}
\newcommand{\sfhyb}[2]{\mathsf{Hyb}_{#1}^{#2}}
\newcommand*{\sk}{\keys{sk}}
\newcommand*{\pk}{\keys{pk}}
\newcommand{\ct}{\keys{CT}}
\newcommand*{\MPK}{\keys{MPK}}
\newcommand*{\MSK}{\keys{MSK}}
\newcommand*{\vk}{\keys{vk}}
\newcommand*{\td}{\keys{td}}
\newcommand{\CT}{\keys{CT}}
\newcommand*{\keys}[1]{\mathsf{#1}}
\newcommand*{\algo}[1]{\ensuremath{\mathsf{#1}}}
\newcommand{\compclass}[1]{\textbf{\textrm{#1}}}
\newenvironment{boxfig}[2]{\begin{figure}[#1]\fbox{\begin{minipage}{0.97\linewidth}
                        \vspace{0.2em}
                        \makebox[0.025\linewidth]{}
                        \begin{minipage}{0.95\linewidth}
            {{
                        #2 }}
                        \end{minipage}
                        \vspace{0.2em}
                        \end{minipage}}}{\end{figure}}
\newcommand{\pprotocol}[4]{
\begin{boxfig}{h!}{\footnotesize 
%\begin{center}
\centering{\textbf{#1}}
%\end{center}
    #4
\vspace{0.2em} } \caption{\label{#3} #2}
\end{boxfig}
}
\newcommand{\protocol}[4]{
\pprotocol{#1}{#2}{#3}{#4} }
\newcommand{\bit}{\{0,1\}}
\newcommand{\Setup}{\algo{Setup}}
\newcommand{\setup}{\algo{Setup}}
\newcommand{\keygen}{\algo{KeyGen}}
\newcommand{\Enc}{\algo{Enc}}
\newcommand{\Dec}{\algo{Dec}}
\newcommand{\Vrfy}{\algo{Vrfy}}
\newcommand{\Eval}{\algo{Eval}}
\newcommand\SKE{\algo{SKE}}
\newcommand{\ske}{\mathsf{ske}}
\newcommand\PKE{\algo{PKE}}
\newcommand{\pke}{\mathsf{pke}}
\newcommand{\negl}{{\mathsf{negl}}}
\newcommand{\poly}{{\mathrm{poly}}}
\newcommand{\Ppoly}{\compclass{P}/\compclass{poly}}
\newcommand{\NCone}{\compclass{NC}^1}
\let\oldvec\vec% Store \vec in \oldvec
\let\vec\oldvec% Restore \vec from \oldvec
\renewcommand*\l@author[2]{}
\renewcommand*\l@title[2]{}
\theoremstyle{remark}
\title{
\textbf{Certified Everlasting Functional Encryption}
}
\begin{document}

%\author{}
%\institute{}
\ifnum\anonymous=1
\author{\empty}\institute{\empty}
\else
%
%  For camera ready version.
%
\ifnum\llncs=1
\author{
	Taiga Hiroka\inst{1} \and Tomoyuki Morimae\inst{1,2} \and Ryo Nishimaki\inst{3} \and Takashi Yamakawa\inst{3}
}
\institute{
	Yukawa Institute for Theoretical Physics, Kyoto University, Japan \and PRESTO, JST, Japan \and NTT Secure Platform Laboratories
}
\else
%
%   For full/eprint version, etc.
%
\author[1]{Taiga Hiroka}
\author[1]{\hskip 1em Tomoyuki Morimae}
\author[2]{\hskip 1em Ryo Nishimaki}
\author[1,2]{\hskip 1em Takashi Yamakawa}
\affil[1]{{\small Yukawa Institute for Theoretical Physics, Kyoto University, Kyoto, Japan}\authorcr{\small \{taiga.hiroka,tomoyuki.morimae\}@yukawa.kyoto-u.ac.jp}}
%\affil[2]{{\small PRESTO, JST, Japan}}
\affil[2]{{\small NTT Corporation, Tokyo, Japan}\authorcr{\small \{ryo.nishimaki.zk,takashi.yamakawa.ga\}@hco.ntt.co.jp}}
\renewcommand\Authands{, }
\fi %%%%% END OF LNCS branch
\fi

\ifnum\llncs=1
\date{}
\else
\date{\today}
\fi
\maketitle

\ifnum\anonymous=1
% \thispagestyle{fancy}
% \cfoot{{\color{red}{\textbf{We attached the full version of this paper as a supplementary material}}}.}
\else
\thispagestyle{fancy}
\rhead{YITP-22-73}
\fi

\begin{abstract}
Computational security in cryptography has a risk that computational assumptions underlying the security are broken in the future.
One solution is to construct information-theoretically-secure protocols, but
many cryptographic primitives are known to be impossible (or unlikely) to have information-theoretical security even in the quantum world.
A nice compromise (intrinsic to quantum) is certified everlasting security, which roughly means the following.
A receiver with possession of quantum encrypted data can issue a certificate that shows that the receiver has deleted the encrypted data. 
If the certificate is valid, the security is guaranteed even if the receiver becomes computationally unbounded. 
Although several cryptographic primitives, such as commitments and zero-knowledge, have been made certified everlasting secure,
there are many other important primitives that are not known to be certified everlasting secure.

In this paper, we introduce certified everlasting FE.
In this primitive, the receiver with the ciphertext of a message $m$ and the functional decryption key of a function $f$ can obtain $f(m)$ and nothing else.
The security holds even if the adversary becomes computationally unbounded after issuing a valid certificate.
We, first, construct certified everlasting FE for $\Ppoly$ circuits where only a single key query is allowed for the adversary. 
We, then, extend it to $q$-bounded one for $\NCone$ circuits where $q$-bounded means that
$q$ key queries are allowed for the adversary with an a priori bounded polynomial $q$.  
For the construction of certified everlasting FE, we introduce and construct
certified everlasting versions of secret-key encryption, public-key encryption, receiver non-committing encryption, and a garbling scheme,
which are of independent interest.
\end{abstract}

\ifnum\cameraready=1
\else
\ifnum\llncs=1
\else
\newpage
  \setcounter{tocdepth}{2}      % sections in table if depth < i
  \setcounter{secnumdepth}{2}   % sections numbered if depth < i
  \setcounter{page}{0}          % set the table contents page as 0-th page
  \tableofcontents
  \thispagestyle{empty}
  \clearpage
\fi
\fi
% Section 1: Introduction
% !TEX root = main.tex
% !TEX spellcheck = en-US

%\color{red}
\section{Introduction}
\subsection{Background}
Computational security in cryptography relies on assumptions that some problems are hard to solve.
It, however, has a risk that the assumptions could be broken in a future
when revolutionary novel algorithms are discovered or computing devices are drastically improved.
One solution to the problem is to construct information-theoretically-secure protocols~\cite{Shamir79,BB84},
but even in the quantum world, many cryptographic primitives are known to be impossible (or unlikely) 
to have information-theoretical security~\cite{LC97,Mayers97,MW18}.

Good compromises (intrinsic to quantum!) have been studied recently~\cite{JACM:Unruh15,TCC:BroIsl20,kundutan,Asia:HMNY21,EPRINT:HMNY21_2,Poremba22}. 
In particular, certified everlasting security, which was introduced in~\cite{EPRINT:HMNY21_2} based on
~\cite{JACM:Unruh15,TCC:BroIsl20}, achieves the following security: 
a receiver with possession of quantum encrypted date issues a certificate which shows that the receiver has deleted its quantum encrypted data.
If the certificate is valid, the security is guaranteed even if the receiver becomes computationally unbounded later (and even if some secret information like the secret key is leaked).
This security notion is weaker than the information-theoretical security. (For example, a malicious receiver may refuse to issue a valid certificate.)
It is, however, still a useful security notion, because, for example, a sender can penalize receivers who do not issue valid certificates.
Moreover, certified everlasting security is intrinsically quantum property, because it implies information-theoretical security in the classical world. \footnote{This is because a malicious receiver can copy the encrypted data freely, and thus the encrypted data must be secure against an unbounded malicious receiver at the point when the receiver obtains the encrypted data.
On the other hand, in the quantum world, the same discussion does not go through, because even a malicious receiver cannot copy the quantum encrypted data due to the quantum no-cloning theorem.}

Certified everlasting security can bypass the impossibility of information-theoretical security.
In fact, several cryptographic primitives have been shown to have certified everlasting security, such as commitments and zero-knowledge~\cite{EPRINT:HMNY21_2}. 
An important open problem in this direction is
\begin{center}
\textit{
Which cryptographic primitives can have certified everlasting security?
}
\end{center}

Functional encryption (FE) is one of the most advanced cryptographic primitives that achieves large flexibility in controlling encrypted data~\cite{BSW11}.
In FE, an owner of a master secret key $\MSK$ can generate a functional decryption key $\sk_f$ that hardwires a function $f$.
When a ciphertext $\ct(m)$ of a message $m$ is decrypted by $\sk_f$, we obtain $f(m)$. No information beyond $f(m)$ is obtained.
Information-theoretically-secure FE seems to be unlikely, and in fact
all known constructions are computationally secure ones~\cite{C:GorVaiWee12,FOCS:GGHRSW13,TCC:GGHZ16,EC:AnaSah17,C:LinTes17,C:AJLMS19,Eprint:AnaJaiSah18,EPRINT:Agrawal18,EPRINT:LinMat18}.
Hence we have the following open problem:
\begin{center}
\textit{
Is it possible to construct certified everlasting secure FE?
}
\end{center}
We remark that certified everlasting FE is particularly useful compared with certified everlasting public key encryption (PKE) (or more generally ``all-or-nothing encryption''~\cite{C:GarMahMoh17} such as identity-based encryption (IBE), attribute-based encryption (ABE), or witness encryption (WE)) because it ensures security even against an honest receiver who holds a decryption key. That is, we can ensure that a receiver who holds a decryption key $\sk_f$ with respect to a function $f$ cannot learn more than $f(m)$ even if the receiver can run unbounded-time computation after issuing a valid certificate. 
In contrast, certified everlasting PKE does not ensure any security against an honest receiver since the receiver can simply copy an encrypted message after honestly decrypting a ciphertext and then no security remains.

\subsection{Our Results}\label{sec:result}
We partially solve the above questions affirmatively.
Our contributions are as follows:
    \begin{enumerate}
     \item We formally define certified everlasting versions of secret-key encryption (SKE) (\cref{sec:def_ske}), 
     public-key encryption (PKE) (\cref{sec:def_pke}), receiver non-committing encryption (RNCE) (\cref{sec:def_rnce}), 
     a garbling scheme (\cref{sec:def_garbled}), 
     and FE (\cref{sec:def_ever_fe}), respectively.
     \item We present two constructions of certified everlasting SKE (resp. PKE).
     An advantage of the first construction is that the certificate is classical, but a disadvantage is that the security proof relies on the quantum random oracle model (QROM)~\cite{AC:BDFLSZ11}.
     On the other hand, in the second construction, the security holds without relying on the QROM, but the certificate is quantum. 
     \item We construct certified everlasting RNCE from certified everlasting PKE in a black-box way (\cref{sec:const_rnce_classic}).
     \item We construct a certified everlasting garbling scheme for all $\Ppoly$ circuits from certified everlasting SKE in a black-box way (\cref{sec:const_garbling}).
     \item We construct 1-bounded certified everlasting FE with adaptive security for all $\Ppoly$ circuits. The adaptive security means that the adversary can call key queries before and after seeing the challenge ciphertext. 
     The 1-bounded means that only a single key query is allowed for the adversary.
     The construction is done in the following two steps.
     First, we construct 1-bounded certified everlasting FE with {\it non-adaptive} security for all $\Ppoly$ circuits from a certified everlasting garbling scheme and certified everlasting PKE in a black-box way
     (\cref{sec:const_func_non_adapt}).
     Second, we change it to the {\it adaptively-secure} one by using certified everlasting RNCE in a black-box way (\cref{sec:const_fe_adapt}).
     \label{step:one_FE}
     \item We construct $q$-bounded certified everlasting FE with adaptive security for all $\NCone$ circuits, where $q$-bounded means that
     the total number of key queries is bounded by an a priori fixed polynomial $q$.
     This is constructed from
     the 1-bounded one constructed in Step.~\ref{step:one_FE} by using multi-party computation in a black-box way (\cref{sec:const_multi_fe}).
     \end{enumerate}

\subsection{Related Works}

Unruh~\cite{JACM:Unruh15} introduced the concept of revocable quantum time-released encryption.
In this primitive, a receiver with possession of quantum encrypted data can obtain its plaintext after predetermined time $T$.
The sender can revoke the quantum encrypted data before time $T$. 
If the revocation succeeds, the receiver cannot obtain the information of the plaintext even if its computing power becomes unbounded.

Broadbent and Islam~\cite{TCC:BroIsl20} constructed one-time SKE with certified deletion.
This is ordinary one-time SKE, but once the receiver issues a valid classical certificate, the receiver cannot obtain the information of plaintext even if it later obtains the secret key of the ciphertext. 
(See also \cite{kundutan}).

Hiroka, Morimae, Nishimaki, and Yamakawa~\cite{Asia:HMNY21} constructed reusable SKE, PKE, and attribute-based encryption (ABE) with certified deletion.
These reusable SKE, PKE, and ABE with certified deletion are ordinary reusable SKE, PKE, and ABE, respectively. However, once the receiver issues a valid classical certificate, the receiver cannot obtain the information of plaintext even if it obtains some secret information (e.g. the master secret key of ABE). Note that, in these primitives, the security holds against computationally bounded adversaries unlike the present paper.

Hiroka, Morimae, Nishimaki, and Yamakawa~\cite{EPRINT:HMNY21_2} constructed commitments with statistical binding and certified everlasting hiding. From it, they also constructed certified everlasting zero-knowledge proof for 
QMA based on the zero-knowledge protocol of~\cite{FOCS:BroGri20}.

Poremba~\cite{Poremba22} constructed fully homomorphic encryption (FHE) with certified deletion 
where the security holds against only semi-honest adversaries that behaves maliciously only after outputting a certificate.

\if0
\paragraph{Garbled Circuits.}
We construct a certified everlasting garbling scheme with selective security where the adversary must choose the circuit $C$ and the input $x$ at the same time since it suffices to construct certified everlasting FE.
It is an open problem whether we can construct a certified everlasting garbling scheme with adaptive security where the adversary sees the garbled circuit first and then choose the input $x$~\cite{TCC:JafWic16}.
Besides, our certified everlasting garbling scheme is for classical circuits although we use quantum operations to construct it. 
Brakerski and Yuen~\cite{BraYue20}, and Bartusek et.al.~\cite{EPRINT:BCKM20a} constructed a garbling scheme for quantum circuits although their security is not certified everlasting security. 
\fi

\if0
\paragraph{Functional Encryption.}
We note that our $q$-bounded certified everlasting FE is for all $\NCone$ circuits. It is an open problem how to construct $q$-bounded certified everlasting functional encryption for all $\Ppoly$ circuits.
A reader familiar with FE may think that we can bootstrap it to certified everlasting FE for all $\Ppoly$ circuits by simply combining it with computationally secure randomized encoding (RE) as in the case of computationally secure FE \cite{C:GorVaiWee12}.
But, this is not the case.
We cannot use computational secure RE to construct certified everlasting FE
since it is not secure against unbounded adversaries. In addition, if we can construct certified everlasting RE, the encoding and the decoding algorithm must become a quantum algorithm although our $q$-bounded certified everlasting FE is not for quantum circuits but for $\NCone$ circuits. Therefore, we cannot bootstrap certified everlasting FE for $\NCone$ circuits to certified everlasting FE for polynomial circuits by just using the bootstrapping technique in~\cite{C:GorVaiWee12}.

We note that the construction of certified everlasting FE with collusion resistant is also an open problem. In FE with collusion resistant, the security holds against the adversaries that can call an arbitrary number of times key queries although, in our construction, the security holds against the adversaries that can call a bounded number of times key queries.
It is known that we can construct computationally secure FE with collusion resistant by computationally secure indistinguishability obfuscation (iO)~\cite{FOCS:GGHRSW13}. However, we don't know how to construct or even define certified everlasting iO.
We might be able to construct certified everlasting FE with collusion resistant if we can define and construct certified everlasting iO appropriately.
\fi

\subsection{Concurrent and Independent Work}
There is a concurrent and independent work.
Recently, Bartusek and Khurana have uploaded their paper on arXiv~\cite{BK22}
where a generic compiler is introduced.
The generic compiler can change many cryptographic primitives into ones with certified deletion, 
such as PKE, ABE, FHE, witness encryption, and timed-release encryption.

Their constructions via the generic compiler achieve classical certificates without QROM, which is an advantage of their results.
On the other hand, we construct certified everlasting garbling schemes and FE, which is not done in their work.
In fact, it is not clear how to construct certified everlasting garbling schemes and certified everlasting FE via their generic compiler.
For example, a natural construction of FE via their generic compiler would be as follows.
The ciphertext consists of a classical part and a quantum part. The classical part is the ciphertext of ordinary FE whose plaintext is $m\oplus r$, and the quantum part is random BB84 states whose computational basis states encode $r$.
The decryption key of the function $f$ consists of a functional decryption key $\sk_f$ and the basis of the BB84 states.
However, in this construction, a receiver with the ciphertext and the decryption key cannot obtain $f(m)$.
This is because the receiver can obtain only $f(m\oplus r)$ and $r$, which cannot recover $f(m)$.

\def\techoverview{0}
\ifnum\techoverview=1
\subsection{A Technical Overview}
We now provide an overview of main ideas and steps to obtain our results.
To explain our idea, we introduce the definition of certified everlasting FE.
\paragraph{Definition of Certified Everlasting FE:}Certified Everlasting FE consists of the following algorithms.
\begin{description}
\item[$\setup(1^\secp)\ra(\MPK,\MSK)$:]
This is a setup algorithm that generates a key pair of master public and secret keys.
\item[$\keygen(\MSK,f)\ra \sk_f$:] This is a keygeneration algorithm that generates functional secret key. 
\item[$\Enc(\MPK,m)\ra (\vk,\ct)$:]This is an encryption algorithm that generates a quantum ciphertext and a verification key for the ciphertext.
\item[$\Dec(\sk_f,\ct)\ra y$]
This is a decryption algorithm that outputs $y$. As correctness, we require that it outputs $f(m)$ if it takes as input the functional decryption key $\sk_f$ of $f$ and the ciphertext $\ct$ of $m$.
\item[$\Delete(\ct)\ra\cert$:] This is a deletion algorithm that outputs a certificate, which guarantees that the receiver deletes the quantum ciphertext.
\item[$\Vrfy(\vk,\cert)\ra \top\,\,{\bf or}\,\,\bot$:]
This is a verification algorithm that verifies the validity of $\cert$ using a verification key. As correctness, we require that it outputs $\top$ if $\cert$ was honestly generated by $\cert\la\Delete(\ct)$, and $(\vk,\ct)$ was honestly generated by running $\Enc$. 
\end{description}
We require that the scheme satisfies certified everlasting non-adaptive or adaptive security in addition to ordinary non-adaptive security or adaptive security.
Roughly speaking, certified everlasting adaptive security requires that no quantum polynomial time (QPT) adversaries with $\MPK$, $\ct$, and the ability to call the limited number of times of key queries cannot obtain any information of plaintext even if its computing power becomes unbounded and receives $\MPK$ after it outputs a valid certificate.
If the QPT adversary is allowed to call key queries only before it receives the challenger ciphertext, we call the security certified everlasting non-adaptive security.

\paragraph{Construction of $q$-bounded Certified Everlasting FE:}
In a nutshell, our main idea is to construct certified everlasting FE based on the construction of ordinary $q$-bounded computationally secure FE.

It is known that we can construct ordinary FE with bounded-collusion for all $\NCone$ circuits from ordinary PKE, ordinary RNCE, and an ordinary garbling scheme.
More concretely, this is achieved in the following three steps.
First, we can construct single-key FE with non-adaptive security by running an encryption algorithm of PKE on the labels of a garbling scheme, where the non-adaptive security means that the adversary can call key queries only before it receives the challenge ciphertext.
Then, we can transform it to FE with adaptive security by running the RNCE encryption algorithm on the ciphertext of FE with non-adaptive security.
Finally, by combining it with multi-party computation technique, we can construct $q$-bounded FE with adaptive security for all $\NCone$ circuits presented as in~\cite{C:GorVaiWee12}.

Therefore, naively thinking, we can construct $q$-bounded certified everlasting FE with adaptive security for $\NCone$ circuits from certified everlasting PKE, certified everlasting RNCE, and a certified everlasting garbling scheme as follows.
First, it seems possible to construct single-key certified everlasting FE with non-adaptive security by running an encryption algorithm of certified everlasting PKE on labels of a certified everlasting garbling scheme.
Then, it seems possible to transform it to single-key certified everlasting FE with adaptive security by running an encryption algorithm of certified everlasting RNCE on the ciphertext of single-key certified everlasting FE with non-adaptive security.
Finally, it seems possible to transform it to $q$-bounded certified everlasting FE with adaptive security for all $\NCone$ circuits by using the technique of multi-party computation.
In fact, this naive idea works to some extent although there remains two problems to resolve.

The first obvious problem is that it is not known how to construct even certified everlasting PKE, certified everlasting RNCE, and a certified everlasting garbling scheme, which seems non-trivial in itself to construct these primitives.
(We explain the details of our constructions in the other paragraphs.)

The second problem is that we cannot construct single-key certified everlasting FE with adaptive security by running the encryption algorithm of certified everlasting RNCE on the ciphertext of certified everlasting FE with non-adaptive security because of the following two reasons.
First, if we encrypt the ciphertext of certified everlasting FE in some way, the sender cannot check the validity of certificate.
This is because, in certified everlasting FE, the sender can check the validity of certificate $\cert$ only if it was issued by running $\cert\la\Delete(\ct)$.
If we encrypt the ciphertext of FE in some way, it disturbs the quantum ciphertext $\ct$, and thus the certificate $\cert$ is also disturbed.
Therefore, we have to find the way where the
sender can check the validity of $\cert$ even if the certificate is issued by using the encrypted ciphertext.
Second, in certified everlasting FE, the ciphertext must become quantum state although the encryption algorithm of our certified everlasting RNCE is for a classical message not for a quantum message as that of ordinary RNCE is. So we have to find the way where we don't need to run an encryption algorithm for classical message on the ciphertext of certified everlasting FE.

In our work, we present a construction which resolves these problems.
In our construction, we use single-key certified everlasting FE with non-adaptive security $\NAD.(\Setup,\keygen,\Enc,\Dec,\Delete,\Vrfy)$ and certified everlasting RNCE for classical message $\NCE.(\Setup,\keygen,\Enc,\Dec,\Delete,\Vrfy)$.
Here, we require that the FE scheme satisfies additional correctness, which we call modification correctness.
In our construction, this correctness is necessary for the sender to check the validity of certificate.
This correctness guarantees that there exists a quantum algorithm $\Modify$ that takes as input classical bit strings $(a,c)$ and $\cert^*$, and outputs a valid certificate where $\cert^*$ was issued by $\cert^*\la \NAD.\Delete(Z^cX^a\nad.\ct X^aZ^c)$ and $\nad.\ct$ is generated by running $\NAD.\Enc$ honestly.

Our single-key certified everlasting FE with adaptive security is described as follows.
\begin{description}
\item[$\Setup(1^\secp)$:]
This generates $(\nce.\pk,\nce.\MSK)\la\NCE.\Setup(1^\secp)$ and $(\nad.\MPK,\nad.\MSK)\la\mathsf{NAD}.\Setup(1^\secp)$, and outputs $\MPK\seteq(\nce.\pk,\nad.\MPK)$ and $\MSK\seteq (\nce.\MSK,\nad.\MSK)$.
\item[$\keygen(\MSK,f)$:]
This generates $\nce.\sk\la\NCE.\keygen(\nce.\MSK)$ and $\nad.\sk_f\la\NAD.\keygen(\nad.\MSK,f)$, and outputs $\sk_f\seteq (\nce.\sk,\nad.\sk_f)$.
\item[$\Enc(\MPK,m)$:]
This generates classical random bit strings $(a,c)$, $(\nad.\vk,\nad.\ct)\la\NAD.\Enc(\nad.\MPK,m)$ $(\nce.\vk,\nce.\ct)\la\NCE.\Enc(\nce.\pk,(a,c))$ and $\Psi\seteq X^aZ^c\nad.\ct Z^cX^a$, and outputs $\vk\seteq ((a,c),\nce.\vk,\nad.\vk)$ and $\ct\seteq(\nce.\ct,\Psi)$.
\item[$\Dec(\sk_f,\ct)$:]
This computes $(a,c)\la\NCE.\Dec(\nce.\sk,\nce.\ct)$, $\nad.\ct^*\seteq Z^cX^a\Psi X^aZ^c$, and $y\la\NAD.\Dec(\nad.\sk_f,\nad.\ct^*)$, and outputs $y$.
\item[$\Delete(\ct)$:]
This generates $\nce.\cert\la\NCE.\Delete(\nce.\ct)$ and $\nad.\cert\la\NAD.\Delete(\Psi)$, and outputs $\cert\seteq (\nce.\cert,\nad.\cert)$.
\item[$\Vrfy(\vk,\cert)$:]
First, this runs $\nad.\cert^*\la\NAD.\Modify((a,c),\nad.\cert)$.
This outputs $\top$ if $\top\la\NCE.\Vrfy(\nce.\vk,\nce.\cert)$ and $\top\la\NAD.\Vrfy(\nad.\vk,\nad.\cert^*)$.
\end{description}

In our construction, the sender can check the validity of the certificate since it can recover the correct certificate by running $\NAD.\Modify((a,c),\nad.\cert)$, and $\NCE.\Enc$, which is an encryption algorithm for a classical message, does not run on $\NAD.\ct$. Therefore, our construction avoids the problems above.

Furthermore, we can prove certified everlasting adaptive security by reducing it to certified everlasting non-adaptive security, RNC security, and using quantum teleportation in the sequence of hybrid experiments.
To prove certified everlasting adaptive security, we have to simulate the challenge ciphertext and the functional secret key.
More formally, when the challenge query of the message $m$ is called, we have to simulate the ciphertext of $m$ without using $\NAD.\ct$ in an indistinguishable way.
In addition, when a key query of function $f$ is called, we have to simulate the secret key using $\NAD.\ct$ in an indistinguishable way.

\paragraph{Certified Everlasting PKE:}

\paragraph{Certified Everlasting RNCE:}

\paragraph{Certified Everlasting Garbling Scheme:}

\paragraph{Certified Everlasting FE:}

\color{black}
\else\fi

% Section 2: Preliminaries
% !TEX root = main.tex
% !TEX spellcheck = en-US

\section{Preliminaries}\label{sec:preliminaries}

\subsection{Notations}\label{sec:notations}
Here we introduce basic notations we will use
in this paper. $x\leftarrow X$ denotes selecting an element $x$ from a finite set $X$ uniformly at random, 
and $y\leftarrow A(x)$ denotes assigning to $y$ the output of a quantum or probabilistic or deterministic algorithm $A$ on an input $x$.
When we explicitly show that $A$ uses randomness $r$, we write $y\leftarrow A(x;r)$.
When $D$ is a distribution, $x\leftarrow D$ denotes sampling an element $x$ from $D$.
$y\seteq z$ denotes that $y$ is set, defined, or substituted by $z$.
Let $[n]\seteq \{1,\dots,n\}$. Let $\lambda$ be a security parameter.
By $[N]_p$ we denote the set of all size-$p$ subsets of $\{1,2\cdots, N\}$.
For classical strings $x$ and $y$, $x||y$ denotes the concatenation of $x$ and $y$.
For a bit string $s\in\{0,1\}^n$, $s_i$ and $s[i]$ denotes the $i$-th bit of $s$.
QPT stands for quantum polynomial time.
PPT stands for (classical) probabilistic polynomial time.
%For a subset $S\subseteq W$ of a set $W$, $\overline{S}$ is the complement of $S$, i.e., 
%$\overline{S}\seteq W\setminus S$. \taiga{Koreiranaikamo}
%{\color{red}For any subset $S\subseteq[n]$ and any $n$-bit string $a\seteq(a_1,...,a_n)$, 
%we define $a^{S}\seteq (a^S_1,a^S_2,\cdots ,a^S_n)$ where $a^S_i=a_i$ for $i\in S$ and $a^S_i=0$ for $i\notin S$.}
A function $f: \N \ra \R$ is a negligible function if for any constant $c$, there exists $\secp_0 \in \N$ such that for any $\secp>\secp_0$, $f(\secp) < \secp^{-c}$. We write $f(\secp) \leq \negl(\secp)$ to denote $f(\secp)$ being a negligible function.
% Let $w(x)$ denote the Hamming weight of a bit string $x$.

\subsection{Quantum Computations}\label{sec:quantum_computation}
We assume familiarity with the basics of quantum computation and use standard notations. 
Let $\cQ$ be the state space of a single qubit.
$I$ is the two-dimensional identity operator.
$X$ and $Z$ are the Pauli $X$ and $Z$ operators, respectively.
%$CNOT_{1}^{2}$ is the control not gate, where the first bit is control bit and the second bit is target bit.
%{\color{red}For an operator $A$, we often write $A^{\otimes n}$ as $A_n$.}
%\taiga{We will not use $A_n$. Thus, I have to change $A_n$ to $A^{\otimes n}$}
For an operator $A$ acting on a single qubit and a bit string $x\in\bit^n$, we write $A^x$ as $A^{x_1}\otimes A^{x_2}\otimes \cdots A^{x_n}$.
The trace distance between two states $\rho$ and $\sigma$ is given by 
$\frac{1}{2}\norm{\rho-\sigma}_{\tr}$, where $\norm{A}_{\tr}\seteq \tr \sqrt{{\it A}^{\dagger}{\it A}}$ is the trace norm. 
If $\frac{1}{2}\norm{\rho-\sigma}_{\tr}\leq \epsilon$, we say that $\rho$ and $\sigma$ are $\epsilon$-close.
If $\epsilon=\negl(\lambda)$, then we say that $\rho$ and $\sigma$ are statistically indistinguishable.
%Let im$M$ denote the image of an operator $M$.
%Let us define 
%$
%\ket{\widetilde{fe}}\seteq \frac{1}{\sqrt{2^q}}(Z^fX^e\otimes I^{\otimes q})\sum_{x\in\bit^q}\ket{xx}
%$.
\paragraph{Quantum Random Oracle.}
We use the quantum random oracle model (QROM)~\cite{AC:BDFLSZ11} to construct certified everlasting SKE and certified everlasting PKE
in \cref{sec:const_ske_rom,sec:const_pke_rom}, respectively.
In the QROM, a uniformly random function with a certain domain and range is chosen at the beginning, and quantum access to this function is given to all parties including an adversary.
Zhandry showed that quantum access to random functions can be efficiently simulatable by using so-called compressed random oracle technique~\cite{C:Zhandry19}.

We review the one-way to hiding lemma~\cite{JACM:Unruh15,C:AmbHamUnr19}, which is useful when analyzing schemes in the QROM. The following form of the lemma is based on~\cite{C:AmbHamUnr19}.

\begin{lemma}[One-Way to Hiding Lemma \cite{C:AmbHamUnr19}]\label{lem:one-way_to_hiding}
Let $S\subseteq \mathcal{X}$ be a random subset of $\mathcal{X}$. Let $G,H:\mathcal{X}\rightarrow\mathcal{Y}$ be random functions satisfying $\forall x\notin S$ $[G(x)=H(x)]$. Let $z$ be a random classical bit string. 
($S,G,H,z$ may have an arbitrary joint distribution.)
Let $\cA$ be an oracle-aided quantum algorithm that makes at most $q$ quantum queries.
Let $\cB$ be an algorithm that on input $z$ chooses $i\leftarrow[q]$, runs $\cA^{H}(z)$, measures $\cA$'s $i$-th query, and outputs the measurement outcome.
Then we have
$
    \abs{\Pr[\cA^G(z)=1]-\Pr[\cA^H(z)=1]}\leq2q\sqrt{\Pr[\cB^H(z)\in S]}.
$
\end{lemma}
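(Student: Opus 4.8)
The plan is to prove this by a standard hybrid argument over the $q$ query positions, interpolating between the all-$G$ oracle and the all-$H$ oracle one query at a time. First I would fix the joint value of $(S,G,H,z)$ and establish the bound conditioned on this value, deferring the average over the joint distribution to the very end. For $j\in\{0,1,\dots,q\}$, let $P_j$ denote the probability that $\cA$ outputs $1$ when its first $j$ queries are answered by $H$ and its remaining $q-j$ queries are answered by $G$; then $P_0=\Pr[\cA^G(z)=1]$ and $P_q=\Pr[\cA^H(z)=1]$, so by the triangle inequality $\abs{P_0-P_q}\le\sum_{j=1}^{q}\abs{P_{j-1}-P_j}$.

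The key observation is that $P_{j-1}$ and $P_j$ describe \emph{identical} executions except that the $j$-th query is answered by $G$ in the former and by $H$ in the latter, while in both the first $j-1$ queries use $H$ and the last $q-j$ queries use $G$. Hence the state $\ket{\phi_j}$ of $\cA$ immediately before its $j$-th query is the same in both hybrids, and it is precisely the state of $\cA^{H}$ before its $j$-th query. Writing $O_G,O_H$ for the two query unitaries and $P_S$ for the projector onto basis states whose query register holds some $x\in S$, I would use that $G$ and $H$ agree outside $S$ to conclude $(O_G-O_H)\ket{\phi_j}=(O_G-O_H)P_S\ket{\phi_j}$, so that $\norm{(O_G-O_H)\ket{\phi_j}}\le 2\norm{P_S\ket{\phi_j}}$ since each of $O_G,O_H$ is unitary. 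Because the two hybrids apply the same unitaries and the same final measurement after the $j$-th query, their output distributions differ by at most the Euclidean distance of the post-query states, giving $\abs{P_{j-1}-P_j}\le\norm{(O_G-O_H)\ket{\phi_j}}\le 2\norm{P_S\ket{\phi_j}}$.

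Summing and applying Cauchy--Schwarz yields $\abs{P_0-P_q}\le 2\sum_{j=1}^q\norm{P_S\ket{\phi_j}}\le 2\sqrt{q}\,\sqrt{\sum_{j=1}^q\norm{P_S\ket{\phi_j}}^2}$. Now I would connect the sum to $\cB$: since $\cB$ runs $\cA^H(z)$, picks $i\leftarrow[q]$ uniformly, and measures the $i$-th query register, its probability of landing in $S$ (for the fixed $(S,G,H,z)$) equals $\tfrac1q\sum_{j=1}^q\norm{P_S\ket{\phi_j}}^2$, because $\norm{P_S\ket{\phi_j}}^2$ is exactly the probability that measuring the $j$-th query of $\cA^H$ returns some $x\in S$. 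Substituting gives the conditional bound $\abs{P_0-P_q}\le 2q\sqrt{\Pr[\cB^H(z)\in S\mid S,G,H,z]}$. Finally I would remove the conditioning: taking expectations over $(S,G,H,z)$ and using Jensen's inequality $\Exp[\sqrt{X}]\le\sqrt{\Exp[X]}$ (by concavity of the square root) turns the right-hand side into $2q\sqrt{\Pr[\cB^H(z)\in S]}$, as claimed.

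I expect the genuinely delicate points to be bookkeeping rather than deep. The main one is setting up the interpolation so that the shared pre-query state $\ket{\phi_j}$ is generated by $H$ (not $G$) on the first $j-1$ queries, so that it coincides with the state $\cB$ actually measures when running $\cA^H$; getting this wrong would mismatch the per-query magnitudes with $\cB$'s success probability. The two supporting inequalities---that $O_G-O_H$ has operator norm at most $2$ while annihilating the range of $I-P_S$, and that a probability gap is bounded by the norm distance $\norm{\ket{\psi}-\ket{\psi'}}$ of the two final pure states---are routine, and the Cauchy--Schwarz and Jensen steps are the standard devices that convert the per-query sum into the $\sqrt{q}$ factor and the square root of a single probability.
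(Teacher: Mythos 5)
Your proof is correct, but note that the paper does not actually prove this lemma: it is imported verbatim from \cite{C:AmbHamUnr19} (with the remark that it is ``based on'' that work), so there is no in-paper argument to compare against. What you have reconstructed is the classic hybrid/BBBV-style proof, which is essentially Unruh's original derivation of the one-way-to-hiding lemma in \cite{JACM:Unruh15}: interpolate one query at a time, observe that the pre-query state $\ket{\phi_j}$ in adjacent hybrids is generated entirely by $H$-answered queries (so it coincides with the state $\cB$ measures), bound $\norm{(O_G-O_H)\ket{\phi_j}}\le 2\norm{P_S\ket{\phi_j}}$ using agreement of $G$ and $H$ outside $S$, and finish with Cauchy--Schwarz over the $q$ positions and Jensen over the joint distribution of $(S,G,H,z)$ --- all of your steps, including the direction of the hybrid (first $j$ queries by $H$, a point you correctly flag as the place where a careless setup would break the match with $\cB$), check out, and the pure-state inequality $\abs{\Pr_\psi-\Pr_{\psi'}}\le\norm{\ket{\psi}-\ket{\psi'}}$ you invoke is valid since the trace distance of pure states satisfies $\sqrt{1-\abs{\braket{\psi}{\psi'}}^2}\le\norm{\ket{\psi}-\ket{\psi'}}$. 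By contrast, \cite{C:AmbHamUnr19} obtain this statement as a corollary of their semi-classical O2H theorem (puncturing the oracle and bounding the probability of the event $\mathsf{Find}$), a route that costs more machinery but also yields strictly tighter variants (e.g.\ bounds of the form $2\sqrt{q\Pr[\mathsf{Find}]}$) that the plain hybrid argument does not give; for the exact bound $2q\sqrt{\Pr[\cB^H(z)\in S]}$ quoted here, your more elementary argument suffices.
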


\paragraph{Quantum Teleportation.}
We use quantum teleportation to prove that our construction of the FE scheme in \cref{sec:const_fe_adapt} satisfies adaptive security.
\begin{lemma}[Quantum Teleportation]\label{lem:quantum_teleportation}
Suppose that we have $N$ Bell pairs between registers $A$ and $B$, i.e.,$\frac{1}{\sqrt{2^N}}\sum_{s\in\bit^N}\allowbreak \ket{s}_A \otimes \ket{s}_B$, and let $\rho$ be an arbitrary $N$-qubit quantum state in register $C$. Suppose that we measure $j$-th qubits of $C$ and $A$ in the Bell basis and let $(x_j,z_j)\in\bit\times\bit$ be the measurement outcome for all $j\in[N]$.
Let $x\seteq x_1||x_2||\cdots ||x_N$ and $z\seteq z_1||z_2||\cdots ||z_N$.
Then $(x,z)$ is uniformly distributed  over $\bit^N\times \bit^N$.
Moreover, conditioned on the measurement outcome $(x,z)$, the resulting state in $B$ is $X^xZ^z\rho Z^zX^x$.
\end{lemma}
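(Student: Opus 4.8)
The plan is to prove the statement by a single direct computation of the post-measurement state, reducing everything to one ``mirror'' (ricochet) identity for the maximally entangled state and then extending from pure inputs to an arbitrary $\rho$ by purification. Writing $\ket{\Phi}\seteq\frac{1}{\sqrt{2^N}}\sum_{s\in\bit^N}\ket{s}\ket{s}$ for the $N$-qubit maximally entangled state, the driving identity is that for every $N$-qubit vector $\ket{\chi}$,
\[
\bra{\Phi}_{CA}\big(\ket{\chi}_C\otimes\ket{\Phi}_{AB}\big)=\frac{1}{2^N}\ket{\chi}_B,
\]
which follows by expanding both copies of $\ket{\Phi}$ in the computational basis and collapsing the two resolutions of the identity. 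I would also record that the qubit-wise Bell measurement on the pairs $(C_j,A_j)$ is exactly the projective measurement onto the orthonormal $N$-qubit Bell basis $\{\ket{\Phi_{x,z}}_{CA}\}_{x,z\in\bit^N}$, where $\ket{\Phi_{x,z}}_{CA}\seteq (X^xZ^z)_C\ket{\Phi}_{CA}$; this is just the observation that $\ket{\Phi}=\bigotimes_j\ket{\Phi_{00}}_{C_jA_j}$ and that $X^xZ^z$ factorizes across qubits, so the single-qubit Bell projectors on each pair multiply to the $N$-qubit Bell projector.

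Next, to handle an arbitrary (possibly mixed, possibly entangled) input I would fix a purification $\ket{\psi}_{RC}$ of $\rho$ on a reference register $R$ and compute the outcome-$(x,z)$ branch of $\ket{\psi}_{RC}\otimes\ket{\Phi}_{AB}$. Using $\bra{\Phi_{x,z}}_{CA}=\bra{\Phi}_{CA}(Z^zX^x)_C$ (as $X$ and $Z$ are Hermitian, so $(X^xZ^z)^\dagger=Z^zX^x$), pushing $(Z^zX^x)_C$ onto the input, and then applying the mirror identity termwise in a Schmidt expansion $\ket{\psi}=\sum_i\ket{i}_R\ket{\chi_i}_C$, the unnormalized post-measurement state on $RB$ comes out as $\tfrac{1}{2^N}\,(I_R\otimes Z^zX^x)\ket{\psi}_{RB}$. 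Reading off its squared norm gives probability $1/4^N$ for every $(x,z)\in\bit^N\times\bit^N$, which is precisely the claimed uniform distribution; normalizing and tracing out $R$ then yields the reduced state $Z^zX^x\rho X^xZ^z$ on $B$.

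The only genuinely delicate points are bookkeeping rather than conceptual. The first is the purification step, which is what lets me state the result for an arbitrary $\rho$ that may be correlated with other registers, exactly the situation in the application of \cref{sec:const_fe_adapt} where $\rho$ plays the role of a ciphertext; here I would stress that the partial trace over $R$ passes through the correction because the latter acts as the identity on $R$. The second is matching the Pauli ordering to the statement: my computation produces the correction $Z^zX^x$, whereas the lemma is phrased with $X^xZ^z$. These agree because $X^xZ^z=(-1)^{x\cdot z}Z^zX^x$, so the global phase cancels in $X^xZ^z\rho Z^zX^x=Z^zX^x\rho X^xZ^z$ and the two density operators coincide. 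I expect this phase reconciliation, together with keeping the register labels straight through the mirror identity, to be the main place where care is needed; the rest is routine.
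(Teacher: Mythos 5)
Your proof is correct. Note, however, that the paper itself offers no proof of \cref{lem:quantum_teleportation}: it is stated in the preliminaries as the standard quantum teleportation fact, so there is no argument of the authors' to compare yours against. Your derivation is the textbook route — the ricochet identity $\bra{\Phi}_{CA}(\ket{\chi}_C\otimes\ket{\Phi}_{AB})=2^{-N}\ket{\chi}_B$, the identification of the qubit-wise Bell measurement with the projective measurement onto $\{(X^xZ^z)_C\ket{\Phi}_{CA}\}_{x,z}$, and the computation of each unnormalized branch as $2^{-N}(I_R\otimes Z^zX^x)\ket{\psi}_{RB}$, giving probability $4^{-N}$ uniformly — and all three steps check out, including the phase reconciliation $X^xZ^z\rho Z^zX^x=Z^zX^x\rho X^xZ^x Z^z X^x Z^z$-free form, i.e., $X^xZ^z\rho Z^zX^x=Z^zX^x\rho X^xZ^z$ since $X^xZ^z=(-1)^{x\cdot z}Z^zX^x$ and the sign appears twice. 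Two points in your write-up are worth emphasizing as genuine value added rather than mere bookkeeping: (i) by purifying $\rho$ on a reference $R$ and observing that the correction acts as identity on $R$, you actually establish the stronger joint-state version of the lemma (the state on $RB$ is $(I_R\otimes Z^zX^x)\ket{\psi}\bra{\psi}(I_R\otimes X^xZ^z)$), which is the form implicitly used in the proof of \cref{thm:ever_security_single_ad} in \cref{sec:proof_ada_fe}, where the teleported register is a ciphertext correlated with the adversary's other registers; and (ii) the outcome distribution being exactly (not just approximately) uniform and independent of $\rho$ is what licenses the hybrid step replacing $(a,c)$ by the measurement outcome $(x,z)$ in \cref{prop:hyb_1_hyb_2_cefe_si_ad}. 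No gaps.
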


\paragraph{CSS code.}
%We use a CSS code to construct certified everlasting SKE and certified everlasting PKE in \cref{sec:const_ske_wo_rom,sec:const_pke_wo_rom}, respectively.
We explain basics of CSS codes. CSS codes are used only in the constructions of certified everlasting SKE and PKE (\cref{sec:const_ske_wo_rom} and \cref{sec:const_pke_wo_rom}),
and therefore readers who are not interested in these constructions can skip this paragraph.
A CSS code with parameters $q,k_1,k_2, t$ consists of two classical linear binary codes. One is a $[q,k_1]$ code $C_1$
\footnote{A $[q,k]$ code is a code consisting of $2^k$ codewords, each of length $q$.
That is, a $k$-dimensional subspace of $\bit^q={\rm GF}(2)^q$.
}
and the other is a $[q,k_2]$ code. 
Both $C_1$ and $C_2^\bot$ can correct up to $t$ errors, 
and they satisfy $C_2\subseteq C_1$.
We require that the parity check matrices of $C_1,C_2$ are computable in polynomial time, and that error correction can be performed in polynomial time. %(Here we assume an asymptotic setting in which $C_1,C_2$ are defined for every security parameter.)
Given two binary codes $C\subseteq D$, let $D/C\seteq\{x \mbox{ mod } C:x\in D\}$.
Here, mod $C$ is a linear polynomial-time operation on $\bit^q$ with the following three properties.
First, $x$ mod $C=x'$ mod $C$ if and only if $x-x'\in C$ for any $x,x'\in\bit^q$.
Second, for any binary code $D$ such that $C\subseteq D$,
$x$ mod $C\in D$ for any $x\in D$.
Third,  ($x$ mod $C$) mod $C$= $x$ mod $C$ for any $x\in\bit^q$.

\subsection{Cryptographic Tools}\label{sec:crypt_tool}
In this section, we review the cryptographic tools used in this paper.

\begin{lemma}[Difference Lemma~\cite{EPRINT:Shoup04}]\label{lem:defference}
Let $A,B,F$ be events defined in some probability distribution, and suppose 
$\Pr[A\wedge \overline{F}]=\Pr[B\wedge \overline{F}]$.
Then $\abs{\Pr[A]-\Pr[B]}\leq \Pr[F]$.
\end{lemma}

\paragraph{Encryption with Certified Deletion.}
Broadbent and Islam~\cite{TCC:BroIsl20} introduced the notion of encryption with certified deletion.

\begin{definition}[One-Time SKE with Certified Deletion (Syntax)~\cite{TCC:BroIsl20,Asia:HMNY21}]\label{def:sk_cert_del}
Let $\lambda$ be a security parameter and let $p$, $q$ and $r$ be some polynomials.
A one-time secret key encryption scheme with certified deletion is a tuple of algorithms $\Sigma=(\keygen,\Enc,\Dec,\Delete,\Vrfy)$ with 
plaintext space $\Ms:=\{0,1\}^n$, ciphertext space $\Cs:= \cQ^{\otimes p(\lambda)}$, key space $\Ks:=\{0,1\}^{q(\lambda)}$ and deletion certificate space $\mathcal{D}:= \{0,1\}^{r(\lambda)}$.
\begin{description}
    \item[$\keygen (1^\secp) \ra \sk$:] The key generation algorithm takes as input the security parameter $1^\secp$, and outputs a secret key $\sk \in \Ks$.
    \item[$\Enc(\sk,m) \ra \ct$:] The encryption algorithm takes as input $\sk$ and a plaintext $m\in\Ms$, and outputs a ciphertext $\ct\in \Cs$.
    \item[$\Dec(\sk,\ct) \ra m^\prime~or~\bot$:] The decryption algorithm takes as input $\sk$ and $\ct$, and outputs a plaintext $m^\prime \in \Ms$ or $\bot$.
    \item[$\Delete(\ct) \ra \cert$:] The deletion algorithm takes as input $\ct$, and outputs a certification $\cert\in\mathcal{D}$.
    \item[$\Vrfy(\sk,\cert)\ra \top~or~\bot$:] The verification algorithm takes $\sk$ and $\cert$ as input, and outputs $\top$ or $\bot$.
\end{description}
\end{definition}

We require that a one-time SKE scheme with certified deletion satisfies correctness defined below.
\begin{definition}[Correctness for One-Time SKE with Certified Deletion]\label{def:correctness_sk_cert_del}
There are three types of correctness, namely,
decryption correctness,
verification correctness,
and modification correctness.
\paragraph{Decryption Correctness:} There exists a negligible function $\negl$ such that for any $\secp\in \N$ and $m\in\Ms$, 
\begin{align}
\Pr\left[
m'\neq m
\ \middle |
\begin{array}{ll}
\sk\lrun \keygen(1^\secp)\\
\ct \lrun \Enc(\sk,m)\\
 m'\la\Dec(\sk,\ct)
\end{array}
\right] 
\leq\negl(\secp).
\end{align}

\paragraph{Verification Correctness:} There exists a negligible function $\negl$ such that for any $\secp\in \N$ and $m\in\Ms$, 
\begin{align}
\Pr\left[
\Vrfy(\sk,\cert)=\bot
\ \middle |
\begin{array}{ll}
\sk\lrun \keygen(1^\secp)\\
\ct \lrun \Enc(\sk,m)\\
\cert \lrun \Delete(\ct)
\end{array}
\right] 
\leq
\negl(\secp).
\end{align}

\paragraph{Modification Correctness:}
There exists a negligible function $\negl$ and a QPT algorithm $\Modify$ such that for any $\secp\in \N$ and $m\in\Ms$, 
\begin{align}
\Pr\left[
\Vrfy(\sk,\cert^*)=\bot
\ \middle |
\begin{array}{ll}
\sk\lrun \keygen(1^\secp)\\
\ct \lrun \Enc(\sk,m)\\
a,b\la\bit^{p(\lambda)}\\
\cert \lrun \Delete(Z^bX^a\ct X^aZ^b)\\
\cert^*\lrun \Modify(a,b,\cert)  
\end{array}
\right] 
\leq
\negl(\secp).
\end{align}

\end{definition}

\begin{remark}
The original definition~\cite{TCC:BroIsl20,Asia:HMNY21} only considers decryption correctness and verification correctness.
In this paper, we additionally require modification correctness.
This is because we need modification correctness for the construction of FE in \cref{sec:const_fe_adapt}.
In fact, the construction of ~\cite{TCC:BroIsl20} satisfies modification correctness as well.
\end{remark}

We require that a one-time SKE with certified deletion satisfies certified deletion security defined below.
\begin{definition}[Certified Deletion Security for One-Time SKE with Certified Deletion]\label{def:security_sk_cert_del}
Let $\Sigma=(\keygen,\Enc,\Dec,\allowbreak \Delete,\Vrfy)$ be a one-time SKE scheme with certified deletion.
We consider the following security experiment $\expb{\Sigma,\cA}{otsk}{cert}{del}(\secp,b)$ against an unbounded adversary $\cA$.

\begin{enumerate}
    \item The challenger computes $\sk \la \keygen(1^\secp)$.
    \item $\cA$ sends $(m_0,m_1)\in\cM^2$ to the challenger.
    \item The challenger computes $\ct \la \Enc(\sk,m_b)$ and sends $\ct$ to $\cA$.
    \item $\cA$ sends $\cert$ to the challenger.
    \item The challenger computes $\Vrfy(\sk,\cert)$. If the output is $\bot$, the challenger sends $\bot$ to $\cA$.
    If the output is $\top$, the challenger sends $\sk$ to $\cA$. 
    \item $\cA$ outputs $b'\in \bit$. This is the output of the experiment.
\end{enumerate}
We say that $\Sigma$ is OT-CD secure if, for any unbounded $\cA$, it holds that
\begin{align}
\advc{\Sigma,\cA}{otsk}{cert}{del}(\secp)
\seteq \abs{\Pr[ \expb{\Sigma,\cA}{otsk}{cert}{del}(\secp, 0)=1] - \Pr[ \expb{\Sigma,\cA}{otsk}{cert}{del}(\secp, 1)=1] }\leq \negl(\secp).
\end{align}
\end{definition}
%{\color{green}
%\begin{remark}
%Broadbent and Islam~\cite{TCC:BroIsl20} defined the security of a one-time SKE scheme with certified deletion as two different security: ordinal IND-CPA security and certified deletion security, which intuitively guarantees that once the valid certificate is issued, decrypting the ciphertext is impossible.

%Later, Hiroka et al.~\cite{Asia:HMNY21} defined the ordinal IND-CPA security and the certified deletion security in~\cite{TCC:BroIsl20} in an integrated way, and call the security certified deletion security.
%In this paper, we use the definition of certified deletion security in~\cite{Asia:HMNY21}.
%Note that both ways of definition are equivalent. 
%That is, one-time SKE with certified deletion that satisfies ordinal IND-CPA security and certified deletion security in the sense of \cite{TCC:BroIsl20} is equivalent to one-time SKE with certified deletion that satisfies certified deletion security in the sense of \cite{Asia:HMNY21}.
%\end{remark}
%}
%\taiga{I believe that we can prove the one-time SKE with certified deletion that satisfies IND-CPA security and certified deletion security in the sense \cite{TCC:BroIsl20} is equivalent to the one-time SKE with certified deletion that satisfies the certified deletion security in the sense \cite{Asia:HMNY21}.}
Broadbent and Islam~\cite{TCC:BroIsl20} showed that a one-time SKE scheme with certified deletion that satisfies the above correctness and security exists unconditionally.

\paragraph{Secret Key Encryption (SKE).}
\begin{definition}[Secret Key Encryption (Syntax)]\label{def:ske}
Let $\lambda$ be a security parameter and let $p$, $q$, $r$ and $s$ be some polynomials.
A secret key encryption scheme is a tuple of algorithms
$\Sigma=(\keygen,\Enc,\Dec)$ with plaintext space $\Ms:=\{0,1\}^n$, ciphertext space $\Cs:= \bit^{p(\lambda)}$,
and secret key space $\mathcal{SK}:=\{0,1\}^{q(\lambda)}$.
\begin{description}
    \item[$\keygen (1^\secp) \ra \sk$:]
    The key generation algorithm takes the security parameter $1^\secp$ as input and outputs a secret key $\sk \in \mathcal{SK}$.
    \item[$\Enc(\sk,m) \ra \ct$:]
    The encryption algorithm takes $\sk$ and a plaintext $m\in\Ms$ as input, and outputs a ciphertext $\ct\in \Cs$.
    \item[$\Dec(\sk,\ct) \ra m^\prime~or~\bot$:]
    The decryption algorithm takes $\sk$ and $\ct$ as input, and outputs a plaintext $m^\prime \in \Ms$ or $\bot$.
\end{description}
\end{definition}

We require that a SKE scheme satisfies correctness defined below.
\begin{definition}[Correctness for SKE]\label{def:correctness_ske}
There are two types of correctness, namely,
decryption correctness and special correctness.
\paragraph{Decryption Correctness:}
There exists a negligible function $\negl$ such that for any $\secp\in \N$ and $m\in\Ms$, 
\begin{align}
\Pr\left[
\Dec(\sk,\ct)\neq m
\ \middle |
\begin{array}{ll}
\sk\lrun \keygen(1^\secp)\\
\ct \lrun \Enc(\sk,m)
\end{array}
\right] 
\leq\negl(\secp).
\end{align}

\paragraph{Special Correctness:}
There exists a negligible function $\negl$ such that for any $\secp\in \N$ and $m\in\Ms$, 
\begin{align}
\Pr\left[
\Dec(\sk_2,\ct)\neq \bot
\ \middle |
\begin{array}{ll}
\sk_2,\sk_1\lrun \keygen(1^\secp)\\
\ct \lrun \Enc(\sk_1,m)
\end{array}
\right] 
\leq\negl(\secp).
\end{align}
\end{definition}
\begin{remark}
In the original definition of SKE schemes, only decryption correctness is required.
In this paper, however, we additionally require special correctness.
This is because we need special correctness for the construction of FE in \cref{sec:const_garbling}.
In fact, special correctness can be easily satisfied as well.
\end{remark}

As security of SKE schemes, we consider OW-CPA security or IND-CPA security defined below.
\begin{definition}[OW-CPA Security for SKE]\label{def:OW-CPA_security_ske}
Let $\ell$ be a polynomial of the security parameter $\secp$.
Let $\Sigma=(\keygen,\Enc,\Dec)$ be a SKE scheme.
We consider the following security experiment $\expa{\Sigma,\cA}{ow}{cpa}(\secp)$ against a QPT adversary $\cA$.
\begin{enumerate}
    \item The challenger computes $\sk \la \keygen(1^\secp)$.
    \item $\cA$ sends an encryption query $m$ to the challenger.
    The challenger computes $\ct\la\Enc(\sk,m)$ and returns $\ct$ to $\cA$.
    $\cA$ can repeat this process polynomially many times.
    \item The challenger samples $(m^1,\cdots, m^\ell)\la \Ms^\ell$, computes $\ct^i \la \Enc(\sk,m^i)$ for all $i\in[\ell]$ and sends $\{\ct^i\}_{i\in[\ell]}$ to $\cA$.
    \item $\cA$ sends an encryption query $m$ to the challenger.
    The challenger computes $\ct\la\Enc(\sk,m)$ and returns $\ct$ to $\cA$.
    $\cA$ can repeat this process polynomially many times.
    \item $\cA$ outputs $m'$.
    \item The output of the experiment is $1$ if $m'=m^i$ for some $i\in[\ell]$. Otherwise, the output of the experiment is $0$.
\end{enumerate}
We say that the $\Sigma$ is OW-CPA secure if, for any QPT $\cA$, it holds that
\begin{align}
\advb{\Sigma,\cA}{ow}{cpa}(\secp)
\seteq \Pr[ \expa{\Sigma,\cA}{ow}{cpa}(\secp)=1]\leq \negl(\secp).
\end{align}
Note that we assume $1/|\Ms|$ is negligible.
\end{definition}

\begin{definition}[IND-CPA Security for SKE]\label{def:IND-CPA_security_ske}
Let $\Sigma=(\keygen,\Enc,\Dec)$ be a SKE scheme.
We consider the following security experiment $\expa{\Sigma,\cA}{ind}{cpa}(\secp,b)$ against a QPT adversary $\cA$.
\begin{enumerate}
    \item The challenger computes $\sk \la \keygen(1^\secp)$.
    \item $\cA$ sends an encryption query $m$ to the challenger.
    The challenger computes $\ct\la\Enc(\sk,m)$ and returns $\ct$ to $\cA$.
    $\cA$ can repeat this process polynomially many times.
    \item $\cA$ sends $(m_0,m_1)\in\cM^2$ to the challenger.
    \item The challenger computes $\ct \la \Enc(\sk,m_b)$ and sends $\ct$ to $\cA$.
    \item $\cA$ sends an encryption query $m$ to the challenger.
    The challenger computes $\ct\la\Enc(\sk,m)$ and returns $\ct$ to $\cA$.
    $\cA$ can repeat this process polynomially many times.
    \item $\cA$ outputs $b'\in \bit$. This is the output of the experiment.
\end{enumerate}
We say that $\Sigma$ is IND-CPA secure if, for any QPT $\cA$, it holds that
\begin{align}
\advb{\Sigma,\cA}{ind}{cpa}(\secp)
\seteq \abs{\Pr[ \expa{\Sigma,\cA}{ind}{cpa}(\secp, 0)=1] - \Pr[ \expa{\Sigma,\cA}{ind}{cpa}(\secp, 1)=1] }\leq \negl(\secp).
\end{align}
%We say that the $\Sigma$ is IND-CPA secure with sub-exponential advantage 
%if $\advb{\Sigma,\cA}{ind}{cpa}(\secp)\leq 2^{-s(\lambda)}\cdot\negl(\lambda)$ for some polynomial $s$.
\end{definition}
It is well-known that IND-CPA security implies OW-CPA security. 
A SKE scheme exists if there exists a pseudorandom function.
%~\cite{TCC:JafWic16}.

\def\Garbled{1}
\ifnum\Garbled=0
\paragraph{Garbling Scheme.}
\begin{definition}[Garbling Scheme (Syntax)]\label{def:garble}
Let $\lambda$ be a security parameter and let $p$ and $q$ be some polynomials.
Let $\Cs_n$ be a family of circuits that take $n$-bit inputs.
A garbling scheme is a tuple of algorithms $\Sigma=(\Samp,\Garble,\Eval)$
with label space $\mathcal{L}\seteq \bit^{p(\lambda)}$ and
garbled circuit space $\widetilde{\mathcal{C}}\seteq \bit^{q(\lambda)}$.
\begin{description}
\item[$\Samp(1^\secp)\ra \{L_{i,\alpha}\}_{i\in[n],\alpha\in\bit}$:]
The sampling algorithm takes a security parameter $1^\lambda$ as input and outputs $2n$ labels $\{L_{i,\alpha}\}_{i\in[n],\alpha\in\bit}$ where $L_{i,\alpha}\in\mathcal{L}$ for each $i\in[n]$ and $\alpha\in\bit$.
\item[$\Garble(1^{\secp},C,\{L_{i,\alpha}\}_{i\in[n],\alpha\in\{0,1\}})\ra \widetilde{C}$:] 
The garbling algorithm takes $1^\lambda$, a circuit $C\in\Cs_n$ and $2n$ labels $\{L_{i,\alpha}\}_{i\in[n],\alpha\in\bit}$ as input and outputs a garbled circuit $\widetilde{C}\in\widetilde{\mathcal{C}}$.

\item[$\Eval(\widetilde{C},\{L_{i,x_i}\}_{i\in[n]})\ra y$:]
The evaluation algorithm takes $\widetilde{C}$ and $n$ labels $\{L_{i,x_i}\}_{i\in[n]}$ where $x_i\in\{0,1\}$ as input and outputs $y$.
\end{description}
\end{definition}

We require that a garbling scheme satisfies evaluation correctness defined below.
\begin{definition}[Evaluation Correctness for Garbling Scheme]\label{def:correctness_garble}
There exists a negligible function $\negl$ such that for any $\secp\in\N$, $n\in \poly(\lambda)$ , $x\in\bit^n$, and $C\in\Cs_n$,
\begin{align}
\Pr\left[
\Eval(\widetilde{C},\{L_{i,x_i}\}_{i\in[n]})\neq C(x)
\ \middle |
\begin{array}{ll}
\{L_{i,\alpha}\}_{i\in[n],\alpha\in\bit}\la\Samp(1^\secp),
\widetilde{C}\lrun \Garble(1^\secp,C,\{L_{i,\alpha}\}_{i\in[n],\alpha\in\{0,1\}})
\end{array}
\right] 
\leq\negl(\secp).
\end{align}
\end{definition}

We require that a garbling scheme satisfies selective security defined below.
\begin{definition}[Selective Security for Garbling Scheme]\label{def:security_garble}
Let $\Sigma=(\Samp,\Garble,\Eval)$ be a garbling scheme.
Let $\GC.\Sim$ be a QPT algorithm.
We consider the following experiment $\Expt{\Sigma,\cA}{\mathsf{selct}}(1^\secp,b)$.
\begin{enumerate}
    \item $\cA$ sends a circuit $C\in\Cs_n$ and an input $x\in\bit^n$ to the challenger.
    \item The challenger computes $\{L_{i,\alpha}\}_{i\in[n],\alpha\in\bit}\la\Samp(1^\secp)$.
    \item If $b=0$, the challenger computes $\widetilde{C}\la\Garble(1^\secp,C,\{L_{i,\alpha}\}_{i\in[n],\alpha\in\bit})$ and returns $(\widetilde{C},\{L_{i,x_i}\}_{i\in[n]})$ to $\cA$.
    If $b=1$, the challenger computes $\widetilde{C}\la\GC.\Sim(1^\secp,1^{|C|},C(x),\{L_{i,x_i}\}_{i\in[n]})$ and returns
    $(\widetilde{C},\{L_{i,x_i}\}_{i\in[n]})$ to $\cA$.
    \item $\cA$ outputs $b'\in\bit$. This is the output of the experiment.
\end{enumerate}
We say that the $\Sigma$ is selective secure if there exists a QPT simulator $\GC.\Sim$ such that for any QPT $\cA$,
\begin{align}
    \adva{\Sigma,\cA}{selct}(\secp)\seteq \abs{\Pr[\Expt{\Sigma,\cA}{\mathsf{selct}}(1^\secp,0)=1]-\Pr[\Expt{\Sigma,\cA}{\mathsf{selct}}(1^\secp,1)=1]}\leq\negl(\secp).
\end{align}
\end{definition}
\else
\fi

\paragraph{Public Key Encryption (PKE).}
\begin{definition}[Public Key Encryption (Syntax)]\label{def:pke}
Let $\lambda$ be a security parameter and let $p$, $q$ and $r$ be some polynomials.
A public key encryption scheme is a tuple of algorithms $\Sigma=(\keygen,\Enc,\Dec)$ with 
plaintext space $\Ms:=\{0,1\}^n$, ciphertext space $\Cs:= \bit^{p(\lambda)}$, public key space $\mathcal{PK}:=\{0,1\}^{q(\lambda)}$ 
and secret key space $\mathcal{SK}\seteq \bit^{r(\lambda)}$.
\begin{description}
    \item[$\keygen(1^\secp)\ra (\pk,\sk)$:] The key generation algorithm takes as input the security parameter $1^\secp$ and outputs a public key $\pk\in\mathcal{PK}$ and a secret key $\sk\in\mathcal{SK}$.
    \item[$\Enc(\pk,m) \ra \ct$:] The encryption algorithm takes as input $\pk$ and a plaintext $m \in \Ms$, and outputs a ciphertext $\ct\in\Cs$.
    \item[$\Dec(\sk,\ct) \ra m^\prime \mbox{ or } \bot$:] The decryption algorithm takes as input $\sk$ and $\ct$, and outputs a plaintext $m^\prime$ or $\bot$.
\end{description}
\end{definition}

We require that a PKE scheme satisfies decryption correctness defined below.
\begin{definition}[Decryption Correctness for PKE]\label{def:correctness_pke}
There exists a negligible function $\negl$ such that for any $\secp\in \N$, $m\in\Ms$,
\begin{align}
\Pr\left[
\Dec(\sk,\ct)\ne m
\ \middle |
\begin{array}{ll}
(\pk,\sk)\lrun \keygen(1^\secp)\\
\ct \lrun \Enc(\pk,m)
\end{array}
\right] 
\le\negl(\secp).
\end{align}
\end{definition}

As security, we consider OW-CPA security or IND-CPA security defined below.
\begin{definition}[OW-CPA Security for PKE]\label{def:OW-CPA_security_pke}
Let $\ell$ be a polynomial of the security parameter $\secp$.
Let $\Sigma=(\keygen,\Enc,\Dec)$ be a PKE scheme.
We consider the following security experiment $\expa{\Sigma,\cA}{ow}{cpa}(\secp)$ against a QPT adversary $\cA$.
\begin{enumerate}
    \item The challenger computes $(\pk,\sk) \la \keygen(1^\secp)$.
    \item The challenger samples $(m^1,\cdots, m^\ell)\la \Ms^\ell$, computes $\ct^i \la \Enc(\pk,m^i)$ for all $i\in[\ell]$ and sends $\{\ct^i\}_{i\in[\ell]}$ to $\cA$.
    \item $\cA$ outputs $m'$.
    \item The output of the experiment is $1$ if $m'=m^i$ for some $i\in[\ell]$. Otherwise, the output of the experiment is $0$.
\end{enumerate}
We say that $\Sigma$ is OW-CPA secure if, for any QPT $\cA$, it holds that
\begin{align}
\advb{\Sigma,\cA}{ow}{cpa}(\secp)
\seteq \Pr[ \expa{\Sigma,\cA}{ow}{cpa}(\secp)=1]\leq \negl(\secp).
\end{align}
Note that we assume $1/|\Ms|$ is negligible.
\end{definition}

\begin{definition}[IND-CPA Security for PKE]\label{def:IND-CPA_pke}
Let $\Sigma=(\keygen,\Enc,\Dec)$ be a PKE scheme.
We consider the following security experiment $\expa{\Sigma,\cA}{ind}{cpa}(\secp,b)$ against a QPT adversary $\cA$.

\begin{enumerate}
    \item The challenger generates $(\pk,\sk)\lrun \keygen(1^{\secp})$, and sends $\pk$ to $\cA$.
    \item $\cA$ sends $(m_0,m_1)\in\cM^2$ to the challenger.
    \item The challenger computes $\ct \lrun \Enc(\pk,m_b)$, and sends $\ct$ to $\cA$.
    \item $\cA$ outputs $b'\in\bit$. This is the output of the experiment.
\end{enumerate}
We say that $\Sigma$ is IND-CPA secure if, for any QPT $\cA$, it holds that
\begin{align}
\advb{\Sigma,\cA}{ind}{cpa}(\secp) \seteq \abs{\Pr[\expa{\Sigma,\cA}{ind}{cpa}(\secp,0)=1]  - \Pr[\expa{\Sigma,\cA}{ind}{cpa}(\secp,1)=1]} \leq \negl(\secp).
\end{align}
%We say that the $\Sigma$ is IND-CPA secure with sub-exponential advantage 
%if $\advb{\Sigma,\cA}{ind}{cpa}(\secp)=2^{-s(\lambda)}\cdot\negl(\lambda)$ for some polynomial $s$.
\end{definition}

It is well known that IND-CPA security implies OW-CPA security.
There are many IND-CPA secure PKE schemes against QPT adversaries under standard cryptographic assumptions.
A famous one is Regev PKE scheme, which is IND-CPA secure if the learning with errors (LWE) assumption holds against QPT adversaries~\cite{JACM:Regev09}.
See ~\cite{JACM:Regev09,GPV08} for the LWE assumption and constructions of post-quantum PKE.

\def\ot{0}
\ifnum\ot=1
{\color{red}
\paragraph{$k$-out-of-$n$ Oblivious Transfer}
We use dual mode $k$-out-of-$n$ oblivious transfer introduced in~\cite{TomYam21}.
In this paper, we use oblivious transfer in the binding mode.

\begin{definition}[$k$-out-of-$n$ Oblivious Transfer(Syntax)]\label{def:ot}
A $k$-out-of-$n$ oblivious transfer scheme with message $\Ms$ is a tuple of algorithm $\Sigma=(\mathsf{CRSGen},\mathsf{Receiver},\mathsf{Sender},\mathsf{Derive})$.
\begin{description}
    \item[$\mathsf{CRSGen}(1^\secp)\ra \crs$:] 
    This is an algorithm supposed to run by a trusted third party that takes as input the security parameter $1^\secp$, and outputs $\crs$.
    \item[$\mathsf{Receiver}(\crs,J)\ra (\mathsf{ot}_1,\state)$:] 
    This is an algorithm supposed to run by a receiver that takes as input $\crs$ and $J\in[n]_k$, and outputs $\mathsf{ot}_1$ and an internal state $\state$.
    \item[$\mathsf{Sender}(\crs,\mathsf{ot}_1,\bm{\mu}) \ra \mathsf{ot}_2$:]
    This is an algorithm supposed to run by a sender that takes as input that takes as input $\crs$, $\mathsf{ot}_1$ and $\bm{\mu}\in\Ms^n$, and outputs $\mathsf{ot}_2$.
    \item[$\mathsf{Derive}(\state,\mathsf{ot}_2)\ra \bm{\mu}'$:]
    This is an algorithm supposed to run by a receiver that takes as input $\state$ and $\mathsf{ot}_2$, and outputs $\bm{\mu}'\in\Ms^k$.
\end{description}
\end{definition}
We require that a oblivious transfer satisfies decryption correctness defined below.

\begin{definition}[Decryption Correctness]
There exists a negligible function such that for any $\secp\in \N$, $J=(j_1,\cdots ,j_k)\in[n]_k$ and $\bm{\mu} \in\Ms^n$, 
\begin{align}
\Pr\left[
\bm{\mu}'\neq (\mu_{j_1},\cdots, \mu_{j_k}) 
\ \middle |
\begin{array}{ll}
\crs\la\mathsf{CRSGen}(1^\secp)\\
(\mathsf{ot}_1,\state)\la\mathsf{Receiver}(\crs,J) \\
\mathsf{ot}_2\la\mathsf{Sender}(\crs,\mathsf{ot}_1,\bm{\mu})\\
\bm{\mu}'\la\mathsf{Derive}(\state,\mathsf{ot}_2)
\end{array}
\right] 
\leq\negl(\secp).
\end{align}
\end{definition}

We require that oblivious transfer satisfies the securities defined below.
\begin{definition}[Security for Oblivious Transfer]
There are two types of security.
The first is statistical receiver's security, and the second is computational sender's security.

\paragraph{Statistical Receiver's Security:}
There exists a QPT simulator $\Sim$ such that for any $\secp\in\N$,
and any unbounded distinguisher $\cD$,

\begin{align}
\left|
\Pr\left[
1\la\cD(\crs,\mathsf{ot}_1)
\ \middle |
\begin{array}{ll}
\crs\la\mathsf{CRSGen}(1^\secp)\\
(\mathsf{ot}_1,\state)\la\mathsf{Receiver}(\crs,J) 
\end{array}
\right]
-
\Pr\left[
1\la\cD(\crs,\mathsf{ot}_1)
\ \middle |
\begin{array}{ll}
\crs\la\mathsf{CRSGen}(1^\secp)\\
\mathsf{ot}_1\la\Sim(\crs)
\end{array}
\right] 
\right|
\leq\negl(\secp).
\end{align}

\paragraph{Computational Sender's Security}
There are a PPT algorithm $\Sim_{\mathsf{CRS}}$ and a QPT algorithm $\Sim_{\mathsf{sen}}$ and a QPT algorithm $\Open_{\mathsf{rec}}$ such that the following two properties are satisfied.

\begin{itemize}
    \item For any QPT distinguisher $\cD$, we have 
    \begin{align}
        \abs{\Pr[1\la\cD(\crs)|\crs\la\mathsf{CRSGen}(1^\secp)]-\Pr[1\la\cD(\crs)|(\crs,\td)\la\Sim_{\mathsf{CRS}}(1^\secp)]}\leq \negl(\secp).
    \end{align}
    \item For any QPT adversary $\cA=(\cA_0,\cA_1)$ and $\bm{\mu}=(\mu_1,\cdots ,\mu_n)$, we have
    \begin{align}
        \left|
        \Pr\left[
        1\la\cA_1(\state_\cA, \mathsf{ot}_2)
        \ \middle |
        \begin{array}{ll}
        (\crs,\td)\la\Sim_{\mathsf{CRS}}(1^\secp)\\
        (\mathsf{ot}_1,\state_\cA)\la\cA_0(\crs,\td)\\
        \mathsf{ot}_2\la\mathsf{Sender}(\crs,\mathsf{ot}_1,\bm{\mu})
        \end{array}
        \right]
        -
        \Pr\left[
        1\la\cA_1(\state_\cA,\mathsf{ot}_2)
        \ \middle |
        \begin{array}{ll}
        (\crs,\td)\la\Sim_{\mathsf{CRS}}(1^\secp)\\
        (\mathsf{ot}_1,\state_\cA)\la\cA_0(\crs,\td)\\
        J\seteq \Open_{\mathsf{rec}}(\td,\mathsf{ot}_1)\\
        \mathsf{ot}_2\la\Sim_{\mathsf{Sen}}(\crs,\mathsf{ot}_1,J,\bm{\mu})
        \end{array}
        \right] 
        \right|
        \leq\negl(\secp).
    \end{align}
\end{itemize}

%\begin{align}
%\left|
%\Pr\left[
%1\la\cD(\state)
%\ \middle |
%\begin{array}{ll}
%\crs\la\mathsf{CRSGen}(1^\secp)\\
% (\mathsf{ot}_1,r_1)\la\cA_1(\crs)\\
%\mathsf{ot}_2\la\mathsf{Send}(\mathsf{ot}_1,(m_0,m_1),\crs)\\
%\state\la\cA_2(r_1,\mathsf{ot}_2)
%\end{array}
%\right]
%-
%\Pr\left[
%1\la\cD(\state)
%\ \middle |
%\begin{array}{ll}
%\crs\la\mathsf{CRSGen}(1^\secp)\\
%\state\la\Sim^{\mathcal{I}[m_0,m_1](\cdot)}(\crs)\\
%\end{array}
%\right] 
%\right|
%\leq\negl(\secp).
%\end{align}
\end{definition}

}
\else
\fi
%\paragraph{Pseudorandom Function.}

%\begin{definition}[Pseudorandom Function Family]\label{def:Pseudorandom_function}
%A function family $\mathcal{F}$ is pseudo random fuction family 
%if for all $f\in\mathcal{F}:\{0,1\}^{\lambda}\times \{0,1\}^{p(\secp)}\ra\bit^{m(\secp)}$ and for some polynomial $p$,
%$f$ satisfies the followings:
%\begin{itemize}
%    \item Given a key $\Ks\in\bit^{\secp}$ and an input $X\in\bit^{p(\secp)}$, there is an efficient algorithm to compute $f_k(x)=f(k,x)$.
%    \item For any QPT oracle algorithm $\cA$, we have
%    \begin{align}
%        \abs{\Pr_{K\la\bit^\secp}[A^{f_K}]-\Pr_{f\in F}[A^f]}\leq\negl(\secp)
%    \end{align}
%    where $\mathcal{F}=\{f:\bit^{p(\secp)}\ra\bit^{m(\secp)}\}$.
%\end{itemize}
%\end{definition}

\if0
\subsection{Shamir's Secret Sharing}\label{sec:secret}
\taiga{Koko kesu}
We quickly review Shamir's secret sharing scheme~\cite{Shamir79}.
Let $\mathbb{F}$ be a finite field and let $\mathbf{x}=(x_1,x_2,\cdots,x_n)$ be a vector of any distinct non-zero elements of $\mathbb{F}$, where $n<|\mathbb{F}|$.
Shamir's $t$-out-of-$n$ secret sharing scheme works as follows:

\begin{itemize}
    \item The sharing algorithm $\mathsf{Share}_{t,n}$ takes $M\in\mathbb{F}$ as input
    and outputs $\mu(x_1),\cdots,\mu(x_n)$, where $\mu(x)$ is a random polynomial of degree $t$ with constant coefficient $M$.
    Note that $\{\mu(x_j)\}_{j\in S}$ with any $S\subset[n]$ such that $|S|\le t$ looks uniformly random.
    \item
    The reconstruction algorithm $\mathsf{Reconstruct}$ takes as input $\{\mu(x_j)\}_{j\in S}$ with $|S|\ge t+1$ 
    and outputs $\mu(0)$.
\end{itemize}
\ryo{When we say $t$-out-of-$n$ secret sharing, $t$ is the threshold, so we can reconstruct the secret from $t$ shares.}
%\taiga{In the present work, when we say $t$-out-of-$n$ secret sharing, we can reconstruct the secret from more than $t+1$ shares.(GVW12 also call it $t$-out-of-$n$ secret sharing)}
\ryo{This is just a matter of wording, so it's ok to call it $t$-out-of-$n$. However, I don't believe it is standard. When we say $t$-out-of-$n$, we need $t$ shares to reconstruct the secret (at least in Shamir's original paper, see https://web.archive.org/web/20170810232803/http://cs.jhu.edu/~sdoshi/crypto/papers/shamirturing.pdf). The reason why GVW12 uses the definition is that they want to set the degree of polynomial to $t$.}
\takashi{Probably we can add a footnote or remark to explaining it by citing GVW12 (if you prefer using the non-standard one following GVW12).}

This scheme is additively and multiplicatively homomorphic: 
$\mu_1(i)+\mu_2(i)=(\mu_1+\mu_2)(i)$ and
$\mu_1(i)\mu_2(i)=\mu_1\mu_2(i)$.
Importantly, the number of multiplications should be bounded, because
the degree of the polynomial increases with the number of multiplications.
(On the other hand, the number of additions is not bounded.)
\ryo{What you want to say is that the bounded collusion-resistant scheme uses Shamir's secret sharing in a non-black-box way, right? However, the word ``Shamir's secret sharing'' never appears in other places. If you write the section for Shamir's secret sharing, you should cite the section when you use Shamir's secret sharing.}
\takashi{I agree. Shamir's secret sharing does not seem to be used anywhere in the paper (though it is implicitly used in the construction of $q$-bounded FE). I think we can simply delete this subsection.}
\fi

%Section 3:Certified Everlasting SKE
% !TEX root = main.tex
% !TEX spellcheck = en-US

\section{Certified Everlasting Secret Key Encryption}\label{sec:SKE}
In \cref{sec:def_ske}, we define certified everlasting SKE.
In \cref{sec:const_ske_rom} and \cref{sec:const_ske_wo_rom}, 
we construct a certified everlasting SKE scheme with and without QROM, respectively.

\subsection{Definition}\label{sec:def_ske}

\begin{definition}[Certified Everlasting SKE (Syntax)]\label{def:cert_ever_ske}
Let $\lambda$ be a security parameter and let $p$, $q$, $r$ and $s$ be some polynomials.
A certified everlasting SKE scheme is a tuple of algorithms $\Sigma=(\keygen,\Enc,\Dec,\Delete,\Vrfy)$ with plaintext space $\Ms:=\{0,1\}^n$, ciphertext space $\Cs:= \cQ^{\otimes p(\lambda)}$,
%\ryo{This is somewhat inconsistent with the constructions since a ciphertext consists of quantum states and classical strings. Same for PKE.}
%\taiga{Quantum states contains classical strings. So I believe that the definition is consistent.}
secret key space $\mathcal{SK}:=\{0,1\}^{q(\lambda)}$,
verification key space $\mathcal{VK}\seteq\bit^{r(\secp)}$,
and deletion certificate space $\mathcal{D}:= \cQ^{\otimes s(\lambda)}$.
\begin{description}
    \item[$\keygen (1^\secp) \ra \sk$:]
    The key generation algorithm takes the security parameter $1^\secp$ as input and outputs a secret key $\sk \in \mathcal{SK}$.
    \item[$\Enc(\sk,m) \ra (\vk,\ct)$:]
    The encryption algorithm takes $\sk$ and a plaintext $m\in\Ms$ as input, and outputs a verification key $\vk\in\mathcal{VK}$ and a ciphertext $\ct\in \Cs$.
    \item[$\Dec(\sk,\ct) \ra m^\prime~or~\bot$:]
    The decryption algorithm takes $\sk$ and $\ct$ as input, and outputs a plaintext $m^\prime \in \Ms$ or $\bot$.
    \item[$\Delete(\ct) \ra \cert$:] The deletion algorithm takes $\ct$ as input, and outputs a certification $\cert\in\mathcal{D}$.
    \item[$\Vrfy(\vk,\cert)\ra \top~\mbox{\bf or}~\bot$:] The verification algorithm takes $\vk$ and $\cert$ as input, and outputs $\top$ or $\bot$.
\end{description}
\end{definition}

We require that a certified everlasting SKE scheme satisfies correctness defined below.
\begin{definition}[Correctness for Certified Everlasting SKE]\label{def:correctness_cert_ever_ske}
There are four types of correctness, namely,
decryption correctness,
verification correctness,
special correctness,
and modification correctness.

\paragraph{Decryption Correctness:} There exists a negligible function $\negl$ such that for any $\secp\in \N$ and $m\in\Ms$, 
\begin{align}
\Pr\left[
m'\neq m
\ \middle |
\begin{array}{ll}
\sk\lrun \keygen(1^\secp)\\
(\vk,\ct) \lrun \Enc(\sk,m)\\
m'\la\Dec(\sk,\ct)
\end{array}
\right] 
\leq\negl(\secp).
\end{align}

\paragraph{Verification Correctness:} There exists a negligible function $\negl$ such that for any $\secp\in \N$ and $m\in\Ms$, 
\begin{align}
\Pr\left[
\Vrfy(\vk,\cert)=\bot
\ \middle |
\begin{array}{ll}
\sk\lrun \keygen(1^\secp)\\
(\vk,\ct) \lrun \Enc(\sk,m)\\
\cert \lrun \Delete(\ct)
\end{array}
\right] 
\leq
\negl(\secp).
\end{align}

\paragraph{Special Correctness:}
There exists a negligible function $\negl$ such that for any $\secp\in \N$ and $m\in\Ms$, 
\begin{align}
\Pr\left[
\Dec(\sk_2,\ct)\neq \bot
\ \middle |
\begin{array}{ll}
\sk_2,\sk_1\lrun \keygen(1^\secp)\\
(\vk,\ct) \lrun \Enc(\sk_1,m)
\end{array}
\right] 
\leq\negl(\secp).
\end{align}

\paragraph{Modification Correctness:}
There exists a negligible function $\negl$ and a QPT algorithm $\Modify$ such that for any $\secp\in \N$ and $m\in\Ms$, 
\begin{align}
\Pr\left[
\Vrfy(\vk,\cert^*)=\bot
\ \middle |
\begin{array}{ll}
\sk\lrun \keygen(1^\secp)\\
(\vk,\ct) \lrun \Enc(\sk,m)\\
a,b\la\bit^{p(\lambda)}\\
\cert \lrun \Delete(Z^bX^a\ct X^aZ^b)\\
\cert^*\lrun \Modify(a,b,\cert)  
\end{array}
\right] 
\leq
\negl(\secp).
\end{align}
%\ryo{If $\ct \in \cQ^{\otimes p(\lambda)}$, this is ok. However, a ciphertext consists of quantum state and classical strings in the construction. So, we need to be careful about the exposition.}

%\paragraph{Rewind Correctness:} There exists a negligible function $\negl$ such that for any $\secp\in \N$ and $m\in\Ms$, 
%\begin{align}
%\Pr\left[
%\frac{1}{2}\norm{\ct-\widehat{\ct}}_{\tr}\leq \negl(\lambda)
%\ \middle |
%\begin{array}{ll}
%\sk\lrun \keygen(1^\secp)\\
%(\vk,\ct) \lrun \Enc(\sk,m)\\
% (m',\widehat{\ct})\la\Dec(\sk,\ct)
%\end{array}
%\right] 
%\geq 1-\negl(\secp).
%\end{align}
\end{definition}

\begin{remark}
Minimum requirements for correctness are decryption correctness and verification correctness.
In this paper, however, we also require special correctness and modification correctness,
because we need special correctness for the construction of the garbling scheme in \cref{sec:const_garbling},
and modification correctness for the construction of functional encryption in \cref{sec:const_fe_adapt}.
\end{remark}

%We also consider the following notion that we call ciphertext reusability, which intuitively means that the decryption algorithm does not disturb the ciphertext.
%\begin{definition}[Ciphertext Reusability for Certified Everlasting SKE]\label{lem:ske_rewind}
%For any $\sk\in \mathcal{SK}$ and any $\ct,\widehat{\ct}\in \Cs$ such that $\Pr[\widehat{\ct}\leftarrow \Dec(\sk,\ct)]>0$,
%\begin{align}
%\frac{1}{2}\parallel\widehat{\ct}-\ct\parallel_{\tr}\leq \negl(\lambda).
%\end{align}
%\end{definition}
%\begin{remark}
%We can easily show that any certified everlasting SKE scheme that satisfies decryption correctness can be transformed into one that has ciphertext reusability by using the gentle measurement lemma~\cite{Mark2011}.
%\end{remark}

As security, we consider two definitions, 
\cref{def:IND-CPA_security_cert_ever_ske} 
and
\cref{def:cert_ever_security_cert_ever_ske} given below.
The former is just the ordinal IND-CPA security 
and the latter is the certified everlasting security
that we newly define in this paper.
Roughly, the everlasting security guarantees that any QPT adversary cannot obtain plaintext information even if it becomes computationally
unbounded and obtains the secret key after it issues a valid certificate.

\begin{definition}[IND-CPA Security for Certified Everlasting SKE]\label{def:IND-CPA_security_cert_ever_ske}
Let $\Sigma=(\keygen,\Enc,\Dec,\Delete,\Vrfy)$ be a certified everlasting SKE scheme.
We consider the following security experiment $\expa{\Sigma,\cA}{ind}{cpa}(\secp,b)$ against a QPT adversary $\cA$.
\begin{enumerate}
    \item The challenger computes $\sk \la \keygen(1^\secp)$.
    \item $\cA$ sends an encryption query $m$ to the challenger.
    The challenger computes $(\vk,\ct)\la\Enc(\sk,m)$, and returns $(\vk,\ct)$ to $\cA$.
    $\cA$ can repeat this process polynomially many times.
    \item $\cA$ sends $(m_0,m_1)\in\cM^2$ to the challenger.
    \item The challenger computes $(\vk,\ct) \la \Enc(\sk,m_b)$, and sends $\ct$ to $\cA$.
    \item $\cA$ sends an encryption query $m$ to the challenger.
    The challenger computes $(\vk,\ct)\la\Enc(\sk,m)$, and returns $(\vk,\ct)$ to $\cA$.
    $\cA$ can repeat this process polynomially many times.
    \item $\cA$ outputs $b'\in \bit$. This is the output of the experiment.
\end{enumerate}
We say that $\Sigma$ is IND-CPA secure if, for any QPT $\cA$, it holds that
\begin{align}
\advb{\Sigma,\cA}{ind}{cpa}(\secp)
\seteq \abs{\Pr[ \expa{\Sigma,\cA}{ind}{cpa}(\secp, 0)=1] - \Pr[ \expa{\Sigma,\cA}{ind}{cpa}(\secp, 1)=1] }\leq \negl(\secp).
\end{align}
\end{definition}

\begin{definition}[Certified Everlasting IND-CPA Security for Certified Everlasting SKE]\label{def:cert_ever_security_cert_ever_ske}
Let $\Sigma=(\keygen,\Enc,\Dec,\allowbreak \Delete,\Vrfy)$ be a certified everlasting SKE scheme.
We consider the following security experiment $\expd{\Sigma,\cA}{cert}{ever}{ind}{cpa}(\secp,b)$ against a QPT adversary $\cA_1$ and an unbounded adversary $\cA_2$.
\begin{enumerate}
    \item The challenger computes $\sk \la \keygen(1^\secp)$.
    \item $\cA_1$ sends an encryption query $m$ to the challenger.
    The challenger computes $(\vk,\ct)\la\Enc(\sk,m)$, and returns $(\vk,\ct)$ to $\cA_1$.
    $\cA_1$ can repeat this process polynomially many times.
    \item $\cA_1$ sends $(m_0,m_1)\in\cM^2$ to the challenger.
    \item The challenger computes $(\vk,\ct) \la \Enc(\sk,m_b)$, and sends $\ct$ to $\cA_1$.
    \item $\cA_1$ sends an encryption query $m$ to the challenger.
    The challenger computes $(\vk,\ct)\la\Enc(\sk,m)$, and returns $(\vk,\ct)$ to $\cA_1$.
    $\cA_1$ can repeat this process polynomially many times.
    \item At some point, $\cA_1$ sends $\cert$ to the challenger and sends the internal state to $\cA_2$.
    \item The challenger computes $\Vrfy(\vk,\cert)$.
    If the output is $\bot$, the challenger outputs $\bot$, and sends $\bot$ to $\cA_2$.
    Otherwise, the challenger outputs $\top$, and sends $\sk$ to $\cA_2$.
    \item $\cA_2$ outputs $b'\in\{0,1\}$.
    \item If the challenger outputs $\top$, then the output of the experiment is $b'$. Otherwise, the output of the experiment is $\bot$.
\end{enumerate}
We say that $\Sigma$ is certified everlasting IND-CPA secure if, for any QPT $\cA_1$ and any unbounded $\cA_2$, it holds that
\begin{align}
\advd{\Sigma,\cA}{cert}{ever}{ind}{cpa}(\secp)
\seteq \abs{\Pr[ \expd{\Sigma,\cA}{cert}{ever}{ind}{cpa}(\secp, 0)=1] - \Pr[ \expd{\Sigma,\cA}{cert}{ever}{ind}{cpa}(\secp, 1)=1] }\leq \negl(\secp).
\end{align}
\end{definition}

\subsection{Construction with QROM}\label{sec:const_ske_rom}
In this section, we construct a certified everlasting SKE scheme with QROM.
Our construction is similar to that of the certified everlasting commitment scheme in \cite{EPRINT:HMNY21_2}.
The difference is that we use SKE instead of commitment.
%Note that, in the construction presented in this section, the deletion certificate space is classical space unlike the construction presented in \cref{sec:const_ske_wo_rom}.

\paragraph{Our certified everlasting SKE scheme.}
%Let $p,q,r,s$ and $t$ be some polynomials.
We construct a certified everlasting SKE scheme $\Sigma_{\mathsf{cesk}}=(\keygen,\Enc,\Dec,\allowbreak \Delete,\Vrfy)$ from the following primitives.
%with plaintext space $\Ms=\bit^n$, 
%ciphertext space $\Cs=\bit^{q(\lambda)+s(\lambda)}\otimes \cQ^{\otimes p(\lambda)}$,
%secret key space $\mathcal{SK}=\bit^{t(\lambda)}$,
%verification key space $\mathcal{VK}=\bit^{q(\lambda)}$ and deletion certificate space $\mathcal{D}=\bit^{r(\lambda)}$ from the following primitives.
\begin{itemize}
    \item A one-time SKE with certified deletion scheme~(\cref{def:sk_cert_del}) $\Sigma_{\mathsf{skcd}}=\mathsf{CD}.(\keygen,\Enc,\Dec,\Delete,\Vrfy)$. 
    %with plaintext space $\bit^n$,
    %ciphertext space $\cQ^{\otimes p(\lambda)}$, secret key space $\bit^{q(\lambda)}$ and deletion certificate space $\bit^{r(\lambda)}$.
    \item A SKE scheme~(\cref{def:ske}) $\Sigma_{\mathsf{sk}}=\mathsf{SKE}.(\keygen,\Enc,\Dec)$ with plaintext space $\bit^\secp$.
    %with plaintext space $\Ms=\bit^\secp$,
    %ciphertext space $\Cs=\bit^{s(\lambda)}$, secret key space $\mathcal{SK}=\bit^{t(\lambda)}$.
    \item A hash function $H$ %from $\bit^{\secp}$ to $\bit^{q(\lambda)}$
    modeled as a quantum random oracle.
\end{itemize}

\begin{description}
    \item[$\keygen(1^{\secp})$:] $ $
    \begin{itemize}
        \item Generate $\ske.\sk\la\SKE.\keygen(1^\secp)$.
        \item Output $\sk\seteq\ske.\sk$.
    \end{itemize}
    \item[$\Enc(\sk,m)$:] $ $
    \begin{itemize}
        \item Parse $\sk=\ske.\sk$.
        \item Generate $\mathsf{cd}.\sk\la\mathsf{CD}.\keygen(1^\secp)$ and $R\la\bit^\lambda$.
        \item Compute $\ske.\ct\la \SKE.\Enc(\ske.\sk,R)$.
        \item Compute $h\seteq H(R)\oplus \mathsf{cd}.\sk$ and $\mathsf{cd}.\ct\la \mathsf{CD}.\Enc(\mathsf{cd}.\sk,m)$.
        \item Output $\ct\seteq (h,\ske.\ct,\mathsf{cd}.\ct)$ and $\vk\seteq \mathsf{cd}.\sk$.
    \end{itemize}
    \item[$\Dec(\sk,\ct)$:] $ $
    \begin{itemize}
        \item Parse $\sk=\ske.\sk$ and $\ct= (h,\ske.\ct,\mathsf{cd}.\ct)$.
        \item Compute $R' {\bf\,\, or \,\,}\bot\la\SKE.\Dec(\ske.\sk,\ske.\ct)$.
        If it outputs $\bot$, $\Dec(\sk,\ct)$ outputs $\bot$.
        \item Compute $\mathsf{cd}.\sk'\seteq H(R')\oplus h$.
        \item Compute $m' \la\mathsf{CD}.\Dec(\mathsf{cd}.\sk',\mathsf{cd}.\ct)$.
        \item Output $m'$.
    \end{itemize}
    \item[$\Delete(\ct)$:] $ $
    \begin{itemize}
        \item Parse $\ct=(h,\ske.\ct,\mathsf{cd}.\ct)$.
        \item Compute $\mathsf{cd}.\cert\la\mathsf{CD}.\Delete(\mathsf{cd}.\ct)$.
        \item Output $\cert\seteq\mathsf{cd}.\cert$.
    \end{itemize}
    \item[$\Vrfy(\vk,\cert)$:] $ $
    \begin{itemize}
        \item Parse $\vk=\mathsf{cd}.\sk$ and $\cert=\mathsf{cd}.\cert$.
        \item Compute $b\la\SKE.\Vrfy(\mathsf{cd}.\sk,\mathsf{cd}.\cert)$.
        \item Output $b$.
    \end{itemize}
\end{description}

\paragraph{Correctness:}
It is easy to see that correctness of $\Sigma_{\mathsf{cesk}}$ comes from those of $\Sigma_{\mathsf{sk}}$ and $\Sigma_{\mathsf{skcd}}$.

\paragraph{Security:}
The following two theorems hold.
\begin{theorem}\label{thm:sk_comp_security}
If $\Sigma_{\mathsf{sk}}$ satisfies the OW-CPA security~(\cref{def:OW-CPA_security_ske}) and $\Sigma_{\mathsf{skcd}}$ satisfies the OT-CD security~(\cref{def:sk_cert_del}), 
$\Sigma_{\mathsf{cesk}}$ satisfies the IND-CPA security~(\cref{def:IND-CPA_security_cert_ever_ske}).
\end{theorem}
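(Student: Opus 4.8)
The plan is to show that the two experiments $\expa{\Sigma_{\mathsf{cesk}},\cA}{ind}{cpa}(\secp,0)$ and $\expa{\Sigma_{\mathsf{cesk}},\cA}{ind}{cpa}(\secp,1)$ are computationally indistinguishable through a short hybrid argument whose only nontrivial steps invoke the two assumed securities. The guiding observation is that in the challenge ciphertext $\ct^{*}=(h^{*},\ske.\ct^{*},\mathsf{cd}.\ct^{*})$ the message $m_b$ enters only through $\mathsf{cd}.\ct^{*}=\mathsf{CD}.\Enc(\mathsf{cd}.\sk^{*},m_b)$, while the key $\mathsf{cd}.\sk^{*}$ is masked as $h^{*}=H(R^{*})\oplus \mathsf{cd}.\sk^{*}$ with $R^{*}$ hidden inside $\ske.\ct^{*}=\SKE.\Enc(\ske.\sk,R^{*})$. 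Since the challenge verification key $\vk^{*}=\mathsf{cd}.\sk^{*}$ is never revealed (only $\ct$ is returned in the challenge step), it suffices to argue first that $\mathsf{cd}.\sk^{*}$ stays hidden, and then that $\Sigma_{\mathsf{skcd}}$ hides $m_b$.

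First I would define, for each $b\in\bit$, a hybrid $\mathsf{Hyb}_b$ identical to the real experiment except that $h^{*}$ is replaced by a freshly sampled uniform string (equivalently, the mask $H(R^{*})$ is swapped for an independent uniform value). To bound $\lvert\Pr[\expa{\Sigma_{\mathsf{cesk}},\cA}{ind}{cpa}(\secp,b)=1]-\Pr[\mathsf{Hyb}_b=1]\rvert$ I would apply the one-way-to-hiding lemma (\cref{lem:one-way_to_hiding}) with punctured set $S=\{R^{*}\}$, with $H$ the true oracle and $G$ its reprogramming at $R^{*}$ to the fresh value. The game with oracle $G$ reproduces the real experiment (since $h^{*}=G(R^{*})\oplus \mathsf{cd}.\sk^{*}$ and $G$ is again a uniformly random function, up to a negligible collision term $R_i=R^{*}$ over the polynomially many encryption queries), whereas the game with oracle $H$ together with the uniform $h^{*}$ reproduces $\mathsf{Hyb}_b$. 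O2H then bounds the gap by $2q\sqrt{\Pr[\cB^{H}\in S]}$, where the extractor $\cB$ measures a random oracle query and the event is that it equals $R^{*}$.

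The main work is bounding this extraction probability by the OW-CPA security of $\Sigma_{\mathsf{sk}}$. I would build a OW-CPA adversary $\cC$ that, taking $\ell=1$ (the challenge message space $\bit^{\secp}$ matching the distribution of $R^{*}$), embeds the OW-CPA challenge ciphertext as $\ske.\ct^{*}$. The crucial point is that in $\mathsf{Hyb}_b$ the value $R^{*}$ is never used to construct the ciphertext (the mask $h^{*}$ is now independent of it), so $\cC$ can simulate the whole hybrid \emph{without knowing} $R^{*}$: it answers $\cA$'s encryption queries through its own encryption oracle, uses a fresh $\mathsf{cd}.\sk_i$ for each query, and simulates $H$ via the compressed-oracle technique. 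Running the O2H extractor and outputting its measured query then yields $R^{*}$ with exactly the extraction probability, which is therefore negligible. The delicate points are purely bookkeeping: keeping the simulated oracle consistent with the O2H analysis and ensuring each encryption query uses a fresh one-time key so that $\Sigma_{\mathsf{skcd}}$ is never reused.

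Finally, once $h^{*}$ is uniform and independent, $\mathsf{cd}.\sk^{*}$ appears only inside $\mathsf{cd}.\ct^{*}$, so $\mathsf{Hyb}_0\approx\mathsf{Hyb}_1$ follows from the security of $\Sigma_{\mathsf{skcd}}$. Here I would use that OT-CD security (\cref{def:sk_cert_del}) is stronger than plain one-time indistinguishability: an adversary that submits an invalid certificate simply never receives $\mathsf{cd}.\sk^{*}$, yet its guess is still the experiment's output, so the OT-CD advantage upper bounds the distinguishing advantage. Concretely, a simulator $\cD$ generates $\ske.\sk$ and simulates $H$ itself, forwards the challenge pair $(m_0,m_1)$ to the OT-CD challenger to obtain $\mathsf{cd}.\ct^{*}$, sets $h^{*}$ uniformly at random, and submits a dummy certificate; it needs no knowledge of $\mathsf{cd}.\sk^{*}$. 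Chaining $\expa{\Sigma_{\mathsf{cesk}},\cA}{ind}{cpa}(\secp,0)\approx\mathsf{Hyb}_0\approx\mathsf{Hyb}_1\approx\expa{\Sigma_{\mathsf{cesk}},\cA}{ind}{cpa}(\secp,1)$ yields the claim. I expect the O2H-to-OW-CPA step to be the principal obstacle, as it requires keeping the reprogrammed oracle, the random-oracle simulation, and the embedded SKE challenge mutually consistent.
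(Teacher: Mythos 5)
Your proof is correct and follows essentially the route the paper intends: the paper omits this proof, noting it is analogous to that of \cref{thm:sk_ever_security} (which in turn follows \cite[Theorem~5.8]{EPRINT:HMNY21_2}), namely an O2H reprogramming of $H$ at $R^{*}$ whose extraction probability is bounded by the OW-CPA security of $\Sigma_{\mathsf{sk}}$, followed by a reduction to the OT-CD security of $\Sigma_{\mathsf{skcd}}$. Your key observation---that OT-CD subsumes plain one-time indistinguishability because the experiment in \cref{def:security_sk_cert_del} outputs $b'$ even when the certificate fails to verify, so the reduction may submit a dummy certificate and never needs $\mathsf{cd}.\sk^{*}$---is exactly the right lever for this computational (IND-CPA) variant, where no deletion or key reveal occurs.
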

Its proof is similar to that of \cref{thm:sk_ever_security}, and therefore we omit it.

\begin{theorem}\label{thm:sk_ever_security}
If $\Sigma_{\mathsf{sk}}$ satisfies the OW-CPA security~(\cref{def:OW-CPA_security_ske})
and $\Sigma_{\mathsf{skcd}}$ satisfies the OT-CD security~(\cref{def:sk_cert_del}),
$\Sigma_{\mathsf{cesk}}$ satisfies the certified everlasting IND-CPA security~(\cref{def:cert_ever_security_cert_ever_ske}).
\end{theorem}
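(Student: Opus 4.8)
The plan is to prove certified everlasting IND-CPA security by gluing together two reductions: the \emph{computational} OW-CPA security of $\Sigma_{\mathsf{sk}}$ (invoked only against the bounded stage $\cA_1$, through the one-way to hiding lemma, \cref{lem:one-way_to_hiding}) and the \emph{information-theoretic} OT-CD security of $\Sigma_{\mathsf{skcd}}$ (invoked against the unbounded stage $\cA_2$). The key difficulty is that once the certificate is accepted, $\cA_2$ receives $\sk=\ske.\sk$ and, being unbounded, can decrypt $\ske.\ct$ to recover $R$, evaluate $H(R)$, and compute $\mathsf{cd}.\sk=H(R)\oplus h$. Hence the hiding of $R$ gives no protection against $\cA_2$; the only guarantee left is OT-CD security of the inner ciphertext $\mathsf{cd}.\ct=\mathsf{CD}.\Enc(\mathsf{cd}.\sk,m_b)$, which keeps $m_b$ hidden even given $\mathsf{cd}.\sk$ once a valid $\mathsf{cd}.\cert$ was produced. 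To set up a reduction to OT-CD, the reduction must build the challenge ciphertext \emph{without} knowing $\mathsf{cd}.\sk$ (OT-CD reveals it only after verifying the certificate), so I must first decouple $h$ from $\mathsf{cd}.\sk$ as seen by $\cA_1$.

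First I would define a hybrid $\hybi{1}(b)$ identical to $\expd{\Sigma,\cA}{cert}{ever}{ind}{cpa}(\secp,b)$ except that in the challenge ciphertext $h$ is sampled uniformly at random, $\cA_1$'s oracle is left honest at $R$, and the oracle is reprogrammed to $H(R)\seteq h\oplus\mathsf{cd}.\sk$ only for the queries of $\cA_2$ after it receives $\ske.\sk$. By lazy sampling, the real game equals the variant in which $h$ is first drawn uniformly and then $H(R)$ is defined as $h\oplus\mathsf{cd}.\sk$ for both stages; thus the real game and $\hybi{1}(b)$ differ \emph{only} in the oracle value that $\cA_1$ sees at the single point $R$, while $h$, $\ske.\ct$, $\mathsf{cd}.\ct$, and $\cA_2$'s oracle are unchanged. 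I would bound this gap with \cref{lem:one-way_to_hiding} applied to $\cA_1$ alone: treating the entire downstream computation (certificate verification, the $\ske.\sk$-reveal, and the whole unbounded run of $\cA_2$ against its own fixed oracle) as a query-free post-processing channel, the resulting distinguisher makes only $\cA_1$'s polynomially many oracle queries, so the output gap is at most $2q\sqrt{\Pr[\cB\text{ outputs }R]}$. The extractor $\cB$ is a QPT OW-CPA adversary: it embeds its challenge as $\ske.\ct$, samples $h$ at random, builds $\mathsf{cd}.\ct$ from a self-generated $\mathsf{cd}.\sk$, answers $\cA_1$'s encryption queries using its own encryption oracle, runs $\cA_1$ under an honestly simulated random oracle, measures a random query, and outputs the candidate $R$; note it never needs $R$ to run, precisely because $h$ is already random. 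Hence the real game and $\hybi{1}(b)$ are negligibly close.

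Next I would bound $\abs{\Pr[\hybi{1}(0)=1]-\Pr[\hybi{1}(1)=1]}$ by the OT-CD advantage of $\Sigma_{\mathsf{skcd}}$. The reduction $\cB'$ may be unbounded, so it can run $\cA_2$: it generates $\ske.\sk$ and $R$ itself (answering encryption queries by running $\Enc$ honestly), sets $h$ at random, sends $(m_0,m_1)$ to the OT-CD challenger to receive $\mathsf{cd}.\ct$, and assembles the challenge ciphertext $(h,\ske.\ct,\mathsf{cd}.\ct)$, simulating the random oracle for $\cA_1$ without touching $R$. When $\cA_1$ produces $\mathsf{cd}.\cert$, $\cB'$ relays it to the OT-CD challenger; on a valid certificate it obtains $\mathsf{cd}.\sk$, reprograms $H(R)\seteq h\oplus\mathsf{cd}.\sk$, hands $\ske.\sk$ to $\cA_2$, and finally outputs $\cA_2$'s guess $b'$ (and outputs $0$ whenever the certificate is rejected, matching the $\bot$-convention of the everlasting experiment). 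This is a faithful simulation of $\hybi{1}(b)$, so $\cB'$'s OT-CD advantage equals the hybrid gap, which is therefore negligible. Combining the two bounds by the triangle inequality gives $\advd{\Sigma,\cA}{cert}{ever}{ind}{cpa}(\secp)\le\negl(\secp)$.

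I expect the main obstacle to be the first reduction rather than the second. The delicate point is that $\cA_1$ and $\cA_2$ see \emph{different} oracles at $R$ in $\hybi{1}$, yet the one-way to hiding bound must involve only $\cA_1$'s polynomial query count; this forces one to argue carefully that the unbounded $\cA_2$ together with the key-reveal constitutes a post-processing that issues no query to the reprogrammed oracle, and that the consistency $\mathsf{cd}.\sk=H(R)\oplus h$ needed for $\cA_2$'s decryption is preserved across the reprogramming. Getting this separation right—so that the computational assumption is spent only on the bounded stage while the unbounded stage is controlled purely by OT-CD security—is the technical heart of the proof.
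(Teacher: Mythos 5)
Your proposal is correct and is essentially the paper's own argument: the paper proves \cref{thm:sk_ever_security} by deferring to the proof of \cite[Theorem~5.8]{EPRINT:HMNY21_2}, which proceeds exactly as you do --- a hybrid in which $h$ is made uniform and decoupled from $\mathsf{cd}.\sk$ in $\cA_1$'s view, the gap bounded via the one-way to hiding lemma (\cref{lem:one-way_to_hiding}) with an extractor that breaks OW-CPA of $\Sigma_{\mathsf{sk}}$, followed by an unbounded reduction that embeds the OT-CD challenge as $\mathsf{cd}.\ct$, relays the certificate, and reprograms $H(R)\seteq h\oplus\mathsf{cd}.\sk$ for the unbounded stage. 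Your closing remarks correctly identify the same technical subtlety (applying O2H to the bounded stage only, with $\cA_2$ folded into query-free post-processing against a fixed consistent oracle) that the referenced proof must also navigate.
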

%\begin{proof}[Proof of \cref{thm:sk_ever_security}]
Its proof is similar to that of \cite[Theorem~5.8]{EPRINT:HMNY21_2}.
%For the convenience of readers, we describe the proof in \cref{sec:proof_rom}.
%\end{proof}

\subsection{Construction without QROM}\label{sec:const_ske_wo_rom}
In this section, we construct a certified everlasting SKE scheme without QROM.
Note that unlike the construction with QROM (\cref{sec:const_ske_rom}), in this construction
the plaintext space is of constant size.
However, the size can be extended to the polynomial size via the standard hybrid argument.
Our construction is similar to that of revocable quantum timed-release encryption in \cite{JACM:Unruh15}.
The difference is that we use SKE instead of timed-release encryption.
%(Because of this difference, we have to modify some parts of the security proofs. For details, see~\cref{sec:proof_wo_rom}.)
%Note that, in the construction presented in this section, the deletion certificate space is quantum space unlike the construction presented in \cref{sec:const_ske_rom}.

\paragraph{Our certified everlasting SKE scheme without QROM.}
%Let $\lambda$ be a security parameter, and 
Let $k_1$ and $k_2$ be constants such that $k_1>k_2$.
Let $p$, $q$, $r$, $s$ and $t$ be polynomials.
Let $(C_1,C_2)$ be a CSS code with parameters $q,k_1,k_2,t$.
We construct a certified everlasting SKE scheme
$\Sigma_{\mathsf{cesk}}=(\keygen,\Enc,\Dec,\Delete,\Vrfy)$ with plaintext space $\Ms=C_1/C_2$ (isomorphic to $\bit^{k_1-k_2}$), 
ciphertext space $\Cs=\cQ^{\otimes \left(p(\lambda)+q(\lambda)\right)}\times \{0,1\}^{r(\lambda)}\times \bit^{q(\lambda)}/C_1\times C_1/C_2$,
secret key space $\mathcal{SK}= \bit^{s(\lambda)}$,
verification key space $\mathcal{VK}=\{0,1\}^{p(\lambda)}\times [p(\lambda)+q(\lambda)]_{p(\lambda)}\times \bit^{p(\lambda)}$ 
and deletion certificate space $\mathcal{D}=\mathcal{Q}^{\otimes \left(p(\lambda)+q(\lambda)\right)}$
from the following primitive.
\begin{itemize}
    \item A SKE scheme~(\cref{def:ske}) $\Sigma_{\mathsf{sk}}=\SKE.(\keygen,\Enc,\Dec)$ with plaintext space $\Ms=\{0,1\}^{p(\lambda)}\times[p(\lambda)+q(\lambda)]_{p(\lambda)}\times \bit^{p(\lambda)} \times C_1/C_2$, secret key space $\mathcal{SK}=\{0,1\}^{s(\secp)}$ and ciphertext space $\Cs=\bit^{r(\lambda)}$. 
\end{itemize}
The construction is as follows. (We will omit the security parameter below.)
\begin{description}
\item[$\keygen(1^\secp)$:]$ $
\begin{itemize}
    \item Generate $\ske.\sk\la\SKE.\keygen(1^\secp)$.
    \item Output $\sk\seteq\ske.\sk$.
\end{itemize}
\item[$\Enc(\sk,m)$:] $ $
\begin{itemize}
    \item Parse $\sk=\ske.\sk$.
    \item Generate $B\la\bit^{p}$, $Q\la[p+q]_{p}$, $y\la C_1/C_2$, $u\la\bit^q/C_1$, $r\la\bit^p$,
    $x\la C_1/C_2$, $w\la C_2$.
    \item Compute $\ske.\ct\la\SKE.\Enc\left(\ske.\sk,(B,Q,r,y)\right)$.
    \item Let $U_Q$ be the unitary that permutes the qubits in $Q$ into the first half of the system.
    (I.e., $U_Q\ket{x_1x_2\cdots x_{p+q}}=\ket{x_{a_1}x_{a_2}\cdots x_{a_p}x_{b_1}x_{b_2}\cdots x_{b_q}}$ with $Q\seteq \{a_1,a_2,\cdots, a_p\}$ and $\{1,2,\cdots, p+q\}\backslash Q\seteq \{b_1,b_2,\cdots,b_q\}$.)
    \item Construct a quantum state $\ket{\Psi} \seteq U_{Q}^{\dagger}(H^B\otimes I^{\otimes q})(\ket{r}\otimes \ket{x\oplus w\oplus u})$.
    \item Compute $h\seteq m\oplus x\oplus y$.
    \item Output $\ct\seteq (\ket{\Psi},\ske.\ct,u,h)$ and $\vk\seteq (B,Q,r)$.
\end{itemize}
\item[$\Dec(\sk,\ct)$:] $ $
\begin{itemize}
    \item Parse $\sk=\ske.\sk$, $\ct=(\ket{\Psi},\ske.\ct,u,h)$.
    \item Compute $(B,Q,r,y)/\bot\la\SKE.\Dec(\ske.\sk,\ske.\ct)$.
    If $\bot\la\SKE.\Dec(\ske.\sk,\ske.\ct)$, $\Dec(\sk,\ct)$ outputs $\bot$ and aborts.
    \item Apply $U_Q$ to $\ket{\Psi}$, measure the last $q$-qubits in the computational basis and obtain the measurement outcome $\gamma\in\bit^q$. 
    %Let $|\Phi_r\rangle$ be the (normalized) state of the first $p$-qubits after the measurement.
    \item Compute $x\seteq \gamma\oplus u$ mod $C_2$.
    %\item Generate the quantum state $|\widehat{\Psi}\rangle\seteq U_{Q}^{\dagger}(|\Phi_r\rangle\otimes |\gamma\rangle)$.
    \item Output $m'\seteq h\oplus x\oplus y$.
    %and $\widehat{\ct}\seteq (|\widehat{\Psi}\rangle,\ske.\ct,u,h)$.
\end{itemize}
\item[$\Delete(\ct)$:] $ $
\begin{itemize}
    \item Parse $\ct=(\ket{\Psi},\ske.\ct,u,h)$.
    \item Output $\cert\seteq\ket{\Psi}$.
\end{itemize}
\item[$\Vrfy(\vk,\cert)$:] $ $
\begin{itemize}
    \item Parse $\vk=(B,Q,r)$ and $\cert=\ket{\Psi}$.
    \item Apply $(H^B\otimes I^{\otimes q})U_Q$ to $\ket{\Psi}$, measure the first $p$-qubits in the computational basis and obtain the measurement outcome $r'\in\bit^p$.
    \item Output $\top$ if $r=r'$ and output $\bot$ otherwise.
\end{itemize}
\end{description}

\paragraph{Correctness.}
Correctness easily follows from that of
$\Sigma_{\sk}$.

\paragraph{Security.}
The following two theorems hold.

\begin{theorem}\label{thm:sk_comp_security_wo_rom}
If $\Sigma_{\mathsf{sk}}$ is IND-CPA secure~(\cref{def:IND-CPA_security_ske}), then $\Sigma_{\mathsf{cesk}}$ is IND-CPA secure~(\cref{def:IND-CPA_security_cert_ever_ske}).
\end{theorem}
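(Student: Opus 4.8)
The plan is to prove indistinguishability of the two experiments $\expa{\Sigma_{\mathsf{cesk}},\cA}{ind}{cpa}(\secp,0)$ and $\expa{\Sigma_{\mathsf{cesk}},\cA}{ind}{cpa}(\secp,1)$ through a short hybrid argument whose only computational steps invoke the IND-CPA security of $\Sigma_{\mathsf{sk}}$, sandwiching an information-theoretic one-time-pad argument. The key structural observation is that in the challenge ciphertext $\ct=(\ket{\Psi},\ske.\ct,u,h)$ the mask $y$ appears in exactly two places: inside $\ske.\ct=\SKE.\Enc(\ske.\sk,(B,Q,r,y))$ and in $h=m_b\oplus x\oplus y$. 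Since $\vk=(B,Q,r)$ is \emph{not} handed to the adversary in the challenge phase and $y\la C_1/C_2$ is sampled uniformly and independently of everything else, once $y$ is removed from $\ske.\ct$ it acts as a perfect one-time pad on $h$ (using that $C_1/C_2\cong\bit^{k_1-k_2}$ is a group under the coset XOR), so $h$ becomes uniform and independent of $b$.

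Concretely, I would set $\hybi{0}\seteq\expa{\Sigma_{\mathsf{cesk}},\cA}{ind}{cpa}(\secp,0)$ and $\hybi{3}\seteq\expa{\Sigma_{\mathsf{cesk}},\cA}{ind}{cpa}(\secp,1)$ and interpolate as follows. In $\hybi{1}$ the challenge SKE ciphertext is changed to $\ske.\ct\la\SKE.\Enc(\ske.\sk,(B,Q,r,0))$ while $h=m_0\oplus x\oplus y$ still uses the freshly sampled $y$; then $\hybi{0}\cind\hybi{1}$ by IND-CPA of $\Sigma_{\mathsf{sk}}$. In $\hybi{2}$ the message $m_0$ in $h$ is replaced by $m_1$; since in both $\hybi{1}$ and $\hybi{2}$ the value $y$ occurs nowhere in $\cA$'s view except in $h$, the variable $h=m_b\oplus x\oplus y$ is uniform and hence identically distributed for $b=0$ and $b=1$, so $\hybi{1}$ and $\hybi{2}$ are \emph{identically} distributed. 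Finally $\hybi{2}\cind\hybi{3}$ by reverting the challenge SKE ciphertext to $\SKE.\Enc(\ske.\sk,(B,Q,r,y))$, again by IND-CPA of $\Sigma_{\mathsf{sk}}$.

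For each computational step I would build a reduction $\cB$ playing the IND-CPA game of $\Sigma_{\mathsf{sk}}$: it answers every encryption query $m$ of $\cA$ by sampling the local randomness $(B,Q,y,u,r,x,w)$, obtaining $\ske.\ct$ from its own encryption oracle on $(B,Q,r,y)$, assembling $\ket{\Psi}$ and $h=m\oplus x\oplus y$, and returning $(\vk,\ct)$; this simulates non-challenge ciphertexts correctly in all hybrids since only the challenge ciphertext is altered. At the challenge, $\cB$ samples the randomness, submits the pair $\big((B,Q,r,y),(B,Q,r,0)\big)$ to its challenger, uses the returned $\ske.\ct^\ast$ together with its self-sampled $y$ in $h$, and forwards the assembled $\ct^\ast$; the challenger encrypting the first or second message corresponds exactly to $\hybi{0}$ versus $\hybi{1}$, so $\cB$ echoing $\cA$'s guess yields $\abs{\Pr[\hybi{0}=1]-\Pr[\hybi{1}=1]}\le\advb{\Sigma_{\mathsf{sk}},\cB}{ind}{cpa}(\secp)\le\negl(\secp)$, and symmetrically for $\hybi{2}\cind\hybi{3}$. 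The proof is largely routine; the one point requiring care is verifying that after the SKE ciphertext no longer carries $y$, the variable $y$ is genuinely independent of everything in $\cA$'s view (in particular of $x$, which also reappears inside $\ket{\Psi}$), so that the coset one-time pad masks $m_b$ perfectly and the middle step is an exact equality rather than merely computational indistinguishability.
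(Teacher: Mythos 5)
Your proof is correct; the paper omits this proof as ``straightforward,'' and your hybrid argument---invoking IND-CPA security of $\Sigma_{\mathsf{sk}}$ to replace $y$ inside $\ske.\ct$ by a fixed value, then using $y$ as a perfect one-time pad on $h=m_b\oplus x\oplus y$, and reverting---is precisely the intended routine argument, including the reductions that simulate all non-challenge queries via the SKE encryption oracle while self-sampling $(B,Q,r,y,u,x,w)$ and assembling $\ket{\Psi}$ and $\vk$ locally. Your flagged care point is also resolved correctly: in the middle hybrids $y$ is independent of $x$ (and hence of $\ket{\Psi}$, $u$, and $\ske.\ct$), and since $C_1/C_2$ is a linear subspace of $\bit^{q}$ closed under XOR, $h$ is uniform over it for either $m_b$, so that step is an exact equality as you claim.
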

%\begin{proof}[Proof of \cref{thm:sk_comp_security_wo_rom}]

Its proof is straightforward, so we omit it.
%\end{proof}

\begin{theorem}\label{thm:sk_ever_security_wo_rom}
If $\Sigma_{\mathsf{sk}}$ is IND-CPA secure~(\cref{def:IND-CPA_security_ske}) and $tp/(p+q)-4(k_1-k_2){\rm ln}2$ is superlogarithmic, then $\Sigma_{\mathsf{cesk}}$ is certified everlasting IND-CPA secure~(\cref{def:cert_ever_security_cert_ever_ske}).
\end{theorem}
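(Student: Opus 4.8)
The plan is to combine an information-theoretic certified-deletion property of the CSS/BB84 encoding with the (computational) IND-CPA security of $\Sigma_{\mathsf{sk}}$ through a short sequence of hybrids, in the spirit of Unruh's analysis of revocable timed-release encryption~\cite{JACM:Unruh15}. The guiding observation is that the challenge bit $b$ enters the ciphertext only through the classical string $h = m_b \oplus x \oplus y$: the second mask $y$ is eventually handed to the unbounded adversary $\cA_2$ (it is recoverable from $\sk$ via $\ske.\ct$), whereas the first mask $x$ lives \emph{only} in the quantum register $\ket{\Psi}$ and is never placed in any classical SKE ciphertext. Consequently, $\cA_2$'s advantage at distinguishing $b$ is bounded by its ability to recover $x$ from its final state (given $B,Q,r,y,u$), so the whole theorem reduces to showing that a valid deletion certificate forces $x$ to be statistically close to uniform and independent of $\cA_2$'s view. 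Since the experiment of \cref{def:cert_ever_security_cert_ever_ske} outputs $\bot$ whenever the certificate fails, I would first restrict attention to the valid-certificate branch (handling the rest by \cref{lem:defference}).

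The information-theoretic heart I would isolate as a standalone lemma: in a world where the first-stage adversary $\cA_1$ has \emph{no} classical information about $(B,Q)$ (it holds only $\ket{\Psi}$, $u$, $h$, and a ciphertext that is independent of $(B,Q)$), passing verification forces the payload to be deleted. Verification measures a uniformly random size-$p$ subset $Q$ of the $p+q$ positions in the BB84 bases $B$ and checks the outcome against $r$. A quantum-sampling argument then guarantees that, conditioned on passing, a hypothetical complementary measurement on the $q$ payload positions would land close (in relative Hamming weight controlled by $t/(p+q)$) to a codeword, so that the CSS structure $C_2 \subseteq C_1$ leaves $\cA_2$ with large uncertainty about the coset $x \in C_1/C_2$. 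Privacy amplification via the coset encoding then yields that $x$ is within trace distance roughly $2^{-\Omega(tp/(p+q) - 4(k_1-k_2)\ln 2)}$ of uniform and independent of $\cA_2$'s residual register together with $(B,Q,r,y,u)$. This is exactly where the hypothesis that $tp/(p+q) - 4(k_1-k_2)\ln 2$ is superlogarithmic is used: it makes this distance $\negl(\secp)$, whence $h$ is a near-perfect one-time pad on $m_b$ and $\cA_2$'s output is nearly independent of $b$.

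To put ourselves in the "$\cA_1$ ignorant of $(B,Q)$" world I would invoke IND-CPA security of $\Sigma_{\mathsf{sk}}$ (\cref{def:IND-CPA_security_ske}): in $\hybi{1}$ the challenger replaces $\ske.\ct=\SKE.\Enc(\sk,(B,Q,r,y))$ by an encryption of a dummy message (and treats every pre- and post-challenge encryption-query ciphertext analogously, as they carry fresh, challenge-independent randomness). The reduction to IND-CPA is QPT: it embeds the SKE challenge into $\ct$, runs $\cA_1$, and performs $\Vrfy$ using $\vk=(B,Q,r)$, which needs no secret key. The delicate point—and the main obstacle—is the interface with the \emph{unbounded} $\cA_2$: the reduction cannot run $\cA_2$, and $\cA_2$ (given $\sk$) can trivially tell a real $\ske.\ct$ from a dummy one, so one cannot argue indistinguishability of $\cA_2$'s full view. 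The resolution I would push through rests on the structural fact above: the content of $\ske.\ct$ is a function of $(B,Q,r,y)$ alone and is statistically independent of the quantum mask $x$, while $b$ enters $\cA_2$'s view only through $h$. Thus, once $\cA_2$ is supplied with the decoded $(B,Q,r,y)$ (which the revealed $\sk$ provides and which the reduction knows), the ciphertext $\ske.\ct$ contributes nothing to $\cA_2$'s ability to extract $x$; the only residual effect of swapping it is on $\cA_1$'s output state, which is a QPT-producible object and hence governed by IND-CPA. Making this decoupling rigorous—separating the $b$-relevant quantity (recovery of $x$) from the SKE ciphertext that the unbounded adversary can always read—is where the bulk of the care goes, and it is the step I expect to require the most delicate formalization. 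Granting it, $\hybi{1}$ lands in the hypotheses of the information-theoretic lemma, which closes the bound and establishes \cref{thm:sk_ever_security_wo_rom}.
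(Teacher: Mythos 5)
Your high-level architecture does match the proof the paper intends: the paper gives no argument of its own but defers to \cite[Theorem~3]{JACM:Unruh15}, whose structure you have correctly reconstructed (an IND-CPA hybrid that strips $(B,Q,r,y)$ out of $\ske.\ct$, then an unconditional Bouman--Fehr-style quantum sampling argument plus the CSS coset structure showing that a passing certificate forces $x$ to be near-uniform, so that $h=m_b\oplus x\oplus y$ acts as a one-time pad). The genuine gap is in the one step you yourself flag as delicate, and your proposed resolution of it does not work as stated. You claim that since the only residual effect of swapping $\ske.\ct$ for a dummy is on $\cA_1$'s output state, ``which is a QPT-producible object,'' the hop is ``governed by IND-CPA.'' That inference is invalid as a general principle: the quantity the information-theoretic lemma controls --- the trace distance of $x$ from uniform given the post-acceptance residual state --- is a \emph{statistical} functional, and computational indistinguishability of two QPT-preparable ensembles does not imply closeness of statistical functionals evaluated on them. (Template counterexample: the ensembles $(\prg(s),\cdot)$ and $(U,\cdot)$ are computationally indistinguishable, yet the ``distance of the string from uniform given the rest'' differs maximally.) Your decoupling observation --- that $\ske.\ct$ is redundant to $\cA_2$ given $\sk$, and that $b$ enters only through $h$ --- is correct but beside the point: the danger is that $\cA_1$'s attack on $\ket{\Psi}$ could correlate with $(B,Q)$ \emph{through} $\ske.\ct$ (measure in the right bases, keep $x$, still pass), and the property you must transfer from the dummy world back to the real world is statistical, while the reduction can never run the unbounded $\cA_2$, as you note.

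The sound version of the hop --- and the real work in Unruh's proof --- routes the computational step through an \emph{efficiently decidable event} rather than a trace distance. Exploit that the certificate here is the entire $(p+q)$-qubit state: after $\Vrfy$, the challenger physically holds the payload positions, so the event ``verification accepts \emph{and} the payload register fails the Hamming-proximity test relative to $x\oplus w\oplus u$'' (a projective test determined by $Q$, $t$, and the challenger's secrets) is decidable in QPT. The IND-CPA reduction knows everything needed to run this test, since it chooses both candidate plaintexts ($(B,Q,r,y)$ versus dummy) and never touches $\sk$ or $\cA_2$; one bounds the event's probability in the dummy world by the sampling lemma (where delaying the choice of $(B,Q)$ via EPR pairs is legitimate precisely because the ciphertext no longer depends on them), transfers it to the real world via IND-CPA and \cref{lem:defference}, and only \emph{then} converts ``few errors on the payload, conditioned on accept'' into statistical uniformity of the coset $x\in C_1/C_2$ --- a conversion that costs a factor exponential in $k_1-k_2$, which is where the $4(k_1-k_2)\ln 2$ term in the hypothesis comes from and why the construction restricts to a constant-size plaintext space (your proposal folds this entirely into the privacy-amplification lemma, which obscures why the loss is affordable). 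Two minor further repairs: you should \emph{not} dummy the pre- and post-challenge encryption-query ciphertexts (their plaintexts need no secrecy, and altering them gratuitously changes $\cA_2$'s view once $\sk$ is revealed); and note the reduction simulates those queries with its own CPA encryption oracle, consistent with \cref{def:IND-CPA_security_ske}.
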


Its proof is similar to that of \cite[Theorem~3]{JACM:Unruh15}.
%, but there are two important differences. First, unlike~\cite{JACM:Unruh15}, the adversary's encryption queries have to be taken into account. Second, in our case, the challenger sends the secret key to the adversary.
%For the convenience of readers, we describe the proof in \cref{sec:proof_wo_rom}.

Note that the plaintext space is of constant size in our construction. However,
via the standard hybrid argument %(\cref{lem:ske_poly} below)
, we can extend it to the polynomial size.

%\begin{lemma}\label{lem:ske_poly}
%If there exists a certified everlasting SKE scheme with constant size plaintext space,
%then there exists a certified everlasting SKE scheme with polynomial size plaintext space.
%\end{lemma}
%We can construct a certified everlasting SKE scheme with polynomial plaintext space by running $\Enc$ of a certified everlasting SKE scheme with constant plaintext space polynominally many times.
%We can easily prove the security of the construction by using a standard hybrid argument,
%and thus we skip the proof of \cref{lem:ske_poly}.
%\mor{kore wa proof?}
%\takashi{I don't think a proof is needed.}

%Section 4:Certified Everlasting PKE
% !TEX root = main.tex
% !TEX spellcheck = en-US

\section{Certified Everlasting Public Key Encryption}\label{sec:PKE}
In \cref{sec:def_pke}, we define certified everlasting PKE.
In \cref{sec:const_pke_rom} and \cref{sec:const_pke_wo_rom}, we construct a certified everlasting PKE scheme with and without QROM, respectively.
\subsection{Definition}\label{sec:def_pke}

\begin{definition}[Certified Everlasting PKE]\label{def:cert_ever_pke}
Let $\secp$ be a security parameter and let $p$, $q$, $r$, $s$ and $t$ be polynomials. 
A certified everlasting PKE scheme is a tuple of algorithms $\Sigma=(\keygen,\Enc,\Dec,\Delete,\Vrfy)$
with plaintext space $\Ms\seteq\bit^n$,
ciphertext space $\Cs:= \cQ^{\otimes p(\lambda)}$,
public key space $\mathcal{PK}:=\{0,1\}^{q(\lambda)}$, secret key space $\mathcal{SK}\seteq \bit^{r(\secp)}$,
verification key space $\mathcal{VK}\seteq \bit^{s(\secp)}$
and deletion certificate space $\mathcal{D}:= \cQ^{\otimes t(\lambda)}$.
\begin{description}
    \item[$\keygen (1^\secp) \ra (\pk,\sk)$:]
    The key generation algorithm takes the security parameter $1^\secp$ as input and outputs a public key $\pk\in\mathcal{PK}$ and a secret key $\sk \in \mathcal{SK}$.
    \item[$\Enc(\pk,m) \ra (\vk,\ct)$:]
    The encryption algorithm takes $\pk$ and a plaintext $m\in\Ms$ as input,
    and outputs a verification key $\vk\in\mathcal{VK}$ and a ciphertext $\ct\in \Cs$.
    \item[$\Dec(\sk,\ct) \ra m^\prime~or~\bot$:]
    The decryption algorithm takes $\sk$ and $\ct$ as input, and outputs a plaintext $m^\prime \in \Ms$ or $\bot$.
    \item[$\Delete(\ct) \ra \cert$:]
    The deletion algorithm takes $\ct$ as input and outputs a certification $\cert\in\mathcal{D}$.
    \item[$\Vrfy(\vk,\cert)\ra \top~\mbox{\bf or}~\bot$:]
    The verification algorithm takes $\vk$ and $\cert$ as input, and outputs $\top$ or $\bot$.
    \end{description}
\end{definition}

We require that a certified everlasting PKE scheme satisfies correctness defined below.
\begin{definition}[Correctness for Certified Everlasting PKE]\label{def:correctness_cert_ever_pke}
There are three types of correctness, namely, decryption correctness, verification correctness, and modification correctness.

\paragraph{Decryption Correctness:}
There exists a negligible function $\negl$ such that for any $\secp\in \N$ and $m\in\Ms$,
\begin{align}
\Pr\left[
m'\neq m
\ \middle |
\begin{array}{ll}
(\pk,\sk)\lrun \keygen(1^\secp)\\
(\vk,\ct) \lrun \Enc(\pk,m)\\
m'\la\Dec(\sk,\ct)
\end{array}
\right] 
\le\negl(\secp).
\end{align}

\paragraph{Verification Correctness:}
There exists a negligible function $\negl$ such that for any $\secp\in \N$ and $m\in\Ms$,
\begin{align}
\Pr\left[
\Vrfy(\vk,\cert)=\bot
\ \middle |
\begin{array}{ll}
(\pk,\sk)\lrun \keygen(1^\secp)\\
(\vk,\ct) \lrun \Enc(\pk,m)\\
\cert \lrun \Delete(\ct)
\end{array}
\right] 
\leq
\negl(\secp).
\end{align}

\paragraph{Modification Correctness:}
There exists a negligible function $\negl$ and a QPT algorithm $\Modify$ such that for any $\secp\in \N$ and $m\in\Ms$, 
\begin{align}
\Pr\left[
\Vrfy(\vk,\cert^*)=\bot
\ \middle |
\begin{array}{ll}
(\pk,\sk)\lrun \keygen(1^\secp)\\
(\vk,\ct) \lrun \Enc(\pk,m)\\
a,b\la\bit^{p(\secp)}\\
\cert \lrun \Delete(Z^bX^a\ct X^aZ^b)\\
\cert^*\lrun \Modify(a,b,\cert)  
\end{array}
\right] 
\leq
\negl(\secp).
\end{align}
%\paragraph{Rewind Correctness:} There exists a negligible function $\negl$ such that for any $\secp\in \N$ and $m\in\Ms$, 
%\begin{align}
%\Pr\left[
%\frac{1}{2}\norm{\ct-\widehat{\ct}}_{\tr}\leq \negl(\lambda)
%\ \middle |
%\begin{array}{ll}
%(\pk,\sk)\lrun \keygen(1^\secp)\\
%(\vk,\ct) \lrun \Enc(\pk,m)\\
% (m',\widehat{\ct})\la\Dec(\sk,\ct)
%\end{array}
%\right] 
%\geq 1-\negl(\secp).
%\end{align}
\end{definition}
\begin{remark}
Minimum requirements for correctness are decryption correctness and verification correctness.
In this paper, however, we also require modification correctness,
because we need modification correctness for the construction of functional encryption in \cref{sec:const_fe_adapt}.
\end{remark}

%We also consider the following notion that we call ciphertext reusability, which intuitively means that the decryption algorithm does not disturb the ciphertext.
%\begin{definition}[Ciphertext Reusability for Certified Everlasting PKE]\label{lem:pke_rewind}
%For any $\sk\in \mathcal{SK}$ and any $\ct,\widehat{\ct}\in \Cs$ such that $\Pr[\widehat{\ct}\leftarrow \Dec(\sk,\ct)]>0$,
%\begin{align}
%\frac{1}{2}\parallel\widehat{\ct}-\ct\parallel_{\tr}\leq \negl(\lambda).
%\end{align} 
%\begin{remark}
%We can easily show that any certified everlasting PKE scheme that satisfies decryption correctness can be transformed into one that is reusable by using a gentle measurement lemma\cite{Mark2011}.
%\end{remark}
%\end{definition}

As security, we consider two definitions, 
\cref{def:IND-CPA_security_cert_ever_pke} 
and
\cref{def:cert_ever_security_cert_ever_pke} given below.
The former is just the ordinal IND-CPA security 
and the latter is the certified everlasting security
that we newly define in this paper.
Roughly, the everlasting security guarantees that any QPT adversary cannot obtain plaintext information even if it becomes computationally unbounded and obtains the secret key after it issues a valid certificate.

\begin{definition}[IND-CPA Security for Certified Everlasting PKE]\label{def:IND-CPA_security_cert_ever_pke}
Let $\Sigma=(\keygen,\Enc,\Dec,\Delete,\Vrfy)$ be a certified everlasting PKE scheme.
We consider the following security experiment $\expa{\Sigma,\cA}{ind}{cpa}(\secp,b)$ against a QPT adversary $\cA$.
\begin{enumerate}
    \item The challenger generates $(\pk,\sk)\lrun \keygen(1^{\secp})$, and sends $\pk$ to $\cA$.
    \item $\cA$ sends $(m_0,m_1)\in\cM^2$ to the challenger.
    \item The challenger computes $(\vk,\ct) \lrun \Enc(\pk,m_b)$, and sends $\ct$ to $\cA$.
    \item $\cA$ outputs $b'\in\bit$. This is the output of the experiment.
\end{enumerate}
We say that the $\Sigma$ is IND-CPA secure if, for any QPT $\cA$, it holds that
\begin{align}
\advb{\Sigma,\cA}{ind}{cpa}(\secp) \seteq \abs{\Pr[\expa{\Sigma,\cA}{ind}{cpa}(\secp,0)=1]  - \Pr[\expa{\Sigma,\cA}{ind}{cpa}(\secp,1)=1]} \leq \negl(\secp).
\end{align}
\end{definition}

\begin{definition}[Certified Everlasting IND-CPA Security for Certified Everlasting PKE]\label{def:cert_ever_security_cert_ever_pke}
Let $\Sigma=(\keygen,\Enc,\Dec,\allowbreak \Delete,\Vrfy)$ be a certified everlasting PKE scheme.
We consider the following security experiment $\expd{\Sigma,\cA}{cert}{ever}{ind}{cpa}(\secp,b)$ against a QPT adversary $\cA_1$ and an unbounded adversary $\cA_2$.
\begin{enumerate}
    \item The challenger computes $(\pk,\sk) \la \keygen(1^\secp)$, and sends $\pk$ to $\cA_1$.
    \item $\cA_1$ sends $(m_0,m_1)\in\cM^2$ to the challenger.
    \item The challenger computes $(\vk,\ct) \la \Enc(\pk,m_b)$, and sends $\ct$ to $\cA_1$.
    \item At some point, $\cA_1$ sends $\cert$ to the challenger, and sends the internal state to $\cA_2$.
    \item The challenger computes $\Vrfy(\vk,\cert)$.
    If the output is $\bot$, the challenger outputs $\bot$, and sends $\bot$ to $\cA_2$.
    Otherwise, the challenger outputs $\top$, and sends $\sk$ to $\cA_2$.
    \item $\cA_2$ outputs $b'\in\{0,1\}$.
    \item If the challenger outputs $\top$, then the output of the experiment is $b'$.
    Otherwise, the output of the experiment is $\bot$.
\end{enumerate}
We say that the $\Sigma$ is certified everlasting IND-CPA secure if for any QPT $\cA_1$ and any unbounded $\cA_2$, it holds that
\begin{align}
\advd{\Sigma,\cA}{cert}{ever}{ind}{cpa}(\secp)
\seteq \abs{\Pr[ \expd{\Sigma,\cA}{cert}{ever}{ind}{cpa}(\secp, 0)=1] - \Pr[ \expd{\Sigma,\cA}{cert}{ever}{ind}{cpa}(\secp, 1)=1] }\leq \negl(\secp).
\end{align}
\end{definition}

\subsection{Construction with QROM}\label{sec:const_pke_rom}
In this section, we construct a certified everlasting PKE scheme with QROM.
Our construction is similar to that of the certified everlasting commitment scheme in \cite{EPRINT:HMNY21_2}.
The difference is that we use PKE instead of commitment.
%Note that, in the construction presented in this section, the deletion certificate space is classical space unlike the construction presented in \cref{sec:const_pke_wo_rom}.

\paragraph{Our certified everlasting PKE scheme.}
We construct a certified everlasting PKE scheme $\Sigma_{\mathsf{cepk}}=(\keygen,\Enc,\Dec,\allowbreak\Delete,\Vrfy)$
from a one-time SKE with certified deletion scheme $\Sigma_{\mathsf{skcd}}=\SKE.(\keygen,\Enc,\Dec,\Delete,\Vrfy)$~(\cref{def:sk_cert_del}),
a PKE scheme $\Sigma_{\mathsf{pk}}=\PKE.(\keygen,\Enc,\Dec)$ with plaintext space $\bit^\secp$~(\cref{def:pke})
and a hash function $H$ modeled as quantum random oracle.

\begin{description}
    \item[$\keygen(1^{\secp})$:] $ $
    \begin{itemize}
        \item Generate $(\pke.\pk,\pke.\sk)\la\keygen(1^\secp)$.
        \item Output $\pk\seteq \pke.\pk$ and $\sk\seteq\pke.\sk$.
    \end{itemize}
    \item[$\Enc(\pk,m)$:] $ $
    \begin{itemize}
        \item Parse $\pk=\pke.\pk$.
        \item Generate $\ske.\sk\la\SKE.\keygen(1^\secp)$.
        \item Randomly generate $R\la\{0,1\}^\secp$.
        \item Compute $\pke.\ct\la\PKE.\Enc(\pke.\pk,R)$.
        \item Compute $h\seteq H(R)\oplus \ske.\sk$ and $\ske.\ct\la\SKE.\Enc(\ske.\sk,m)$.
        \item Output $\ct\seteq (h,\ske.\ct,\pke.\ct)$ and $\vk\seteq \ske.\sk$.
    \end{itemize}
    \item[$\Dec(\sk,\ct)$:] $ $
    \begin{itemize}
        \item Parse $\sk=\pke.\sk$ and $\ct= (h,\ske.\ct,\pke.\ct)$.
        \item Compute $R'\la\PKE.\Dec(\pke.\sk,\pke.\ct)$.
        \item Compute $\ske.\sk'\seteq h\oplus H(R')$.
        \item Compute $m'\la\SKE.\Dec(\ske.\sk',\ske.\ct)$.
        \item Output $m'$.
    \end{itemize}
    \item[$\Delete(\ct)$:] $ $
    \begin{itemize}
        \item Parse $\ct=(h,\ske.\ct,\pke.\ct)$.
        \item Compute $\ske.\cert\la\SKE.\Delete(\ske.\ct)$.
        \item Output $\cert\seteq\ske.\cert$.
    \end{itemize}
    \item[$\Vrfy(\vk,\cert)$:] $ $
    \begin{itemize}
        \item Parse $\vk=\ske.\sk$ and $\cert=\ske.\cert$.
        \item Compute $b\la\SKE.\Vrfy(\ske.\sk,\ske.\cert)$.
        \item Output $b$.
    \end{itemize}
\end{description}

\paragraph{Correctness:}
Correctness easily follows from those of
$\Sigma_{\mathsf{pk}}$ and $\Sigma_{\mathsf{skcd}}$.

\paragraph{Security:}
The following two theorems hold.
Their proofs are similar to those of \cref{thm:sk_comp_security,thm:sk_ever_security}, and therefore we omit them.
\begin{theorem}\label{thm:pk_comp_seucirty}
If $\Sigma_{\mathsf{pk}}$ satisfies the OW-CPA security~(\cref{def:OW-CPA_security_pke}) and $\Sigma_{\mathsf{skcd}}$ satisfies the OT-CD security~(\cref{def:security_sk_cert_del}), 
$\Sigma_{\mathsf{cepk}}$ is IND-CPA secure~(\cref{def:IND-CPA_security_cert_ever_pke}).
\end{theorem}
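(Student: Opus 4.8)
The plan is to prove ordinary IND-CPA security (\cref{def:IND-CPA_security_cert_ever_pke}) by a short sequence of hybrid experiments that remove the challenge ciphertext's dependence on the bit $b$. The key structural observation is that in the IND-CPA game the adversary $\cA$ receives only $\pk$ and the challenge ciphertext $\ct = (h,\ske.\ct,\pke.\ct)$, but never the verification key $\vk = \ske.\sk$. Hence the only component of $\cA$'s view that can depend on $m_b$ is $\ske.\ct \la \SKE.\Enc(\ske.\sk,m_b)$, while $\pke.\ct \la \PKE.\Enc(\pke.\pk,R)$ and $h = H(R)\oplus\ske.\sk$ are generated from quantities independent of $m_b$. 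So it suffices to hide $\ske.\sk$ from $\cA$ and then invoke the one-time security of $\Sigma_{\mathsf{skcd}}$.

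First I would define $\mathsf{Hyb}_0^{(b)}$ to be the real experiment with challenge bit $b$, and $\mathsf{Hyb}_1^{(b)}$ to be the same experiment except that $h$ is replaced by a freshly sampled uniform string, decoupled from both $H(R)$ and $\ske.\sk$. To bound $\abs{\Pr[\mathsf{Hyb}_0^{(b)}=1]-\Pr[\mathsf{Hyb}_1^{(b)}=1]}$ I would apply the one-way to hiding lemma (\cref{lem:one-way_to_hiding}) with $S=\{R\}$: reprogramming the random oracle at the single point $R$ (equivalently, replacing $H(R)$ by an independent uniform value, which makes $h$ uniform) is undetectable unless $\cA$ queries $H$ at $R$ with non-negligible amplitude. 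That query probability is in turn bounded by a reduction to OW-CPA security of $\Sigma_{\mathsf{pk}}$ (\cref{def:OW-CPA_security_pke}): an inverter $\cI$ takes its OW-CPA challenge $\pke.\ct = \PKE.\Enc(\pke.\pk,R)$ (with $\ell=1$), samples $\ske.\sk$ and a uniform $h$ by itself, assembles $\ct$, simulates $\cA$'s quantum access to $H$ via the compressed-oracle technique, and runs the O2H extractor $\cB$ that measures a uniformly chosen one of $\cA$'s queries. A non-negligible probability that $\cA$ queries $R$ then yields a non-negligible probability that $\cI$ outputs $R$, contradicting OW-CPA security.

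Next I would bound $\abs{\Pr[\mathsf{Hyb}_1^{(0)}=1]-\Pr[\mathsf{Hyb}_1^{(1)}=1]}$ by a reduction to the OT-CD security of $\Sigma_{\mathsf{skcd}}$ (\cref{def:security_sk_cert_del}). In $\mathsf{Hyb}_1^{(b)}$ the key $\ske.\sk$ appears in $\cA$'s view only inside $\ske.\ct$, since $h$ is now an independent uniform string and $H$, $\pke.\ct$, $\pke.\pk$ are all independent of $\ske.\sk$. Thus an adversary $\cB'$ against the OT-CD game generates $(\pke.\pk,\pke.\sk)$, a random $R$, and a random $h$ on its own, forwards $(m_0,m_1)$ to its challenger, embeds the returned $\ske.\ct$ into $\ct$, simulates a fresh random oracle $H$ for $\cA$, and runs $\cA$. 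Because the output of the OT-CD experiment is simply $\cA$'s guess regardless of the verification outcome, $\cB'$ may submit an arbitrary (invalid) certificate, receive $\bot$, and then output $\cA$'s guess; its advantage equals $\abs{\Pr[\mathsf{Hyb}_1^{(0)}=1]-\Pr[\mathsf{Hyb}_1^{(1)}=1]}$, which is therefore negligible. Chaining the three estimates by the triangle inequality yields $\advb{\Sigma_{\mathsf{cepk}},\cA}{ind}{cpa}(\secp)\leq\negl(\secp)$.

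The main obstacle I anticipate is the QROM bookkeeping in the first step: setting up the reprogrammed and unprogrammed oracles so that they exactly meet the hypotheses of \cref{lem:one-way_to_hiding} (agreeing off $S=\{R\}$), making sure $\cA$'s quantum oracle queries are simulated efficiently, and cleanly converting the extractor's measured query into a valid OW-CPA inversion. By contrast, the second step is essentially routine one-time indistinguishability of $\Sigma_{\mathsf{skcd}}$ once $\ske.\sk$ has been hidden, the only minor subtlety being the harmless need to hand the OT-CD challenger a dummy certificate.
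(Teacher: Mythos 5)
Your proposal is correct and takes essentially the approach the paper intends: the paper omits this proof, noting it is analogous to \cref{thm:sk_comp_security,thm:sk_ever_security} (themselves modeled on the certified everlasting commitment proof of~\cite{EPRINT:HMNY21_2}), and that blueprint is precisely your two-step hybrid of reprogramming $h=H(R)\oplus\ske.\sk$ to uniform via \cref{lem:one-way_to_hiding} with the extracted query reduced to OW-CPA recovery of $R$, followed by a reduction to OT-CD security in which the reduction hands the challenger a dummy certificate (legitimate because the OT-CD experiment's output is the adversary's guess regardless of the verification outcome). The one subtlety you flag is resolved exactly as you anticipate: by the symmetry of the O2H lemma in $G$ and $H$, one designates the fresh oracle decoupled from $h$ (the side the OW-CPA inverter can simulate, with uniform $h$ and self-sampled $\ske.\sk$) as the oracle the extractor runs with.
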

%\begin{proof}
%\end{proof}

\begin{theorem}\label{thm:pk_ever_security}
If $\Sigma_{\mathsf{pk}}$ satisfies the OW-CPA security~(\cref{def:OW-CPA_security_pke}) and $\Sigma_{\mathsf{skcd}}$ satisfies the OT-CD security~(\cref{def:security_sk_cert_del}),
$\Sigma_{\mathsf{cepk}}$ is certified everlasting IND-CPA secure~(\cref{def:cert_ever_security_cert_ever_pke}).
\end{theorem}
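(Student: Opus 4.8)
The plan is to mirror the proof of \cref{thm:sk_ever_security}, reducing certified everlasting IND-CPA security (\cref{def:cert_ever_security_cert_ever_pke}) to the OW-CPA security of $\Sigma_{\mathsf{pk}}$ (\cref{def:OW-CPA_security_pke}) and the OT-CD security of $\Sigma_{\mathsf{skcd}}$ (\cref{def:security_sk_cert_del}). The guiding intuition is that the message $m_b$ is carried only by the inner one-time certified-deletion ciphertext $\ske.\ct=\SKE.\Enc(\ske.\sk,m_b)$, whose key $\ske.\sk$ is masked as $h=H(R)\oplus\ske.\sk$, while $R$ is hidden inside $\pke.\ct=\PKE.\Enc(\pke.\pk,R)$. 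The QPT adversary $\cA_1$ never sees $\pke.\sk$, so it cannot recover $R$, hence cannot query the random oracle $H$ at $R$, and therefore $H(R)$ (and with it $\ske.\sk$) stays hidden from $\cA_1$. Once $\cA_1$ has issued a valid certificate we are exactly in the situation handled by OT-CD security: the inner key $\ske.\sk$ may leak to the now-unbounded $\cA_2$ (it recovers it from $\pke.\sk$, $\pke.\ct$, and $H$), yet $m_b$ remains hidden.

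Concretely, I would fix the challenge bit and use a short chain of hybrids. First reparametrise the oracle: instead of sampling $H(R)$ and setting $h\seteq H(R)\oplus\ske.\sk$, sample $h\la\bit^\secp$ uniformly and set $H(R)\seteq h\oplus\ske.\sk$; this leaves the joint distribution unchanged. Next move to a hybrid in which $H(R)$ keeps its original independent uniform oracle value throughout $\cA_1$'s run and is reprogrammed to $h\oplus\ske.\sk$ only at the instant $\cA_1$ passes its state to $\cA_2$; after that instant the oracle seen by $\cA_2$ is identical in both hybrids, so the two experiments differ only in the value $\cA_1$ sees at $R$. I would bound this difference with the one-way-to-hiding lemma (\cref{lem:one-way_to_hiding}) applied to $\cA_1$ alone: the extractor runs $\cA_1$ with an efficiently simulated (compressed) oracle, measures one of its $q$ queries, and outputs the result as a guess for $R$. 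Since $\cA_1$ is QPT and receives only $\pke.\pk$ and $\ct$ — with $h$ uniform and $\ske.\ct$ formable from a self-chosen $\ske.\sk$ — a noticeable find-probability yields a QPT OW-CPA attacker against $\Sigma_{\mathsf{pk}}$ on the single challenge $\pke.\ct$. Monotonicity of trace distance under the (possibly unbounded) post-processing that runs $\cA_2$ and the verification step then carries this bound to the full experiment output.

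In the final hybrid I would reduce to OT-CD security with an unbounded adversary $\cB$ playing the external one-time certified-deletion game. $\cB$ samples $(\pke.\pk,\pke.\sk)\la\PKE.\keygen(1^\secp)$ and $R$, simulates $H$ for $\cA_1$, forwards $\cA_1$'s pair $(m_0,m_1)$ to the CD challenger, receives $\ske.\ct$, and returns $\ct\seteq(h,\ske.\ct,\pke.\ct)$ with $h$ uniform and $\pke.\ct\la\PKE.\Enc(\pke.\pk,R)$; note that $\cB$ needs no knowledge of $\ske.\sk$ to build this. It relays $\cA_1$'s certificate to the CD challenger, and if the certificate verifies the CD challenger returns $\ske.\sk$, whereupon $\cB$ reprograms $H(R)\seteq h\oplus\ske.\sk$, hands $\pke.\sk$ to $\cA_2$ (which then recovers the consistent $\ske.\sk$), and outputs $\cA_2$'s guess. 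This perfectly simulates the reprogrammed hybrid, so $\cB$'s OT-CD advantage equals the gap between the two challenge bits there, which is negligible. Chaining the three bounds establishes \cref{thm:pk_ever_security}.

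The main obstacle is the interaction between the two-phase adversary and the QROM reprogramming. The delicate point is to keep the OW-CPA reduction strictly QPT — it must invoke only $\cA_1$ and the efficiently simulated oracle, never the unbounded $\cA_2$ — while letting the unbounded OT-CD game absorb the post-deletion leakage of $\ske.\sk$. Making this clean requires arguing that $\cA_1$'s output state is statistically close under the deferred-programming change, so that the identical $\cA_2$-phase oracle cancels out, and that the deferred definition $H(R)\seteq h\oplus\ske.\sk$ is consistent for the unbounded $\cA_2$, which decrypts $\pke.\ct$ and queries $H$ at $R$ freely. I expect these to be exactly the points that the analogous proof of \cref{thm:sk_ever_security} (following \cite{EPRINT:HMNY21_2}) settles, with $\Sigma_{\mathsf{pk}}$ playing the role of the outer SKE.
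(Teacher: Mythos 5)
Your proposal is correct and follows essentially the same route as the paper's intended proof: the paper omits the argument by reference to \cref{thm:sk_ever_security} and thus to \cite[Theorem~5.8]{EPRINT:HMNY21_2}, which is exactly your structure of decoupling $h$ from $H(R)$, deferring the reprogramming $H(R)\seteq h\oplus \ske.\sk$ to the handoff to $\cA_2$, bounding the phase-one difference via \cref{lem:one-way_to_hiding} with a strictly QPT extractor that breaks OW-CPA of $\Sigma_{\mathsf{pk}}$ by measuring one of $\cA_1$'s queries to recover $R$, and then absorbing the post-deletion leakage of $\ske.\sk$ into an unbounded reduction to OT-CD security of $\Sigma_{\mathsf{skcd}}$. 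You also correctly identify and resolve the one delicate point (keeping the O2H extractor efficient while the phase-two adversary is unbounded, by noting the phase-two oracles coincide so all differing-oracle queries lie in the QPT phase), so nothing essential is missing.
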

%\begin{proof}
%\end{proof}

\subsection{Construction without QROM}\label{sec:const_pke_wo_rom}
In this section, we construct a certified everlasting PKE scheme 
without QROM.
Our construction is similar to that of quantum timed-release encryption presented in \cite{JACM:Unruh15}.
The difference is that we use PKE instead of timed-release encryption.
%Note that, in the construction presented in this section, the deletion certificate space is quantum space unlike the construction presented in \cref{sec:const_pke_rom}.

\paragraph{Our certified everlasting PKE scheme without QROM.}
Let $k_1$ and $k_2$ be some constant such that $k_1>k_2$.
Let $p$, $q$, $r$, $s$, $t$ and $u$ be some polynomials.
Let $(C_1,C_2)$ be a CSS code with parameters $q,k_1,k_2,t$.
We construct a certified everlasting PKE scheme
$\Sigma_{\mathsf{cepk}}=(\keygen,\Enc,\Dec,\Delete,\Vrfy)$, with plaintext space $\Ms=C_1/C_2$ (isomorphic $\bit^{(k_1-k_2)}$), 
ciphertext space $\Cs=\cQ^{\otimes \left(p(\lambda)+q(\lambda)\right)}\times \{0,1\}^{r(\lambda)}\times \bit^{q(\lambda)}/C_1\times C_1/C_2$,
public key space $\mathcal{PK}=\bit^{u(\lambda)}$,
secret key space $\mathcal{SK}= \bit^{s(\lambda)}$,
verification key space $\mathcal{VK}=\{0,1\}^{p(\lambda)}\times [p(\lambda)+q(\lambda)]_{p(\lambda)}\times \bit^{p(\lambda)}$ 
and deletion certificate space $\mathcal{D}=\mathcal{Q}^{\otimes \left(p(\lambda)+q(\lambda)\right)}$
from the following primitive.
\begin{itemize}
    \item A PKE scheme~(\cref{def:pke})$\Sigma_{\mathsf{pk}}=\PKE.(\keygen,\Enc,\Dec)$ with
    plaintext space $\Ms=\{0,1\}^{p(\lambda)}\times[p(\lambda)+q(\lambda)]_{p(\lambda)}\times \bit^{p(\lambda)} \times C_1/C_2$,
    public key space $\mathcal{PK}=\bit^{u(\lambda)}$,
    secret key space $\mathcal{SK}=\{0,1\}^{s(\secp)}$ and ciphertext space $\Cs=\bit^{r(\lambda)}$.
\end{itemize}
The construction is as follows. (We will omit the security parameter below.)
\begin{description}
\item[$\keygen(1^\secp)$:]$ $
\begin{itemize}
    \item Generate $(\pke.\pk,\pke.\sk)\la\PKE.\keygen(1^\secp)$.
    \item Output $\pk\seteq\pke.\pk$ and $\sk\seteq\pke.\sk$.
\end{itemize}
\item[$\Enc(\pk,m)$:] $ $
\begin{itemize}
    \item Parse $\pk=\pke.\pk$.
    \item Generate $B\la\bit^{p}$, $Q\la[p+q]_{p}$, $y\la C_1/C_2$, $u\la\bit^q/C_1$, $r\la\bit^p$,
    $x\la C_1/C_2$, $w\la C_2$.
    \item Compute $\pke.\ct\la\PKE.\Enc\left(\pke.\pk,(B,Q,r,y)\right)$.
    \item Let $U_Q$ be the unitary that permutes the qubits in $Q$ into the first half of the system.
    (I.e., $U_Q\ket{x_1x_2\cdots x_{p+q}}=\ket{x_{a_1}x_{a_2}\cdots x_{a_p}x_{b_1}x_{b_2}\cdots x_{b_q}}$ with $Q\seteq \{a_1,a_2,\cdots, a_p\}$ and $\{1,2,\cdots, p+q\}\setminus Q\seteq \{b_1,b_2,\cdots,b_q\}$.)
    \item Generate a quantum state $\ket{\Psi} \seteq U_{Q}^{\dagger}(H^B\otimes I^{\otimes q})(\ket{r}\otimes \ket{x\oplus w\oplus u})$.
    \item Compute $h\seteq m\oplus x\oplus y$.
    \item Output $\ct\seteq (\ket{\Psi},\pke.\ct,u,h)$ and $\vk\seteq (B,Q,r)$.
\end{itemize}
\item[$\Dec(\sk,\ct)$:] $ $
\begin{itemize}
    \item Parse $\sk=\pke.\sk$ and $\ct=(\ket{\Psi},\pke.\ct,u,h)$.
    \item Compute $(B,Q,r,y)\la\PKE.\Dec(\pke.\sk,\pke.\ct)$.
    \item Apply $U_Q$ to $\ket{\Psi}$, measure the last $q$-qubits in the computational basis and obtain the measurement outcome $\gamma$. 
    \item Compute $x\seteq \gamma\oplus u$ mod $C_2$.
    \item Output $m'\seteq h\oplus x\oplus y$.
\end{itemize}
\item[$\Delete(\ct)$:] $ $
\begin{itemize}
    \item Parse $\ct=(\ket{\Psi},\pke.\ct,u,h)$.
    \item Output $\cert\seteq\ket{\Psi}$.
\end{itemize}
\item[$\Vrfy(\vk,\cert)$:] $ $
\begin{itemize}
    \item Parse $\vk=(B,Q,r)$ and $\cert=\ket{\Psi}$.
    \item Apply $(H^B\otimes I^{\otimes q})U_Q$ to $\ket{\Psi}$, measure the first $p$-qubits in the computational basis and obtain the measurement outcome $r'$.
    \item Output $\top$ if $r=r'$ and output $\bot$ otherwise.
\end{itemize}
\end{description}

\paragraph{Correctness.}
Correctness easily follows from that of $\Sigma_{\pk}$.

\paragraph{Security.}
The following two theorems hold.

\begin{theorem}\label{thm:pk_comp_security_wo_rom}
If $\Sigma_{\mathsf{pk}}$ is IND-CPA secure~(\cref{def:IND-CPA_pke}), then $\Sigma_{\mathsf{cepk}}$ is IND-CPA secure~(\cref{def:IND-CPA_security_cert_ever_pke}).
\end{theorem}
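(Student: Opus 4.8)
The plan is to observe that the challenge plaintext $m_b$ enters the ciphertext $\ct=(\ket{\Psi},\pke.\ct,u,h)$ only through the classical mask $h=m_b\oplus x\oplus y$: the quantum register $\ket{\Psi}$ is built from $(B,Q,r,x,w,u)$ alone, the string $u$ is sampled independently of $m_b$, and $\pke.\ct$ is a $\Sigma_{\mathsf{pk}}$-encryption of $(B,Q,r,y)$ that likewise does not involve $m_b$. Since $y\la C_1/C_2$ is uniform, $h$ would already be a perfect one-time pad on $m_b$ were the same $y$ not also encrypted inside $\pke.\ct$. The whole proof therefore reduces to invoking IND-CPA security of $\Sigma_{\mathsf{pk}}$ to sever the dependence between the $y$ appearing in $h$ and the $y$ encrypted in $\pke.\ct$; because the adversary $\cA$ here is QPT and never receives $\sk$, this single computational step will suffice.

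Concretely, I would run a short hybrid sequence. Let $\HYB{0}{}$ be the real experiment $\expa{\Sigma_{\mathsf{cepk}},\cA}{ind}{cpa}(\secp,0)$, so that $h=m_0\oplus x\oplus y$. In $\HYB{1}{}$ the challenger samples an independent $\widetilde{y}\la C_1/C_2$ and sets $\pke.\ct\la\PKE.\Enc(\pke.\pk,(B,Q,r,\widetilde{y}))$, while still forming $h=m_0\oplus x\oplus y$ with the originally sampled $y$. Now $y$ occurs nowhere except in $h$, so $h$ is uniform over $C_1/C_2$ and independent of $m_0$; in particular the whole distribution of $\ct$ no longer depends on the challenge message. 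Define $\HYB{2}{}$ identically to $\HYB{1}{}$ but with $h=m_1\oplus x\oplus y$; because $y$ is uniform and independent in both, $\HYB{1}{}$ and $\HYB{2}{}$ induce \emph{identical} ciphertext distributions, whence $\Pr[\HYB{1}{}=1]=\Pr[\HYB{2}{}=1]$. Finally $\HYB{3}{}$ re-inserts $y$ into the inner ciphertext, recovering $\expa{\Sigma_{\mathsf{cepk}},\cA}{ind}{cpa}(\secp,1)$.

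The two computational steps $\HYB{0}{}\cind\HYB{1}{}$ and $\HYB{2}{}\cind\HYB{3}{}$ are each handled by the same reduction template to IND-CPA of $\Sigma_{\mathsf{pk}}$, instantiated as $\cB_0$ (using $m_0$) and $\cB_1$ (using $m_1$). On input $\pke.\pk$ the reduction forwards it as $\pk$, receives $(m_0,m_1)$ from $\cA$, samples all of $(B,Q,r,x,w,u)$ and both candidates $y,\widetilde{y}\la C_1/C_2$ itself, and submits $\big((B,Q,r,y),(B,Q,r,\widetilde{y})\big)$ to its own challenger. Upon receiving the inner challenge ciphertext $\pke.\ct^\ast$, it builds $\ket{\Psi}$ from $(B,Q,r,x,w,u)$, sets $h=m_0\oplus x\oplus y$ (respectively $m_1$ for $\cB_1$) using the self-sampled $y$, and hands $(\ket{\Psi},\pke.\ct^\ast,u,h)$ to $\cA$, echoing its guess. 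Because the reduction generates every component of $\ct$ other than $\pke.\ct^\ast$ on its own, the simulation is perfect and its advantage equals the gap between the adjacent hybrids. Summing, $\advb{\Sigma_{\mathsf{cepk}},\cA}{ind}{cpa}(\secp)\le\advb{\Sigma_{\mathsf{pk}},\cB_0}{ind}{cpa}(\secp)+\advb{\Sigma_{\mathsf{pk}},\cB_1}{ind}{cpa}(\secp)\le\negl(\secp)$.

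I do not expect a genuine obstacle here: this is the "straightforward" computational counterpart of the everlasting statement, and, unlike the forthcoming everlasting theorem, it never has to reason about the deletion certificate or an unbounded second-stage adversary. The only point demanding care is bookkeeping in the reduction, namely that the value $y$ masking $m_b$ in $h$ must be the one the reduction sampled locally, \emph{not} the value encrypted inside $\pke.\ct^\ast$, so that switching the inner plaintext is exactly what carries $\HYB{0}{}$ into $\HYB{1}{}$ (and $\HYB{2}{}$ into $\HYB{3}{}$).
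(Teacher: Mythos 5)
Your proof is correct, and since the paper omits this proof entirely (declaring it straightforward), your hybrid argument is exactly the canonical one the authors had in mind: the challenge message enters the ciphertext only through $h=m_b\oplus x\oplus y$, one inner IND-CPA invocation per direction decouples the $y$ in $h$ from the $y$ encrypted in $\pke.\ct$, and then $h$ is a perfect one-time pad over the subspace $C_1/C_2$ (which is closed under XOR, so the bijection $y\mapsto y\oplus m_0\oplus m_1$ makes your middle hybrids identically distributed). Your closing bookkeeping remark --- that the mask must use the locally sampled $y$ rather than the plaintext inside the inner challenge ciphertext --- is precisely the one point where a careless reduction would fail, and you handle it correctly.
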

%\begin{proof}[Proof of \cref{thm:pk_comp_security_wo_rom}]
Its proof is straightforward, and therefore we omit it.
%\end{proof}

\begin{theorem}\label{thm:pk_ever_security_wo_rom}
If $\Sigma_{\mathsf{pk}}$ is IND-CPA secure~(\cref{def:IND-CPA_pke}) and $tp/(p+q)-4(k_1-k_2){\rm ln}2$ is superlogarithmic,
then $\Sigma_{\mathsf{cepk}}$ is certified everlasting IND-CPA secure~(\cref{def:cert_ever_security_cert_ever_pke}).
\end{theorem}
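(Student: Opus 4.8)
The plan is to isolate the single place where the challenge bit $b$ enters the experiment of \cref{def:cert_ever_security_cert_ever_pke}, namely the classical string $h = m_b \oplus x \oplus y$ in the challenge ciphertext, and to argue that conditioned on the challenger accepting the deletion certificate, the coset $x \in C_1/C_2$ is statistically hidden from the unbounded second-stage adversary $\cA_2$, so that $h$ acts as a one-time pad on $m_b$. The crucial observation is that $\cA_2$ receives $\sk = \pke.\sk$ and can therefore decrypt $\pke.\ct$ to recover $(B,Q,r,y)$; in particular $y$ is \emph{not} hidden from $\cA_2$ and cannot serve as the mask. Hence the entire protection of $m_b$ against $\cA_2$ must come from $x$, which lives only in the quantum codeword $\ket{\Psi}$ and must be destroyed whenever a valid certificate is produced.

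The argument combines a computational step and an information-theoretic step, mirroring \cite{JACM:Unruh15}. First I would use the IND-CPA security of $\Sigma_{\mathsf{pk}}$ (\cref{def:IND-CPA_pke}) to pass to a hybrid in which $\pke.\ct$ encrypts a dummy string during the first (QPT) phase, so that the basis-and-position data $(B,Q)$ is information-theoretically hidden from $\cA_1$. This is exactly what rules out the trivial attack in which $\cA_1$, knowing $(B,Q)$, would measure the $Q$-positions in basis $B$ to manufacture a valid certificate while measuring the complementary register to read off $x$: without $(B,Q)$, passing the $B$-basis check and extracting $x$ in the complementary basis are mutually exclusive. Second, in that hybrid I would invoke the CSS/BB84 certified-deletion bound of Unruh: if the state returned by $\cA_1$ passes the $B$-basis check on the positions $Q$ with non-negligible probability, then $x \in C_1/C_2$ retains near-maximal min-entropy conditioned on the whole view handed to $\cA_2$ (that is, $B,Q,r,y,u,h$ together with $\cA_1$'s residual register). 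The quantitative form of this bound is governed precisely by the hypothesis that $tp/(p+q) - 4(k_1-k_2)\ln 2$ is superlogarithmic, which is the sampling/error-correction condition on $C_1$ and $C_2^{\perp}$ ensuring that the unchecked coordinates carry enough entropy after a successful complementary-basis test.

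With $x$ statistically close to uniform and independent of $\cA_2$'s view given a valid certificate, a one-time-pad argument shows that $h = m_b \oplus x \oplus y$ leaks nothing about $m_b$, so the $b=0$ and $b=1$ experiments are negligibly close; I would assemble the hops with the triangle inequality and the difference lemma (\cref{lem:defference}), charging the rejected-certificate branch to the $\bot$ output. The hard part will be the faithful interface between the two ingredients: the PKE step can only be reduced against the \emph{bounded} $\cA_1$, yet the unbounded $\cA_2$ holds $\sk$ and can decrypt $\pke.\ct$, so computational indistinguishability of the hybrids cannot be claimed at the level of the final post-$\cA_2$ output. As in Unruh's analysis, the reduction must therefore run only the QPT phase up to certificate verification and defer $\cA_2$ entirely to the information-theoretic min-entropy bound, so that the superlogarithmic parameter condition, and not PKE security, is what defeats the unbounded adversary. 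Establishing the CSS certified-deletion min-entropy estimate itself, with constants matching the stated parameter regime, is the remaining technical core and where I expect the bulk of the work to lie.
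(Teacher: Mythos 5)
Your proposal is correct and follows essentially the same route as the paper, which gives no self-contained proof but defers to \cite[Theorem~3]{JACM:Unruh15}: a computational hybrid using IND-CPA of $\Sigma_{\mathsf{pk}}$ to hide $(B,Q)$ from the QPT phase, followed by Unruh's CSS/BB84 certified-deletion min-entropy bound (under the stated superlogarithmic condition on $tp/(p+q)-4(k_1-k_2)\ln 2$) showing that a verified certificate leaves $x$ nearly uniform, so $h=m_b\oplus x\oplus y$ one-time-pads $m_b$ against the unbounded $\cA_2$. You also correctly identify the key interface point of that proof — the reduction runs only through certificate verification (with the guessing event checkable by the reduction, which knows $(B,Q,r,y)$), while the unbounded stage is handled purely information-theoretically.
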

Its proof is similar to that of \cite[Theorem~3]{JACM:Unruh15}.
%, but in our case, the challenger sends the secret key to the adversary.
%We omit the proof, but readers can consult~\cref{sec:proof_wo_rom} where a similar proof for \cref{thm:sk_ever_security_wo_rom} is given.

Note that the plaintext space is of constant size in our construction. However,
via the standard hybrid argument, we can extend it to the polynomial size.
%\begin{lemma}\label{lem:pke_poly}
%If there exists a certified everlasting PKE scheme with constant plaintext space,
%then there exists a certified everlasting PKE scheme with polynomial plaintext space.
%\end{lemma}https://www.overleaf.com/project

%Section 4.2:Certified Everlasting OT
%\input{Sec_OT}
%Section 5:Certified Everlasting RNCE
% !TEX root = main.tex
% !TEX spellcheck = en-US

\section{Certified Everlasting Receiver Non-Committing Encryption}\label{sec:RNCE}
In this section, we define and construct certified everlasting receiver non-committing encryption.
In \cref{sec:def_rnce}, we define certified everlasting RNCE.
In \cref{sec:const_rnce_classic}, we construct a certified everlasting RNCE scheme from certified everlasting PKE (\cref{sec:PKE}).

\subsection{Definition}\label{sec:def_rnce}
\begin{definition}[Certified Everlasting RNCE (Syntax)]\label{def:cert_ever_rnce_classic}
Let $\secp$ be the security parameter and let $p$, $q$, $r$, $s$, $t$, $u$, and $v$ be polynomials. 
A certified everlasting RNCE scheme is a tuple of algorithms $\Sigma=(\Setup,\keygen,\Enc,\Dec,\Fake,\Reveal,\allowbreak \Delete,\Vrfy)$
with plaintext space $\Ms\seteq\bit^{n}$,
ciphertext space $\Cs:= \cQ^{\otimes p(\lambda)}$,
public key space $\mathcal{PK}:=\{0,1\}^{q(\lambda)}$,
master secret key space $\mathcal{MSK}\seteq \bit^{r(\lambda)}$,
secret key space $\mathcal{SK}\seteq \bit^{s(\secp)}$,
verification key space $\mathcal{VK}\seteq \bit^{t(\secp)}$,
deletion certificate space $\mathcal{D}:= \cQ^{u(\lambda)}$,
and auxiliary state space $\mathcal{AUX}\seteq\bit^{v(\lambda)}$.
\begin{description}
    \item[$\Setup(1^\secp)\ra(\pk,\MSK)$:]The setup algorithm takes the security parameter $1^\secp$ as input, and outputs a public key $\pk\in\mathcal{PK}$ and a master secret key $\MSK\in\mathcal{MSK}$.
    \item[$\keygen (\MSK) \ra \sk$:]
    The key generation algorithm takes the master secret key $\MSK$ as input,
    and outputs a secret key $\sk\in\mathcal{SK}$.
    \item[$\Enc(\pk,m) \ra (\vk,\ct)$:]
    The encryption algorithm takes $\pk$ and a plaintext $m\in\Ms$ as input,
    and outputs a verification key $\vk\in\mathcal{VK}$ and a ciphertext $\ct\in \Cs$.
    \item[$\Dec(\sk,\ct) \ra m^\prime~or~\bot$:]
    The decryption algorithm takes $\sk$ and $\ct$ as input, and outputs a plaintext $m^\prime \in \Ms$ or $\bot$. 
    \item[$\Fake(\pk)\ra (\vk,\widetilde{\CT},\aux)$:]
    The fake encryption algorithm takes $\pk$ as input, and outputs a verification key $\vk\in\mathcal{VK}$, a fake ciphertext $\widetilde{\CT}\in \Cs$ and an auxiliary state $\aux\in\mathcal{AUX}$.
    \item[$\Reveal(\pk,\MSK,\aux,m)\ra \widetilde{\sk}$:]
    The reveal algorithm takes $\pk,\MSK,\aux$ and $m$ as input, and outputs a fake secret key $\widetilde{\sk}\in\mathcal{SK}$.
    \item[$\Delete(\ct) \ra \cert$:]
    The deletion algorithm takes $\ct$ as input and outputs a certification $\cert\in\mathcal{D}$.
    \item[$\Vrfy(\vk,\cert)\ra \top~\mbox{\bf or}~\bot$:]
    The verification algorithm takes $\vk$ and $\cert$ as input, and outputs $\top$ or $\bot$.
    \end{description}
\end{definition}

We require that a certified everlasting RNCE scheme satisfies correctness defined below.
\begin{definition}[Correctness for Certified Everlasting RNCE]\label{def:correctness_cert_ever_rnce_classic}
There are two types of correctness, namely, decryption correctness and verification correctness.
%and the third is rewind correctness.
%and the fourth is modification correctness.

\paragraph{Decryption Correctness:}
There exists a negligible function $\negl$ such that for any $\secp\in \N$ and $m\in\Ms$,
\begin{align}
\Pr\left[
m'\neq m
\ \middle |
\begin{array}{ll}
(\pk,\MSK)\la\Setup(1^\secp)\\
(\vk,\ct) \lrun \Enc(\pk,m)\\
\sk\lrun\keygen(\MSK)\\
m'\la\Dec(\sk,\ct)
\end{array}
\right] 
\leq \negl(\lambda).
\end{align}

\paragraph{Verification Correctness:}
There exists a negligible function $\negl$ such that for any $\secp\in \N$ and $m\in\Ms$,
\begin{align}
\Pr\left[
\Vrfy(\vk,\cert)=\bot
\ \middle |
\begin{array}{ll}
(\pk,\MSK)\la\Setup(1^\secp)\\
(\vk,\ct) \lrun \Enc(\pk,m)\\
\cert \lrun \Delete(\ct)
\end{array}
\right] 
\leq
\negl(\secp).
\end{align}

%\paragraph{Rewind Correctness:} There exists a negligible function $\negl$ such that for any $\secp\in \N$ and $m\in\Ms$, 
%\begin{align}
%\Pr\left[
%\frac{1}{2}\norm{\ct-\widehat{\ct}}_{\tr}\leq \negl(\lambda)
%\ \middle |
%\begin{array}{ll}
%(\pk,\MSK)\la\Setup(1^\secp)\\
%(\vk,\ct) \lrun \Enc(\pk,m)\\
%\sk\lrun \keygen(\MSK)\\
% (m',\widehat{\ct})\la\Dec(\sk,\ct)
%\end{array}
%\right] 
%\geq 1-\negl(\secp).
%\end{align}

%\paragraph{Modification Correctness:} There exists a negligible function $\negl$ and a modification algorithm $\Modify$ such that for any $\secp\in \N$ and $m\in\Ms$, 
%\begin{align}
%\Pr\left[
%\Vrfy(\vk,\cert')=\bot
%\ \middle |
%\begin{array}{ll}
%(\pk,\MSK)\la\Setup(1^\secp)\\
%\sk\lrun \keygen(\MSK)\\
%(\vk,\ct) \lrun \Enc(\pk,m)\\
%a,b\la\bit^*\\
%\cert^* \lrun \Delete(Z^bX^a\ct X^aZ^b)\\
%\cert'\lrun \Modify(a,b,\cert^*)  
%\end{array}
%\right] 
%\leq
%\negl(\secp).
%\end{align}
\end{definition}
%\begin{remark}
%We note that rewind correctness
%and modification correctness
%is not necessary for certified everlasting RNCE in itself.
%But, we need rewind correctness
%to prove that our construction of functional encryption in \cref{sec:function_encryption} satisfies rewind correctness.
%and need modification correctness to prove that our construction of functional encryption with adaptive security in \cref{sec:const_fe_adapt} satisfies verification correctness.
%\end{remark}

As security, we consider two definitions, 
\cref{def:rec_nc_security_classic} 
and
\cref{def:cert_ever_rec_nc_security_classic} given below.
The former is just the ordinal receiver non-committing security 
and the latter is the certified everlasting security that we newly define in this paper.
Roughly, the everlasting security guarantees that any QPT adversary cannot distinguish whether the ciphertext and the secret key are properly generated or not even if it becomes computationally unbounded and obtains the master secret key after it issues a valid certificate.

\begin{definition}[Receiver Non-Committing (RNC) Security for Certified Everlasting RNCE]\label{def:rec_nc_security_classic}
Let $\Sigma=(\Setup,\keygen,\allowbreak \Enc,\Dec,\Fake,\Reveal,\Delete,\Vrfy)$ be a certified everlasting RNCE scheme.
We consider the following security experiment $\expa{\Sigma,\cA}{rec}{nc}(\secp,b)$ against a QPT adversary $\cA$.
\begin{enumerate}
    \item The challenger runs $(\pk,\MSK)\la\Setup(1^\secp)$ and sends $\pk$ to $\cA$.
    \item $\cA$ sends $m\in\Ms$ to the challenger.
    \item The challenger does the following:
    \begin{itemize}
        \item If $b=0$, the challenger generates $(\vk,\ct)\la\Enc(\pk,m)$ and $\sk\la\keygen(\MSK)$, and sends $(\ct,\sk)$ to $\cA$.
        \item If $b=1$, the challenger generates $(\vk,\widetilde{\ct},\aux)\la \Fake(\pk)$ and $\widetilde{\sk}\la \Reveal(\pk,\MSK,\aux,m)$, and sends $(\widetilde{\CT},\widetilde{\sk})$ to $\cA$.
    \end{itemize}
    \item $\cA$ outputs $b'\in\bit$.
\end{enumerate}
We say that $\Sigma$ is RNC secure if, for any QPT $\cA$, it holds that
\begin{align}
    \advb{\Sigma,\cA}{rec}{nc}(\secp) \seteq \abs{\Pr[\expa{\Sigma,\cA}{rec}{nc}(\secp,0)=1]  - \Pr[\expa{\Sigma,\cA}{rec}{nc}(\secp,1)=1]} \leq \negl(\secp).
\end{align}
\end{definition}

\begin{definition}[Certified Everlasting RNC Security for Certified Everlasting RNCE]\label{def:cert_ever_rec_nc_security_classic}
Let $\Sigma=(\Setup,\keygen,\Enc,\allowbreak \Dec,\Fake, \Reveal,\Delete,\Vrfy)$ be a certified everlasting RNCE scheme.
We consider the following security experiment $\expd{\Sigma,\cA}{cert}{ever}{rec}{nc}(\secp,b)$ against a QPT adversary $\cA_1$ and an unbounded adversary $\cA_2$.
\begin{enumerate}
    \item The challenger runs $(\pk,\MSK)\la\Setup(1^\lambda)$ and sends $\pk$ to $\cA_1$.
    \item $\cA_1$ sends $m\in\Ms$ to the challenger.
    \item The challenger does the following:
    \begin{itemize}
        \item If $b=0$, the challenger generates $(\vk,\ct)\la\Enc(\pk,m)$ and $\sk\la\keygen(\MSK)$, and sends $(\ct,\sk)$ to $\cA_1$.
        \item If $b=1$, the challenger generates $(\vk,\widetilde{\ct},\aux)\la \Fake(\pk)$ and $\widetilde{\sk}\la\Reveal(\pk,\MSK,\aux,m)$,
        and sends $(\widetilde{\CT},\widetilde{\sk})$ to $\cA_1$.
    \end{itemize}
    \item At some point, $\cA_1$ sends $\cert$ to the challenger and its internal state to $\cA_2$.
    \item The challenger computes $\Vrfy(\vk,\cert)$.
    If the output is $\top$, the challenger outputs $\top$ and sends $\MSK$ to $\cA_2$. 
    If the output is $\bot$, the challenger outputs $\bot$ and sends $\bot$ to $\cA_2$.
    \item $\cA_2$ outputs $b'\in\bit$.
    \item If the challenger outputs $\top$, then the output of the experiment is $b'$.
    Otherwise, the output of the experiment is $\bot$.
\end{enumerate}
We say that $\Sigma$ is certified everlasting RNC secure if for any QPT $\cA_1$ and any unbounded $\cA_2$,
it holds that
\begin{align}
    \advd{\Sigma,\cA}{cert}{ever}{rec}{nc}(\secp) \seteq \abs{\Pr[\expd{\Sigma,\cA}{cert}{ever}{rec}{nc}(\secp,0)=1]  - \Pr[\expd{\Sigma,\cA}{cert}{ever}{rec}{nc}(\secp,1)=1]} \leq \negl(\secp).
\end{align}
\end{definition}

%\begin{lemma}[Ciphertext Reusability for Certified Everlasting RNCE]\label{lem:rnce_rewind}
%For any $\sk\in \mathcal{SK}$ and any $\ct,\widehat{\ct}\in \Cs$ such that $\Pr[\widehat{\ct}\leftarrow \Dec(\sk,\ct)]>0$,
%\begin{align}
%\frac{1}{2}\parallel\widehat{\ct}-\ct\parallel_{\tr}\leq \negl(\lambda).
%\end{align}
%\takashi{same comment}
%\mor{SKE,PKE to onaji ni suru.}
%\begin{remark}
%We can easily show that any certified everlasting RNCE scheme that satisfies decryption correctness can be transformed into one that is reusable by using a gentle measurement lemma\cite{Mark2011}.
%\end{remark}
%\end{lemma}

\subsection{Construction}\label{sec:const_rnce_classic}
In this section, we construct a certified everlasting RNCE scheme from a certified everlasting PKE scheme~(\cref{def:cert_ever_pke}).
Our construction is similar to that of the secret-key RNCE scheme presented in \cite{C:KNTY19}.
The difference is that we use a certified everlasting PKE scheme instead of an ordinary SKE scheme.
%Note that the certificate space of our certified everlasting RNCE scheme is classical space
%if we use a certified everlasting PKE scheme with classical certificate space, whose construction is given in \cref{sec:const_pke_rom}, for the construction of a certified everlasting RNCE scheme.

\paragraph{Our certified everlasting RNCE scheme.} 
We construct a certified everlasting RNCE scheme
$\Sigma_{\mathsf{cence}}=(\Setup,\keygen,\allowbreak \Enc,\Dec, \Fake,\Reveal,\Delete,\Vrfy)$
from a certified everlasting PKE scheme $\Sigma_{\mathsf{cepk}}=\PKE.(\keygen,\Enc,\Dec,\Delete,\Vrfy)$, which was introduced in~\cref{def:cert_ever_pke}.
    
\begin{description}
    \item[$\Setup(1^\secp)$:]$ $
    \begin{itemize}
    \item Generate $(\pke.\pk_{i,\alpha},\pke.\sk_{i,\alpha})\la\PKE.\keygen(1^\secp)$ for all $i\in[n]$ and $\alpha\in\bit$.
    \item Output $\pk\seteq\{\pke.\pk_{i,\alpha}\}_{i\in[n],\alpha\in\bit}$ and $\MSK\seteq\{\pke.\sk_{i,\alpha}\}_{i\in[n],\alpha\in\bit}$.
    \end{itemize}
    \item[$\keygen(\MSK)$:]$ $
    \begin{itemize}
        \item Parse $\MSK=\{\pke.\sk_{i,\alpha}\}_{i\in[n],\alpha\in\bit}$.
        \item Generate $x\la \bit^n$.
        \item Output $\sk\seteq(x,\{\pke.\sk_{i,x[i]}\}_{i\in[n]})$.
    \end{itemize}
    \item[$\Enc(\pk,m)$:]$ $
    \begin{itemize}
        \item Parse $\pk= \{\pke.\pk_{i,\alpha}\}_{i\in[n],\alpha\in\bit}$.
        \item Compute $(\pke.\vk_{i,\alpha},\pke.\ct_{i,\alpha})\la\PKE.\Enc(\pke.\pk_{i,\alpha},m[i])$ for all $i\in[n]$ and $\alpha\in\bit$.
        \item Output $\vk\seteq \{\pke.\vk_{i,\alpha}\}_{i\in[n],\alpha\in\bit}$ and $\ct\seteq\{\pke.\ct_{i,\alpha}\}_{i\in[n],\alpha\in\bit}$.
    \end{itemize}
    \item[$\Dec(\sk,\ct)$:]$ $
    \begin{itemize}
        \item Parse $\sk= (x,\{\pke.\sk_{i}\}_{i\in[n]})$ and $\ct=\{\pke.\ct_{i,\alpha}\}_{i\in[n],\alpha\in\bit}$.
        \item Compute $m[i]\la\PKE.\Dec(\pke.\sk_{i},\pke.\ct_{i,x[i]})$ for all $i\in[n]$.
        \item Output $m\seteq m[1]||m[2]||\cdots|| m[n]$.
    \end{itemize}
    \item[$\Fake(\pk)$:]$ $
    \begin{itemize}
        \item Parse $\pk= \{\pke.\pk_{i,\alpha}\}_{i\in[n],\alpha\in\bit}$.
        \item Generate $x^*\la\bit^n$.
        \item Compute $(\pke.\vk_{i,x^*[i]},\pke.\ct_{i,x^*[i]})\la \PKE.\Enc(\pke.\pk_{i,x^*[i]},0)$ and
        $(\pke.\vk_{i,x^*[i]\oplus 1},\pke.\ct_{i,x^*[i]\oplus 1})\la \PKE.\Enc(\pke.\pk_{i,x^*[i]\oplus 1},1)$ for all $i\in[n]$.
        \item Output $\vk\seteq \{\pke.\vk_{i,\alpha}\}_{i\in[n],\alpha\in\bit}$,
        $\widetilde{\ct}\seteq \{\pke.\ct_{i,\alpha}\}_{i\in[n],\alpha\in\bit}$ and $\aux=x^*$.
    \end{itemize}
    \item[$\Reveal(\pk,\MSK,\aux,m)$:]$ $
    \begin{itemize}
        \item Parse $\pk= \{\pke.\pk_{i,\alpha}\}_{i\in[n],\alpha\in\bit}$, $\MSK= \{\pke.\sk_{i,\alpha}\}_{i\in[n],\alpha\in\bit}$ and $\aux=x^*$.
        \item Output $\widetilde{\sk}\seteq \left(x^*\oplus m,\{\pke.\sk_{i,x^*[i]\oplus m[i]}\}_{i\in[n]}\right)$. 
    \end{itemize}
    \item[$\Delete(\ct)$:]$ $
    \begin{itemize}
        \item Parse $\ct=\{\pke.\ct_{i,\alpha}\}_{i\in[n],\alpha\in\bit}$.
        \item Compute $\pke.\cert_{i,\alpha}\la\PKE.\Delete(\pke.\ct_{i,\alpha})$ for all $i\in[n]$ and $\alpha\in\bit$.
        \item Output $\cert\seteq\{\pke.\cert_{i,\alpha}\}_{i\in[n],\alpha\in\bit}$.
    \end{itemize}
    \item[$\Vrfy(\vk,\cert)$:]$ $
    \begin{itemize}
        \item Parse $\vk=\{\pke.\vk_{i,\alpha}\}_{i\in[n],\alpha\in\bit}$ and $\cert=\{\pke.\cert_{i,\alpha}\}_{i\in[n],\alpha\in\bit}$.
        \item Compute $\top/\bot\la\PKE.\Vrfy(\pke.\vk_{i,\alpha},\pke.\cert_{i,\alpha})$ for all $i\in[n]$ and $\alpha\in\bit$.
        If all results are $\top$, $\Vrfy(\vk,\cert)$ outputs $\top$. Otherwise, it outputs $\bot$. 
    \end{itemize}
\end{description}

\paragraph{Correctness:}
Correctness easily follows from that of $\Sigma_{\mathsf{cepk}}$.
%Rewind correctness easily follows from the rewind correctness of $\Sigma_{\mathsf{cepk}}$.
%Modification correctness easily follows from the modification correctness of $\Sigma_{\mathsf{cepk}}$.

\paragraph{Security:}
The following two theorems hold.
\begin{theorem}\label{thm:comp_security_rnce_classic}
If $\Sigma_{\mathsf{cepk}}$ is IND-CPA secure~(\cref{def:IND-CPA_security_cert_ever_pke}), $\Sigma_{\mathsf{cence}}$ is RNC secure~(\cref{def:rec_nc_security_classic}).
\end{theorem}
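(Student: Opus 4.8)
The plan is to reduce RNC security to the IND-CPA security of the underlying certified everlasting PKE scheme $\Sigma_{\mathsf{cepk}}$ by a hybrid argument over the $n$ coordinates. The key first step is to rewrite the $b=1$ (fake) branch of \cref{def:rec_nc_security_classic} via the substitution $x\seteq x^*\oplus m$. Since $x^*\la\bit^n$ is uniform, $x$ is uniform too, and the revealed key $\widetilde{\sk}=(x^*\oplus m,\{\pke.\sk_{i,x^*[i]\oplus m[i]}\}_{i\in[n]})$ becomes exactly $(x,\{\pke.\sk_{i,x[i]}\}_{i\in[n]})$, which is distributed identically to the honest key $\sk$ in the $b=0$ branch. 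Under this reparametrization one checks that the fake component $\pke.\ct_{i,\alpha}$ encrypts $\alpha\oplus x[i]\oplus m[i]$, so at the revealed index $\alpha=x[i]$ it encrypts $m[i]$ (exactly as in the real ciphertext), whereas at the hidden index $\alpha=x[i]\oplus 1$ it encrypts $\overline{m[i]}$ instead of $m[i]$. Hence the sole difference between $b=0$ and $b=1$ is that the plaintext at each of the $n$ hidden slots $(i,x[i]\oplus 1)$ is flipped from $m[i]$ to $\overline{m[i]}$, and crucially the corresponding secret keys $\pke.\sk_{i,x[i]\oplus 1}$ are never given to the adversary, since the RNC experiment reveals neither $\MSK$ nor any verification key $\vk$.

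Next I would define hybrids $\mathsf{Hyb}_0,\dots,\mathsf{Hyb}_n$. In $\mathsf{Hyb}_j$ a uniform $x\la\bit^n$ is sampled, the key $(x,\{\pke.\sk_{i,x[i]}\}_{i\in[n]})$ is revealed, every revealed slot $(i,x[i])$ encrypts $m[i]$, and every hidden slot $(i,x[i]\oplus 1)$ encrypts $\overline{m[i]}$ for $i\le j$ and $m[i]$ for $i>j$. By the previous paragraph, $\mathsf{Hyb}_0$ is the $b=0$ experiment and $\mathsf{Hyb}_n$ is the $b=1$ experiment, so writing $\Pr[\mathsf{Hyb}_j=1]$ for the probability that the output of $\mathsf{Hyb}_j$ is $1$, we get $\advb{\Sigma_{\mathsf{cence}},\cA}{rec}{nc}(\secp)\le\sum_{j\in[n]}\abs{\Pr[\mathsf{Hyb}_{j-1}=1]-\Pr[\mathsf{Hyb}_j=1]}$.

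Then I would bound each consecutive gap by the IND-CPA advantage of $\Sigma_{\mathsf{cepk}}$ (\cref{def:IND-CPA_security_cert_ever_pke}). Given a QPT $\cA$ distinguishing $\mathsf{Hyb}_{j-1}$ from $\mathsf{Hyb}_j$, I build a QPT IND-CPA adversary $\cB$: it first samples $x\la\bit^n$, receives its challenge public key and plants it as $\pke.\pk_{j,x[j]\oplus 1}$, generates the other $2n-1$ key pairs itself, and hands the assembled $\pk$ to $\cA$. On receiving $m$ from $\cA$, it submits the one-bit challenge pair $(m[j],\overline{m[j]})$ and uses the returned ciphertext as $\pke.\ct_{j,x[j]\oplus 1}$, while producing all remaining ciphertext components honestly according to the rule of $\mathsf{Hyb}_{j-1}/\mathsf{Hyb}_j$ (it can, since it knows all plaintexts and the relevant public keys). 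It sends $(\ct,(x,\{\pke.\sk_{i,x[i]}\}_{i\in[n]}))$ to $\cA$ and outputs $\cA$'s guess. Because $\cB$ never needs $\pke.\sk_{j,x[j]\oplus 1}$ or any verification key, the reduction is well defined; when its challenge encrypts $m[j]$ it perfectly simulates $\mathsf{Hyb}_{j-1}$ and when it encrypts $\overline{m[j]}$ it simulates $\mathsf{Hyb}_j$. Thus each gap is at most $\advb{\Sigma_{\mathsf{cepk}},\cB}{ind}{cpa}(\secp)\le\negl(\secp)$, and summing over the $n=\poly(\secp)$ coordinates yields $\advb{\Sigma_{\mathsf{cence}},\cA}{rec}{nc}(\secp)\le\negl(\secp)$.

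The crux of the argument — and the only genuinely non-mechanical step — is the opening observation that the fake mode is obtained from the real mode by the reparametrization $x^*=x\oplus m$. This is what shows the revealed key is identically distributed in both experiments and concentrates the entire difference into the plaintexts of the hidden ciphertext slots, whose secret keys are precisely the ones withheld from the adversary. Once this is established, the hybrid reduction to IND-CPA is routine; the only bookkeeping to check is that the reduction requires neither the hidden secret key nor the verification keys, both of which are absent from the RNC experiment.
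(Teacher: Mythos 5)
Your proposal is correct and takes essentially the same route as the paper: the paper's proof of the certified-everlasting analogue (\cref{thm:ever_security_rnce_classic}), which \cref{thm:comp_security_rnce_classic} is stated to mirror with IND-CPA in place of certified everlasting IND-CPA, consists of exactly your two steps — flipping the hidden-slot plaintexts from $m[i]$ to $m[i]\oplus 1$ via a coordinate-wise hybrid (packaged there as the multi-key \cref{lem:cut_and_choose_pke}) followed by the reparametrization $x = x^{*}\oplus m$ showing the fake mode is then identically distributed. The only difference is bookkeeping: because $x$ is honestly sampled by the challenger, your reduction can plant the challenge public key directly at the hidden slot $(j,x[j]\oplus 1)$, avoiding the factor-2 guess-and-abort the paper incurs inside its lemma, where the string $f$ is adversarially chosen.
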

%\begin{proof}[Proof of \cref{thm:comp_security_rnce_classic}]
Its proof is similar to that of \cref{thm:ever_security_rnce_classic}, and therefore we omit it.
%\end{proof}

\begin{theorem}\label{thm:ever_security_rnce_classic}
If $\Sigma_{\mathsf{cepk}}$ is certified everlasting IND-CPA secure~(\cref{def:cert_ever_security_cert_ever_pke}), $\Sigma_{\mathsf{cence}}$ is certified everlasting RNC secure~(\cref{def:cert_ever_rec_nc_security_classic}).
\end{theorem}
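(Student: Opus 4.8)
The plan is to prove the statement by a hybrid argument over the $n$ bit-positions of the plaintext, reducing each step to the certified everlasting IND-CPA security of the underlying $\Sigma_{\mathsf{cepk}}$. The key structural observation is the following. In the real experiment ($b=0$) the ciphertext consists of pairs $(\pke.\ct_{i,0},\pke.\ct_{i,1})$ that both encrypt $m[i]$, and the revealed key is $(x,\{\pke.\sk_{i,x[i]}\})$ for uniform $x$. In the fake experiment ($b=1$) the revealed index is $x^{*}[i]\oplus m[i]$ (also uniform, since $x^{*}$ is uniform), the ciphertext at the revealed index encrypts $m[i]$, and only the ciphertext at the \emph{complementary} index encrypts $\overline{m[i]}=m[i]\oplus 1$ instead of $m[i]$. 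Thus the joint distribution of (revealed index, revealed ciphertext, revealed secret key) is identical in both experiments; the sole difference is that each complementary ciphertext flips from encrypting $m[i]$ to encrypting $m[i]\oplus1$.

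Accordingly, I would define hybrids $\hybi{0},\dots,\hybi{n}$, where $\hybi{j}$ samples $x^{*}\la\bit^{n}$, sets the revealed index at position $i$ to $x^{*}[i]\oplus m[i]$ with revealed ciphertext encrypting $m[i]$, and lets the complementary ciphertext at position $i$ encrypt $m[i]\oplus1$ for $i\le j$ and $m[i]$ for $i>j$; the revealed secret key is $\{\pke.\sk_{i,x^{*}[i]\oplus m[i]}\}$. Then $\hybi{0}$ is exactly the real experiment (all pairs encrypt $m[i]$, revealed index string $x^{*}\oplus m$ is uniform) and $\hybi{n}$ is exactly the fake experiment, so it suffices to show $\hybi{j-1}$ and $\hybi{j}$ are indistinguishable for each $j\in[n]$.

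For the step from $\hybi{j-1}$ to $\hybi{j}$ I would build a reduction $\cB=(\cB_{1},\cB_{2})$ against certified everlasting IND-CPA security of $\Sigma_{\mathsf{cepk}}$ (\cref{def:cert_ever_security_cert_ever_pke}). On receiving the challenge public key $\pke.\pk^{*}$, the QPT part $\cB_{1}$ samples $\beta\la\bit$, generates the remaining $2n-1$ key pairs itself, plants $\pke.\pk^{*}$ at position $(j,\beta)$, and sends the assembled $\pk$ to $\cA_{1}$. After $\cA_{1}$ returns $m$, $\cB_{1}$ sets $x^{*}[j]\seteq\beta\oplus m[j]\oplus 1$ (which is uniform because $\beta$ is, so the simulation is perfect and no guessing loss occurs) and picks the other $x^{*}[i]$ uniformly; by construction the challenge position $(j,\beta)$ is exactly the complementary position of index $j$, while the revealed key at $j$ is $\pke.\sk_{j,\beta\oplus1}$, which $\cB_{1}$ knows. $\cB_{1}$ submits $(m[j],m[j]\oplus1)$ to the PKE challenger, uses the returned $\pke.\ct^{*}$ as the complementary ciphertext at position $j$, and produces every other ciphertext by honest encryption (matching the shared prefix/suffix pattern of $\hybi{j-1}$ and $\hybi{j}$); it then hands $\cA_{1}$ the ciphertext together with the revealed secret key, all of whose components it knows.

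The delicate part is the certificate and $\MSK$ handling. When $\cA_{1}$ outputs a (possibly entangled) certificate $\cert=\{\pke.\cert_{i,\alpha}\}$, $\cB_{1}$ operates on the fixed tensor registers: it locally verifies the $2n-1$ non-challenge $\pke.\cert_{i,\alpha}$ using the verification keys it generated, forwards the register $(j,\beta)$ as its own certificate $\pke.\cert^{*}$ to the PKE challenger, and passes $\cA_{1}$'s state plus the bit ``all non-challenge verifications passed'' to the unbounded $\cB_{2}$. If the PKE challenger's verification of $\pke.\cert^{*}$ succeeds it releases $\pke.\sk^{*}$, which $\cB_{2}$ combines with its locally known secret keys to form the full $\MSK$ and feeds to $\cA_{2}$; $\cB_{2}$ then outputs $b'=1$ iff the non-challenge check passed \emph{and} $\cA_{2}$ outputs $1$, and $b'=0$ otherwise. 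A short case analysis on the three events (all certificates valid; challenge valid but some other invalid; challenge invalid) shows that the PKE experiment is counted as $1$ exactly when the corresponding hybrid is, so $\cB$'s advantage equals $\lvert\Pr[\hybi{j-1}=1]-\Pr[\hybi{j}=1]\rvert$. Summing over $j$ yields the claim. I expect this bookkeeping -- delegating verification of the single unknown-key position to the external challenger, receiving $\pke.\sk^{*}$ only post-verification, and aligning the $\bot$/bit outputs of the two certified everlasting experiments -- to be the main obstacle, while keeping $\cB_{1}$ QPT and $\cB_{2}$ unbounded as required by \cref{def:cert_ever_rec_nc_security_classic}.
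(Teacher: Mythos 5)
Your proof is correct, and at its core it runs on the same mechanism as the paper's: a hybrid over the $n$ positions in which the complementary ciphertext flips from encrypting $m[i]$ to $m[i]\oplus 1$, with the certificate at the single unknown-key position forwarded to the external certified everlasting IND-CPA challenger, the remaining $2n-1$ certificates verified locally, and the challenge secret key released only post-verification; your output-alignment case analysis also matches the paper's bookkeeping. The decomposition, however, is genuinely different. The paper flips all $n$ complementary ciphertexts in one step ($\sfhyb{0}{}\to\sfhyb{1}{}$, \cref{prop:hyb_0_hyb_1_rnce_classic}) by invoking a stand-alone multi-instance lemma (\cref{lem:cut_and_choose_pke}), and treats your observation that the joint distribution of (revealed index, ciphertexts, revealed key) is identical in the two experiments as a separate statistical step ($\sfhyb{1}{}\to\sfhyb{2}{}$, \cref{prop:hyb_1_hyb_2_rnce_classic}); you instead inline a position-by-position hybrid and absorb that relabeling into the identification of $\hybi{n}$ with the fake experiment, which is equally valid. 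The substantive divergence is quantitative: in the paper's proof of \cref{lem:cut_and_choose_pke} the string $f$ is chosen by the \emph{adversary} after the public keys are fixed, so each hybrid step must guess a bit $\beta$ and abort when $f[j+1]=\beta$, incurring a factor-$2$ loss; your reduction exploits the fact that in the RNCE experiment the \emph{challenger} chooses the index string, and programs $x^{*}[j]\seteq\beta\oplus m[j]\oplus 1$ after receiving $m$. Since the two public keys at position $j$ are identically distributed outputs of $\PKE.\keygen$, the bit $\beta$ is independent of $\cA_1$'s view, so $x^{*}[j]$ is uniform, the simulation is perfect, and no abort is needed --- a tighter per-step reduction, giving roughly $n\cdot\advd{\Sigma_{\mathsf{cepk}},\cB}{cert}{ever}{ind}{cpa}(\secp)$ overall. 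What the paper's route buys in exchange is modularity: \cref{lem:cut_and_choose_pke} is reused verbatim in the proof of \cref{thm:func_ever_non_ad_security} (see \cref{prop:Exp_Hyb_1_non_ad_fe}), where the relevant index string $f$ really is adversarially chosen and your programming trick would not apply, whereas your argument is self-contained and tight but specific to this theorem.
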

Its proof is given in \cref{sec:proof_rnce}.

%Section 5:Certified Everlasting RNCE for quantum states(I have no idea of the construction.)
%\input{Sec_RNCE_Quantum_States}
%Section 6: Garbled Circuit
% !TEX root = main.tex
% !TEX spellcheck = en-US

\section{Certified Everlasting Garbling Scheme}\label{sec:garbling_scheme}
In \cref{sec:def_garbled}, we define certified everlasting garbling scheme.
In \cref{sec:const_garbling}, we construct a certified everlasting garbling scheme from a certified everlasting SKE scheme.

\subsection{Definition}\label{sec:def_garbled}
We define certified everlasting garbling schemes below. An important difference from ordinal classical garbling schemes is that the garbled circuit $\tilde{\mathcal{C}}$ (i.e., an output of $\Garble$) 
is a quantum state.
\begin{definition}[Certified Everlasting Garbling Scheme (Syntax)]\label{def:cert_ever_garbled}
Let $\lambda$ be a security parameter and $p,q,r$ and $s$ be polynomials. 
Let $\Cs_n$ be a family of circuits that take $n$-bit inputs.
A certified everlasting garbling scheme is a tuple of algorithms $\Sigma=(\Samp,\Garble,\Eval,\Delete,\Vrfy)$
with label space $\mathcal{L}\seteq\bit^{p(\lambda)}$, garbled circuit space $\mathcal{C}\seteq \cQ^{\otimes q(\lambda)}$,
verification key space $\mathcal{VK}\seteq \bit^{r(\lambda)}$ and deletion certificate space $\mathcal{D}\seteq\cQ^{\otimes s(\lambda)}$.
\begin{description}
\item[$\Samp(1^\secp)\ra\{L_{i,\alpha}\}_{i\in[n],\alpha\in\bit}$:]
The sampling algorithm takes a security parameter $1^\lambda$ as input, and outputs $2n$ labels $\{L_{i,\alpha}\}_{i\in[n],\alpha\in\bit}$ with $L_{i,\alpha}\in\mathcal{L}$ for each $i\in[n]$ and $\alpha\in\bit$.
\item[$\Garble(1^{\secp},C,\{L_{i,\alpha}\}_{i\in[n],\alpha\in\{0,1\}})\ra (\widetilde{C},\vk)$:] 
The garbling algorithm takes $1^\lambda$, a circuit $C\in\Cs_n$ and $2n$ labels $\{L_{i,\alpha}\}_{i\in[n],\alpha\in\bit}$ as input, and outputs a garbled circuit $\widetilde{C}\in\mathcal{C}$ and a verification key $\vk\in\mathcal{VK}$.
\item[$\Eval(\widetilde{C},\{L_{i,x_i}\}_{i\in[n]})\ra y$:]
The evaluation algorithm takes $\widetilde{C}$ and $n$ labels $\{L_{i,x_i}\}_{i\in[n]}$ where $x_i\in\{0,1\}$ as input,
and outputs $y$.
\item[$\Delete(\widetilde{C})\ra \cert$:]
The deletion algorithm takes $\widetilde{C}$ as input, and outputs a certificate $\cert\in\mathcal{D}$.
\item[$\Vrfy(\vk,\cert)\ra \top$ $\mbox{\bf  or }\bot$:]
The verification algorithm takes $\vk$ and $\cert$ as input, and outputs $\top$ or $\bot$.
\end{description}
\end{definition}

We require that a certified everlasting garbling scheme satisfies correctness defined below.
\begin{definition}[Correctness for Certified Everlasting Garbling Scheme]\label{def:correctness_cert_ever_garbled}
There are three types of correctness, namely, evaluation correctness, verification correctness, and modification correctness.

\paragraph{Evaluation Correctness:}
There exists a negligible function $\negl$ such that for any $\secp\in\N$, $C\in\Cs_n$ and $x\in \bit^n$,
\begin{align}
\Pr\left[
y\neq C(x)
\ \middle |
\begin{array}{ll}
\{L_{i,\alpha}\}_{i\in[n],\alpha\in\bit}\la\Samp(1^\secp)\\
(\widetilde{C},\vk)\lrun \Garble(1^\secp,C,\{L_{i,\alpha}\}_{i\in[n],\alpha\in\{0,1\}})\\
y\la\Eval(\widetilde{C},\{L_{i,x_i}\}_{i\in[n]})
\end{array}
\right] 
\leq\negl(\secp).
\end{align}

\paragraph{Verification Correctness:} There exists a negligible function $\negl$ such that for any $\secp\in\N$,
\begin{align}
\Pr\left[
\Vrfy(\vk,\cert)=\bot
\ \middle |
\begin{array}{ll}
\{L_{i,\alpha}\}_{i\in[n],\alpha\in\bit}\la\Samp(1^\secp)\\
(\widetilde{C},\vk)\la\Garble(1^{\secp},C,\{L_{i,\alpha}\}_{i\in[n],\alpha\in\{0,1\}}) \\
\cert\la\Delete(\widetilde{C})
\end{array}
\right] 
\leq
\negl(\secp).
\end{align}

\paragraph{Modification Correctness:}
There exists a negligible function $\negl$ and a QPT algorithm $\Modify$ such that for any $\secp\in \N$, 
\begin{align}
\Pr\left[
\Vrfy(\vk,\cert^*)=\bot
\ \middle |
\begin{array}{ll}
\{L_{i,\alpha}\}_{i\in[n],\alpha\in\bit}\la\Samp(1^\secp)\\
(\widetilde{C},\vk)\la\Garble(1^{\secp},C,\{L_{i,\alpha}\}_{i\in[n],\alpha\in\{0,1\}}) \\
a,b\la\bit^{q(\lambda)}\\
\cert \lrun \Delete(Z^bX^a\widetilde{C} X^aZ^b)\\
\cert^*\lrun \Modify(a,b,\cert)  
\end{array}
\right] 
\leq
\negl(\secp).
\end{align}
\end{definition}
\begin{remark}
Minimum requirements for correctness are evaluation correctness and verification correctness.
In this paper, however, we also require modification correctness,
because we need modification correctness for the construction of functional encryption in \cref{sec:const_fe_adapt}.
\end{remark}

%\paragraph{Rewind Correctness:}
%There exists a negligible function $\negl$ such that for any $\secp\in\N$,
%\begin{align}
%\Pr\left[
%\frac{1}{2}\norm{\widetilde{C}-\widehat{C}}_{\tr}\leq \negl(\lambda)
%\ \middle |
%\begin{array}{ll}
%\{L_{i,\alpha}\}_{i\in[n],\alpha\in\bit}\la\Samp(1^\secp)\\
%(\widetilde{C},\vk)\lrun \Garble(1^\secp,C,\{L_{i,\alpha}\}_{i\in[n],\alpha\in\{0,1\}})\\
%(y,\widehat{C})\la\Eval(\widetilde{C},\{L_{i,x_i}\}_{i\in[n]})
%\end{array}
%\right] 
%\geq 1-\negl(\secp).
%\end{align}

As security, we consider two definitions, 
\cref{def:sel_sec_ever_garb} 
and
\cref{def:ever_sec_ever_garb} given below.
The former is just the ordinal selective security 
and the latter is the certified everlasting security
that we newly define in this paper.
Roughly, the everlasting security guarantees that any QPT adversary with the garbled circuit $\widetilde{C}$ and the labels $\{L_{i,x[i]}\}_{i\in[n]}$ cannot obtain any information
beyond $C(x)$ even if it becomes  computationally unbounded after it issues a valid certificate.

\begin{definition}[Selective Security for Certified Everlasting Garbling Scheme]\label{def:sel_sec_ever_garb}
Let $\Sigma=(\Samp,\Garble,\Eval,\Delete,\Vrfy)$ be a certified everlasting garbling scheme.
We consider the following security experiment $\Expt{\Sigma,\cA}{\mathsf{selct}}(1^\secp,b)$ against a QPT adversary $\cA$. Let $\Sim$ be a QPT algorithm.
\begin{enumerate}
    \item $\cA$ sends a circuit $C\in\Cs_n$ and an input $x\in\bit^n$ to the challenger.
    \item The challenger computes $\{L_{i,\alpha}\}_{i\in[n],\alpha\in\bit}\la\Samp(1^\secp)$.
    \item If $b=0$, the challenger computes $(\widetilde{C},\vk)\la\Garble(1^\secp,C,\{L_{i,\alpha}\}_{i\in[n],\alpha\in\bit})$, and returns $(\widetilde{C},\{L_{i,x_i}\}_{i\in[n]})$ to $\cA$.
    If $b=1$, the challenger computes $\widetilde{C}\la\Sim(1^\secp,1^{|C|},C(x),\{L_{i,x_i}\}_{i\in[n]})$,
    and returns $(\widetilde{C},\{L_{i,x_i}\}_{i\in[n]})$ to $\cA$.
    \item $\cA$ outputs $b'\in\bit$. The experiment outputs $b'$.
\end{enumerate}
We say that $\Sigma$ is selective secure if there exists a QPT simulator $\Sim$ such that for any QPT adversary $\cA$ it holds that
\begin{align}
    \adva{\Sigma,\cA}{selct}(\secp)\seteq \abs{\Pr[\Expt{\Sigma,\cA}{\mathsf{selct}}(1^\secp,0)=1]-\Pr[\Expt{\Sigma,\cA}{\mathsf{selct}}(1^\secp,1)=1]}\leq\negl(\secp).
\end{align}
\end{definition}

\begin{definition}[Certified Everlasting Selective Security for Certified Everlasting Garbling Scheme]\label{def:ever_sec_ever_garb}
Let $\Sigma=(\Samp,\Garble,\Eval,\Delete,\Vrfy)$ be a certified everlasting garbling scheme.
We consider the following security experiment $\expb{\cA,\Sigma}{cert}{ever}{selct}(1^\secp,b)$ against a QPT adversary $\cA_1$ and an unbounded adversary $\cA_2$. Let $\Sim$ be a QPT algorithm.
\begin{enumerate}
    \item $\cA_1$ sends a circuit $C\in\Cs_n$ and an input $x\in\bit^n$ to the challenger.
    \item The challenger computes $\{L_{i,\alpha}\}_{i\in[n],\alpha\in\bit}\la\Samp(1^\secp)$.
    \item If $b=0$, the challenger computes $(\widetilde{C},\vk)\la\Garble(1^\secp,C,\{L_{i,\alpha}\}_{i\in[n],\alpha\in\bit})$, and returns $(\widetilde{C},\{L_{i,x_i}\}_{i\in[n]})$ to $\cA_1$.
    If $b=1$, the challenger computes $(\widetilde{C},\vk)\la\Sim(1^\secp,1^{|C|},C(x),\{L_{i,x_i}\}_{i\in[n]})$,
    and returns $(\widetilde{C},\{L_{i,x_i}\}_{i\in[n]})$ to $\cA_1$.
    \item At some point, $\cA_1$ sends $\cert$ to the challenger, and sends the internal state to $\cA_2$.
    \item The challenger computes $\Vrfy(\vk,\cert)$.
    If the output is $\bot$, then the challenger outputs $\bot$, and sends $\bot$ to $\cA_2$.
    Otherwise, the challenger outputs $\top$, and sends $\top$ to $\cA_2$.
    \item $\cA_2$ outputs $b'\in\bit$.
    \item If the challenger outputs $\top$, then the output of the experiment is $b'$. 
    Otherwise, the output of the experiment is $\bot$.
\end{enumerate}
We say that $\Sigma$ is certified everlasting selective secure if there exists a QPT simulator $\Sim$ such that for any QPT $\cA_1$ and any unbounded $\cA_2$ it holds that
\begin{align}
    \advc{\Sigma,\cA}{cert}{ever}{selct}(\secp)\seteq
    \abs{\Pr[\expb{\cA,\Sigma}{cert}{ever}{selct}(1^\secp,0)=1]-\Pr[\expb{\cA,\Sigma}{cert}{ever}{selct}(1^\secp,1)=1]}\leq\negl(\secp).
\end{align}
\end{definition}

%\begin{lemma}[Reusability of Garbled Circuit for Garbling Scheme]\label{lem:garbling_rewind}
%For any $\{L_{i,x[i]}\}_{i\in[n]}\in \mathcal{L}^{n}$ and any $\widetilde{C},\widehat{C}\in \Cs$ such that $\Pr[\widehat{C}\leftarrow \Eval(\widetilde{C},\{L_{i,x[i]}\}_{i\in[n]})]>0$,
%\begin{align}
%\frac{1}{2}\parallel\widehat{C}-\widetilde{C}\parallel_{\tr}\leq \negl(\lambda).
%\end{align}
%\begin{remark}
%We can easily show that any certified everlasting garbling scheme that satisfies evaluation correctness can be transformed into one that is reusable by using a gentle measurement lemma\cite{Mark2011}.
%\end{remark}
%\mor{SKE, PKE ni awaseru.}
%\end{lemma}

\subsection{Construction}\label{sec:const_garbling}
In this section, we construct a certified everlasting garbling scheme from a certified everlasting SKE scheme~(\cref{def:cert_ever_ske}).
Our construction is similar to Yao's construction of an ordinary garbling scheme \cite{FOCS:Yao86}, but there are two important differences.
First, we use a certified everlasting SKE scheme instead of an ordinary SKE scheme.
Second, we use XOR secret sharing, although \cite{FOCS:Yao86} used double encryption. The reason why we cannot use double encryption is
that our certified everlasting SKE scheme has quantum ciphertext and classical plaintext.

Before introducing our construction, let us quickly review notations for circuits.
Let $C$ be a boolean circuit. 
A boolean circuit $C$ consists of gates, $\mathsf{gate}_1,\mathsf{gate}_2,\cdots,\mathsf{gate}_q$,
where $q$ is the number of gates in the circuit.
%Let $n$ be the number of input wires, and $m$ be the number of output wires.
Here, $\mathsf{gate}_i\seteq (g,w_a,w_b,w_c)$, where $g:\bit^2\ra\bit$ is a function, $w_a$, $w_b$ are the incoming wires, and $w_c$ is the outgoing wire.
(The number of outgoing wires is not necessarily one. There can be many outgoing wires, but we use the same label $w_c$ for all outgoing wires.) 
We say $C$ is leveled if each gate has an associated level and any gate at level $\ell$ has incoming wires only from gates at level $\ell-1$ and outgoing wires only to gates at level $\ell+1$. 
Let $\mathsf{out}_1,\mathsf{out}_2,\cdots,\mathsf{out}_m$ be the $m$ output wires.
For any $x\in\bit^n$, $C(x)$ is the output of the circuit $C$ on input $x$.
We consider that $\mathsf{gate_1},\mathsf{gate_2},\cdots,\mathsf{gate}_q$ are arranged in the ascending order of the level.

\paragraph{Our certified everlasting garbling scheme.}
We construct a certified everlasting garbling scheme $\Sigma_{\mathsf{cegc}}=(\Samp,\Garble,\allowbreak \Eval,\Delete,\Vrfy)$
from a certified everlasting SKE scheme
$\Sigma_{\mathsf{cesk}}=\SKE.(\keygen,\Enc,\Dec,\Delete,\Vrfy)$~(\cref{def:cert_ever_ske}).
Let $\mathcal{K}$ be the key space of $\Sigma_{\mathsf{cesk}}$.
Let $C$ be a leveled boolean circuit.
Let $n,m,q$, and $p$ be the input size,
the output size,
the number of gates,
and
the total number of wires
of $C$, 
respectively.
 
% Recall that wires are uniquely identified with labels and a circuit $C$ is specified
 %by a list of gate tuples $\mathsf{gate} = (g, w_a, w_b, w_c)$,
 %where $g$ is the gate and $a$, $b$ are the input wire labels and $c$ is the output wire label.
 %The topology of the circuit $\Phi_{\mathsf{topo}}(C)$ consists of the sanitized gate tuples $\widehat{\mathsf{gate}}=(\bot,w_a,w_b,w_c)$.

\begin{description}
\item[$\Samp(1^\secp)$:] $ $
\begin{itemize}
    \item For each $i\in[n]$ and $\sigma\in\bit$, generate $\ske.\sk_{i}^\sigma\la\SKE.\keygen(1^\secp)$.
    \item Output $\{L_{i,\sigma}\}_{i\in[n],\sigma\in\bit}\seteq\{ \ske.\sk_i^\sigma\}_{i\in [n],\sigma\in\bit}$.
\end{itemize}
\item[$\Garble(1^\secp,C,\{L_{i,\sigma}\}_{i\in[n],\sigma\in\bit})$:] $ $
\begin{itemize}
    \item For each $i\in\{n+1,\cdots, p\}$ and $\sigma\in\bit$, generate $\ske.\sk_{i}^\sigma\la\SKE.\keygen(1^\secp)$.
    \item For each $i\in [q]$, 
    compute 
    \begin{align}
    (\vk_i,\widetilde{g}_i)\la \mathsf{GateGrbl}(\mathsf{gate}_i, \{\ske.\sk_a^\sigma,\ske.\sk_b^\sigma,\ske.\sk_c^\sigma\}_{\sigma\in\bit}), 
    \end{align}
    where $\mathsf{gate}_i=(g,w_a,w_b,w_c)$ and $\mathsf{GateGrbl}$ is described in Fig~\ref{fig:Garble_circuit}.
    \item For each $i\in[m]$,
    set $\widetilde{d_i}\seteq[(\ske.\sk_{\mathsf{out}_i}^0, 0),(\ske.\sk_{\mathsf{out}_i}^1, 1)]$.
    \item Output $\widetilde{C}\seteq (\{\widetilde{g_i}\}_{i\in [q]},\{\widetilde{d}_i\}_{i\in [m]})$
    and 
    $\vk\seteq\{\vk_i\}_{i\in[q]}$.
\end{itemize}
\item[$\Eval(\widetilde{C},\{L_{i,x_i}\}_{i\in [n]})$:] $ $
\begin{itemize}
    \item Parse $\widetilde{C}= (\{\widetilde{g_i}\}_{i\in [q]},\{\widetilde{d}_i\}_{i\in [m]})$ and $\{L_{i,x_i}\}_{i\in[n]}=\{\ske.\sk'_i\}_{i\in [n]}$.
    \item For each $i\in[q]$, 
    compute $\ske.\sk'_{c}\la\mathsf{GateEval}(\widetilde{g_i}, \ske.\sk'_a,\ske.\sk'_b)$ in the ascending order of the level, where $\mathsf{GateEval}$ is described in Fig~\ref{fig:Gate_Eval}.
    If $\ske.\sk'_{c}=\bot$, output $\bot$ and abort.
    \item For each $i\in[m]$, set $y[i]=\sigma$ if $\ske.\sk'_{\mathsf{out}_i}=\ske.\sk_{\mathsf{out}_i}^\sigma$. Otherwise, set $y[i]=\bot$, and abort.
    \item Output $y\seteq y[1]||y[2]||\cdots|| y[m]$.
\end{itemize}
\item[$\Delete(\widetilde{C})$:] $ $
\begin{itemize}
    \item Parse $\widetilde{C}= (\{\widetilde{g_i}\}_{i\in [q]},\{\widetilde{d}_i\}_{i\in [m]})$.
    \item For each $i\in[q]$, compute $\cert_i\la\mathsf{GateDel}(\widetilde{g_i})$, where $\mathsf{GateDel}$ is described in Fig~\ref{fig:Gate_Del}.
    \item Output $\cert\seteq\{\cert_i\}_{i\in [q]}$.
\end{itemize}
\item[$\Vrfy(\vk,\cert)$:] $ $
\begin{itemize}
    \item Parse
    $\vk=\{\vk_i\}_{i\in[q]}$
    and $\cert= \{\cert_i\}_{i\in[q]}$.
    \item For each $i\in[q]$,
    compute $\bot/\top\la\mathsf{GateVrfy}(\vk_i,\cert_i)$, where $\mathsf{GateVrfy}$ is described in Fig~\ref{fig:Gate_Verify}.
    \item If $\top\la\mathsf{GateVrfy}(\vk_i,\cert_i)$ for all $i\in[q]$, then output $\top$.
    Otherwise, output $\bot$. 
\end{itemize}
\end{description}

\protocol{Gate Garbling Circuit $\mathsf{GateGrbl}$
}
{The description of $\mathsf{GateGrbl}$}
{fig:Garble_circuit}
{
\begin{description}
\item[Input:] $\mathsf{gate}_i,\{\ske.\sk_a^\sigma,\ske.\sk_b^\sigma,\ske.\sk_c^\sigma\}_{\sigma\in\bit}$.
\item[Output:] $\widetilde{g_i}$ and $\vk_i$.
\end{description}
\begin{enumerate}
\item Parse $\mathsf{gate}_i=(g,w_a,w_b,w_c)$.
\item Sample $\gamma_i\la\mathsf{S}_4$.\footnote{$\mathsf{S}_4$ is the symmetric group of order $4$.}
\item For each $\sigma_a,\sigma_b\in\bit$, sample $p^{\sigma_{a},\sigma_b}_{c}\la\Ks$.
\item For each $\sigma_a,\sigma_b\in\bit$, compute $(\ske.\vk_{a}^{\sigma_a,\sigma_b},\ske.\ct_{a}^{\sigma_a,\sigma_b})\la\SKE.\Enc(\ske.\sk_a^{\sigma_a},p^{\sigma_a,\sigma_b}_{c})$
and 
$(\ske.\vk_{b}^{\sigma_a,\sigma_b},\ske.\ct_b^{\sigma_a,\sigma_b})\la
\SKE.\Enc(\ske.\sk_b^{\sigma_b},p^{\sigma_a,\sigma_b}_{c}\oplus \ske.\sk_c^{g(\sigma_a,\sigma_b)})$.
\item Output $\widetilde{g_i}\seteq \{\ske.\ct_a^{\sigma_a,\sigma_b},\ske.\ct_b^{\sigma_a,\sigma_b}\}_{\sigma_a,\sigma_b\in\bit}$ in the permutated order of $\gamma_i$ and 
$\vk_i\seteq\{\ske.\vk_{a}^{\sigma_a,\sigma_b},\ske.\vk_{b}^{\sigma_a,\sigma_b}\}_{\sigma_a,\sigma_b\in\bit}$ in the permutated order of $\gamma_i$.
\end{enumerate}
}

\protocol{Gate Evaluating Circuit $\mathsf{GateEval}$
}
{The description of $\mathsf{GateEval}$}
{fig:Gate_Eval}
{
\begin{description}
\item[Input:] A garbled gate $\widetilde{g}_i$ and $(\ske.\sk'_a,\ske.\sk'_b)$.
\item[Output:] $\ske.\sk_c$ or $\bot$.
\end{description}
\begin{enumerate}
\item Parse $\widetilde{g_i}= \{\ske.\ct_a^{\sigma_a,\sigma_b},\ske.\ct_b^{\sigma_a,\sigma_b}\}_{\sigma_a,\sigma_b\in\bit}$.
\item For each $\sigma_a,\sigma_b\in\bit$, compute 
      $q_a^{\sigma_a,\sigma_b}\la\SKE.\Dec(\ske.\sk'_a,\ske.\ct_a^{\sigma_a,\sigma_b})$ and
      $q_b^{\sigma_a,\sigma_b}\la\SKE.\Dec(\ske.\sk'_b,\ske.\ct_b^{\sigma_a,\sigma_b})$.
\item If there exists a unique pair $(\sigma_a,\sigma_b)\in\bit^2$ such that $q_a^{\sigma_a,\sigma_b}\neq \bot$ and $q_b^{\sigma_a,\sigma_b}\neq\bot$, then
      compute $\ske.\sk_{c}^{'\sigma_a,\sigma_b}\seteq q_a^{\sigma_a,\sigma_b}\oplus q_b^{\sigma_a,\sigma_b}$ and output $\ske.\sk'_c\seteq \ske.\sk_c^{'\sigma_a,\sigma_b}$.
      Otherwise, output $\ske.\sk'_c\seteq\bot$.
\end{enumerate}
}

\protocol{Gate Deletion Circuit $\mathsf{GateDel}$
}
{The description of $\mathsf{GateDel}$}
{fig:Gate_Del}
{
\begin{description}
\item[Input:] A garbled gate $\widetilde{g}_i$.
\item[Output:] $\cert_i$
\end{description}
\begin{enumerate}
\item Parse $\widetilde{g_i}= \{\ske.\ct_a^{\sigma_a,\sigma_b},\ske.\ct_b^{\sigma_a,\sigma_b}\}_{\sigma_a,\sigma_b\in\bit}$.
\item For each $\sigma_a,\sigma_b\in\bit$, compute $\ske.\cert_a^{\sigma_a,\sigma_b}\la\SKE.\Delete(\ske.\ct_a^{\sigma_a,\sigma_b})$.
\item For each $\sigma_a,\sigma_b\in\bit$, compute $\ske.\cert_b^{\sigma_a,\sigma_b}\la\SKE.\Delete(\ske.\ct_b^{\sigma_a,\sigma_b})$.
\item Output $\cert_i\seteq \{\ske.\cert_a^{\sigma_a,\sigma_b},\ske.\cert_b^{\sigma_a,\sigma_b}\}_{\sigma_a,\sigma_b\in\bit}$.
%\footnote{Since $\widetilde{g_i}$ is permutated based on $\gamma_i$, $\cert_i$ is also permutated based on $\gamma_i$.}.
\end{enumerate}
}

\protocol{Gate Verification Circuit $\mathsf{GateVrfy}$
}
{The description of $\mathsf{GateVrfy}$}
{fig:Gate_Verify}
{
\begin{description}
\item[Input:] $\vk_i$ and $\cert_i$.
\item[Output:] $\top$ or $\bot$.
\end{description}
\begin{enumerate}
\item Parse
$\vk_i=\{\ske.\vk_{a}^{\sigma_a,\sigma_b},\ske.\vk_{b}^{\sigma_a,\sigma_b}\}_{\sigma_a,\sigma_b\in\bit}$ and
$\cert_i= \{\ske.\cert_a^{\sigma_a,\sigma_b},\ske.\cert_b^{\sigma_a,\sigma_b}\}_{\sigma_a,\sigma_b\in\bit}$.
\item For each $\sigma_a,\sigma_b\in\bit$, compute $\top/\bot\la\SKE.\Vrfy(\ske.\vk_{a}^{\sigma_a,\sigma_b},\ske.\cert_a^{\sigma_a,\sigma_b})$.
\item For each $\sigma_a,\sigma_b\in\bit$, compute $\top/\bot\la\SKE.\Vrfy(\ske.\vk_b^{\sigma_a,\sigma_b},\ske.\cert_b^{\sigma_a,\sigma_b})$.
\item If all the outputs are $\top$, then output $\top$.
Otherwise, output  $\bot$.
\end{enumerate}
}

\paragraph{Correctness:}
Correctness easily follows from that of $\Sigma_{\mathsf{cesk}}$.

\paragraph{Security:}
The following two theorems hold.
\begin{theorem}\label{thm:garble_comp_seucirty}
If $\Sigma_{\mathsf{cesk}}$ satisfies the IND-CPA security~(\cref{def:IND-CPA_security_cert_ever_ske}), 
$\Sigma_{\mathsf{cegc}}$ satisfies the selective security~(\cref{def:sel_sec_ever_garb}).
\end{theorem}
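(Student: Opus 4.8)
The plan is to run Yao's classical hybrid argument, adapted to the XOR-secret-sharing gate gadget of $\Sigma_{\mathsf{cegc}}$. The quantum nature of the ciphertexts plays no role here: selective security is a purely computational statement, so the whole proof is a reduction to the IND-CPA security of $\Sigma_{\mathsf{cesk}}$ (\cref{def:IND-CPA_security_cert_ever_ske}), treating each garbled-gate ciphertext as an opaque IND-CPA ciphertext. First I would describe the simulator $\Sim$. On input $(1^\secp,1^{|C|},C(x),\{L_{i,x_i}\}_{i\in[n]})$ together with the topology of $C$, it assigns to every wire $w$ a single \emph{active} key $\ske.\sk_w$: for an input wire it takes the given label $\ske.\sk_w\seteq L_{w,x_w}$, and for every other wire it samples $\ske.\sk_w\la\SKE.\keygen(1^\secp)$. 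For each $\mathsf{gate}_i=(g,w_a,w_b,w_c)$ it builds four entries that reproduce the decryption pattern of a real garbled gate under the active input keys: an \emph{active} entry $(\SKE.\Enc(\ske.\sk_a,p),\SKE.\Enc(\ske.\sk_b,p\oplus\ske.\sk_c))$ for random $p$ (so the evaluator recovers $\ske.\sk_c$); one entry decryptable only under $\ske.\sk_a$; one decryptable only under $\ske.\sk_b$; and one decryptable under neither, where every ciphertext not under an active key encrypts a random string or $0$ under a freshly sampled key. The four entries are shuffled by $\gamma_i\la\mathsf{S}_4$ as in Fig.~\ref{fig:Garble_circuit}, and for each output wire it sets $\widetilde{d}_i\seteq[(\ske.\sk_{\mathsf{out}_i},C(x)[i]),(\rho_i,1\oplus C(x)[i])]$ with $\rho_i$ fresh. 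This is exactly a real garbled circuit with every inactive row turned into garbage and every inactive key replaced by an independent uniform value; the random permutation $\gamma_i$ (point-and-permute) is precisely what lets $\Sim$ place the active entry without knowing the wire values.

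To interpolate between the real distribution ($b=0$) and $\Sim$ ($b=1$), I would rewrite one gate at a time, processing the gates in \emph{increasing level order} (inputs toward outputs). In the hybrid where gates $1,\dots,k$ are already simulated and $k+1,\dots,q$ are real, I pass to the next hybrid by rewriting $\mathsf{gate}_{k+1}=(g,w_a,w_b,w_c)$ in two sub-steps. Sub-step A replaces, at each of the three \emph{inactive} positions $(\sigma_a,\sigma_b)\neq(v_a,v_b)$ (where $v_a,v_b,v_c$ are the true wire values under $C(x)$), every ciphertext encrypted under an inactive input key $\ske.\sk_a^{1\oplus v_a}$ or $\ske.\sk_b^{1\oplus v_b}$ by an encryption of $0$; this is justified by IND-CPA security under those keys. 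Sub-step B is information-theoretic: after A, at each surviving inactive position the pad $p^{\sigma_a,\sigma_b}_c$ appears in only one remaining ciphertext (the one under an active input key), so that plaintext is uniform and independent of $g$ and of $\ske.\sk_c$, and may be rewritten as an encryption of a fresh random value without changing the distribution. After A and B, $\mathsf{gate}_{k+1}$ has exactly the simulated form, and crucially the inactive output key $\ske.\sk_c^{1\oplus v_c}$ no longer occurs anywhere in it. A final hybrid randomizes the inactive entries of every decoding table $\widetilde{d}_i$ (which is again information-theoretic, since those inactive output keys appear nowhere after all gates are simulated), yielding $\Sim$'s output.

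The reduction in Sub-step A is where the ordering matters, and this is the main obstacle. For IND-CPA under $\ske.\sk_a^{1\oplus v_a}$ (and $\ske.\sk_b^{1\oplus v_b}$) to apply, these keys must never occur as a \emph{plaintext} in the current hybrid. An inactive input key of $\mathsf{gate}_{k+1}$ is an inactive \emph{output} key of the gate producing $w_a$; if $w_a$ is internal this key is encrypted as a plaintext share in its producer gate, but that producer lies at a strictly lower level and has therefore already been moved to simulated form (where its inactive output key was purged), while if $w_a$ is an input wire the inactive key is a leaf that is never revealed. Thus, by induction on the level, the relevant inactive input keys are always hidden, and the reduction can generate every remaining ciphertext either directly (for keys it knows) or through the IND-CPA encryption oracle. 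Because a wire may fan out to several gates, a single key may encrypt polynomially many ciphertexts, so I would invoke the multi-challenge form of IND-CPA, which follows from \cref{def:IND-CPA_security_cert_ever_ske} by a standard ciphertext-by-ciphertext hybrid. Summing over the $O(q)$ gate steps (each with $O(1)$ IND-CPA reductions, the information-theoretic sub-steps contributing no loss) gives $\adva{\Sigma_{\mathsf{cegc}},\cA}{selct}(\secp)\leq\negl(\secp)$; checking that $\gamma_i$ is identically distributed across all hybrids and that the input labels handed to the adversary are unchanged is then routine.
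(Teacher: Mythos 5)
Your overall strategy is sound and is, at its core, the same as the paper's: the paper proves \cref{thm:garble_ever_security} (and says the computational \cref{thm:garble_comp_seucirty} is analogous) via hybrids $\mathsf{Hyb}_0,\dots,\mathsf{Hyb}_q$ that rewrite one gate at a time in ascending level order, with exactly your inductive invariant that the inactive key of an already-processed wire no longer occurs as a plaintext anywhere, so IND-CPA under that key applies; encryption queries in \cref{def:IND-CPA_security_cert_ever_ske} play the role you assign to the multi-challenge oracle for fan-out. The differences are in packaging. The paper's per-gate target ($\mathsf{InputDep.GateGrbl}$) keeps all four rows consistent but makes them all encrypt shares of the \emph{active} output key $\ske.\sk_c^{v(c)}$, funnels the computational step through a bespoke ``parallel encryption'' lemma (\cref{lem:parallel}) that packages the two hidden inactive keys, their fan-out encryptions, and all three inactive rows into one game, and finishes with a single statistical relabeling step (\cref{prop:garble_hyb_q}) arguing that encrypting $\ske.\sk_c^{v(c)}$ everywhere is distributed identically to encrypting $\ske.\sk_c^{0}$ everywhere. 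You instead degrade the inactive rows to garbage (your sub-steps A and B), which pushes more work into the gate-by-gate hybrids but makes the terminal statistical step nearly trivial (only the decoding tables remain). Your sub-step B is a correct information-theoretic observation: once the inactive-key share is replaced by $\Enc(\cdot,0)$, the pad $p^{\sigma_a,\sigma_b}_c$ survives in exactly one ciphertext and is uniform, so the remaining active-key ciphertext can be rewritten as an encryption of fresh randomness. Either route proves the theorem.

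There is, however, one concrete slip in your simulator that you must repair: you have $\Sim$ encrypt the garbage entries ``under a freshly sampled key'' per entry. Your hybrid chain does \emph{not} converge to that distribution. After sub-steps A and B, all garbage ciphertexts on a given wire remain under that wire's single inactive key --- two ciphertexts within each gate share $\ske.\sk_b^{1\oplus v_b}$ (rows $(v_a,1\oplus v_b)$ and $(1\oplus v_a,1\oplus v_b)$), and with fan-out the same inactive key recurs across several gates. Bridging ``many ciphertexts under one shared key'' to ``each ciphertext under its own fresh key'' is a key-unlinkability (key-privacy) property that IND-CPA does not imply: an IND-CPA-secure scheme may embed a visible key fingerprint in every ciphertext, making the two distributions trivially distinguishable. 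The fix is one line and is what the paper's $\Sim.\mathsf{GateGrbl}$ (Fig.~\ref{fig:Sim_Garble_circuit}) implicitly does: have $\Sim$ sample \emph{one} inactive key per wire (it knows the topology from $1^{|C|}$) and use it consistently for all garbage ciphertexts on that wire; your hybrids then terminate exactly at $\Sim$'s output. With that correction, and keeping the decoding-table convention consistent with the paper's simulator, your proof goes through.
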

%\begin{proof}
Its proof is similar to that of \cref{thm:garble_ever_security},
and therefore we omit it.
%\end{proof}

\begin{theorem}\label{thm:garble_ever_security}
If $\Sigma_{\mathsf{cesk}}$ satisfies the certified everlasting IND-CPA security~(\cref{def:cert_ever_security_cert_ever_ske}),
$\Sigma_{\mathsf{cegc}}$ satisfies the certified everlasting selective security~(\cref{def:ever_sec_ever_garb}).
\end{theorem}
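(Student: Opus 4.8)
The plan is to run the usual gate-by-gate hybrid argument for Yao's construction, but to invoke the \emph{certified everlasting} IND-CPA security of $\Sigma_{\mathsf{cesk}}$ (\cref{def:cert_ever_security_cert_ever_ske}) at every computational step, so that each indistinguishability survives the handover to the unbounded adversary $\cA_2$. First I would fix the simulator $\Sim$: on input $(1^\secp,1^{|C|},C(x),\{L_{i,x_i}\}_{i\in[n]})$ it assigns to each wire $w$ a single \emph{active} key $\ske.\sk_w$ (the supplied label on input wires, a fresh key otherwise), the active value of $w$ being the bit it carries in the evaluation of $C(x)$. For every gate, the \emph{active} entry $(\alpha_a,\alpha_b)$ is an honest pair decrypting to the active output key $\ske.\sk_c^{\alpha_c}$; in each of the three remaining entries the ciphertext lying under an active input key encrypts a fresh uniform value while the ciphertext under an inactive input key encrypts $0$; the four entries are permuted by $\gamma_i\la\mathsf{S}_4$, and $\widetilde{d}_i$ stores the active output key in its correct slot and a fresh random key in the other. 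By special correctness an evaluator holding only active input keys decrypts exactly the active entry at each gate, so $\Sim$ reproduces $C(x)$.

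Next I would define hybrids $\mathsf{Hyb}_0,\dots,\mathsf{Hyb}_q$ with gates indexed in ascending level order, where $\mathsf{Hyb}_\ell$ garbles gates $1,\dots,\ell$ as $\Sim$ does and gates $\ell+1,\dots,q$ honestly; thus $\mathsf{Hyb}_0$ is the real experiment ($b=0$) and $\mathsf{Hyb}_q$ the simulated one ($b=1$). To move from $\mathsf{Hyb}_{\ell-1}$ to $\mathsf{Hyb}_\ell$ I rewrite gate $\ell$ (outgoing wire $c$) through a short internal sequence acting on its three non-active entries. In an entry, the ciphertext under an inactive input key is switched to an encryption of $0$; this is sound because the gate feeding each internal input wire sits at a strictly lower level and has already been simulated, so the inactive input keys no longer appear as any plaintext and are hidden, whence the switch reduces to certified everlasting IND-CPA of $\Sigma_{\mathsf{cesk}}$. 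Once the random pad inside such a ciphertext is erased, the companion ciphertext under the \emph{active} key encrypts the one-time-padded value $p^{\sigma_a,\sigma_b}\oplus\ske.\sk_c^{g(\sigma_a,\sigma_b)}$, which is now uniform and independent, so I may overwrite it with a fresh uniform plaintext by a \emph{perfect} rewriting that needs no assumption; this deletes the inactive output key $\ske.\sk_c^{\bar\alpha_c}$ from gate $\ell$ (and, when $c$ is an output wire, lets me resample the inactive slot of the corresponding $\widetilde{d}$ as a fresh random key). After all three entries are handled, gate $\ell$ matches $\Sim$ exactly.

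Each computational sub-step reduces to the certified everlasting IND-CPA game under one inactive key $\ske.\sk^*$. The reduction $\cB=(\cB_1,\cB_2)$ samples every other wire key itself, generates the ciphertext being switched as the challenge ciphertext (submitting the two candidate plaintexts as $(m_0,m_1)$) and all other ciphertexts under $\ske.\sk^*$ via the encryption oracle, and hands $(\widetilde{C},\{L_{i,x_i}\}_{i\in[n]})$ to $\cA_1$; since $\ske.\sk^*$ is inactive it is never among the revealed labels. When $\cA_1$ returns a certificate $\cert$ for the whole garbled circuit, $\cB_1$ separates it into its per-ciphertext pieces, verifies every piece whose verification key it holds (recording the conjunction as an event $V_{\mathrm{rest}}$), and forwards the one piece attached to the challenge ciphertext to the SKE challenger, passing $V_{\mathrm{rest}}$ and $\cA_1$'s state to $\cB_2$. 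The unbounded $\cB_2$ then runs $\cA_2$ and outputs $\cA_2$'s bit if $V_{\mathrm{rest}}$ holds and $0$ otherwise.

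The obstacle I expect to dominate the write-up is exactly this certificate bookkeeping: the garbled-circuit verification accepts iff $V_{\mathrm{rest}}$ holds \emph{and} the SKE challenger accepts the forwarded certificate, whereas the SKE game verifies only the challenge ciphertext, so $\cB_2$ must re-inject $V_{\mathrm{rest}}$ for the reduction's win condition to coincide with that of the garbling experiment; with this conditioning the SKE experiment output equals the garbling experiment output, making consecutive hybrids negligibly close, and summing over the $\poly(\secp)$ sub-steps yields the claim. The only other delicate point is maintaining the hiddenness invariant—that when gate $\ell$ is rewritten each of its inactive input keys has already been purged from all earlier plaintexts and is never exposed (the inactive output keys revealed inside $\widetilde{d}$ are harmless because they are removed from the ciphertexts by the unconditional one-time-pad steps rather than by a computational reduction).
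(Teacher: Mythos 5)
Your proposal is correct, and it reaches the theorem by a genuinely different decomposition than the paper's. The paper (\cref{sec:parallel}) uses \emph{input-dependent} hybrids in which every entry of an already-processed gate still encrypts a real key, namely the active key $\ske.\sk_c^{v(c)}$, via $\mathsf{InputDep.GateGrbl}$; the per-gate jump is proven through a dedicated two-key, three-pair lemma (\cref{lem:parallel}) whose own proof switches each pair with a one-time-pad re-randomization trick (challenge plaintexts $z$ versus $z\oplus z_0\oplus z_1$, keeping the companion share fixed), and a final \emph{statistical} step (\cref{prop:garble_hyb_q}) relabels $\ske.\sk_c^{v(c)}$ as $\ske.\sk_c^{0}$ and programs the output tables. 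You instead junk the three non-active entries outright (encryptions of $0$ under inactive keys, uniform plaintexts under active keys), alternate single-ciphertext reductions to \cref{def:cert_ever_security_cert_ever_ske} with perfect one-time-pad rewrites, and make the last hybrid literally equal to your simulator, so no terminal statistical step is needed. What each buys: the paper's bundled lemma keeps all simulated gates as well-formed XOR shares of a single key, which makes the final relabeling clean and isolates the multi-ciphertext bookkeeping in one reusable statement; your route is more elementary (only the single-challenge game is ever invoked) at the cost of more, finer-grained hybrids and of redoing the fan-out/oracle bookkeeping in every step. Both arguments rest on the same two load-bearing points, which you identified: the ascending-level invariant guaranteeing that the inactive incoming keys of the gate being rewritten appear nowhere as plaintext and only as encryption keys (serviced by the encryption oracle of \cref{def:cert_ever_security_cert_ever_ske}, which also returns the verification keys you need for the per-piece checks), and the certificate split---verify what you hold, forward the challenge piece, and re-inject $V_{\mathrm{rest}}$ so that $\Pr[\text{SKE-exp}=1]$ equals $\Pr[\mathsf{Hyb}=1]$ bit for bit; the paper's reductions do exactly this.

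Two cautions for the write-up, neither fatal. First, as stated your simulator is defined via ``the bit $w$ carries in the evaluation of $C(x)$,'' which the simulator cannot compute from $(1^{|C|},C(x),\{L_{i,x_i}\}_{i\in[n]})$; since the designated per-wire key carries no value semantics and the four entries are permuted by $\gamma_i$, the output distribution is in fact value-free, but you should restate the simulator without reference to wire values (sample one designated key and one junk key per wire, emit one honest share-pair and three junk entries in a random order, and bind the designated output keys to the known bits $C(x)_i$). Second, make explicit that the junk (inactive) key of a wire is sampled \emph{once} and reused across all fan-out gates: an unbounded $\cA_2$ can detect key-reuse patterns through special correctness, so per-gate fresh junk keys would not match the honest distribution.
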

Its proof is given in \cref{sec:parallel}.

%Section 7: Functional Encryption_NC1
% !TEX root = main.tex
% !TEX spellcheck = en-US

\section{Certified Everlasting Functional Encryption}\label{sec:function_encryption}
In this section, we define and construct certified everlasting functional encryption (FE).
In \cref{sec:def_ever_fe}, we define certified everlasting FE.
In \cref{sec:const_func_non_adapt}, we construct a 1-bounded certified everlasting FE scheme with non-adaptive security for all $\Ppoly$ circuits
from a certified everlasting garbling scheme~(\cref{def:cert_ever_garbled}) and a certified everlasting PKE scheme~(\cref{def:cert_ever_pke}).
In \cref{sec:const_fe_adapt}, we change it to the adaptive one by using a certified everlasting RNCE scheme~(\cref{def:cert_ever_rnce_classic}).
In \cref{sec:const_multi_fe}, we further change it to a $q$-bounded certified everlasting FE for all $\NCone$ circuits 
by using a multipary computation scheme.

\subsection{Definition}\label{sec:def_ever_fe}
\begin{definition}[$q$-Bounded Certified Everlasting FE (Syntax)]\label{def:cert_ever_func}
Let $\secp$ be a security parameter and let $p$, $r$, $s$, $t$, $u$ and $v$ be polynomials.
Let $q$ be a polynomial of the security parameter $\secp$.
A $q$-bounded certified everlasting FE scheme for a class $\mathcal{F}$ of functions 
is a tuple of algorithms
$\Sigma=(\Setup,\keygen,\Enc,\Dec,\Delete,\Vrfy)$
with plaintext space $\Ms\seteq\bit^n$, ciphertext space $\Cs:= \cQ^{\otimes p(\lambda)}$,
master public key space $\mathcal{MPK}:=\{0,1\}^{r(\lambda)}$,
master secret key space $\mathcal{MSK}\seteq \bit^{s(\lambda)}$,
secret key space $\mathcal{SK}\seteq \bit^{t(\lambda)}$,
verification key space $\mathcal{VK}\seteq \bit^{u(\lambda)}$
and deletion certificate space $\mathcal{D}:= \cQ^{\otimes v(\lambda)}$.
\begin{description}
    \item[$\Setup(1^\secp)\ra(\MPK,\MSK)$:]
    The setup algorithm takes the security parameter $1^\secp$ as input, and outputs a master public key $\MPK\in\mathcal{MPK}$
    and a master secret key $\MSK\in\mathcal{MSK}$.
    \item[$\keygen(\MSK,f)\ra\sk_f$:]
    The keygeneration algorithm takes $\MSK$ and $f\in\mathcal{F}$ as input,
    and outputs a secret key $\sk_f\in\mathcal{SK}$.
    \item[$\Enc(\MPK,m)\ra (\vk,\CT)$:]
    The encryption algorithm takes $\MPK$ and $m\in\Ms$ as input,
     and outputs a verification key $\vk\in\mathcal{VK}$ and a ciphertext $\CT\in \Cs$.
    \item[$\Dec(\sk_f,\CT)\ra y$ $\mbox{\bf  or }\bot$:]
    The decryption algorithm takes $\sk_f$ and $\CT$ as input, and outputs $y$ or $\bot$.
    \item[$\Delete(\CT)\ra \cert$:]
    The deletion algorithm takes $\CT$ as input, and outputs a certificate $\cert\in\mathcal{D}$.
    \item[$\Vrfy(\vk,\cert)\ra \top \mbox{ {\bf or} } \bot$:]
    The verification algorithm takes $\vk$ and $\cert$ as input,
    and outputs $\top$ or $\bot$.
\end{description}
\end{definition}

We require that a certified everlasting FE scheme satisfies correctness defined below.
\begin{definition}[Correctness for Certified Everlasting FE]\label{def:correctness_cert_ever_func}
There are three types of correctness, namely,
evaluation correctness,
verification correctness,
and modification correctness.

\paragraph{Evaluation Correctness:}
There exists a negligible function $\negl$ such that for any $\secp\in \N$, $m\in\Ms$ and $f\in\mathcal{F}$,
\begin{align}
\Pr\left[
y\ne f(m)
\ \middle |
\begin{array}{ll}
(\MPK,\MSK)\la\Setup(1^\secp)\\
\sk_f\la\keygen(\MSK,f)\\
(\vk,\CT)\lrun \Enc(\MPK,m)\\
y\la\Dec(\sk_f,\ct)
\end{array}
\right] 
\le\negl(\secp).
\end{align}

\paragraph{Verification Correctness:}
There exists a negligible function $\negl$ such that for any $\secp\in\N$ and $m\in \Ms$,
\begin{align}
\Pr\left[
\Vrfy(\vk,\cert)=\bot
\ \middle |
\begin{array}{ll}
(\MPK,\MSK)\la\Setup(1^\secp)\\
(\vk,\CT)\lrun \Enc(\MPK,m)\\
\cert\la\Delete(\CT)
\end{array}
\right] 
\leq
\negl(\secp).
\end{align}

\paragraph{Modification Correctness:}
There exists a negligible function $\negl$ and a QPT algorithm $\Modify$ such that for any $\secp\in \N$ and $m\in\Ms$,
\begin{align}
\Pr\left[
\Vrfy(\vk,\cert^*)=\bot
\ \middle |
\begin{array}{ll}
(\MPK,\MSK)\la\Setup(1^\secp)\\
\sk_f\lrun \keygen(\MSK,f)\\
(\vk,\ct) \lrun \Enc(\MPK,m)\\
a,b\la\bit^{p(\lambda)}\\
\cert \lrun \Delete(Z^bX^a\ct X^aZ^b)\\
\cert^*\lrun \Modify(a,b,\cert)  
\end{array}
\right] 
\leq
\negl(\secp).
\end{align}

%\paragraph{Rewind Correctness:}
%There exists a negligible function $\negl$ such that for any $\secp\in \N$, $m\in\Ms$ and $f\in\mathcal{F}$,
%\begin{align}
%\Pr\left[
%\frac{1}{2}\norm{\ct-\widehat{\ct}}_{\tr}\leq \negl(\lambda)
%\ \middle |
%\begin{array}{ll}
%(\MPK,\MSK)\la\Setup(1^\secp)\\
%\sk_f\la\keygen(\MSK,f)\\
%(\vk,\CT)\lrun \Enc(\MPK,m)\\
%(y,\widehat{\ct})\la\Dec(\sk_f,\ct)
%\end{array}
%\right] 
%\geq1-\negl(\lambda).
%\end{align}
\end{definition}
\begin{remark}
Minimum requirements for correctness are evaluation correctness and verification correctness.
In this paper, however, we also require modification correctness,
because we need modification correctness for the construction of certified everlasting FE in \cref{sec:const_fe_adapt}.
\end{remark}

\begin{remark}
In FE, we usually want to run $\Dec$ algorithm for many different functions $f$ on the same ciphertext $\ct$.
One might think that the quantum $\ct$ is destroyed by $\Dec$ algorithm, and therefore it can be used only once.
However, it is easy to see that $\Dec$ algorithm can be always modified so that it does not disturb the quantum state $\ct$
by using the gentle measurement lemma~\cite{Mark2011}.  
\end{remark}

In this paper, we introduce four types of definitions of security, \cref{def:non_ada_security_cert_ever_func_sim,def:ada_security_cert_ever_func_sim,def:ever_non_ada_security_cert_ever_func_sim,def:ever_ada_security_cert_ever_func_sim}.
(We note that these securities are simulation based ones defined in \cite{C:GorVaiWee12}.)
The first two definitions (\cref{def:non_ada_security_cert_ever_func_sim,def:ada_security_cert_ever_func_sim}) are ordinal security definitions of FE. 
(\cref{def:non_ada_security_cert_ever_func_sim} is non-adaptive one and \cref{def:ada_security_cert_ever_func_sim} is adaptive one.)
The third and fourth definitions (\cref{def:ever_non_ada_security_cert_ever_func_sim,def:ever_ada_security_cert_ever_func_sim}) are security definitions with certified everlasting security that we
newly introduce in this paper.
(\cref{def:ever_non_ada_security_cert_ever_func_sim} is non-adaptive one and \cref{def:ever_ada_security_cert_ever_func_sim} is adaptive one.)

\begin{definition}[$q$-Bounded Non-Adaptive Security for Certified Everlasting FE (Simulation Base)\cite{C:GorVaiWee12}]\label{def:non_ada_security_cert_ever_func_sim}
Let $q$ be a polynomial of $\lambda$.
Let $\Sigma=(\Setup,\keygen,\Enc,\Dec,\Delete,\Vrfy)$ be a $q$-bounded certified everlasting FE scheme.
We consider the following security experiment $\expa{\Sigma,\cA}{non}{\mathsf{adapt}}(\secp,b)$ against a QPT adversary $\cA$.
Let $\Sim$ be a QPT algorithm.
\begin{enumerate}
    \item The challenger runs $(\MPK,\MSK)\la \Setup(1^\secp)$ and sends $\MPK$ to $\cA$.
    \item $\cA$ is allowed to make arbitrary key queries at most $q$ times.
    For the $\ell$-th key query, the challenger receives $f_\ell\in\mathcal{F}$, 
    computes $\sk_{f_\ell}\la\keygen(\MSK,f_\ell)$,
    and sends $\sk_{f_\ell}$ to $\cA$.
    Let $q^*$ be the number of times that $\cA$ makes key queries.
    Let $\mathcal{V}\seteq \{y_i\seteq f_i(m),f_i,\sk_{f_i}\}_{i\in[q^*]}$.
    \item $\cA$ chooses $m\in\Ms$ and sends $m$ to the challenger.
    \item The experiment works as follows:
    \begin{itemize}
        \item If $b=0$, the challenger computes $(\vk,\CT)\la \Enc(\MPK,m)$, and sends $\CT$ to $\cA$.
        \item If $b=1$, the challenger computes $\ct\la \Sim(\MPK,\mathcal{V},1^{|m|})$, and sends $\ct$ to $\cA$.
    \end{itemize}
    \item $\cA$ outputs $b'\in\bit$. The output of the experiment is $b'$.
\end{enumerate}
We say that $\Sigma$ is $q$-bounded non-adaptive secure if there exists a $QPT$ simulator $\Sim$ such that for any $QPT$ adversary $\cA$ it holds that
\begin{align}
\advb{\Sigma,\cA}{non}{adapt}(\secp) \seteq \abs{\Pr[\expa{\Sigma,\cA}{non}{\mathsf{adapt}}(\secp,0)=1]  - \Pr[\expa{\Sigma,\cA}{non}{\mathsf{adapt}}(\secp,1)=1]} \leq \negl(\secp).
\end{align}
\end{definition}

\begin{definition}[$q$-Bounded Adaptive Security for Certified Everlasting FE (Simulation Base)\cite{C:GorVaiWee12}]\label{def:ada_security_cert_ever_func_sim}
Let $q$ be a polynomial of $\lambda$.
Let $\Sigma=(\Setup,\keygen,\Enc,\Dec,\Delete,\Vrfy)$ be a $q$-bounded certified everlasting FE scheme.
We consider the following security experiment $\Expt{\Sigma,\cA}{\mathsf{adapt}}(\secp,b)$ against a QPT adversary $\cA$.
Let $\Sim_1$ and $\Sim_2$ be a QPT algorithm.
\begin{enumerate}
    \item The challenger runs $(\MPK,\MSK)\la \Setup(1^\secp)$ and sends $\MPK$ to $\cA$.
    \item $\cA$ is allowed to make arbitrary key queries at most $q$ times.
    For the $\ell$-th key query, the challenger receives $f_\ell\in\mathcal{F}$, 
    computes $\sk_{f_\ell}\la\keygen(\MSK,f_\ell)$,
    and sends $\sk_{f_\ell}$ to $\cA$.
    Let $q^*$ be the number of times that $\cA$ makes key queries.
    Let $\mathcal{V}\seteq \{y_i\seteq f_i(m),f_i,\sk_{f_i}\}_{i\in[q^*]}$.
    \item $\cA$ chooses $m\in\Ms$ and sends $m$ to the challenger.
    \item The experiment works as follows:
    \begin{itemize}
        \item If $b=0$, the challenger computes $(\vk,\CT)\la \Enc(\MPK,m)$, and sends $\CT$ to $\cA$.
        \item If $b=1$, the challenger computes $(\ct,\state_{q^*})\la \Sim_1(\MPK,\mathcal{V},1^{|m|})$, and sends $\ct$ to $\cA$, where $\state_{q^*}$ is a quantum state.
    \end{itemize}
    \item $\cA$ is allowed to make arbitrary key queries at most $q-q^*$ times.
    For the $\ell$-th key query, the challenger works as follows:
    \begin{itemize}
        \item If $b=0$, the challenger receives $f_\ell\in\mathcal{F}$, 
    computes $\sk_{f_\ell}\la\keygen(\MSK,f_\ell)$,
    and sends $\sk_{f_\ell}$ to $\cA$.
    \item If $b=1$, the challenger receives $f_\ell\in\mathcal{F}$, computes $(\sk_{f_{\ell}},\state_{\ell})\la\Sim_2(\MSK,f_{\ell},f_\ell(m),\state_{\ell-1})$, and sends $\sk_{f_\ell}$ to $\cA$.
    \end{itemize}
    \item $\cA$ outputs $b'\in\bit$. The output of the experiment is $b'$.
\end{enumerate}
We say that $\Sigma$ is $q$-bounded adaptive secure if there exists a $QPT$ simulator $\Sim=(\Sim_1,\Sim_2)$ such that for any $QPT$ adversary $\cA$ it holds that
\begin{align}
\adva{\Sigma,\cA}{adapt}(\secp) \seteq \abs{\Pr[\Expt{\Sigma,\cA}{\mathsf{adapt}}(\secp,0)=1]  - \Pr[\Expt{\Sigma,\cA}{\mathsf{adapt}}(\secp,1)=1]} \leq \negl(\secp).
\end{align}
\end{definition}

\begin{definition}[$q$-Bounded Certified Everlasting Non-Adaptive Security for Certified Everlasting FE (Simulation Base)]\label{def:ever_non_ada_security_cert_ever_func_sim}
Let $q$ be a polynomial of $\lambda$.
Let $\Sigma=(\Setup,\keygen,\Enc,\Dec,\Delete,\Vrfy)$ be a $q$-bounded certified everlasting FE scheme.
We consider the following security experiment $\expd{\Sigma,\cA}{cert}{ever}{non}{\mathsf{adapt}}(\secp,b)$ against a QPT adversary $\cA_1$ and an unbounded adversary $\cA_2$. Let $\Sim$ be a QPT algorithm.
\begin{enumerate}
    \item The challenger runs $(\MPK,\MSK)\la \Setup(1^\secp)$ and sends $\MPK$ to $\cA_1$.
    \item $\cA_1$ is allowed to make arbitrary key queries at most $q$ times.
    For the $\ell$-th key query, the challenger receives $f_\ell\in\mathcal{F}$, 
    computes $\sk_{f_\ell}\la\keygen(\MSK,f_\ell)$
    and sends $\sk_{f_\ell}$ to $\cA_1$.
    Let $q^*$ be the number of times that $\cA_1$ makes key queries.
    Let $\mathcal{V}\seteq \{y_i\seteq f_i(m),f_i,\sk_{f_i}\}_{i\in[q^*]}$.
    \item $\cA_1$ chooses $m\in\Ms$ and sends $m$ to the challenger.
    \item The experiment works as follows:
    \begin{itemize}
        \item If $b=0$, the challenger computes $(\vk,\CT)\la \Enc(\MPK,m)$, and sends $\CT$ to $\cA_1$.
        \item If $b=1$, the challenger computes $(\vk,\ct)\la \Sim(\MPK,\mathcal{V},1^{|m|})$, and sends $\ct$ to $\cA_1$.
    \end{itemize}
    \item At some point, $\cA_1$ sends $\cert$ to the challenger and its internal state to $\cA_2$.
    \item The challenger computes $\Vrfy(\vk,\cert)$.
    If the output is $\top$, then the challenger outputs $\top$, and sends $\MSK$ to $\cA_2$.
    Otherwise, the challenger outputs $\bot$, and sends $\bot$ to $\cA_2$.
    \item $\cA_2$ outputs $b'\in\bit$. If the challenger outputs $\top$, the output of the experiment is $b'$. Otherwise, the output of the experiment is $\bot$.
\end{enumerate}
We say that $\Sigma$ is $q$-bounded certified everlasting non-adaptive secure if there exists a $QPT$ simulator $\Sim$ such that for any $QPT$ adversary $\cA_1$ and any unbounded adversary $\cA_2$ it holds that
\begin{align}
\advd{\Sigma,\cA}{cert}{ever}{non}{adapt}(\secp) \seteq \abs{\Pr[\expd{\Sigma,\cA}{cert}{ever}{non}{\mathsf{adapt}}(\secp,0)=1]  - \Pr[\expd{\Sigma,\cA}{cert}{ever}{non}{\mathsf{adapt}}(\secp,1)=1]} \leq \negl(\secp).
\end{align}
\end{definition}

\begin{definition}[$q$-Bounded Certified Everlasting Adaptive Security for Certified Everlasting FE (Simulation Base)]\label{def:ever_ada_security_cert_ever_func_sim}
Let $q$ be a polynomial of $\lambda$.
Let $\Sigma=(\Setup,\keygen,\Enc,\Dec,\Delete,\Vrfy)$ be a $q$-bounded certified everlasting FE scheme.
We consider the following security experiment $\expc{\Sigma,\cA}{cert}{ever}{\mathsf{adapt}}(\secp,b)$ against a QPT adversary $\cA_1$ and an unbounded adversary $\cA_2$.
Let $\Sim_1$, $\Sim_2$, and $\Sim_3$ be a QPT algorithm.
\begin{enumerate}
    \item The challenger runs $(\MPK,\MSK)\la \Setup(1^\secp)$ and sends $\MPK$ to $\cA_1$.
    \item $\cA_1$ is allowed to make arbitrary key queries at most $q$ times.
    For the $\ell$-th key query, the challenger receives $f_\ell\in\mathcal{F}$, 
    computes $\sk_{f_\ell}\la\keygen(\MSK,f_\ell)$
    and sends $\sk_{f_\ell}$ to $\cA_1$.
    Let $q^*$ be the number of times that $\cA_1$ makes key queries.
    Let $\mathcal{V}\seteq \{y_i\seteq f_i(m),f_i,\sk_{f_i}\}_{i\in[q^*]}$.
    \item $\cA_1$ chooses $m\in\Ms$ and sends $m$ to the challenger.
    \item The experiment works as follows:
    \begin{itemize}
        \item If $b=0$, the challenger computes $(\vk,\CT)\la \Enc(\MPK,m)$, and sends $\CT$ to $\cA_1$.
        \item If $b=1$, the challenger computes $(\ct,\state_{q^*})\la \Sim_1(\MPK,\mathcal{V},1^{|m|})$, and sends $\ct$ to $\cA_1$, where $\state_{q^*}$ is a quantum state.
    \end{itemize}
    \item $\cA_1$ is allowed to make arbitrary key queries at most $q-q^*$ times.
    For the $\ell$-th key query, the challenger works as follows.
    \begin{itemize}
    \item If $b=0$, the challenger receives $f_\ell\in\mathcal{F}$, 
    computes $\sk_{f_\ell}\la\keygen(\MSK,f_\ell)$,
    and sends $\sk_{f_\ell}$ to $\cA_1$.
    \item If $b=1$, the challenger receives $f_\ell\in\mathcal{F}$, computes $(\sk_{f_{\ell}},\state_{\ell})\la\Sim_2(\MSK,f_\ell,f_\ell(m),\state_{\ell-1})$, and sends $\sk_{f_\ell}$ to $\cA_1$,
    where $\state_{\ell}$ is a quantum state.
    \end{itemize}
    \item If $b=1$, the challenger runs $\vk\la\Sim_3(\state_{q'})$.
    Here, $q'$ is the number of times that $\cA_1$ makes key queries in total.
    \item At some point, $\cA_1$ sends $\cert$ to the challenger and its internal state to $\cA_2$.
    \item The challenger computes $\Vrfy(\vk,\cert)$. If the output is $\top$, then the challenger outputs $\top$, and sends $\MSK$ to $\cA_2$.
    Otherwise, the challenger outputs $\bot$, and sends $\bot$ to $\cA_2$. 
    \item $\cA_2$ outputs $b'\in\bit$. If the challenger outputs $\top$, the output of the experiment is $b'$. Otherwise, the output of the experiment is $\bot$.
\end{enumerate}
We say that $\Sigma$ is $q$-bounded certified everlasting adaptive secure if there exists a $QPT$ simulator $\Sim=(\Sim_1,\Sim_2,\Sim_3)$ such that for any $QPT$ adversary $\cA_1$ and any unbounded adversary $\cA_2$ it holds that
\begin{align}
\advc{\Sigma,\cA}{cert}{ever}{adapt}(\secp) \seteq \abs{\Pr[\expc{\Sigma,\cA}{cert}{ever}{\mathsf{adapt}}(\secp,0)=1]  - \Pr[\expc{\Sigma,\cA}{cert}{ever}{\mathsf{adapt}}(\secp,1)=1]} \leq \negl(\secp).
\end{align}
\end{definition}

\if0
\begin{lemma}[Ciphertext Reusability for Certified Everlasting FE]\label{lem:fe_rewind}
For any $\sk_f\in \mathcal{SK}$ and any $\ct,\widehat{\ct}\in \Cs$ such that $\Pr[\widehat{\ct}\leftarrow \Dec(\sk_f,\ct)]>0$,
\begin{align}
\frac{1}{2}\parallel\widehat{\ct}-\ct\parallel_{\tr}\leq \negl(\lambda).
\end{align}
\begin{remark}
We can easily show that any certified everlasting FE scheme that satisfies evaluation correctness can be transformed into one that is reusable by using a gentle measurement lemma\cite{Mark2011}.
\end{remark}
\end{lemma}
\fi

\subsection{Construction of 1-Bounded Certified Everlasting Functional Encryption with Non-Adaptive Security}\label{sec:const_func_non_adapt}
In this section, we construct a 1-bounded certified everlasting FE scheme with non-adaptive security from a certified everlasting garbling scheme~(\cref{def:cert_ever_garbled}) and a certified everlasting PKE scheme~(\cref{def:cert_ever_pke}).
%Note that the certificate space of our construction in this section is classical space if we use a certified everlasting garbling scheme with classical certificate space and a certified everlasting PKE scheme with classical certificate space.

\paragraph{Our 1-bounded certified everlasting FE scheme with non-adaptive security.}
We use a universal circuit $U(\cdot,x)$ in which a plaintext $x$ is hard-wired. The universal circuit takes a function $f$ as input and outputs $f(x)$.
Let $s\seteq |f|$.
We construct a 1-bounded certified everlasting FE scheme with non-adaptive security $\Sigma_{\mathsf{cefe}}=(\Setup,\keygen,\Enc,\Dec,\Delete,\Vrfy)$
from a certified everlasting garbling scheme $\Sigma_{\mathsf{cegc}}=\mathsf{GC}.(\Samp,\Garble,\Eval,\Delete,\Vrfy)$~(\cref{def:cert_ever_garbled})
and a certified everlasting PKE scheme $\Sigma_{\mathsf{cepk}}=\PKE.(\keygen,\Enc,\Dec,\Delete,\Vrfy)$~(\cref{def:cert_ever_pke}).

\begin{description}
\item[$\Setup(1^\secp)$:] $ $
\begin{itemize}
    \item Generate $(\pke.\pk_{i,\alpha},\pke.\sk_{i,\alpha})\la\PKE.\keygen(1^\secp)$ for every $i\in [s]$ and $\alpha\in\bit$.
    \item Output $\MPK\seteq \{\pke.\pk_{i,\alpha}\}_{i\in[s],\alpha\in\{0,1\}}$ and $\MSK\seteq \{\pke.\sk_{i,\alpha}\}_{i\in[s],\alpha\in\bit}$.
\end{itemize}
\item[$\keygen(\MSK,f)$:] $ $
\begin{itemize}
    \item Parse $\MSK=\{\pke.\sk_{i,\alpha}\}_{i\in[s],\alpha\in\bit}$ and $f=(f_1,\cdots,f_s)$.
    \item Output $\sk_f\seteq (f,\{\pke.\sk_{i,f[i]}\}_{i\in[s]})$.
\end{itemize}
\item[$\Enc(\MPK,m)$:]$ $
\begin{itemize}
    \item Parse $\MPK=\{\pke.\pk_{i,\alpha}\}_{i\in[s],\alpha\in\bit}$.
    \item Compute $\{L_{i,\alpha}\}_{i\in[s],\alpha\in\bit}\la\mathsf{GC}.\Samp(1^\secp)$.
    \item Compute $(\widetilde{U},\mathsf{gc}.\vk)\la\mathsf{GC}.\Garble(1^\secp,U(\cdot,m),\{L_{i,\alpha}\}_{i\in[s],\alpha\in\bit})$.
    \item For every $i\in[s]$ and $\alpha\in\bit$, compute $(\pke.\vk_{i,\alpha},\pke.\ct_{i,\alpha})\la\PKE.\Enc(\pke.\pk_{i,\alpha},L_{i,\alpha})$.
    \item Output $\vk\seteq (\mathsf{gc}.\vk,\{\pke.\vk_{i,\alpha}\}_{i\in[s],\alpha\in\bit})$
    and $\CT\seteq (\widetilde{U},\{\pke.\ct_{i,\alpha}\}_{i\in[s],\alpha\in\bit})$.
\end{itemize}
\item[$\Dec(\sk_f,\ct)$:] $ $
\begin{itemize}
    \item Parse $\sk_f=(f,\{\pke.\sk_i\}_{i\in[s]})$ and $\ct=(\widetilde{U},\{\pke.\ct_{i,\alpha}\}_{i\in[s],\alpha\in\bit})$.
    \item For every $i\in [s]$, compute $L_i,\la\PKE.\Dec(\pke.\sk_{i},\pke.\ct_{i,f[i]})$.
    %\footnote{From the correctness and the gentle measurement lemma, there exists the decryption algorithm that does not break the quantum ciphertext $\pke.\ct_{i,f[i]}$ for each $i\in[s]$.}.
    \item Compute $y\la \mathsf{GC}.\Eval(\widetilde{U},\{L_i\}_{i\in[s]})$.
    %\footnote{From the correctness and the gentle measurement lemma, there exists the evaluation algorithm that does not break the quantum garbled circuit $\widetilde{U}$.}.
    \item Output $y$.
\end{itemize}
\item[$\Delete(\CT)$:] $ $
\begin{itemize}
    \item Parse $\ct=(\widetilde{U},\{\pke.\ct_{i,\alpha}\}_{i\in[s],\alpha\in\bit})$.
    \item Compute $\mathsf{gc}.\cert\la\mathsf{GC}.\Delete(\widetilde{U})$.
    \item For every $i\in[s]$ and $\alpha\in\bit$, compute $\pke.\cert_{i,\alpha}\la\PKE.\Delete(\pke.\ct_{i,\alpha})$.
    \item Output $\cert\seteq (\mathsf{gc}.\cert,\{\pke.\cert_{i,\alpha}\}_{i\in[s],\alpha\in\bit})$.
\end{itemize}
\item[$\Vrfy(\vk,\cert)$:] $ $
\begin{itemize}
    \item Parse $\vk=(\mathsf{gc}.\vk,\{\pke.\vk_{i,\alpha}\}_{i\in[s],\alpha\in\bit})$ and $\cert=(\mathsf{gc}.\cert,\{\pke.\cert_{i,\alpha}\}_{i\in[s],\alpha\in\bit})$.
    \item Output $\top$ if $\top\la\mathsf{GC}.\Vrfy(\mathsf{gc}.\vk,\mathsf{gc}.\cert)$ and $\top\la\PKE.\Vrfy(\pke.\vk_{i,\alpha},\pke.\cert_{i,\alpha})$ for every $i\in[s]$ and $\alpha\in\bit$.
    Otherwise, output $\bot$.
\end{itemize}
\end{description}

\paragraph{Correctness:}
%Evaluation correctness easily follows fromthe evaluation correctness of $\Sigma_{\mathsf{cegc}}$ and the decryption correctness of $\Sigma_{\mathsf{cepk}}$.
%Verification correctness easily follows from the verification correctness of $\Sigma_{\mathsf{cegc}}$ and $\Sigma_{\mathsf{cepk}}$.
%Modification correctness easily follows from the modification correctness of $\Sigma_{\mathsf{cegc}}$ and $\Sigma_{\mathsf{cepk}}$.
Correctness easily follows from that of $\Sigma_{\mathsf{cegc}}$ and $\Sigma_{\mathsf{cepk}}$.

\paragraph{Security:}
The following two theorems hold.
\begin{theorem}\label{thm:func_non_ad_seucirty}
If $\Sigma_{\mathsf{cegc}}$ satisfies the selective security~(\cref{def:sel_sec_ever_garb}) and $\Sigma_{\mathsf{cepk}}$ satisfies the IND-CPA security~(\cref{def:IND-CPA_security_cert_ever_pke}),
$\Sigma_{\mathsf{cefe}}$ satisfies the 1-bounded non-adaptive security~(\cref{def:non_ada_security_cert_ever_func_sim}).
\end{theorem}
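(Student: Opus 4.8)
The plan is to follow the standard single-key template of \cite{C:GorVaiWee12}, substituting the certified everlasting building blocks, and to prove security by a two-step hybrid argument that peels off the certified everlasting $\PKE$ first and the certified everlasting garbling scheme second. The crucial observation that makes this routine is that the game in \cref{def:non_ada_security_cert_ever_func_sim} is the ordinary (non-everlasting) one, so the adversary never receives $\MSK$; for the $1$-bounded non-adaptive case the number $q^*$ of key queries satisfies $q^*\le 1$, and all queries precede the choice of $m$.

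First I would exhibit the simulator. On input $(\MPK,\mathcal{V},1^{|m|})$ with $\mathcal{V}=\{(y,f,\sk_f)\}$ (the case $q^*=1$), $\Sim$ samples $\{L_{i,\alpha}\}_{i\in[s],\alpha\in\bit}\la\mathsf{GC}.\Samp(1^\secp)$, runs the garbling simulator $\widetilde{U}\la\mathsf{GC}.\Sim(1^\secp,1^{|U|},y,\{L_{i,f[i]}\}_{i\in[s]})$ guaranteed by \cref{def:sel_sec_ever_garb}, and for every $i$ sets $\pke.\ct_{i,f[i]}\la\PKE.\Enc(\pke.\pk_{i,f[i]},L_{i,f[i]})$ and $\pke.\ct_{i,f[i]\oplus1}\la\PKE.\Enc(\pke.\pk_{i,f[i]\oplus1},0)$, outputting $\ct\seteq(\widetilde{U},\{\pke.\ct_{i,\alpha}\}_{i,\alpha})$. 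This is computable because $y=f(m)=U(f,m)$ and $f$ both appear in $\mathcal{V}$ and the public keys are in $\MPK$. The case $q^*=0$ is treated analogously, with a fixed default function in place of $f$.

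The heart of the argument is the hybrid sequence between the real experiment ($b=0$) and this simulated one ($b=1$):
\begin{description}
\item[$\hybi{0}$:] the real experiment, where $\widetilde{U}=\mathsf{GC}.\Garble(1^\secp,U(\cdot,m),\{L_{i,\alpha}\})$ and every $\pke.\ct_{i,\alpha}$ encrypts the true label $L_{i,\alpha}$.
\item[$\hybi{1}$:] each ``unopened'' ciphertext $\pke.\ct_{i,f[i]\oplus1}$ is switched to $\PKE.\Enc(\pke.\pk_{i,f[i]\oplus1},0)$.
\item[$\hybi{2}$:] the garbled circuit is replaced by $\mathsf{GC}.\Sim(1^\secp,1^{|U|},U(f,m),\{L_{i,f[i]}\})$; this is exactly the simulated experiment.
\end{description}
To bound $\hybi{0}\approx\hybi{1}$ I would reduce to the IND-CPA security of $\Sigma_{\mathsf{cepk}}$ (\cref{def:IND-CPA_security_cert_ever_pke}) through $s$ sub-hybrids switching one position $i$ at a time; the reduction embeds the external challenge key into slot $(i,f[i]\oplus1)$, generates all other $\PKE$ keys and all labels itself, answers the single key query by $\sk_f=(f,\{\pke.\sk_{i,f[i]}\})$, which never contains $\pke.\sk_{i,f[i]\oplus1}$, and submits $(L_{i,f[i]\oplus1},0)$ as its challenge pair. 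To bound $\hybi{1}\approx\hybi{2}$ I would reduce to the selective security of $\Sigma_{\mathsf{cegc}}$ (\cref{def:sel_sec_ever_garb}): the reduction submits $C=U(\cdot,m)$ and $x=f$, receives $(\widetilde{U},\{L_{i,f[i]}\})$ that is real or simulated, encrypts the opened labels under $\{\pke.\pk_{i,f[i]}\}$ and $0$ under $\{\pke.\pk_{i,f[i]\oplus1}\}$, and runs $\cA$; since $U(f,m)=f(m)=y$ the simulated branch reproduces $\hybi{2}$ verbatim.

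The main point to get right is the interface between the two reductions: after $\hybi{1}$ the only labels surviving into the challenge ciphertext are the opened ones $\{L_{i,f[i]}\}$, which is precisely what the garbling challenger releases, so the garbling reduction never needs the hidden labels. I expect the one genuinely delicate place to be the $q^*=0$ corner case, where no input is opened: there all $2s$ label encryptions must be switched to $0$ by IND-CPA (legitimate since no secret key is ever released), after which the garbled circuit must be simulated with respect to an \emph{arbitrary} fixed input, and one must arrange that the simulated hard-wired output is independent of $m$ (e.g.\ by using a constant function, which lies in $\Ppoly$) so that $\Sim$ can produce it without knowing $m$. Summing the advantages over all hybrids and sub-hybrids then yields the claimed negligible bound.
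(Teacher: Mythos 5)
Your proposal is correct and follows the same skeleton as the paper's argument: the paper omits this proof and points to \cref{thm:func_ever_non_ad_security}, whose proof uses exactly your simulator shape and your two hybrids in the same order (first switch the unopened slots $\pke.\ct_{i,f[i]\oplus 1}$ by PKE security, then replace $\Garble$ by $\GC.\Sim$). The differences are minor but worth recording. First, the paper encrypts freshly sampled random labels $L^*_{i,f[i]\oplus1}\la\mathcal{L}$ in the unopened slots rather than $0$; both choices are fine. Second, where you run $s$ naked sub-hybrids, the paper packages the switch into a single multi-instance lemma (\cref{lem:cut_and_choose_pke}), and that lemma also supplies the one detail your sketch glosses over: the reduction cannot literally ``embed the external challenge key into slot $(i,f[i]\oplus1)$'', because $\MPK$ (hence the placement of the challenge public key) must be published \emph{before} the adversary makes its key query revealing $f$. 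The standard fix, which the paper's lemma implements, is to place the challenge key at $(i,\beta)$ for a random guess $\beta\la\bit$ and abort when $f[i]=\beta$, costing a factor $2$ per position; without this your per-slot reduction as written does not typecheck. Third, your treatment of the $q^*=0$ corner actually improves on the paper: the paper's simulator samples a random $f\la\bit^s$ but, as literally written, still feeds $U(f,m)$ to $\GC.\Sim$ even when $\mathcal{V}=\emptyset$, which the simulator cannot compute without $m$; your variant (a fixed constant function, so the hard-wired output is $m$-independent, with all label encryptions switchable to $0$ since no secret key is ever released when $q^*=0$) makes the simulator's independence of $m$ explicit and is the clean way to close that branch. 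Summing advantages over the hybrids, including the factor-$2$ guessing loss, yields the claimed bound, so with the guessing step added your proof is complete and matches the paper's route.
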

%\begin{proof}[Proof of \cref{thm:func_non_ad_seucirty}]
Its proof is similar to that of \cref{thm:func_ever_non_ad_security}, and therefore we omit it.
%\end{proof}
\begin{theorem}\label{thm:func_ever_non_ad_security}
If $\Sigma_{\mathsf{cegc}}$ satisfies the certified everlasting selective security~(\cref{def:ever_sec_ever_garb}) and $\Sigma_{\mathsf{cepk}}$ satisfies the certified everlasting IND-CPA security~(\cref{def:cert_ever_security_cert_ever_pke}),
$\Sigma_{\mathsf{cefe}}$ satisfies the 1-bounded certified everlasting non-adaptive security~(\cref{def:ever_non_ada_security_cert_ever_func_sim}).
\end{theorem}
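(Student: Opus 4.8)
The plan is to exhibit a QPT simulator $\Sim$ for $\Sigma_{\mathsf{cefe}}$ and then bridge the real ($b=0$) and ideal ($b=1$) certified everlasting non-adaptive experiments by a short hybrid argument, charging each step to one of the two assumed securities. Since $q=1$, the adversary makes at most one key query $f$ before choosing $m$, so at encryption time $\Sim$ is handed $\mathcal{V}=\{y\seteq f(m),f,\sk_f\}$ with $\sk_f=(f,\{\pke.\sk_{i,f[i]}\}_{i\in[s]})$ (the degenerate case $q^*=0$ is handled analogously, since then the receiver holds no functional key and learns no label, so $\Sim$ may use an arbitrary dummy output). On input $(\MPK,\mathcal{V},1^{|m|})$, $\Sim$ samples only the active labels $\{L_{i,f[i]}\}_{i\in[s]}$, runs the garbling simulator $(\widetilde{U},\gc.\vk)\la\GC.\Sim(1^\secp,1^{|U|},y,\{L_{i,f[i]}\}_{i\in[s]})$, encrypts the active labels honestly as $(\pke.\vk_{i,f[i]},\pke.\ct_{i,f[i]})\la\PKE.\Enc(\pke.\pk_{i,f[i]},L_{i,f[i]})$, and encrypts $0$ at the complementary positions as $(\pke.\vk_{i,f[i]\oplus 1},\pke.\ct_{i,f[i]\oplus 1})\la\PKE.\Enc(\pke.\pk_{i,f[i]\oplus 1},0)$; it outputs $\vk\seteq(\gc.\vk,\{\pke.\vk_{i,\alpha}\})$ and $\ct\seteq(\widetilde{U},\{\pke.\ct_{i,\alpha}\})$.

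I would interpolate with hybrids $\hybi{0},\dots,\hybi{s+1}$, where $\hybi{0}$ is the real experiment; in $\hybi{j}$ for $1\le j\le s$ the first $j$ complementary ciphertexts $\pke.\ct_{i,f[i]\oplus 1}$ are switched from $\PKE.\Enc(\pke.\pk_{i,f[i]\oplus 1},L_{i,f[i]\oplus 1})$ to $\PKE.\Enc(\pke.\pk_{i,f[i]\oplus 1},0)$ while the garbled circuit is still the honest $\GC.\Garble(U(\cdot,m),\cdot)$; and in $\hybi{s+1}$ the garbled circuit is replaced by the output of $\GC.\Sim$, so that $\hybi{s+1}$ coincides with the ideal experiment. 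Consecutive $\hybi{j-1}$ and $\hybi{j}$ differ in a single complementary PKE ciphertext and are bridged by the certified everlasting IND-CPA security of $\Sigma_{\mathsf{cepk}}$, while $\hybi{s}$ and $\hybi{s+1}$ are bridged by the certified everlasting selective security of $\Sigma_{\mathsf{cegc}}$. The order is forced: the garbling simulator returns only the active labels $\{L_{i,f[i]}\}$, so a reduction to garbling security cannot produce honest encryptions of the inactive labels $L_{i,f[i]\oplus 1}$; hence every inactive label must first be erased from the PKE ciphertexts (leaving it only inside the honest garbled circuit) before the garbled circuit itself is simulated.

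For the PKE step bridging $\hybi{j-1}$ and $\hybi{j}$, the reduction $\cB=(\cB_1,\cB_2)$ embeds the challenge public key at the target position $(i^*,f[i^*]\oplus 1)$, generates all remaining PKE key pairs and all labels itself, submits $(m_0,m_1)=(L_{i^*,f[i^*]\oplus 1},0)$ to obtain the embedded ciphertext, and assembles the FE ciphertext (with an honest garbled circuit) around it. The delicate point, and the \emph{main obstacle}, is faithfully simulating the FE certificate step, in which a valid $\cert$ releases the \emph{entire} $\MSK=\{\pke.\sk_{i,\alpha}\}$ to the unbounded $\cA_2$. When $\cA_1$ outputs $\cert=(\gc.\cert,\{\pke.\cert_{i,\alpha}\})$, $\cB_1$ forwards only $\pke.\cert_{i^*,f[i^*]\oplus 1}$ to the PKE challenger and hands its state to $\cB_2$, which itself verifies $\gc.\cert$ and the remaining PKE certificates using the verification keys it holds. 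If the PKE challenger returns $\pke.\sk_{i^*,f[i^*]\oplus 1}$ (its certificate valid) \emph{and} all of $\cB_2$'s own checks pass, then $\cB_2$ reconstructs the full $\MSK$ from this key together with the self-generated ones, gives it to $\cA_2$, and outputs $\cA_2$'s bit; if the embedded certificate is valid but another check fails, $\cB_2$ outputs $0$; and if the embedded certificate is invalid the PKE experiment already outputs $\bot$. A short case analysis over certificate validity shows the PKE experiment outputs $1$ exactly when the corresponding FE hybrid does, so the gap is negligible. The garbling step is analogous and in fact cleaner, since the certified everlasting garbling challenger reveals only $\top/\bot$ and $\cB$ generates, hence can always reconstruct, the full $\MSK$ by itself; there $\cB$ feeds $(C,x)=(U(\cdot,m),f)$ to the challenger, receives $(\widetilde{U},\{L_{i,f[i]}\})$, and surrounds it with honest active-label encryptions and zero encryptions at the complementary positions. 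Summing the negligible gaps over the $s+1$ steps yields $\advd{\Sigma_{\mathsf{cefe}},\cA}{cert}{ever}{non}{adapt}(\secp)\leq\negl(\secp)$, which is the claim.
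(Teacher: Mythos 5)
Your overall route is the same as the paper's: the same simulator shape (garbling simulator plus honest encryptions of the active labels and content-free encryptions at the complementary positions --- the paper uses fresh random labels $L^*_{i,f[i]\oplus 1}$ where you use $0$, an immaterial difference), the same forced ordering of hybrids (dummy out the inactive-label ciphertexts first, then simulate the garbled circuit), and the same certificate-splitting reductions in which the embedded deletion certificate is forwarded to the outer challenger while the reduction verifies the remaining certificates itself and reassembles the full $\MSK$ for $\cA_2$. The only structural difference is that you inline the $s$ per-position PKE switches into the main hybrid chain, whereas the paper performs all of them in a single hybrid step justified by a separately proved multi-instance lemma (\cref{lem:cut_and_choose_pke}), which it also reuses in the RNCE proof; that is bookkeeping, not substance.

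Two points need repair, however. First, your per-step PKE reduction ``embeds the challenge public key at the target position $(i^*,f[i^*]\oplus 1)$,'' but the reduction must publish $\MPK$ --- and hence commit to where the challenge key sits --- before $\cA_1$ makes its key query, i.e., before $f$ is known; and if the challenge key happens to sit at a position $(i^*,f[i^*])$ whose secret key must be handed out inside $\sk_f$, the reduction cannot answer the key query. The standard fix, used inside the paper's proof of \cref{lem:cut_and_choose_pke}, is to embed at $(i^*,\beta)$ for a uniformly random $\beta\la\bit$ and abort whenever $f[i^*]=\beta$, at the cost of a factor $2$ in the advantage; as written, your reduction is not well-defined without this guessing step. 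Second, your dismissal of the $q^*=0$ case (``$\Sim$ may use an arbitrary dummy output'') is unsound as stated: even when no functional key is ever issued, a valid certificate causes the entire $\MSK=\{\pke.\sk_{i,\alpha}\}_{i\in[s],\alpha\in\bit}$ to be released to the unbounded $\cA_2$, so the simulated ciphertext must remain indistinguishable from the real one \emph{given all PKE secret keys}; an arbitrary output does not obviously satisfy this, and the garbling definition only guarantees simulation for the correct output value. The paper instead treats both branches uniformly: when $\mathcal{V}=\emptyset$ the simulator samples a random $f\la\bit^s$ and runs the identical simulation, and the hybrids $\sfhyb{0}{},\sfhyb{1}{},\sfhyb{2}{}$ are argued once for both cases. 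You should do the same rather than special-case the query-free branch away.
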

Its proof is given in \cref{sec:proof_non_ad_fe}.

\subsection{Construction of 1-Bounded Certified Everlasting Functional Encryption with Adaptive Security}\label{sec:const_fe_adapt}
In this section, we change the non-adaptive scheme constructed in the previous subsection to the adaptive one by using
a certified everlasting RNC scheme~(\cref{def:cert_ever_rnce_classic}).
%Note that the certificate space of our construction in this section is classical space if we use a certified everlasting functional encryption scheme with classical certificate space, which is non-adaptive secure, and a certified everlasting RNCE scheme with classical certificate space.

\paragraph{Our 1-bounded certified everlasting FE scheme with adaptive security.}
We construct a 1-bounded certified everlasting FE scheme with adaptive security $\Sigma_{\mathsf{cefe}}=(\Setup,\keygen,\Enc,\Dec,\Delete,\Vrfy)$ 
from a 1-bounded certified everlasting FE scheme with non-adaptive security $\Sigma_{\mathsf{nad}}=\mathsf{NAD}.(\Setup,\keygen,\Enc,\Dec,\Delete,\Vrfy)$,
where the ciphertext space is $\Cs\seteq\cQ^{\otimes n}$,
and a certified everlasting RNCE scheme 
\begin{align}
\Sigma_{\mathsf{cence}}=\NCE.(\Setup,\keygen,\Enc,\Dec,\Fake,\Reveal,\Delete,\Vrfy)
\end{align}
(\cref{def:cert_ever_rnce_classic}).
Let $\NAD.\Modify$ be a QPT algorithm such that
\begin{align}
\Pr\left[
\NAD.\Vrfy(\nad.\vk,\nad.\cert^*)\neq\top
\ \middle |
\begin{array}{ll}
(\nad.\MPK,\nad.\MSK)\lrun \NAD.\Setup(1^\secp)\\
(\nad.\vk,\nad.\ct) \lrun \NAD.\Enc(\nad.\MPK,m)\\
a,c\la\bit^{n}\\
\nad.\cert \lrun \NAD.\Delete(Z^cX^a\nad.\ct X^aZ^c)\\
\nad.\cert^*\lrun \NAD.\Modify(a,c,\nad.\cert) 
\end{array}
\right] 
\leq
\negl(\secp).
\end{align}
for any $m$.

Our construction is as follows.
\begin{description}
\item[$\Setup(1^\secp)$:]$ $
\begin{itemize}
    \item Run $(\mathsf{nad}.\MPK,\mathsf{nad}.\MSK)\la\mathsf{NAD}.\Setup(1^\secp)$.
    \item Run $(\nce.\pk,\nce.\MSK)\la\NCE.\Setup(1^{\secp})$.
    \item Output $\MPK\seteq(\mathsf{nad}.\MPK,\nce.\pk)$ and $\MSK\seteq (\nad.\MSK,\nce.\MSK)$.
\end{itemize}
\item[$\keygen(\MSK,f)$:]$ $
\begin{itemize}
    \item Parse $\MSK=(\nad.\MSK,\nce.\MSK)$.
    \item Compute $\mathsf{nad}.\sk_f\la\mathsf{NAD}.\keygen(\mathsf{nad}.\MSK,f)$.
    \item Compute $\nce.\sk\la\NCE.\keygen(\nce.\MSK)$.
    \item Output $\sk_f\seteq (\nad.\sk_f,\nce.\sk)$.
\end{itemize}
\item[$\Enc(\MPK,m)$:]$ $
\begin{itemize}
    \item Parse $\MPK=(\nad.\MPK,\nce.\pk)$.
    \item Compute $(\mathsf{nad}.\vk,\mathsf{nad}.\ct)\la\mathsf{NAD}.\Enc(\mathsf{nad}.\MPK,m)$.
    \item Generate $a,c\la\bit^n$. Let $\Psi\seteq Z^cX^a\mathsf{nad}.\ct X^aZ^c$.
    \item Compute $(\nce.\vk,\nce.\ct)\la\NCE.\Enc(\nce.\pk,(a,c))$.
    \item Output $\vk\seteq(\mathsf{nad}.\vk,\nce.\vk,a,c)$ and $\ct\seteq (\Psi,\nce.\ct)$.
\end{itemize}
\item[$\Dec(\sk_f,\ct)$:]$ $
\begin{itemize}
    \item Parse $\sk_f=(\nad.\sk_f,\nce.\sk)$ and $\ct=(\Psi,\nce.\ct)$.
    \item Compute $(a',c')\la\NCE.\Dec(\nce.\sk,\nce.\ct)$.
    \item Compute $\mathsf{nad}.\ct'\seteq X^{a'}Z^{c'}\Psi Z^{c'}X^{a'}$.
    \item Compute $y\la\mathsf{NAD}.\Dec(\mathsf{nad}.\sk_f,\mathsf{nad}.\ct')$.
    \item Output $y$.
\end{itemize}
\item[$\Delete(\ct)$:]$ $
\begin{itemize}
    \item Parse $\ct=(\Psi,\nce.\ct)$.
    \item Compute $\nad.\cert\la \NAD.\Delete(\Psi)$.
    \item Compute $\nce.\cert\la\NCE.\Delete(\nce.\ct)$.
    \item Output $\cert\seteq(\nad.\cert,\nce.\cert)$.
\end{itemize}
\item[$\Vrfy(\vk,\cert)$:]$ $
\begin{itemize}
    \item Parse $\vk=(\mathsf{nad}.\vk,\nce.\vk,a,c)$ and $\cert=(\nad.\cert,\nce.\cert)$.
    \item Compute $\mathsf{nad}.\cert^*\la\NAD.\Modify (a,c,\nad.\cert)$.
    \item Output $\top$ if $\top\la\NCE.\Vrfy(\nce.\vk,\nce.\cert)$ and $\top\la\mathsf{NAD}.\Vrfy(\mathsf{nad}.\vk,\mathsf{nad}.\cert^*)$.
    Otherwise, output $\bot$.
\end{itemize}
\end{description}
\paragraph{Correctness:}
%Evaluation correctness easily follows from decryption correctness of $\Sigma_{\mathsf{nad}}$ and $\Sigma_{\mathsf{cence}}$.
%Verification correctness easily follows from modification correctness of $\Sigma_{\mathsf{nad}}$ and verification correctness of $\Sigma_{\mathsf{cence}}$.
%Rewind correctness easily follows from rewind correctness of $\Sigma_{\nad}$ and $\Sigma_{\mathsf{cence}}$.
Correctness easily follows from that of $\Sigma_{\mathsf{nad}}$ and $\Sigma_{\mathsf{cence}}$.

\paragraph{Security:}
The following two theorems hold.
\begin{theorem}\label{thm:comp_security_single_ad}
If $\Sigma_{\mathsf{nad}}$ satisfies the 1-bounded non-adaptive security~(\cref{def:non_ada_security_cert_ever_func_sim}) and $\Sigma_{\mathsf{cence}}$ satisfies the RNC security~(\cref{def:rec_nc_security_classic}), 
$\Sigma_{\mathsf{cefe}}$ satisfies the 1-bounded adaptive security~(\cref{def:ada_security_cert_ever_func_sim}).
\end{theorem}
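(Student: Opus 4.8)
The plan is to build a simulator $\Sim=(\Sim_1,\Sim_2)$ for the adaptive game (\cref{def:ada_security_cert_ever_func_sim}) out of the non-adaptive simulator $\NAD.\Sim$ guaranteed by \cref{def:non_ada_security_cert_ever_func_sim} together with the $\Fake$/$\Reveal$ algorithms of $\Sigma_{\mathsf{cence}}$, and then to pass from the real experiment to the ideal one by a short sequence of hybrids. Since $\Sigma_{\mathsf{cefe}}$ is $1$-bounded, $\cA$ issues at most one key query, and the entire difficulty is whether that query precedes or follows the challenge ciphertext. Crucially, the two experiments coincide up to and including the moment $\cA$ submits the challenge message $m$ (the master public key and all pre-challenge keys are generated honestly in both worlds), so I would couple them there and argue separately conditioned on $q^{*}\in\{0,1\}$, the number of pre-challenge queries.

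When $q^{*}=1$, $\Sim_1$ already receives $f$, $f(m)$ and $\nad.\sk_f$ inside $\mathcal{V}$; it runs $\nad.\ct^{*}\la\NAD.\Sim(\nad.\MPK,\{f(m),f,\nad.\sk_f\},1^{|m|})$, samples $a,c\la\bit^{n}$, and outputs $(\Psi,\nce.\ct)$ with $\Psi\seteq Z^{c}X^{a}\,\nad.\ct^{*}\,X^{a}Z^{c}$ and $\nce.\ct\la\NCE.\Enc(\nce.\pk,(a,c))$, keeping the $\NCE$ part completely honest. In this branch the only change from the real world is $\nad.\ct\mapsto\nad.\ct^{*}$, so indistinguishability reduces directly to the non-adaptive security of $\Sigma_{\mathsf{nad}}$: the reduction forwards $\cA$'s pre-challenge query to its own key-generation oracle, forwards $m$ as its challenge, embeds the returned non-adaptive ciphertext as $\Psi$, and produces the honest $\NCE$ parts itself.

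The interesting case is $q^{*}=0$, where $\Sim_1$ learns nothing about $f$ at challenge time yet must output $\ct$. Here I would let $\Sim_1$ prepare $n$ EPR pairs across registers $(A,B)$, hand $B$ to $\cA$ as $\Psi$, run $(\nce.\vk,\widetilde{\nce.\ct},\aux)\la\NCE.\Fake(\nce.\pk)$, output $\ct\seteq(B,\widetilde{\nce.\ct})$, and retain $(A,\aux)$ in its state. When the post-challenge query $f$ arrives, $\Sim_2$ (which holds $\MSK$) sets $\nad.\sk_f\la\NAD.\keygen(\nad.\MSK,f)$, computes $\nad.\ct^{*}\la\NAD.\Sim(\nad.\MPK,\{f(m),f,\nad.\sk_f\},1^{|m|})$, teleports $\nad.\ct^{*}$ into the adversary's register $B$ via a Bell-basis measurement of $A$ against $\nad.\ct^{*}$ with outcome $(a,c)$, and reveals $\widetilde{\nce.\sk}\la\NCE.\Reveal(\nce.\pk,\nce.\MSK,\aux,(a,c))$, returning $\sk_f\seteq(\nad.\sk_f,\widetilde{\nce.\sk})$. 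The passage from real to this simulator goes through three hybrids: first, replace the honest $(\nce.\ct,\nce.\sk)$ by $\Fake$/$\Reveal$ on the true $(a,c)$, indistinguishable by the RNC security of $\Sigma_{\mathsf{cence}}$ (the reduction commits $(a,c)$ as its message, receives $(\nce.\ct,\nce.\sk)$, and releases $\nce.\ct$ at challenge time and $\nce.\sk$ at query time); second, replace the direct one-time pad $\Psi=Z^{c}X^{a}\,\nad.\ct\,X^{a}Z^{c}$ by the EPR-pair-and-teleport procedure, a perfect identity by \cref{lem:quantum_teleportation} once one notes $Z^{c}X^{a}\rho X^{a}Z^{c}=X^{a}Z^{c}\rho Z^{c}X^{a}$ because $Z^{c}X^{a}$ and $X^{a}Z^{c}$ differ only by the global phase $(-1)^{a\cdot c}$; third, now that teleportation has deferred creation of the teleported state to query time, replace the honest $\nad.\ct$ by $\nad.\ct^{*}\la\NAD.\Sim(\cdots)$, indistinguishable by the non-adaptive security of $\Sigma_{\mathsf{nad}}$ (the reduction issues its non-adaptive key query for $f$ and its challenge for $m$ only when $\cA$'s post-challenge query arrives, so the query-before-challenge ordering is respected).

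The main obstacle is exactly this second hybrid: the Pauli one-time pad in the ciphertext can be recast, via quantum teleportation, as an EPR half whose correction $(a,c)$ is fixed only at measurement time, which lets me postpone generating the $\Sigma_{\mathsf{nad}}$ ciphertext until after $f(m)$ is known and thus invoke the \emph{non-adaptive} simulator inside the \emph{adaptive} game. Making this fit requires (i) the operator-order identity above so that the teleported state matches the scheme's convention precisely, (ii) the receiver-non-committing $\Reveal$ to make the deferred correction $(a,c)$ consistent with the already-published fake $\NCE$ ciphertext, and (iii) respecting the strict ordering of the non-adaptive reduction, which is precisely why the $q^{*}=1$ branch must keep $\NCE$ honest and cannot be folded into the teleportation argument. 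Collecting the two cases, $\adva{\Sigma_{\mathsf{cefe}},\cA}{adapt}(\secp)$ is bounded by the non-adaptive advantage against $\Sigma_{\mathsf{nad}}$ plus the RNC advantage against $\Sigma_{\mathsf{cence}}$, both negligible, which yields the claim.
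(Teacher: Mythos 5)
Your proof is correct and takes essentially the same route as the paper's (which omits this proof as ``similar to'' that of \cref{thm:ever_security_single_ad}, given in \cref{sec:proof_ada_fe}): the same case split on $q^{*}$, the same simulator built from $\NAD.\Sim$ plus $\NCE.\Fake$/$\NCE.\Reveal$ with EPR pairs, and the same hybrid chain---RNC security, the quantum-teleportation identity (where you make explicit the $Z^{c}X^{a}$ vs.\ $X^{a}Z^{c}$ phase cancellation that the paper leaves implicit), deferral of the Bell measurement to key-query time, and finally the non-adaptive security of $\Sigma_{\mathsf{nad}}$. You also correctly drop $\Sim_3$ and the certificate/$\NAD.\Modify$ machinery, which are needed only in the certified-everlasting variant.
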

%\begin{proof}[Proof of \cref{thm:comp_security_single_ad}]
Its proof is similar to that of \cref{thm:ever_security_single_ad}, and therefore we omit it.
%\end{proof}

\begin{theorem}\label{thm:ever_security_single_ad}
If $\Sigma_{\mathsf{nad}}$ satisfies the 1-bounded certified everlasting non-adaptive security(~\cref{def:ever_non_ada_security_cert_ever_func_sim}) and $\Sigma_{\mathsf{cence}}$ satisfies the certified everlasting RNC security~(\cref{def:cert_ever_rec_nc_security_classic}), 
$\Sigma_{\mathsf{cefe}}$ satisfies the 1-bounded certified everlasting adaptive security~(\cref{def:ever_ada_security_cert_ever_func_sim}).
\end{theorem}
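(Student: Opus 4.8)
The plan is to build the adaptive simulator $(\Sim_1,\Sim_2,\Sim_3)$ demanded by \cref{def:ever_ada_security_cert_ever_func_sim} out of the non-adaptive simulator $\NAD.\Sim$ guaranteed by \cref{def:ever_non_ada_security_cert_ever_func_sim} together with the $\Fake$/$\Reveal$ algorithms of $\Sigma_{\mathsf{cence}}$, and then to connect the real experiment $(b=0)$ to the simulated experiment $(b=1)$ through a short hybrid sequence. Three ingredients drive the argument: certified everlasting RNC security (\cref{def:cert_ever_rec_nc_security_classic}), used to equivocate the $\nce$-ciphertext; quantum teleportation (\cref{lem:quantum_teleportation}), used to decouple the quantum part $\Psi$ of the challenge ciphertext from the actual $\nad$-ciphertext; and certified everlasting non-adaptive security of $\Sigma_{\mathsf{nad}}$, used to replace the real $\nad$-ciphertext by the simulated one. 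The certificate bookkeeping will rely on the modification correctness of $\Sigma_{\mathsf{nad}}$ assumed in the hypothesis.

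First I would describe the hybrids in the main case, where the single key query (if any) is issued after the challenge ciphertext. $\mathsf{Hyb}_0$ is the real experiment. In $\mathsf{Hyb}_1$ I replace the honest $\nce$-encryption and key generation by $(\nce.\vk,\widetilde{\nce.\ct},\nce.\aux)\la\NCE.\Fake(\nce.\pk)$ together with $\widetilde{\nce.\sk}\la\NCE.\Reveal(\nce.\pk,\nce.\MSK,\nce.\aux,(a,c))$; by certified everlasting RNC security this is indistinguishable even to the unbounded $\cA_2$ who later receives $\nce.\MSK$. In $\mathsf{Hyb}_2$ I replace $\Psi=Z^cX^a\,\nad.\ct\,X^aZ^c$ by the $B$-half of $n$ Bell pairs, and instead of sampling $(a,c)$ at encryption time I obtain it as the outcome of a Bell-basis measurement of the register holding $\nad.\ct$ against the $A$-half, performed only at the moment $(a,c)$ is first needed (namely when $\widetilde{\nce.\sk}$ is produced by $\Reveal$, or inside $\Sim_3$ if no query is ever made). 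By \cref{lem:quantum_teleportation} the joint distribution of $\cA_1$'s state and $(a,c)$ is unchanged, so $\mathsf{Hyb}_2$ is perfectly indistinguishable from $\mathsf{Hyb}_1$. Finally, in $\mathsf{Hyb}_3$ I replace $\nad.\ct\la\NAD.\Enc(\nad.\MPK,m)$ by $(\nad.\vk,\nad.\ct)\la\NAD.\Sim(\nad.\MPK,\mathcal{V}_{\nad},1^{|m|})$, where $\mathcal{V}_{\nad}$ is the $\nad$-view assembled from all key queries; $\mathsf{Hyb}_3$ coincides with the simulated experiment, and indistinguishability from $\mathsf{Hyb}_2$ follows from the certified everlasting non-adaptive security of $\Sigma_{\mathsf{nad}}$.

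For the reductions I would embed the FE experiment inside the appropriate underlying game. In the reduction to RNC security the reduction itself runs $\NAD.\Setup$, so it holds $\nad.\MSK$ and can hand it to $\cA_2$; it forwards the $\nce$-part $\nce.\cert$ to the RNC challenger and verifies the $\nad$-part itself by computing $\nad.\cert^*\la\NAD.\Modify(a,c,\nad.\cert)$ and checking $\NAD.\Vrfy$, then combines the two outcomes to decide the experiment's output. Symmetrically, the reduction to non-adaptive security runs $\NCE.\Setup$, forwards every FE key query to the non-adaptive $\nad$-challenger to obtain $\nad.\sk_f$, and requests the $\nad$-challenge ciphertext only after the last query, placing it in the register that is then teleported; it forwards $\NAD.\Modify(a,c,\nad.\cert)$ to the $\nad$-challenger and reads off the $\nad$-verification outcome (hence whether $\nad.\MSK$ is revealed) while verifying the $\nce$-part on its own. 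In each reduction one half of the $\MSK$ revealed to $\cA_2$ is supplied by the challenger and the other half is generated by the reduction.

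The \emph{main obstacle} is precisely this ordering mismatch in the second reduction: the non-adaptive $\nad$-challenger insists that all key queries precede its challenge ciphertext, whereas the adaptive FE adversary may query after seeing the challenge. Teleportation is exactly what resolves it, since the quantum part handed to $\cA_1$ is merely a Bell-half independent of $\nad.\ct$, so $\nad.\ct$ need not be materialized until after the final query. The second delicate point is the certificate accounting, where modification correctness of $\Sigma_{\mathsf{nad}}$ ensures that the certificate obtained by deleting the teleported (Pauli-padded) state is mapped by $\NAD.\Modify$ to one accepted by the $\nad$-challenger. I would finally note the easy case in which the single key query is made \emph{before} the challenge: there $\mathcal{V}$ already contains $f(m)$, so $\Sim_1$ can run $\NAD.\Sim$ and emit an honest $\nce$-ciphertext directly, and neither equivocation nor teleportation is required.
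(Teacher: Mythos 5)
Your proposal is correct and follows essentially the same route as the paper's proof: the same simulator design (honest $\nce$-encryption plus $\NAD.\Sim$ when the key query precedes the challenge; $\NCE.\Fake$/$\NCE.\Reveal$ plus Bell halves and deferred Bell-basis measurement otherwise), the same reductions with split ownership of the two $\MSK$ halves, and the same use of $\NAD.\Modify$ for the certificate accounting. The only cosmetic difference is that you merge into one step what the paper states as two separate hybrids (replacing $\Psi$ by a Bell half via \cref{lem:quantum_teleportation}, and then commuting the $\NAD.\Enc$-plus-measurement past the key query), both of which are perfectly indistinguishable transitions.
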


Its proof is given in \cref{sec:proof_ada_fe}.

\subsection{Construction of $q$-Bounded Certified Everlasting Functional Encryption for $\NCone$ circuits}\label{sec:const_multi_fe}
In this section, we construct a $q$-bounded certified everlasting FE for all $\NCone$ circuits from 1-bounded certified everlasting FE constructed in the previous subsection
and Shamir's secret sharing~(\cite{Shamir79}).
Our construction is similar to that of ordinary FE for all $\NCone$ circuits in \cite{C:GorVaiWee12}
except that we use a 1-bounded certified everlasting FE instead of an ordinary 1-bounded FE.

%BGW semi-honest MPC protocol without degree reduction~\cite{STOC:BenGolWig88}.

\paragraph{Our $q$-bounded certified everlasting FE scheme for $\NCone$ circuits.}
We consider the polynomial representation of circuits $C$ in $\NCone$.
The input message space is $\Ms\seteq\mathbb{F}^{\ell}$, and for each $\NCone$ circuit $C$, $C(\cdot)$ is an $\ell$-variate polynomial over $\mathbb{F}$ of total degree at most $D$.
Let $q=q(\lambda)$ be a polynomial of $\lambda$. Our scheme is associated with additional parameters $S=S(\lambda)$, $N=N(\lambda)$, $t=t(\lambda)$ and $v=v(\lambda)$
that satisfy
\begin{align}
    t(\lambda)=\Theta(q^2\lambda), N(\lambda)=\Theta(D^2q^2t), v(\lambda)=\Theta(\lambda), S(\lambda)=\Theta(vq^2).
\end{align}

Let us define a family $\mathcal{G}\seteq\{G_{C,\Delta}\}_{C\in\NCone,\Delta\subseteq[S]}$, where
\begin{align}
G_{C,\Delta}(x,Z_1,Z_2,\cdots,Z_S)\seteq C(x)+\sum_{i\in\Delta}Z_i    
\end{align}
is a function and $Z_1,\cdots,Z_S\in\mathbb{F}$.

We construct a $q$-bounded certified everlasting FE scheme for all $\NCone$ circuits  $\Sigma_{\mathsf{cefe}}=(\Setup,\keygen,\Enc,\Dec,\Delete,\allowbreak \Vrfy)$
from a 1-bounded certified everlasting FE scheme $\Sigma_{\mathsf{one}}=\mathsf{ONE}.(\Setup,\keygen,\Enc,\Dec,\Delete,\Vrfy)$.

\begin{description}
\item[$\Setup(1^\lambda)$:]$ $
\begin{itemize}
    \item For $i\in[N]$, generate $(\mathsf{one}.\MPK_i,\mathsf{one}.\MSK_i)\la\mathsf{ONE}.\Setup(1^\secp)$.
    \item Output $\MPK\seteq\{\mathsf{one}.\MPK_i\}_{i\in[N]}$ and $\MSK\seteq \{\mathsf{one}.\MSK_i\}_{i\in[N]}$.
\end{itemize}
\item[$\keygen(\MSK,C)$:]$ $
\begin{itemize}
    \item Parse $\MSK= \{\mathsf{one}.\MSK_i\}_{i\in[N]}$.
    \item Chooses a uniformly random set $\Gamma\subseteq[N]$ of size $tD+1$.
    \item Chooses a uniformly random set $\Delta\subseteq[S]$ of size $v$.
    \item For $i\in\Gamma$, compute $\mathsf{one}.\sk_{C,\Delta,i}\la\mathsf{ONE}.\keygen(\one.\MSK_i,G_{C,\Delta})$.
    \item Output $\sk_{C}\seteq (\Gamma,\Delta,\{\one.\sk_{C,\Delta,i}\}_{i\in\Gamma})$.
\end{itemize}
\item[$\Enc(\MPK,x)$:]$ $
\begin{itemize}
    \item Parse $\MPK=\{\mathsf{one}.\MPK_i\}_{i\in[N]}$.
    \item For $i\in[\ell]$, pick a random degree $t$ polynomial $\mu_i(\cdot)$ whose constant term is $x[i]$.
    \item For $i\in[S]$, pick a random degree $Dt$ polynomial $\xi_i(\cdot)$ whose constant term is $0$.
    \item For $i\in[N]$, compute $(\one.\vk_i,\one.\ct_i)\la\ONE.\Enc(\one.\MPK_i,(\mu_1(i),\cdots,\mu_{\ell}(i),\xi_1(i),\cdots,\xi_S(i) ))$.
    \item Output $\vk=\{\one.\vk_i\}_{i\in[N]}$ and $\ct\seteq\{\one.\ct_i\}_{i\in[N]}$.
    \end{itemize}
\item[$\Dec(\sk_C,\ct)$:]$ $
\begin{itemize}
    \item Parse $\sk_{C}= (\Gamma,\Delta,\{\one.\sk_{C,\Delta,i}\}_{i\in\Gamma})$ and $\ct=\{\one.\ct_i\}_{i\in[N]}$.
    \item For $i\in \Gamma$, compute $\eta(i)\la\ONE.\Dec(\one.\sk_{C,\Delta,i},\one.\ct_i)$.
    \item Output $\eta(0)$.
\end{itemize}
\item[$\Delete(\ct)$:]$ $
\begin{itemize}
    \item Parse $\ct=\{\one.\ct_i\}_{i\in[N]} $.
    \item For $i\in[N]$, compute $\one.\cert_i\la\ONE.\Delete(\one.\ct_i)$.
    \item Output $\cert\seteq \{\one.\cert_i\}_{i\in[N]}$.
\end{itemize}
\item[$\Vrfy(\vk,\cert)$:]$ $
\begin{itemize}
    \item Parse $\vk=\{\one.\vk_i\}_{i\in[N]}$ and $\cert= \{\one.\cert_i\}_{i\in[N]}$.
    \item For $i\in[N]$, compute $\top/\bot\la\ONE.\Vrfy(\one.\vk_i,\one.\cert_i)$.
    If all results are $\top$, output $\top$. Otherwise, output $\bot$.
\end{itemize}
\end{description}

\paragraph{Correctness:}
Verification correctness easily follows from verification correctness of $\Sigma_{\one}$.
Let us show evaluation correctness.
By decryption correctness of $\Sigma_{\one}$, for all $i\in\Gamma$ we have 
\begin{align}
\eta(i)&=G_{C,\Delta} (\mu_1(i),\cdots,\mu_\ell(i),\xi_1(i),\cdots,\xi_S(i))\\
&=C(\mu_1(i),\cdots,\mu_\ell(i))+\Sigma_{a\in\Delta}\xi_a(i).
\end{align}
Since $|\Gamma|\geq Dt+1$, this means that $\eta$ is equal to the degree $Dt$ polynomial 
\begin{align}
    \eta(\cdot)=C(\mu_1(\cdot),\cdots,\mu_\ell(\cdot) )+\Sigma_{a\in\Delta}\xi_a(\cdot)
\end{align}
Hence $\eta(0)=C(x_1,\cdots,x_\ell)=C(x)$, which means that our construction satisfies evaluation correctness.

\paragraph{Security:}
The following two theorems hold.
\begin{theorem}\label{thm:compt_bounded_fe}
If $\Sigma_{\one}$ satisfies the 1-bounded adaptive security, $\Sigma_{\mathsf{cefe}}$ satisfies the $q$-bounded adaptive security.
\end{theorem}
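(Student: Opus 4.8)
The plan is to follow the bounded-collusion compiler of Gorbunov, Vaikuntanathan, and Wee~\cite{C:GorVaiWee12}, treating $\Sigma_{\mathsf{one}}$ as a black box and reducing $q$-bounded adaptive security (\cref{def:ada_security_cert_ever_func_sim}) to its $1$-bounded adaptive security. The $q$-bounded simulator $(\Sim_1,\Sim_2)$ is assembled instance-by-instance from the $1$-bounded adaptive simulator $(\ONE.\Sim_1,\ONE.\Sim_2)$ of $\Sigma_{\mathsf{one}}$ together with a fresh simulation of Shamir's sharing: on challenge, for each $i\in[N]$ the simulator decides whether instance $i$ is touched by a key query and, if so, feeds $\ONE.\Sim_1$ a \emph{faked} decryption value $\eta_\ell(i)$, where $\eta_\ell$ is a random degree-$Dt$ polynomial pinned only by $\eta_\ell(0)=y_\ell=C_\ell(m)$; untouched instances are simulated with the empty key list. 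Because the challenger in \cref{def:ada_security_cert_ever_func_sim} holds $\MSK$ and produces all pre-challenge keys honestly, only the ciphertext (and the post-challenge keys) must be simulated, which matches the interface of $\ONE.\Sim_1,\ONE.\Sim_2$.

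First I would isolate a purely combinatorial \emph{good event} $\neg\Bad$ over the randomness of key generation (the random sets $\Gamma_\ell\subseteq[N]$ of size $tD+1$ and $\Delta_\ell\subseteq[S]$ of size $v$, for the at most $q$ queries). Concretely, $\neg\Bad$ asserts (i) that the number of \emph{collision} instances, i.e.\ those lying in $\Gamma_\ell\cap\Gamma_{\ell'}$ for some $\ell\neq\ell'$, is at most $t$, and (ii) that the subsets $\Delta_\ell$ are cover-free, i.e.\ each $\Delta_\ell$ contains an index absent from all other $\Delta_{\ell'}$. With the stated parameters $t=\Theta(q^2\secp)$, $N=\Theta(D^2q^2t)$, $v=\Theta(\secp)$, and $S=\Theta(vq^2)$, both statements hold except with probability $\negl(\secp)$ by a balls-into-bins estimate together with a union bound over the $\binom{q}{2}$ pairs of queries; this is where the polynomial blow-up in $N,S,t$ is spent.

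Conditioned on $\neg\Bad$, I would run a two-stage hybrid. In the first stage I walk over the \emph{singleton} instances (those touched by at most one key query) and replace, one at a time, the honest $\ONE.\Enc$ ciphertext and output by the $\ONE.\Sim_1/\ONE.\Sim_2$ output. Each single step is justified by a reduction to the $1$-bounded adaptive security of $\Sigma_{\mathsf{one}}$: the reduction embeds the external challenge instance as $\one.\ct_{i^*}$, answers every query not involving $i^*$ on its own, and forwards the unique query touching $i^*$ (if any) to the external game. Crucially, adaptivity of $\Sigma_{\mathsf{one}}$ lets the reduction postpone this forwarding, so it need not know at challenge time whether the touching query is pre- or post-challenge; it aborts only if a second query hits $i^*$, which cannot happen under $\neg\Bad$. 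The $1$-bounded advantage times $N$ bounds this stage. In the second stage I switch the shares encrypted at the collision instances from sharings of the real $m$ to sharings of $0^{|m|}$; since at most $t$ evaluations of each degree-$t$ polynomial $\mu_a$ are ever exposed, and the $\xi_a$-masking (via cover-free $\Delta_\ell$) makes the $q$ decryption polynomials $\eta_\ell$ jointly uniform subject only to $\eta_\ell(0)=y_\ell$, this switch is statistically invisible. The endpoint of the hybrids is exactly the simulated experiment, proving the claim.

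The main obstacle is the interplay between adaptivity and the per-instance embedding: the reduction fixes instance $i^*$'s ciphertext at challenge time, yet the query that ``uses up'' $i^*$'s single allowed key may arrive only afterwards, so the argument genuinely relies on the adaptive ($\ONE.\Sim_2$) interface of $\Sigma_{\mathsf{one}}$ and on $\neg\Bad$ to guarantee there is never a second such query. A secondary subtlety is consistency of the faked polynomials across the two stages: the values $\eta_\ell(i)$ fed to $\ONE.\Sim_1$ at singleton instances must interpolate with the dummy shares later installed at collision instances, which I would handle by sampling each $\eta_\ell$ as a single random degree-$Dt$ polynomial through $(0,y_\ell)$ and reading off all of its needed evaluations at once. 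Everything else (correctness bookkeeping, the QPT-ness of the assembled simulator, and the fact that $\cA$ may be quantum) follows because the combinatorial analysis is classical and $\Sigma_{\mathsf{one}}$'s simulators are invoked only as black boxes.
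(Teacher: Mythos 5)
Your proposal is correct and takes essentially the same route as the paper's own argument (the paper proves \cref{thm:ever_bounded_fe} in \cref{Sec:Proof_bounded_fe} and notes that the proof of \cref{thm:compt_bounded_fe} is the same analysis without the deletion phase): the GVW-style compiler with the good event $\abs{\mathcal{L}}\leq t$ plus cover-freeness of the $\Delta_\ell$ (the paper's \cref{lem:small_pair,lem:coverfree}), a per-instance hybrid over the non-collision instances reducing to the 1-bounded adaptive security of $\Sigma_{\one}$ (with exactly your postponed-embedding trick for post-challenge queries), and statistical arguments exploiting that only the $\leq t$ points of $\mathcal{L}$ expose real shares while cover-free $\xi$-masking makes the $\eta_\ell$ jointly uniform through $(0,y_\ell)$. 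The sole difference is bookkeeping: the paper places the polynomial-reordering step ($\sfhyb{1}{}\to\sfhyb{2}{}$) \emph{before} the computational per-instance hybrid and the $\mu$-constant-term switch after it, whereas you merge both statistical steps into your Stage 2, an interchangeable ordering since those hybrids are distributionally identical.
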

%\begin{proof}[Proof of \cref{thm:compt_bounded_fe}]
Its proof is similar to that of \cref{thm:ever_bounded_fe}, and therefore we omit it.
%\end{proof}

\begin{theorem}\label{thm:ever_bounded_fe}
If $\Sigma_{\one}$ satisfies the 1-bounded certified everlasting adaptive security,
$\Sigma_{\mathsf{cefe}}$ the $q$-bounded certified everlasting adaptive security.
\end{theorem}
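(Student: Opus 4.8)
The plan is to follow the bounded-collusion transformation of Gorbunov--Vaikuntanathan--Wee~\cite{C:GorVaiWee12}, upgrading every step to the certified everlasting setting. Concretely, I would build the $q$-bounded simulator $\Sim=(\Sim_1,\Sim_2,\Sim_3)$ out of the $N$ copies of the $1$-bounded simulator guaranteed by \cref{def:ever_ada_security_cert_ever_func_sim} applied to $\Sigma_{\one}$. The guiding idea is that a key query for $C$ with sets $(\Gamma,\Delta)$ reveals, through each instance $i\in\Gamma$, only the single value $\eta(i)=C(\mu_1(i),\dots,\mu_\ell(i))+\sum_{a\in\Delta}\xi_a(i)$, and that $\eta(\cdot)$ is a degree-$Dt$ polynomial whose constant term is exactly the legitimate output $C(x)$. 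Hence $\Sim_1$ samples, for every pre-challenge query $\ell$, a fresh random degree-$Dt$ polynomial $\eta_\ell$ with $\eta_\ell(0)=C_\ell(x)=y_\ell$ (available from $\mathcal{V}$), and uses the $1$-bounded adaptive simulators of the relevant instances to produce ciphertext components that decrypt to the prescribed $\eta_\ell(i)$; $\Sim_2$ answers post-challenge queries analogously by invoking the instances' $\one.\Sim_2$ on freshly programmed values, and $\Sim_3$ assembles $\vk=\{\one.\vk_i\}_{i\in[N]}$ from the per-instance verification keys.

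The combinatorial heart of the argument is the bound on overlaps of the random sets $\Gamma_1,\dots,\Gamma_{q}\subseteq[N]$, each of size $tD+1$. With $t=\Theta(q^2\lambda)$ and $N=\Theta(D^2q^2t)$, the expected size of a pairwise intersection is $\Theta((tD)^2/N)=\Theta(t/q^2)$, so summing over the at most $\binom{q}{2}$ pairs the expected number of indices lying in two or more query sets is a constant fraction of $t$ (the constants in the parameter choices are set so that this expectation stays below, say, $t/2$); a Chernoff bound then shows that the event $\Bad$ that more than $t$ such collision indices occur has negligible probability. I would condition the whole analysis on $\neg\Bad$, partitioning $[N]$ into the collision set $\mathcal{C}$ (with $|\mathcal{C}|\le t$) and the remaining singleton indices touched by at most one key query.

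For the singleton indices I would introduce a hybrid sequence that switches the instances one at a time from the real distribution (encryption of honest shares plus honestly generated keys) to the $1$-bounded simulated distribution, each switch justified by a reduction to the $1$-bounded certified everlasting adaptive security of $\Sigma_{\one}$. The reduction plays the $q$-bounded challenger, runs the embedded instance $i^{*}$ through its external $1$-bounded challenger, and generates all other instances itself; it aborts if a second key query ever touches $i^{*}$, which under $\neg\Bad$ never happens for a singleton, so this costs only the negligible $\Pr[\Bad]$. The structural fact that makes the certified-everlasting reduction go through is that $\Vrfy$ accepts only when \emph{every} $\one.\Vrfy(\one.\vk_i,\one.\cert_i)=\top$: when the adversary outputs $\cert=\{\one.\cert_i\}_{i\in[N]}$, the reduction forwards $\one.\cert_{i^{*}}$ to the external challenger, verifies the rest locally, and if everything passes it receives $\one.\MSK_{i^{*}}$ from the external challenger and hands the full $\MSK=\{\one.\MSK_i\}_{i\in[N]}$ to $\cA_2$. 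Programming the prescribed decryption value $\eta_\ell(i^{*})$ is precisely what $(\one.\Sim_1,\one.\Sim_2,\one.\Sim_3)$ supply.

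The remaining, information-theoretic step handles the collision indices together with the unbounded $\cA_2$. Since $1$-bounded security cannot be invoked for $i\in\mathcal{C}$, I would pessimistically grant $\cA_2$ all underlying shares $\{\mu_j(i),\xi_a(i)\}_{i\in\mathcal{C}}$ once $\MSK$ is revealed. Because $|\mathcal{C}|\le t$ and every $\mu_j$ has degree $t$, these at most $t$ evaluations at nonzero points leave the constant terms $\mu_j(0)=x[j]$ uniformly distributed, while the random sets $\Delta$ and the degree-$Dt$ blinding polynomials $\xi_a$ (with $S=\Theta(vq^2)$, $v=\Theta(\lambda)$) make the non-constant coefficients of each $\eta_\ell$ look independently random, so the only $x$-dependent information consistently available across all queries is $\{C_\ell(x)\}_\ell$. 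I expect the main obstacle to be exactly this interface: arguing that the adaptively chosen post-challenge queries, the certified-everlasting revelation of the \emph{entire} master secret key, and the leakage on the bounded collision set can be simulated simultaneously and stay indistinguishable to an unbounded $\cA_2$. This forces a careful ordering of the hybrids so that the statistical secret-sharing/blinding argument (which survives the unbounded second phase) is applied only after all computational switches of the singleton instances are complete, after which the Difference Lemma~(\cref{lem:defference}) absorbs the negligible-probability event $\Bad$.
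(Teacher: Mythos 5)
Your proposal is correct and follows essentially the same route as the paper's proof: the GVW12 bounded-collusion transformation, with the $q$-bounded simulator assembled from $N$ copies of the $1$-bounded certified everlasting simulator, hybrids that first abort on the bad set events, then reverse the sampling order of the blinding polynomials and the programmed $\eta_j$'s, then switch the non-collision instances one at a time via a reduction that forwards $\one.\cert_{i^*}$ to the external $1$-bounded challenger and hands the received $\one.\MSK_{i^*}$ together with the locally generated keys to $\cA_2$, and finally perform the perfectly indistinguishable switch of the $\mu_j$'s to constant term $0$ --- exactly the sequence $\sfhyb{0}{}$ through $\sfhyb{4}{}$ in \cref{Sec:Proof_bounded_fe}. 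The only cosmetic differences are that the paper imports the set-size bounds from \cite{C:GorVaiWee12} (\cref{lem:small_pair,lem:coverfree}) rather than re-deriving them by Chernoff, and that it isolates as an explicit second abort event $\mathsf{Collide}$ the cover-free condition on the $\Delta_j$'s (guaranteeing each query a private $\xi_{a^*}$, and forcing post-challenge $\eta_j$'s to be sampled consistently with the values already fixed on $\mathcal{L}$), which your blinding argument uses only implicitly.
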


Its proof is given in \cref{Sec:Proof_bounded_fe}.

%Section 8:Two party computation
%\input{Sec_Two_Party}

\ifnum\submission=1
\else
\section*{Acknowledgement}
TM is supported by JST Moonshot JPMJMS2061-5-1-1, JST FOREST, MEXT QLEAP, the Grant-in-Aid for Scientific Research
(B) No.JP19H04066, the Grant-in Aid for Transformative Research Areas (A)
21H05183, and the Grant-in-Aid for Scientific Research (A) No.22H00522
\fi

\ifnum\llncs=1
\bibliographystyle{myalpha} 
\bibliography{abbrev3,crypto,reference}
\else
\ifnum\arxiv=1
\newcommand{\etalchar}[1]{$^{#1}$}

\else
\bibliographystyle{alpha} 
\bibliography{abbrev3,crypto,reference}
\fi
\fi

\ifnum\cameraready=1
\else
\appendix

	\ifnum\submission=1
	\newpage
	 	\setcounter{page}{1}
 	{
	\noindent
 	\begin{center}
	{\Large SUPPLEMENTAL MATERIALS}
	\end{center}
 	}
	\setcounter{tocdepth}{2}

	\else
	
	\fi
%Section Appendix:
\appendix
%\input{Sec_Proof_rom}
%\input{Sec_Proof_wo_rom}
% !TEX root = main.tex
% !TEX spellcheck = en-US

\section{Proof of \cref{thm:ever_security_rnce_classic}}\label{sec:proof_rnce}
\begin{proof}[Proof of \cref{thm:ever_security_rnce_classic}]
To prove the theorem, let us introduce the sequence of hybrids.
\begin{description}
    \item[$\sfhyb{0}{}$:] This is identical to $\expd{\Sigma_{\mathsf{cence}},\cA}{cert}{ever}{rec}{nc}(\secp,0)$.
    For clarity, we describe the experiment against any adversary $\cA=(\cA_1,\cA_2)$, where $\cA_1$ is any QPT adversary and $\cA_2$ is any unbounded adversary. 
\begin{enumerate}
    \item The challenger generates $(\pke.\pk_{i,\alpha},\pke.\sk_{i,\alpha})\la\PKE.\keygen(1^\lambda)$ for all $i\in[n]$ and $\alpha\in\bit$.
    \item The challenger sends $\{\pke.\pk_{i,\alpha}\}_{i\in[n],\alpha\in\bit}$ to $\cA_1$.
    \item $\cA_1$ sends $m\in\Ms$ to the challenger.
    \item\label{step:enc_rnce_classical} The challenger generates $x\la\bit^n$, computes $(\pke.\vk_{i,\alpha},\pke.\ct_{i,\alpha})\la\PKE.\Enc(\pke.\pk_{i,\alpha},m[i])$ for all $i\in[n]$ and $\alpha\in\bit$, and sends $(\{\pke.\ct_{i,\alpha}\}_{i\in[n],\alpha\in\bit},(x,\{\pke.\sk_{i,x[i]}\}_{i\in[n]}))$ to $\cA_1$.
    \item $\cA_1$ sends $\{\pke.\cert_{i,\alpha}\}_{i\in[n],\alpha\in\bit}$ to the challenger and its internal state to $\cA_2$.
    \item The challenger computes $\PKE.\Vrfy(\pke.\vk_{i,\alpha},\pke.\cert_{i,\alpha})$ for all $i\in[n]$ and $\alpha\in\bit$.
    If all results are $\top$, the challenger outputs $\top$ and sends $\{\pke.\sk_{i,\alpha}\}_{i\in[n],\alpha\in\bit}$ to $\cA_2$. Otherwise, the challenger outputs $\bot$ and sends $\bot$ to $\cA_2$.
    \item $\cA_2$ outputs $b'\in\bit$. 
    \item If the challenger outputs $\top$, then the output of the experiment is $b'$. Otherwise, the output of the experiment is $\bot$.
\end{enumerate}
\item[$\sfhyb{1}{}$:] 
This is identical to $\sfhyb{0}{}$ except that
the challenger generates $(\pke.\vk_{i,x[i]\oplus 1},\pke.\ct_{i,x[i]\oplus 1})\la\PKE.\Enc(\pke.\pk_{i,x[i]\oplus 1},\allowbreak m[i]\oplus 1)$ for all $i\in[n]$ instead of computing $(\pke.\vk_{i,x[i]\oplus 1},\pke.\ct_{i,x[i]\oplus 1})\la\PKE.\Enc(\pke.\pk_{i,x[i]\oplus 1},m[i])$ for all $i\in[n]$.
\item[$\sfhyb{2}{}$:]
This is identical to $\sfhyb{1}{}$ except for the following three points.
\begin{enumerate}
\item The challenger generates $x^*\la\bit^n$ instead of generating $x\la\bit^n$.
\item For all $i\in[n]$, the challenger generates $(\pke.\vk_{i,x^*[i]},\allowbreak \pke.\ct_{i,x^*[i]})\la \PKE.\Enc(\pke.\pk_{i,x^*[i]},0)$
and $(\pke.\vk_{i,x^*[i]\oplus 1},\pke.\ct_{i,x^*[i]\oplus 1})\la \PKE.\Enc(\pke.\pk_{i,x^*[i]\oplus 1},1)$
instead of computing $(\pke.\vk_{i,x[i]},\pke.\ct_{i,x[i]})\allowbreak \la\PKE.\Enc(\pke.\pk_{i,x[i]},m[i])$
and $(\pke.\vk_{i,x[i]\oplus 1},\pke.\ct_{i,x[i]\oplus 1})\la\PKE.\Enc(\pke.\pk_{i,x[i]\oplus 1},m[i]\oplus 1)$.
\item The challenger sends $(\{\pke.\ct_{i,\alpha}\}_{i\in[n],\alpha\in\bit},(x^*\oplus m,\{\pke.\sk_{i,x^*[i]\oplus m[i]}\}_{i\in[n]}))$ to $\cA_1$
instead of sending $(\{\pke.\ct_{i,\alpha}\}_{i\in[n],\alpha\in\bit},(x,\{\pke.\sk_{i,x[i]}\}_{i\in[n]}))$ to $\cA_1$.
\end{enumerate}
\end{description}
It is clear that $\sfhyb{0}{}$ is identical to $\expd{\Sigma,\cA}{cert}{ever}{rec}{nc}(\secp,0)$ and $\sfhyb{2}{}$ is identical to $\expd{\Sigma,\cA}{cert}{ever}{rec}{nc}(\secp,1)$.
Hence, \cref{thm:ever_security_rnce_classic} easily follows from the following \cref{prop:hyb_0_hyb_1_rnce_classic,prop:hyb_1_hyb_2_rnce_classic}
(whose proof is given later.).
\end{proof}

\begin{proposition}\label{prop:hyb_0_hyb_1_rnce_classic}
If $\Sigma_{\mathsf{cepk}}$ is certified everlasting IND-CPA secure,
it holds that
$\abs{\Pr[\sfhyb{0}{}=1]-\Pr[\sfhyb{1}{}=1]}\leq\negl(\lambda)$.
\end{proposition}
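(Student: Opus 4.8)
The plan is to interpolate between $\sfhyb{0}{}$ and $\sfhyb{1}{}$ by a hybrid argument over the $n$ wire indices, reducing each single step to the certified everlasting IND-CPA security of $\Sigma_{\mathsf{cepk}}$. The only difference between the two experiments is that, for every $i\in[n]$, the ciphertext $\pke.\ct_{i,x[i]\oplus 1}$ (encrypted under $\pke.\pk_{i,x[i]\oplus 1}$, whose secret key is \emph{withheld} from $\cA_1$) is switched from an encryption of $m[i]$ to an encryption of $m[i]\oplus 1$, while $\pke.\ct_{i,x[i]}$ always encrypts $m[i]$. I would therefore define intermediate hybrids $\sfhyb{0}{(j)}$ for $j=0,\dots,n$, where in $\sfhyb{0}{(j)}$ the ciphertext at $(i,x[i]\oplus 1)$ encrypts $m[i]\oplus 1$ for $i\le j$ and $m[i]$ for $i>j$, so that $\sfhyb{0}{(0)}=\sfhyb{0}{}$ and $\sfhyb{0}{(n)}=\sfhyb{1}{}$; adjacent hybrids differ only at position $(j,x[j]\oplus 1)$.

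For each $j$ I would build a two-stage adversary $\cB=(\cB_1,\cB_2)$ against \cref{def:cert_ever_security_cert_ever_pke}. The stage $\cB_1$ receives the challenge public key, samples $x\la\bit^n$ itself, embeds the challenge key as $\pke.\pk_{j,x[j]\oplus 1}$, and generates all other key pairs honestly, forwarding all public keys to $\cA_1$. Upon receiving $m$, it submits the challenge pair $(m[j],\,m[j]\oplus 1)$, installs the returned challenge ciphertext at position $(j,x[j]\oplus 1)$, and encrypts every other position exactly as in the adjacent hybrids (flipped for $i<j$, unflipped for $i>j$, and always $m[i]$ at $(i,x[i])$). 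It then hands $\cA_1$ the ciphertext together with $(x,\{\pke.\sk_{i,x[i]}\}_{i\in[n]})$, all of which $\cB_1$ knows because the withheld challenge key sits at the opposite index $x[j]\oplus 1$. When $\cA_1$ outputs a certificate $\{\pke.\cert_{i,\alpha}\}$, $\cB_1$ forwards $\pke.\cert_{j,x[j]\oplus 1}$ to its challenger and passes the remaining components, along with $\cA_1$'s state, to $\cB_2$. When the challenge bit is $0$ this perfectly simulates $\sfhyb{0}{(j-1)}$, and when it is $1$ it simulates $\sfhyb{0}{(j)}$.

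The delicate part is reproducing the experiment output through the verification branch. The $\Sigma_{\mathsf{cepk}}$ challenger decides whether position $(j,x[j]\oplus 1)$ verifies and, if so, returns the withheld key $\pke.\sk_{j,x[j]\oplus 1}$ to $\cB_2$; meanwhile $\cB_2$ runs $\PKE.\Vrfy$ on all the other certificate components itself, using the verification keys it generated, thereby reproducing the full $\NCE.\Vrfy$ check. If the challenger returns $\bot$, then this position failed, so the RNC hybrid also outputs $\bot$ and there is nothing to match. If it returns the secret key but $\cB_2$'s own checks fail, the hybrid outputs $\bot$, so $\cB_2$ outputs the fixed bit $0$, contributing $0$ to the probability of output $1$; if all checks pass, $\cB_2$ assembles the complete key set $\{\pke.\sk_{i,\alpha}\}$, runs $\cA_2$, and outputs its bit $b'$. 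With this convention, the probability that the experiment outputs $1$ coincides exactly with $\Pr[\sfhyb{0}{(j-1)}=1]$ for challenge bit $0$ and $\Pr[\sfhyb{0}{(j)}=1]$ for challenge bit $1$. Hence $\abs{\Pr[\sfhyb{0}{(j-1)}=1]-\Pr[\sfhyb{0}{(j)}=1]}=\advd{\Sigma_{\mathsf{cepk}},\cB}{cert}{ever}{ind}{cpa}(\secp)\le\negl(\secp)$, where $\cB_1$ is QPT and $\cB_2$ is unbounded, matching the adversary structure of \cref{def:cert_ever_security_cert_ever_pke}.

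A triangle inequality over $j\in[n]$, together with $n=\poly(\secp)$, then yields $\abs{\Pr[\sfhyb{0}{}=1]-\Pr[\sfhyb{1}{}=1]}\le\negl(\secp)$. The step I expect to be the main obstacle is precisely this output bookkeeping across the verification branch: I must ensure that on the event where the embedded certificate verifies but some other component does not, the reduction's output distribution stays faithful to the hybrid (which outputs $\bot$ there), since the underlying PKE adversary can only emit a bit and cannot itself abort.
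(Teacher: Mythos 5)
Your proof is correct, but it takes a different route from the paper. The paper does not run the position-by-position hybrid inside the proof of \cref{prop:hyb_0_hyb_1_rnce_classic}; instead it gives a single reduction to a multi-instance ``cut-and-choose'' lemma (\cref{lem:cut_and_choose_pke}), in which the reduction receives all $2n$ public keys from a multi-challenge experiment, names $f=x$, gets back the secret keys $\{\pke.\sk_{i,x[i]}\}_i$ together with challenge ciphertexts at the opposite positions $\{(i,x[i]\oplus 1)\}_i$ encrypting either all $m[i]$ or all $m[i]\oplus 1$, and forwards the corresponding certificate components wholesale; the hybrid over positions then lives inside the proof of that lemma. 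Your inlined version is sound: embedding the single challenge key at $(j,x[j]\oplus 1)$ is legitimate precisely because in these hybrids $x$ is sampled by the challenger, so your $\cB_1$ may sample it up front (uniform and independent of everything $\cA_1$ sees) and knows the embedding position before distributing public keys. This even makes your per-step reduction tighter than the paper's: in the proof of \cref{lem:cut_and_choose_pke} the index $f$ is chosen adversarially \emph{after} the public keys are fixed, forcing the reduction to guess a bit $\beta$ and abort when $f[j+1]=\beta$, costing a factor $1/2$ per step --- a loss your direct argument avoids. What the paper's factored approach buys in exchange is reusability: \cref{lem:cut_and_choose_pke} is invoked again in the FE proof (\cref{prop:Exp_Hyb_1_non_ad_fe}), where the selection string genuinely is adversarial (it is the key-query function $f$), so your simplification would not transport to that setting. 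Your handling of the output bookkeeping --- challenger-side verification of the embedded component, $\cB_2$ locally verifying the rest, and outputting a fixed bit $0$ when the embedded certificate passes but another component fails, so that both branches contribute $0$ to $\Pr[\cdot=1]$ exactly as the hybrid's $\bot$ does --- matches the convention the paper uses in its reduction and is the right resolution of the one delicate point you flagged.
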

\begin{proposition}\label{prop:hyb_1_hyb_2_rnce_classic}
$\abs{\Pr[\sfhyb{1}{}=1]-\Pr[\sfhyb{2}{}=1]}\leq\negl(\lambda)$.    
\end{proposition}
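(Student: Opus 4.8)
The plan is to show that $\sfhyb{1}{}$ and $\sfhyb{2}{}$ are in fact \emph{perfectly} indistinguishable, so that $\abs{\Pr[\sfhyb{1}{}=1]-\Pr[\sfhyb{2}{}=1]}$ equals $0$ and the claimed negligible bound holds trivially. The only modifications from $\sfhyb{1}{}$ to $\sfhyb{2}{}$ are: sampling $x^*\la\bit^n$ in place of $x$; encrypting the fixed bits $0$ and $1$ under $\pke.\pk_{i,x^*[i]}$ and $\pke.\pk_{i,x^*[i]\oplus1}$ instead of encrypting $m[i]$ and $m[i]\oplus1$ under $\pke.\pk_{i,x[i]}$ and $\pke.\pk_{i,x[i]\oplus1}$; and revealing $(x^*\oplus m,\{\pke.\sk_{i,x^*[i]\oplus m[i]}\}_{i\in[n]})$ rather than $(x,\{\pke.\sk_{i,x[i]}\}_{i\in[n]})$. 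My claim is that the single change of variables $x\seteq x^*\oplus m$ makes these two descriptions coincide. Since $\Setup$ is unchanged, the public keys and $\cA_1$'s state at the moment it sends $m$ have identical distribution in both hybrids, so it suffices to argue the equivalence conditioned on any fixed value of $m$.

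First I would record that, with $m$ fixed and $x^*$ uniform over $\bit^n$, the variable $x\seteq x^*\oplus m$ is again uniform over $\bit^n$, so the two sampling steps induce the same distribution. Next I would match the ciphertexts index by index. In $\sfhyb{1}{}$, the ciphertext stored at position $(i,\alpha)$ is an encryption under $\pke.\pk_{i,\alpha}$ of the bit $m[i]\oplus\alpha\oplus x[i]$ (namely $m[i]$ when $\alpha=x[i]$, and $m[i]\oplus1$ when $\alpha=x[i]\oplus1$). In $\sfhyb{2}{}$, the ciphertext at the same position $(i,\alpha)$ is an encryption under the same key $\pke.\pk_{i,\alpha}$ of the bit $\alpha\oplus x^*[i]$; substituting $x^*[i]=x[i]\oplus m[i]$ yields $\alpha\oplus x[i]\oplus m[i]$, which is exactly the $\sfhyb{1}{}$ plaintext. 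The verification keys $\pke.\vk_{i,\alpha}$ are produced by the same $\PKE.\Enc$ invocation that produces each ciphertext, so they are distributed identically as well. For the revealed data, $x^*\oplus m=x$ and $x^*[i]\oplus m[i]=x[i]$ give $(x^*\oplus m,\{\pke.\sk_{i,x^*[i]\oplus m[i]}\}_{i\in[n]})=(x,\{\pke.\sk_{i,x[i]}\}_{i\in[n]})$, precisely what $\sfhyb{1}{}$ reveals; and the full secret key $\{\pke.\sk_{i,\alpha}\}$ that the challenger may later forward to $\cA_2$, together with the $\PKE.\Vrfy$ step, is untouched by the relabeling.

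Putting these together, the entire joint view of $(\cA_1,\cA_2)$ — and hence the output bit of the experiment — has identical distribution in the two hybrids, so $\Pr[\sfhyb{1}{}=1]=\Pr[\sfhyb{2}{}=1]$. The argument uses no cryptographic assumption, which is why this proposition, unlike \cref{prop:hyb_0_hyb_1_rnce_classic}, needs no security reduction. The only place I would be careful is the bookkeeping: verifying that the identity index map on $\alpha$ together with $x^*=x\oplus m$ simultaneously aligns the $2n$ plaintexts, the verification keys, the revealed secret-key subset, and the revealed string, and that the fresh encryption randomness at each of the $2n$ positions is matched (it is, since each position is an independent $\PKE.\Enc$ call and the correspondence between positions is a bijection). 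No step constitutes a genuine obstacle; the statement is an exact equality presented as a negligible bound.
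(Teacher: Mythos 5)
Your proposal is correct and matches the paper's proof, which simply asserts that the joint distribution of $(\{\pke.\ct_{i,\alpha}\}_{i\in[n],\alpha\in\bit},(x,\{\pke.\sk_{i,x[i]}\}_{i\in[n]}))$ in $\sfhyb{1}{}$ is identical to that of $(\{\pke.\ct_{i,\alpha}\}_{i\in[n],\alpha\in\bit},(x^*\oplus m,\{\pke.\sk_{i,x^*[i]\oplus m[i]}\}_{i\in[n]}))$ in $\sfhyb{2}{}$. You merely make explicit the change of variables $x=x^*\oplus m$ (including the per-position plaintext identity $\alpha\oplus x^*[i]=m[i]\oplus\alpha\oplus x[i]$ and the uniformity of $x^*\oplus m$ given $m$) that the paper leaves as ``obvious,'' correctly concluding the perfect equality $\Pr[\sfhyb{1}{}=1]=\Pr[\sfhyb{2}{}=1]$ with no cryptographic assumption.
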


\begin{proof}[Proof of \cref{prop:hyb_0_hyb_1_rnce_classic}]

For the proof, we use \cref{lem:cut_and_choose_pke}.
We assume that $\abs{\Pr[\sfhyb{0}{}=1]-\Pr[\sfhyb{1}{}=1]}$ is non-negligible,
and construct an adversary $\cB$ that breaks the security experiment $\expc{\Sigma_{\mathsf{cepk}},\cB}{multi}{cert}{ever}(\secp,b)$ defined in~\cref{lem:cut_and_choose_pke}.
This contradicts the certified everlasting IND-CPA security of $\Sigma_{\mathsf{cepk}}$ from~\cref{lem:cut_and_choose_pke}.
Let us describe how $\cB$ works below.
\begin{enumerate}
    \item $\cB$ receives $\{\pke.\pk_{i,\alpha}\}_{i\in[n],\alpha\in\bit}$ from the challenger of
    $\expc{\Sigma_{\mathsf{cepk}},\cB}{multi}{cert}{ever}(\secp,b)$.
    \item $\cB$ sends $\{\pke.\pk_{i,\alpha}\}_{i\in[n],\alpha\in\bit}$ to $\cA_1$.
    \item $\cA_1$ chooses $m\in\Ms$ and sends $m$ to $\cB$.
    \item $\cB$ generates $x\la\bit^n$ and sends $(x,m[1],\cdots, m[n], m[1]\oplus 1,\cdots,m[n]\oplus 1)$ to the challenger of $\expc{\Sigma_{\mathsf{cepk}},\cB}{multi}{cert}{ever}(\secp,b)$.
    \item $\cB$ receives $(\{\pke.\sk_{i,x[i]}\}_{i\in[n]},\{\pke.\ct_{i,x[i]\oplus 1}\}_{i\in[n]})$ from the challenger of $\expc{\Sigma_{\mathsf{cepk}},\cB}{multi}{cert}{ever}(\secp,b)$.
    \item $\cB$ computes $(\{\pke.\vk_{i,x[i]}\}_{i\in[n]},\{\pke.\ct_{i,x[i]}\}_{i\in[n]})\la\PKE.\Enc(\pke.\pk_{i,x[i]},m[i])$ for $i\in[n]$. 
    \item $\cB$ sends $(\{\pke.\ct_{i,\alpha}\}_{i\in[n],\alpha\in\bit},(x,\{\pke.\sk_{i,x[i]}\}_{i\in[n]}))$ to $\cA_1$.
    \item $\cA_1$ sends $\{\pke.\cert_{i,\alpha}\}_{i\in[n],\alpha\in\bit}$ to $\cB$ and the internal state to $\cA_2$.
    \item $\cB$ sends $\{\pke.\cert_{i,x[i]\oplus 1}\}_{i\in[n]}$ to the challenger, and receives $\{\pke.\sk_{i,x[i]\oplus 1}\}_{i\in[n]}$ or $\bot$.
    If $\cB$ receives $\bot$, it outputs $\bot$ and aborts.
    \item $\cB$ sends $\{\pke.\sk_{i,\alpha}\}_{i\in[n],\alpha\in\bit}$ to $\cA_2$.
    \item $\cA_2$ outputs $b'$.
    \item $\cB$ computes $\PKE.\Vrfy(\pke.\vk_{i,x[i]},\pke.\cert_{i,x[i]})$ for all $i\in[n]$.
    If all results are $\top$, $\cB$ outputs $b'$.
    Otherwise, $\cB$ outputs $\bot$.
\end{enumerate}
It is clear that $\Pr[1\la\cB|b=0]=\Pr[\sfhyb{0}{}=1]$ and $\Pr[1\la\cB|b=1]=\Pr[\sfhyb{1}{}=1]$.
By assumption,\\ $\abs{\Pr[\sfhyb{0}{}=1]-\Pr[\sfhyb{1}{}=1]}$ is non-negligible.
Therefore, $\abs{\Pr[1\la\cB|b=0]-\Pr[1\la\cB|b=1]}$ is also non-negligible,
which contradicts the certified everlasting IND-CPA security of $\Sigma_{\mathsf{cepk}}$ from~\cref{lem:cut_and_choose_pke}.
\end{proof}

\begin{proof}[Proof of \cref{prop:hyb_1_hyb_2_rnce_classic}]
It is obvious that the joint probability distribution that $\cA_1$ receives $(\{\pke,\ct_{i,\alpha}\}_{i\in[n],\alpha\in\bit},\allowbreak(x,\{\pke.\sk_{i,x[i]}\}_{i\in[n]}))$ in $\sfhyb{1}{}$ is identical to the joint probability distribution that $\cA_1$ receives $(\{\pke,\ct_{i,\alpha}\}_{i\in[n],\alpha\in\bit},\allowbreak (x^*\oplus m,\{\pke.\sk_{i,x^*[i]\oplus m[i]}\}_{i\in[n]}))$ in $\sfhyb{2}{}$.
Hence, \cref{prop:hyb_1_hyb_2_rnce_classic} follows.
\end{proof}

We use the following lemma for the proof of \cref{thm:ever_security_rnce_classic} and \cref{thm:garble_ever_security}.
The proof is shown with the standard hybrid argument.
It is also easy to see that a similar lemma holds for IND-CPA security.
\begin{lemma}\label{lem:cut_and_choose_pke}
Let $s$ be some polynomial of the security parameter $\lambda$.
Let $\Sigma\seteq(\keygen,\Enc,\Dec,\Delete,\Vrfy)$ be a certified everlasting PKE scheme.
Let us consider the following security experiment 
$\expc{\Sigma,\cA}{multi}{cert}{ever}(\secp,b)$ against $\cA$ consisting of any QPT adversary $\cA_1$ and any unbounded adversary $\cA_2$.
\begin{enumerate}
    \item The challenge generates $(\pk_{i,\alpha},\sk_{i,\alpha})\la\keygen(1^\secp)$ for all $i\in[s]$ and $\alpha\in\bit$,
    and sends $\{\pk_{i,\alpha}\}_{i\in[s],\alpha\in\bit}$ to $\cA_1$.
    \item $\cA_1$ chooses $f\in\bit^s$ and $(m_0[1],m_0[2],\cdots, m_0[s],m_1[1],m_1[2], \cdots, m_1[s]) \in\Ms^{2s}$,
    and sends $(f,m_0[1],m_0[2],\\ \cdots ,m_0[s],m_1[1],m_1[2],\cdots, m_1[s] )$ to the challenger.
    \item The challenger computes 
    $
    (\vk_{i,f[i]\oplus 1},\ct_{i,f[i]\oplus 1})\la \Enc(\pk_{i,f[i]\oplus 1},m_b[i])$ for all $i\in[s]$,
    and sends $(\{\sk_{i,f[i]}\}_{i\in [s]},\\ \{\ct_{i,f[i]\oplus 1}\}_{i\in[s]})$ to $\cA_1$.
    \item At some point, $\cA_1$ sends $\{\cert_{i,f[i]\oplus 1}\}_{i\in[s]}$ to the challenger,
    and sends its internal state to $\cA_2$.
    \item The challenger computes $\Vrfy(\vk_{i,f[i]\oplus 1},\cert_{i,f[i]\oplus 1})$ for every $i\in[s]$.
    If all results are $\top$, the challenger outputs $\top$, and sends $\{\sk_{i,f[i]\oplus 1}\}_{i\in [s]}$ to $\cA_2$.
    Otherwise, the challenger outputs $\bot$, and sends $\bot$ to $\cA_2$.
    \item $\cA_2$ outputs $b'$.
    \item If the challenger outputs $\top$, then the output of the experiment is $b'$. Otherwise, the output of the experiment is $\bot$.
\end{enumerate}
If the $\Sigma$ satisfies the certified everlasting IND-CPA security,
\begin{align}
\advc{\Sigma,\cA}{multi}{cert}{ever}(\secp)
\seteq \abs{\Pr[ \expb{\Sigma,\cA}{multi}{cert}{ever}(\secp, 0)=1] - \Pr[ \expb{\Sigma,\cA}{multi}{cert}{ever}(\secp, 1)=1] }\leq \negl(\secp)
\end{align}
for any QPT adversary $\cA_1$ and any unbounded adversary $\cA_2$.
\end{lemma}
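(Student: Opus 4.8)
The plan is to prove the lemma by a standard hybrid argument over the $s$ parallel encryption slots, reducing each single-step indistinguishability to the certified everlasting IND-CPA security of $\Sigma$ (\cref{def:cert_ever_security_cert_ever_pke}). First I would define hybrids $\sfhyb{0}{},\dots,\sfhyb{s}{}$, where $\sfhyb{j}{}$ behaves exactly like $\expc{\Sigma,\cA}{multi}{cert}{ever}(\secp,b)$ except that the challenge ciphertexts are generated as $\ct_{i,f[i]\oplus1}\la\Enc(\pk_{i,f[i]\oplus1},m_1[i])$ for $i\le j$ and as $\ct_{i,f[i]\oplus1}\la\Enc(\pk_{i,f[i]\oplus1},m_0[i])$ for $i>j$. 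Then $\sfhyb{0}{}$ coincides with the $b=0$ experiment and $\sfhyb{s}{}$ with the $b=1$ experiment, so by the triangle inequality it suffices to bound $\abs{\Pr[\sfhyb{j-1}{}=1]-\Pr[\sfhyb{j}{}=1]}$ for each $j\in[s]$ and sum the $s=\poly(\secp)$ terms.

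For a fixed $j$, I would build a reduction $\cB=(\cB_1,\cB_2)$ playing $\expd{\Sigma,\cB}{cert}{ever}{ind}{cpa}(\secp,b)$ that embeds its single challenge key at slot $j$. The one obstacle that forces a small twist is that the multi-slot experiment requires all public keys to be published before $\cA_1$ reveals $f$, yet the challenge key must land in the \emph{hidden} slot $f[j]\oplus1$ (whose secret key is released only after verification), while the \emph{revealed} slot $f[j]$ needs a secret key that $\cB$ can hand to $\cA_1$ immediately. Since $f[j]$ is unknown at setup time, $\cB$ guesses a bit $g\la\bit$, places the received challenge public key $\pk$ at position $(j,1-g)$, and generates every other key pair itself (both slots of every $i\ne j$, and slot $(j,g)$). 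If the later choice satisfies $f[j]\neq g$, $\cB$ aborts and outputs $0$; this event is independent of $\cA_1$'s view because a freshly generated public key and the challenge public key are identically distributed, so the guess is correct with probability exactly $1/2$.

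Conditioned on a correct guess, $\cB$ simulates $\sfhyb{j-1}{}$ (resp.\ $\sfhyb{j}{}$) perfectly: it submits $(m_0[j],m_1[j])$ as its own challenge pair, forwards the challenge ciphertext as $\ct_{j,f[j]\oplus1}$, encrypts $m_1[i]$ (for $i<j$) and $m_0[i]$ (for $i>j$) itself with the relevant public keys, and supplies the revealed keys $\{\sk_{i,f[i]}\}_i$ from the material it generated (using slot $(j,g)=(j,f[j])$ at index $j$). When $\cA_1$ outputs the certificate bundle, $\cB$ relays $\cert_{j,f[j]\oplus1}$ to its own challenger and verifies the remaining $s-1$ certificates locally with the verification keys it holds. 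If the challenger returns the hidden secret key $\sk_{j,f[j]\oplus1}$ (i.e.\ slot-$j$ verification passed), $\cB$ assembles the full set $\{\sk_{i,f[i]\oplus1}\}_i$, runs $\cA_2$ to obtain $b'$, and sets $\cB_2$'s output to $1$ if and only if $b'=1$ and all $s-1$ local verifications passed. By construction the single-challenge experiment outputs $1$ exactly when slot-$j$ verifies and $\cB_2$ outputs $1$, i.e.\ exactly when all $s$ verifications pass and $b'=1$, which is precisely the event $\sfhyb{}{}=1$. Hence $\Pr[\expd{\Sigma,\cB}{cert}{ever}{ind}{cpa}(\secp,0)=1]=\tfrac12\Pr[\sfhyb{j-1}{}=1]$ and likewise for $b=1$, so certified everlasting IND-CPA security gives $\tfrac12\abs{\Pr[\sfhyb{j-1}{}=1]-\Pr[\sfhyb{j}{}=1]}=\advd{\Sigma,\cB}{cert}{ever}{ind}{cpa}(\secp)\le\negl(\secp)$, and summing over $j$ closes the argument.

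The main obstacle is not any deep technical point but the careful bookkeeping of isolating a single challenge-key instance per hybrid step while keeping the distributions of the verification keys and the two classes of revealed secret keys consistent; in particular, the fact that the reduction must commit to all $2s$ public keys before learning the slot assignment $f$ is what necessitates the one-bit guess (and the harmless factor-of-two loss absorbed into negligibility). I expect verifying this consistency, and that the experiment's $\top/\bot$ output matches the reduction's output, to be the only part needing genuine care.
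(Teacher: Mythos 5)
Your proposal is correct and matches the paper's own proof essentially step for step: the same hybrids $\sfhyb{0}{},\dots,\sfhyb{s}{}$ interpolating slot by slot from $m_0$ to $m_1$ in the hidden positions, and the same reduction that guesses one bit to place the single challenge public key in the hidden slot $(j,f[j]\oplus 1)$, aborting on a wrong guess and absorbing the resulting factor-of-two loss via $\Pr[1\la\cB\mid b]=\tfrac12\Pr[\sfhyb{}{}=1]$. The paper's reduction differs only in cosmetic bookkeeping (it outputs $\bot$ rather than $0$ on abort and indexes the embedded slot as $j+1$), so there is nothing to add.
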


\begin{proof}[Proof of \cref{lem:cut_and_choose_pke}]
Let us consider the following hybrids for $j\in\{0,1,\cdots ,s\}$.
\begin{description}
\item[$\sfhyb{j}{}$:] $ $
\begin{enumerate}
    \item The challenger generates $(\pk_{i,\alpha},\sk_{i,\alpha})\la\keygen(1^\secp)$ for every $i\in[s]$ and $\alpha\in\bit$, and sends $\{\pk_{i,\alpha}\}_{i\in[s],\alpha\in\bit}$ to $\cA_1$.
    \item $\cA_1$ chooses $f\in\bit^s$ and $(m_0[1],m_0[2],\cdots, m_0[s],m_1[1],m_1[2],\cdots, m_1[s]) \in\Ms^{2s}$,
    and sends $(f,m_0[1],m_0[2],\cdots, m_0[s],m_1[1],m_1[2],\cdots, m_1[s] )$ to the challenger.
    \item The challenger computes 
    \begin{align}
    (\vk_{i,f[i]\oplus 1},\ct_{i,f[i]\oplus 1})\la \Enc(\pk_{i,f[i]\oplus 1},m_1[i])
    \end{align}
    for $i\in[j]$ and 
    \begin{align}
    (\vk_{i,f[i]\oplus 1},\ct_{i,f[i]\oplus 1})\la \Enc(\pk_{i,f[i]\oplus 1},m_0[i])
    \end{align}
    for $i\in\{j+1,j+2,\cdots,s\}$,
    and sends $(\{\sk_{i,f[i]}\}_{i\in [s]}, \{\ct_{i,f[i]\oplus 1}\}_{i\in[s]})$ to $\cA_1$.
    \item At some point, $\cA_1$ sends $\{\cert_{i,f[i]\oplus 1}\}_{i\in[s]}$ to the challenger,
    and sends its internal state to $\cA_2$.
    \item The challenger computes $\Vrfy(\vk_{i,f[i]\oplus 1},\cert_{i,f[i]\oplus 1})$ for every $i\in[s]$.
    If all results are $\top$, the challenger outputs $\top$, and sends $\{\sk_{i,f[i]\oplus 1}\}_{i\in [s]}$ to $\cA_2$.
    Otherwise, the challenger outputs $\bot$, and sends $\bot$ to $\cA_2$.
    \item $\cA_2$ outputs $b'$.
    \item If the challenger outputs $\top$, then the output of the experiment is $b'$. Otherwise, the output of the experiment is $\bot$.
\end{enumerate}
\end{description}
It is clear that $\Pr[\sfhyb{0}{}=1]=\Pr[\expc{\Sigma,\cA}{multi}{cert}{ever}(\secp,0)=1]$ and $\Pr[\sfhyb{s}{}=1]=\Pr[\expc{\Sigma,\cA}{multi}{cert}{ever}(\secp,1)=1]$.
Furthermore, we can show 
\begin{align}
    \abs{\Pr[\sfhyb{j}{}=1]-\Pr[\sfhyb{j+1}{}=1]}\leq\negl(\lambda)
\end{align}
for each $j\in\{0,1,\cdots,s-1\}$.
(Its proof is given below.) From these facts, we obtain \cref{lem:cut_and_choose_pke}.

Let us show the remaining one. To show it, let us assume that $\abs{\Pr[\sfhyb{j}{}=1]-\Pr[\sfhyb{j+1}{}=1]}$ is non-negligible.
Then, we can construct an adversary $\cB$ that can break the certified everlasting IND-CPA security of $\Sigma$ as follows. 

\begin{enumerate}
    \item $\cB$ receives $\pk$ from the challenger of $\expd{\Sigma,\cB}{cert}{ever}{ind}{cpa}(\secp,b)$.
    \item $\cB$ generates $\beta\la\bit$ and sets $\pk_{j+1,\beta}\seteq \pk$. 
    \item $\cB$ generates $(\pk_{i,\alpha},\sk_{i,\alpha})\la\keygen(1^\secp)$ for $i\in\{1,\cdots,j,j+2,\cdots, s\}$ and $\alpha\in\bit$, and $(\pk_{j+1,\beta\oplus 1},\sk_{j+1,\beta\oplus 1})\la\keygen(1^\secp)$.
    \item $\cB$ sends $\{\pk_{i,\alpha}\}_{i\in[s],\alpha\in\bit}$ to $\cA_1$.
    \item $\cA_1$ chooses $f\in\bit^s$ and $(m_0[1],m_0[2],\cdots, m_0[s],m_1[1],m_1[2],\cdots, m_1[s]) \in\Ms^{2s}$,
    and sends $(f,m_0[1],m_0[2],\\ \cdots, m_0[s],m_1[1],m_1[2],\cdots, m_1[s] )$ to the challenger.
    \item If $f[j+1]=\beta$, $\cB$ aborts the experiment, and outputs $\bot$.
    \item $\cB$ computes
    \begin{align}
        (\vk_{i,f[i]\oplus 1},\ct_{i,f[i]\oplus 1})\la \Enc(\pk_{i,f[i]\oplus 1},m_1[i])
    \end{align}
    for $i\in[j]$ and
    \begin{align}
        (\vk_{i,f[i]\oplus 1},\ct_{i,f[i]\oplus 1})\la \Enc(\pk_{i,f[i]\oplus 1},m_0[i])
    \end{align}
    for $i\in\{j+2,\cdots, s\}$.
    \item $\cB$ sends $(m_0[j+1],m_1[j+1])$ to the challenger of $\expd{\Sigma,\cB}{cert}{ever}{ind}{cpa}(\secp,b)$.
    The challenger computes $(\vk_{j+1,f[j+1]\oplus 1},\ct_{j+1,f[j+1]\oplus 1})\la \Enc(\pk_{j+1,f[j+1]\oplus 1},m_b[j+1])$ and sends $\ct_{j+1,f[j+1]\oplus 1}$ to $\cB$.
    \item $\cB$ sends $(\{\sk_{i,f[i]}\}_{i\in [s]}, \{\ct_{i,f[i]\oplus 1}\}_{i\in[s]})$ to $\cA_1$.
    \item $\cA_1$ sends $\{\cert_i\}_{i\in[s]}$ to $\cB$, and sends its internal state to $\cA_2$.
    \item $\cB$ sends $\cert_{j+1}$ to the challenger, and receives $\sk_{j+1,f[j+1]\oplus1}$ or $\bot$ from the challenger.
    If $\cB$ receives $\bot$ from the challenger, it outputs $\bot$ and aborts.
    \item $\cB$ sends $\{\sk_{i,f[i]\oplus 1}\}_{i\in[s]}$ to $\cA_2$.
    \item $\cA_2$ outputs $b'$.
    \item $\cB$ computes $\Vrfy$ for all $\cert_i$, and outputs $b'$ if all results are $\top$. Otherwise, $\cB$ outputs $\bot$.
\end{enumerate}
Since $\pk_{j+1,\beta}$ and $\pk_{j+1,\beta\oplus 1}$ are identically distributed, 
it holds that
$ \Pr[f[j+1]=\beta]=\Pr[f[j+1]=\beta\oplus 1]=\frac{1}{2}$.
If $b=0$ and $f[j+1]=\beta\oplus 1$, $\cB$ simulates $\sfhyb{j}{}$.
Therefore, we have
\begin{align}
\Pr[1\la\cB|b=0]&=\Pr[1\la\cB \wedge f[j+1]=\beta\oplus 1|b=0]\\
                &=\Pr[1\la\cB|b=0, f[j+1]=\beta \oplus 1]\cdot\Pr[f[j+1]=\beta \oplus 1]\\
                &=\frac{1}{2}\Pr[\sfhyb{j}{}=1].
\end{align}
If $b=1$ and $f[j+1]=\beta\oplus 1$, $\cB$ simulates $\sfhyb{j+1}{}$.
Similarly, we have 
$
\Pr[1\la\cB|b=1]=\frac{1}{2}\Pr[\sfhyb{j+1}{}=1]
$.
By assumption, $\abs{\Pr[\sfhyb{j}{}=1]-\Pr[\sfhyb{j+1}{}=1]}$ is non-negligible, and therefore
$\abs{\Pr[1\la\cB|b=0]-\Pr[1\la\cB|b=1]}$ is non-negligible, which contradicts the certified everlasting IND-CPA security of $\Sigma$.
\end{proof}

% !TEX root = main.tex
% !TEX spellcheck = en-US

\section{Proof of \cref{thm:garble_ever_security}}\label{sec:parallel}
Let $\widehat{\mathsf{gate}}_1,\widehat{\mathsf{gate}}_2,\cdots ,\widehat{\mathsf{gate}}_q$ be the topology of the gates $\mathsf{gate}_1,\mathsf{gate}_2,\cdots ,\mathsf{gate}_q$ which indicates how gates are connected.
In other words, if $\mathsf{gate}_i=(g,w_a,w_b,w_c)$, then $\widehat{\mathsf{gate}}_i=(\bot,w_a,w_b,w_c)$.
\begin{proof}[Proof of \cref{thm:garble_ever_security}]
First, let us define a simulator $\Sim$ as follows.
\begin{description}
\item[The simulator $\Sim(1^\secp,1^{|C|},C(x),\{L_{i,x_i}\}_{i\in[n]})$:] $ $
\begin{enumerate}
    \item Parse $\{L_{i,x_i}\}_{i\in[n]}\seteq\{\ske.\sk_{i}^{x_i}\}_{i\in[n]}$.
    \item For $i\in[n]$, generate $\ske.\sk_{i}^{x_i\oplus 1}\la \SKE.\keygen(1^{\secp})$. 
    \item For $i\in\{n+1,n+2,\cdots,p\}$ and $\sigma\in\bit$, generate $\ske.\sk_i^{\sigma}\la\SKE.\keygen(1^\secp)$.
    \item For each $i\in[q]$, compute $(\vk_i,\widetilde{g}_i)\la \Sim.\mathsf{GateGrbl}(\widehat{\mathsf{gate}_i}, \{\ske.\sk_a^\sigma,\ske.\sk_b^\sigma,\ske.\sk_c^\sigma\}_{\sigma\in\bit})$, where $\Sim.\mathsf{GateGrbl}$ is described in Fig~\ref{fig:Sim_Garble_circuit} and $\widehat{\mathsf{gate}_i}=(\bot,w_a,w_b,w_c)$.
    \item For each $i\in[m]$, generate $\widetilde{d_i}\seteq \left[\left(\ske.\sk^{0}_{\mathsf{out}_i}, C(x)_i\right),\left(\ske.\sk^{1}_{\mathsf{out}_i}, C(x)_i\oplus 1 \right)\right]$.
    \item Output $\widetilde{C}\seteq (\{\widetilde{g_i}\}_{i\in[q]},\{\widetilde{d_i}\}_{i\in[m]})$ and $\vk\seteq\{\vk_i\}_{i\in[q]}$.
\end{enumerate}
\end{description}

\protocol{Simulation Gate Garbling Circuit $\Sim.\mathsf{GateGrbl}$
}
{The description of $\Sim.\mathsf{GateGrbl}$}
{fig:Sim_Garble_circuit}
{
\begin{description}
\item[Input:] $(\widehat{\mathsf{gate}_i},\{\ske.\sk_a^\sigma,\ske.\sk_b^\sigma,\ske.\sk_c^\sigma\}_{\sigma\in\bit})$.
\item[Output:] $\widetilde{g_i}$ and $\vk_i$.
\end{description}
\begin{enumerate}
\item For each $\sigma_a,\sigma_b\in\bit$, sample $p^{\sigma_{a},\sigma_b}_{a,b}\la\Ks$.
\item Sample $\gamma_i\la\mathsf{S}_4$.
\item For each $\sigma_a,\sigma_b\in\bit$, compute $(\ske.\vk_{a}^{\sigma_a,\sigma_b},\ske.\ct_{a}^{\sigma_a,\sigma_b})\la\SKE.\Enc(\ske.\sk_a^{\sigma_a},p^{\sigma_a,\sigma_b}_{c})$
and 
$(\ske.\vk_{b}^{\sigma_a,\sigma_b},\ske.\ct_b^{\sigma_a,\sigma_b})\la
\SKE.\Enc(\ske.\sk_b^{\sigma_b},p^{\sigma_a,\sigma_b}_{c}\oplus \ske.\sk_c^{0})$.
\item Output $\widetilde{g_i}\seteq \{\ske.\ct_a^{\sigma_a,\sigma_b},\ske.\ct_b^{\sigma_a,\sigma_b}\}_{\sigma_a,\sigma_b\in\bit}$ in permutated order of $\gamma_i$ and 
$\vk_i\seteq\{\ske.\vk_{a}^{\sigma_a,\sigma_b},\ske.\vk_{b}^{\sigma_a,\sigma_b}\}_{\sigma_a,\sigma_b\in\bit}$ 
in permutated order of $\gamma_i$.
\end{enumerate}
}

For each $j\in[q]$, we define a QPT algorithm (a simulator) $\mathsf{InputDep}\Sim_j$ as follows.
\begin{description}
\item[The simulator $\mathsf{InputDep}\Sim_j(1^\secp,C,x,\{L_{i,x_i}\}_{i\in[n]})$:] $ $
\begin{enumerate}
    \item Parse $\{L_{i,x_i}\}_{i\in[n]}=\{\ske.\sk_{i}^{x_i}\}_{i\in[n]}$. 
    \item For $i\in[n]$, generate $\ske.\sk_{i}^{x_i\oplus 1}\la \SKE.\keygen(1^{\secp})$.
    \item For $i\in\{n+1,n+2,\cdots,p\}$ and $\sigma\in\bit$, generate $\ske.\sk_i^{\sigma}\la\SKE.\keygen(1^\secp)$.
    \item
    For $i\in[j]$, compute $(\vk_i,\widetilde{g}_i)\la \mathsf{InputDep.GateGrbl}(\mathsf{gate}_i, \{\ske.\sk_a^\sigma,\ske.\sk_b^\sigma,\ske.\sk_c^\sigma\}_{\sigma\in\bit})$, where $\mathsf{InputDep.GateGrbl}$ is described in Fig.~\ref{fig:Dep_Garble_circuit} and $\mathsf{gate}_i=(g,w_a,w_b,w_c)$
    %\footnote{Remind that $\ell_1, \ell_2,\cdots,\ell_q$ are the gate numbers in the ascending order of level.\mor{?}}.
    \item For each $i\in \{j+1,j+2,\cdots,q\}$, compute $(\vk_i,\widetilde{g}_i)\la \mathsf{GateGrbl}(\mathsf{gate}_i, \{\ske.\sk_a^\sigma,\ske.\sk_b^\sigma,\ske.\sk_c^\sigma\}_{\sigma\in\bit})$, where $\mathsf{GateGrbl}$ is described in Fig~\ref{fig:Garble_circuit} and $\mathsf{gate}_i=(g,w_a,w_b,w_c)$.
    \item For each $i\in[m]$, generate $\widetilde{d_i}\seteq \left[\left(\ske.\sk^{0}_{\mathsf{out}_i}, 0\right),\left(\ske.\sk^{1}_{\mathsf{out}_i}, 1 \right)\right]$.
    \item Output $\widetilde{C}\seteq(\{\widetilde{g_i}\}_{i\in[q]},\{\widetilde{d_i}\}_{i\in[m]})$
    %and $\{L_{i,x_i}\}_{i\in[n]}\seteq \{\ske.\sk_{\mathsf{in}_i}^{x_i}\}_{i\in[n]}$
    and $\vk\seteq\{\vk_i\}_{i\in[q]}$.
\end{enumerate}
\end{description}

\protocol{Input Dependent Gate Garbling Circuit $\mathsf{InputDep.GateGrbl}$
}
{The description of $\mathsf{InputDep.GateGrbl}$}
{fig:Dep_Garble_circuit}
{
\begin{description}
\item[Input:] $\mathsf{gate}_i,\{\ske.\sk_a^\sigma,\ske.\sk_b^\sigma,\ske.\sk_c^\sigma\}_{\sigma\in\bit}$.
\item[Output:] $\widetilde{g_i}$ and $\vk_i$.
\end{description}
\begin{enumerate}
\item For each $\sigma_a,\sigma_b\in\bit$, sample $p^{\sigma_{a},\sigma_b}_{c}\la\Ks$.
\item Sample $\gamma_i\leftarrow \mathsf{S}_4$.
\item For each $\sigma_a,\sigma_b\in\bit$, compute $(\ske.\vk_{a}^{\sigma_a,\sigma_b},\ske.\ct_{a}^{\sigma_a,\sigma_b})\la\SKE.\Enc(\ske.\sk_a^{\sigma_a},p^{\sigma_a,\sigma_b}_{c})$
and 
$(\ske.\vk_{b}^{\sigma_a,\sigma_b},\ske.\ct_b^{\sigma_a,\sigma_b})\la
\SKE.\Enc(\ske.\sk_b^{\sigma_b},p^{\sigma_a,\sigma_b}_{c}\oplus \ske.\sk_c^{v(c)})$.
Here, $v(c)$ is the correct value of the bit going over the wire $c$ during the computation of $C(x)$.
\item Output $\widetilde{g_i}\seteq \{\ske.\ct_a^{\sigma_a,\sigma_b},\ske.\ct_b^{\sigma_a,\sigma_b}\}_{\sigma_a,\sigma_b\in\bit}$ in permutated order of $\gamma_i$ and 
$\vk_i\seteq\{\ske.\vk_{a}^{\sigma_a,\sigma_b},\ske.\vk_{b}^{\sigma_a,\sigma_b}\}_{\sigma_a,\sigma_b\in\bit}$ in permutated order of $\gamma_i$.
\end{enumerate}
}

%$\mathsf{GC}.\Sim(1^\secp,1^C,C(x),\{L_{i,x_i}\}_{i\in[n]})$ as $\mathsf{GC}.\mathsf{InputDep}\Sim_{q+1}(1^\secp,C,x,\{L_{i,x_i}\}_{i\in[n]})$.\mor{dokoga simple?}
For each $j\in\{0,1,\cdots,q\}$, let us define a sequence of hybrid games $\{\mathsf{Hyb}_j\}_{j\in\{0,1,\cdots,q\}}$ against any adversary $\cA\seteq(\cA_1,\cA_2)$, where $\cA_1$ is any QPT adversary and $\cA_2$ is any unbounded adversary.
Note that  
\begin{align}
\mathsf{InputDep}\Sim_0(1^\secp,C,x,\{L_{i,x_i}\}_{i\in[n]})= \Garble(1^\secp,C,\{L_{i,\alpha}\}_{i\in[n],\alpha\in\bit}).  
\end{align}

\begin{description}
\item[The hybrid game $\mathsf{Hyb}_j$:] $ $
\begin{enumerate}
    \item $\cA_1$ sends a circuit $C\in\Cs_n$ and an input $x\in\bit^n$ to the challenger.
    \item The challenger computes $\{L_{i,\alpha}\}_{i\in[n],\alpha\in\bit}\la\Samp(1^\secp)$.
    \item  The challenger computes $(\widetilde{C},\vk)\la\GC.\mathsf{InputDep}\Sim_j(1^\secp,C,x,\{L_{i,x_i}\}_{i\in[n]})$,
    and sends $(\widetilde{C},\{L_{i,x_i}\}_{i\in[n]})$ to $\cA_1$.
    \item At some point, $\cA_1$ sends $\cert$ to the challenger and the internal state to $\cA_2$.
    \item The challenger computes $\Vrfy(\vk,\cert)\ra\top/\bot$. If the output is $\bot$, then the challenger outputs $\bot$ and sends $\bot$ to $\cA_2$.
    Else, the challenger outputs $\top$ and sends $\top$ to $\cA_2$.
    \item $\cA_2$ outputs $b'\in\bit$.
    \item If the challenger outputs $\top$, then the output of the experiment is $b'$. 
    Otherwise, the output of the experiment is $\bot$.
\end{enumerate}
\end{description}

Note that $\mathsf{Hyb}_0$ is identical to $\expb{\Sigma_{\mathsf{cegc}},\cA}{cert}{ever}{select}(1^\secp,0)$ by definition.
Therefore, \cref{thm:garble_ever_security} easily follows from the following \cref{prop:garble_hyb_j,prop:garble_hyb_q} (whose proofs are given later).
\end{proof}

\begin{proposition}\label{prop:garble_hyb_j}
If $\Sigma_{\mathsf{cesk}}$ satisfies the certified everlasting IND-CPA security, 
it holds that
$
\abs{\Pr[\mathsf{Hyb}_{j-1}=1]-\Pr[\mathsf{Hyb}_{j}=1]} \leq \negl(\secp)
$
for all $j\in[q]$.
\end{proposition}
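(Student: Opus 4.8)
The plan is to localize the gap between $\mathsf{Hyb}_{j-1}$ and $\mathsf{Hyb}_{j}$ to the garbling of a single gate and then remove it by a short chain of sub-hybrids, each justified either by the certified everlasting IND-CPA security of $\Sigma_{\mathsf{cesk}}$ (\cref{def:cert_ever_security_cert_ever_ske}) or by an information-theoretic one-time-pad argument. First I would note that $\mathsf{InputDep}\Sim_{j-1}$ and $\mathsf{InputDep}\Sim_{j}$ differ only in how $\mathsf{gate}_j=(g,w_a,w_b,w_c)$ is produced: $\mathsf{Hyb}_{j-1}$ uses $\mathsf{GateGrbl}$ (\cref{fig:Garble_circuit}), whose $b$-ciphertext for index $(\sigma_a,\sigma_b)$ masks the key $\ske.\sk_c^{g(\sigma_a,\sigma_b)}$, while $\mathsf{Hyb}_j$ uses $\mathsf{InputDep.GateGrbl}$ (\cref{fig:Dep_Garble_circuit}), whose $b$-ciphertext masks $\ske.\sk_c^{v(c)}$. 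Since $v(c)=g(v(a),v(b))$, the two garblings already coincide on the active index $(v(a),v(b))$, so the entire gap is carried by the three inactive indices $(\sigma_a,\sigma_b)\neq(v(a),v(b))$.

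Next I would rewrite these three inactive entries of $\mathsf{gate}_j$ one at a time. For an inactive index with $\sigma_b\neq v(b)$ (two of the three indices), the $b$-ciphertext is encrypted under the inactive wire key $\ske.\sk_b^{\sigma_b}$, which never occurs in the adversary's view as a decryptable label; here I would invoke the certified everlasting IND-CPA security of $\Sigma_{\mathsf{cesk}}$ directly to switch the masked value from $\ske.\sk_c^{g(\sigma_a,\sigma_b)}$ to $\ske.\sk_c^{v(c)}$. For the remaining inactive index $(v(a)\oplus 1,v(b))$ the $b$-ciphertext uses the active key $\ske.\sk_b^{v(b)}$, so I cannot touch it directly; instead I would exploit that its one-time pad $p^{v(a)\oplus1,v(b)}_{c}$ is fresh and appears in cleartext only inside the $a$-ciphertext, which is encrypted under the hidden key $\ske.\sk_a^{v(a)\oplus 1}$. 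Thus I would first use $\Sigma_{\mathsf{cesk}}$'s security to replace that $a$-ciphertext by an encryption of $0$, after which the pad no longer appears anywhere and the $b$-ciphertext's plaintext is uniformly random regardless of whether it masks $\ske.\sk_c^{g(\cdot)}$ or $\ske.\sk_c^{v(c)}$, so the two are identically distributed; finally I would restore the original $a$-ciphertext by a second application of $\Sigma_{\mathsf{cesk}}$'s security.

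Each indistinguishability step is realized by a reduction $\cB=(\cB_1,\cB_2)$ against the certified everlasting IND-CPA experiment of $\Sigma_{\mathsf{cesk}}$. The adversary $\cB_1$ embeds the SKE challenge ciphertext in the single gate-$j$ ciphertext being changed and produces every other SKE ciphertext under the same challenge key by calling the encryption oracle of the experiment (this covers the other entries of $\mathsf{gate}_j$ as well as any occurrence of that key caused by fan-out of the wire). Crucially, the deletion and verification of $\Sigma_{\mathsf{cegc}}$ decompose ciphertext-by-ciphertext through $\mathsf{GateDel}$ and $\mathsf{GateVrfy}$ (\cref{fig:Gate_Del,fig:Gate_Verify}): when $\cA_1$ outputs its certificate, $\cB_1$ forwards to the SKE challenger only the single $\SKE.\Delete$ component attached to the embedded challenge ciphertext and checks all remaining components itself using the verification keys returned by the oracle. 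After verification succeeds the SKE challenger hands $\cB_2$ the secret key, but the certified everlasting garbling experiment (\cref{def:ever_sec_ever_garb}) only requires the challenger to send $\top$ to $\cA_2$, so $\cB_2$ can perfectly complete the simulation for the unbounded adversary $\cA_2$ without ever using that key.

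The main obstacle is bookkeeping the hidden-key invariant: to invoke $\Sigma_{\mathsf{cesk}}$'s security at gate $j$ I must be sure that the inactive input-wire keys $\ske.\sk_a^{v(a)\oplus 1}$ and $\ske.\sk_b^{v(b)\oplus 1}$ do not appear as recoverable plaintexts elsewhere in the view. This uses that the gate feeding wire $b$ (resp.\ $a$) sits at a strictly lower level and is therefore among the already input-dependent gates $1,\dots,j-1$, whose $b$-ciphertexts only ever mask the active key $\ske.\sk^{v(\cdot)}$; hence the inactive keys surface only as encryption keys, exactly the form that the encryption oracle and a single challenge can simulate. Assembling the per-entry steps and summing the negligible gaps yields $\abs{\Pr[\mathsf{Hyb}_{j-1}=1]-\Pr[\mathsf{Hyb}_{j}=1]}\leq\negl(\secp)$.
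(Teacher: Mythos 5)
Your proposal is correct, and its core mechanics coincide with the paper's, but you organize the argument quite differently: you never state the paper's key tool, \cref{lem:parallel}. The paper packages all three inactive rows of $\mathsf{gate}_j$ into a single bespoke experiment with two hidden keys $\sk'^0,\sk'^1$ and an encryption oracle handing out fresh XOR sharings, reduces $\mathsf{Hyb}_{j-1}$ versus $\mathsf{Hyb}_j$ to that experiment in one shot, and then proves the lemma by three hybrids, each a single application of certified everlasting IND-CPA (\cref{def:cert_ever_security_cert_ever_ske}). You instead inline per-ciphertext sub-hybrids over the garbling hybrids directly; the shared ingredients match exactly (the invariant that $\ske.\sk_a^{v(a)\oplus1},\ske.\sk_b^{v(b)\oplus1}$ surface only as encryption keys because gates feeding wires $a,b$ lie at strictly lower levels and are already input-dependent, oracle-simulation of fan-out, forwarding only the challenge component of the certificate while self-verifying the rest with oracle-returned verification keys, and the observation that the garbling challenger sends only $\top$ to $\cA_2$ so the revealed key is never needed). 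The genuine divergence is the row $(v(a)\oplus1,v(b))$, whose masked share sits under the active key $\ske.\sk_b^{v(b)}$: you take a three-step detour (replace the $a$-share by an encryption of $0$, argue the $b$-share plaintext is then a perfect one-time pad, restore), costing two CE-IND-CPA invocations plus a statistically identical step, whereas the paper folds the one-time-pad re-randomization into the choice of challenge messages — in the reduction for \cref{prop:Exp_Hyb_1_parallel} the challenge pair is $(z,\,z\oplus z_0\oplus z_1)$ and the sibling share $\Enc(\sk'^1,z\oplus z_0)$ is computed by the reduction, so the $b=1$ branch re-parametrizes to a fresh uniform pad in one application. Your route is more elementary and self-contained; the paper's buys a reusable lemma and fewer reductions per gate (three versus your four plus a perfect step). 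Two pieces of bookkeeping you gloss but should make explicit in a full write-up: the reduction must undo the permutation $\gamma_j$ before routing certificate components to the SKE challenger (as the paper's reduction does), and in each sub-hybrid the already-switched sibling entries of $\mathsf{gate}_j$ encrypted under the current challenge key must themselves be produced via the encryption oracle, since the reduction does not know that key; likewise, when $a$ or $b$ is an input wire the invariant holds simply because the inactive label is never handed to the adversary.
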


\begin{proposition}\label{prop:garble_hyb_q}
$\abs{\Pr[\mathsf{Hyb}_q=1]-\Pr[\expb{\Sigma_{\mathsf{cegc}},\cA}{cert}{ever}{select}(1^\secp,1)=1]} \leq \negl(\secp)$.
\end{proposition}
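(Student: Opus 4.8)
The plan is to show that the two experiments are in fact \emph{perfectly} indistinguishable, so that the quantity in \cref{prop:garble_hyb_q} is literally $0$ (hence trivially bounded by $\negl(\secp)$). The tool will be a distribution-preserving renaming of the wire keys under which the output of $\mathsf{InputDep}\Sim_q$ (used by the challenger in $\mathsf{Hyb}_q$) is mapped \emph{identically} onto the output of $\Sim$ (used by the $b=1$ challenger in $\expb{\Sigma_{\mathsf{cegc}},\cA}{cert}{ever}{select}(1^\secp,1)$). Recall the only two structural differences between $\mathsf{InputDep}\Sim_q$ and $\Sim$: (i) in every gate $\mathsf{gate}_i=(g,w_a,w_b,w_c)$, the circuit $\mathsf{InputDep.GateGrbl}$ embeds the key $\ske.\sk_c^{v(c)}$ of the \emph{correct} wire value $v(c)$ in the computation of $C(x)$, whereas $\Sim.\mathsf{GateGrbl}$ always embeds $\ske.\sk_c^{0}$; and (ii) the output-decoding tables are $\widetilde{d_i}=[(\ske.\sk_{\mathsf{out}_i}^0,0),(\ske.\sk_{\mathsf{out}_i}^1,1)]$ in $\mathsf{InputDep}\Sim_q$ but $\widetilde{d_i}=[(\ske.\sk_{\mathsf{out}_i}^0,C(x)_i),(\ske.\sk_{\mathsf{out}_i}^1,C(x)_i\oplus 1)]$ in $\Sim$.

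I would define the renaming $\pi$ as follows. For every non-input wire $w$ of $C$ (those with index in $\{n+1,\dots,p\}$, including the output wires) both keys $\ske.\sk_w^0,\ske.\sk_w^1$ are sampled independently by $\SKE.\keygen$ in \emph{both} experiments; hence swapping $\ske.\sk_w^0\leftrightarrow\ske.\sk_w^1$ precisely when $v(w)=1$ is a measure-preserving bijection on the sampling randomness. Input wires are left untouched, since the revealed label $\ske.\sk_i^{x_i}=L_{i,x_i}$ must coincide in the two experiments. Under $\pi$, the key $\ske.\sk_c^{v(c)}$ embedded by $\mathsf{InputDep.GateGrbl}$ becomes the key now indexed $0$, so every gate of $\mathsf{InputDep}\Sim_q$ embeds the $0$-key after renaming, matching $\Sim.\mathsf{GateGrbl}$. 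For the decoding tables, $\pi$ sends $\ske.\sk_{\mathsf{out}_i}^{\beta}$ to the key indexed $\beta\oplus v(\mathsf{out}_i)=\beta\oplus C(x)_i$, so the original entry $(\ske.\sk_{\mathsf{out}_i}^\beta,\beta)$ becomes, in the new indexing, the pair associating the new index $\alpha$ with the bit $\alpha\oplus C(x)_i$; this is exactly $\Sim$'s table $[(\ske.\sk_{\mathsf{out}_i}^0,C(x)_i),(\ske.\sk_{\mathsf{out}_i}^1,C(x)_i\oplus1)]$.

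It remains to verify that renaming the keys that occur as \emph{encryption} keys (rather than as embedded payloads) leaves the distribution unchanged. In each gate the four ciphertext pairs are indexed by $(\sigma_a,\sigma_b)\in\bit^2$, are output in the order of a uniformly random permutation $\gamma_i$, and the $a$-ciphertext carries a fresh independent pad $p_c^{\sigma_a,\sigma_b}\la\Ks$ (with the $b$-ciphertext carrying that pad masked by the embedded key). Applying $\pi$ amounts to a fixed relabeling of the four slots $(\sigma_a,\sigma_b)$, which is absorbed by the i.i.d.\ pads together with the random permutation $\gamma_i$; hence the joint law of the garbled gate and its verification key $\vk_i$ is unchanged. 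Since $\pi$ is measure-preserving and carries the garbled circuit, the revealed labels, and the verification key $\vk$ of $\mathsf{Hyb}_q$ onto those of the $b=1$ challenger, the two experiments feed \emph{identically distributed} inputs to the otherwise unchanged process — the interaction with $\cA_1$, the verification of $\cert$, and the unbounded $\cA_2$ — so the final outputs are identically distributed and the two probabilities coincide.

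The step I expect to be the main obstacle is making the decoding-table part of the renaming fully rigorous: one must argue that the \emph{ordering} of the two entries of each $\widetilde{d_i}$ carries no distinguishing information, i.e., that the joint law of (the two listed key-strings, their bits, and which string is recoverable by evaluation) is genuinely invariant under $\pi$. The cleanest way to settle this is to phrase $\widetilde{d_i}$ as an \emph{unordered} pair of $(\text{key},\text{bit})$ entries — equivalently, to push a uniformly random transposition of the two entries into both $\Garble$ and $\Sim$, in the same spirit as $\gamma_i$ does for gates — after which measure-preservation of $\pi$ on each output wire is immediate. Once this is established, the rest is routine bookkeeping: tracking $\pi$ through the quantum garbled gates, the deletion certificate, and the everlasting verification, all of which only relabel freshly sampled, independent keys and therefore preserve distributions exactly.
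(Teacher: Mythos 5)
Your proposal is correct and is essentially the paper's own argument made explicit: the paper's proof of \cref{prop:garble_hyb_q} says exactly that at internal levels the keys $\ske.\sk_c^0,\ske.\sk_c^1$ are ``used completely identically'' (your per-wire swap $\pi$ conditioned on $v(w)$, absorbed by the i.i.d.\ pads and the random permutations $\gamma_i$) and that at the output level the real mapping $\ske.\sk_c^{v(c)}\ra v(c)$ is interchangeable with the programmed mapping $\ske.\sk_c^0\ra C(x)_i$, yielding statistically identical distributions of $\widetilde{C}$. Regarding the obstacle you flag: the paper implicitly treats $\widetilde{d_i}$ as an unordered mapping, and note that your second fix (pushing a random transposition into $\Garble$) is not actually available since the scheme is fixed, but none is needed --- the simulator $\Sim$ is being \emph{constructed} in this very proof, so you may simply define it to list each $\widetilde{d_i}$ sorted by the bit component (i.e., output $[(\ske.\sk_{\mathsf{out}_i}^{C(x)_i},0),(\ske.\sk_{\mathsf{out}_i}^{C(x)_i\oplus 1},1)]$, the same mapping), after which your renaming $\pi$ matches the ordered distributions exactly.
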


\begin{proof}[Proof of \cref{prop:garble_hyb_j}]
For the proof, we use \cref{lem:parallel} whose statement and proof are given in \cref{sec:parallel}.
We construct an adversary $\cB$ that breaks the security experiment of $\expb{\Sigma_{\mathsf{cesk}},\cB}{parallel}{cert}{ever}(\secp,b)$,
which is described in \cref{lem:parallel}, assuming that 
$\abs{\Pr[\mathsf{Hyb}_{j-1}=1]-\Pr[\mathsf{Hyb}_{j}=1]}$ is non-negligible.
This contradicts the certified everlasting IND-CPA security of $\Sigma_{\mathsf{cesk}}$
from \cref{lem:parallel}.
Let us describe how $\cB$ works below.
\begin{enumerate}
    \item $\cB$ receives $C\in \Cs_n$ and $x\in\bit^n$ from $\cA_1$. Let $\mathsf{gate_{j}}=(g,w_{\alpha},w_{\beta},w_{\gamma})$.
    \item 
    The challenger of $\expb{\Sigma_{\mathsf{cesk}},\cB}{parallel}{cert}{ever}(\secp,b)$ generates $\ske.\sk_\alpha^{v(\alpha)\oplus 1}\la\SKE.\keygen(1^\secp)$ and $\ske.\sk_\beta^{v(\beta)\oplus 1}\la\SKE.\keygen(1^\secp)$\footnote{Recall that $v(\alpha)$ is the correct value of the bit going over the wire $\alpha$ during the computation of $C(x)$.}.
    \item 
    For each $i\in[p]\setminus\{\alpha,\beta\}$ and $\sigma\in\bit$, $\cB$ generates $\ske.\sk_i^\sigma\la\SKE.\keygen(1^\secp)$. 
    $\cB$ generates $\ske.\sk_\alpha^{v(\alpha)}\la\SKE.\keygen(1^\secp)$ and $\ske.\sk_{\beta}^{v(\beta)}\la\SKE.\keygen(1^\secp)$.
    $\cB$ sets $\{L_{i,x_i}\}_{i\in[n]}\seteq \{\ske.\sk_{i}^{x_i}\}_{i\in[n]}$.
    \item For each $i\in[j-1]$, $\cB$ computes $(\vk_i,\widetilde{g}_i)\la \mathsf{InputDep.GateGrbl}(\mathsf{gate}_i, \{\ske.\sk_a^\sigma,\ske.\sk_b^\sigma,\ske.\sk_c^\sigma\}_{\sigma\in\bit})$, where $\mathsf{InputDep.GateGrbl}$ is described in Fig~\ref{fig:Dep_Garble_circuit} and $\mathsf{gate}_i=(g,w_a,w_b,w_c)$.
    $\cB$ calls the encryption query of $\expb{\Sigma_{\mathsf{cesk}},\cB}{parallel}{cert}{ever}(\secp,b)$ if it needs to use $\ske.\sk_\alpha^{v(\alpha)\oplus 1}$ or $\ske.\sk_\beta^{v(\beta)\oplus 1}$ to run $(\vk_i,\widetilde{g}_i)\la \mathsf{InputDep.GateGrbl}(\mathsf{gate}_i,\\ \{\ske.\sk_a^\sigma,\ske.\sk_b^\sigma,\ske.\sk_c^\sigma\}_{\sigma\in\bit})$.
    \item 
    $\cB$ samples $p_\gamma^{v(\alpha),v(\beta)}\la \Ks$.
    $\cB$ computes 
    \begin{align}
    &(\ske.\vk_{\alpha}^{v(\alpha),v(\beta)},\ske.\ct_{\alpha}^{v(\alpha),v(\beta)})\la\SKE.\Enc(\ske.\sk_{\alpha}^{v(\alpha)},p^{v(\alpha),v(\beta)}_{\gamma}),\\
    &(\ske.\vk_{\beta}^{v(\alpha),v(\beta)},\ske.\ct_\beta^{v(\alpha),v(\beta)})\la\SKE.\Enc(\ske.\sk_\beta^{v(\beta)},p^{v(\alpha),v(\beta)}_{\gamma}\oplus \ske.\sk_\gamma^{v(\gamma)}).
    \end{align}
    \item $\cB$ sets 
    \begin{align}
    &(x_0,y_0,z_0)\seteq (\ske.\sk_{\gamma}^{g(v(\alpha),v(\beta)\oplus 1)},\ske.\sk_{\gamma}^{g(v(\alpha)\oplus 1,v(\beta))},\ske.\sk_{\gamma}^{g(v(\alpha)\oplus 1,v(\beta)\oplus 1)}),\\
    &(x_1,y_1,z_1)\seteq(\ske.\sk_\gamma^{v(\gamma)},\ske.\sk_\gamma^{v(\gamma)},\ske.\sk_\gamma^{v(\gamma)}),
    \end{align}
    and sends $(\ske.\sk_{\alpha}^{v(\alpha)},\ske.\sk_{\beta}^{v(\beta)},\{x_\sigma,y_\sigma,z_\sigma\}_{\sigma\in\bit})$ to the challenger of $\expb{\Sigma_{\mathsf{cesk}},\cB}{parallel}{cert}{ever}(\secp,b)$.
    \item The challenger samples $(x,y,z)\la \Ks^3$ and $(\ske.\sk_\alpha^{v(\alpha)\oplus 1},\ske.\sk_\beta^{v(\beta)\oplus 1})\la\keygen(1^\secp)$.
    The challenger computes 
    \begin{align}
    &(\ske.\vk_{\alpha}^{v(\alpha),v(\beta)\oplus 1},\ske.\ct_{\alpha}^{v(\alpha),v(\beta)\oplus 1})\la\Enc(\ske.\sk_\alpha^{v(\alpha)},x),\\
    &(\ske.\vk_{\beta}^{v(\alpha),v(\beta)\oplus 1},\ske.\ct_{\beta}^{v(\alpha),v(\beta)\oplus 1})\la\Enc(\ske.\sk_\beta^{v(\beta)\oplus 1},x\oplus x_b),\\
    &(\ske.\vk_{\alpha}^{v(\alpha)\oplus 1,v(\beta)},\ske.\ct_{\alpha}^{v(\alpha)\oplus 1,v(\beta)})\la\Enc(\ske.\sk_{\alpha}^{v(\alpha)\oplus 1},y),\\
    &(\ske.\vk_{\beta}^{v(\alpha)\oplus 1,v(\beta)},\ske.\ct_\beta^{v(\alpha)\oplus 1,v(\beta)})\la\Enc(\ske.\sk_{\beta}^{v(\beta)},y\oplus y_b),\\
    &(\ske.\vk_\alpha^{v(\alpha)\oplus 1,v(\beta)\oplus 1},\ske.\ct_\alpha^{v(\alpha)\oplus 1,v(\beta)\oplus 1})\la \Enc(\ske.\sk_\alpha^{v(\alpha)\oplus 1},z),\\
    &(\ske.\vk_\beta^{v(\alpha)\oplus 1,v(\beta)\oplus 1},\ske.\ct_\beta^{v(\alpha)\oplus 1,v(\beta)\oplus 1})\la \Enc(\ske.\sk_\beta^{v(\beta)\oplus 1},z\oplus z_b),
    \end{align}
    and sends
    \begin{align}
    (\ske.\ct_\alpha^{v(\alpha),v(\beta)\oplus 1},\ske.\ct_{\beta}^{v(\alpha),v(\beta)\oplus 1},\ske.\ct_\alpha^{v(\alpha)\oplus 1,v(\beta)},\ske.\ct_{\beta}^{v(\alpha)\oplus 1,v(\beta)},
    \ske.\ct_{\alpha}^{v(\alpha)\oplus 1,v(\beta)\oplus 1},
    \ske.\ct_{\beta}^{v(\alpha)\oplus 1,v(\beta)\oplus 1})
    \end{align}
    to $\cB$.
    \item $\cB$ samples $\gamma_j\la \mathsf{S}_4$.
    $\cB$ sets $\widetilde{g_{j}}\seteq \{\ske.\ct_\alpha^{\sigma_{\alpha},\sigma_\beta},\ske.\ct_\beta^{\sigma_\alpha,\sigma_\beta}\}_{\sigma_\alpha,\sigma_\beta\in\bit}$ in the permutated order of $\gamma_j$.
    \item For each $i\in\{j+1,j+2,\cdots, q\}$, $\cB$ computes $(\vk_i,\widetilde{g}_i)\la \mathsf{GateGrbl}(\mathsf{gate}_i, \{\ske.\sk_a^\sigma,\ske.\sk_b^\sigma,\ske.\sk_c^\sigma\}_{\sigma\in\bit})$, where $\cB$ calls the encryption query of $\expb{\Sigma_{\mathsf{cesk}},\cB}{parallel}{cert}{ever}(\secp,b)$ if $\cB$ needs to use $\ske.\sk_\alpha^{v(\alpha)\oplus 1}$ or $\ske.\sk_\beta^{v(\beta)\oplus 1}$ to run $(\vk_i,\widetilde{g}_i)\la \mathsf{GateGrbl}(\mathsf{gate}_i, \{\ske.\sk_a^\sigma,\ske.\sk_b^\sigma,\ske.\sk_c^\sigma\}_{\sigma\in\bit})$.
    \item $\cB$ computes $\widetilde{d_i}\seteq [(\ske.\sk^0_{\mathsf{out}_i},0),(\ske.\sk^1_{\mathsf{out}_i},1)]$ for $i\in[m]$,
    sets $\widetilde{C}\seteq (\{\widetilde{g_{i}}\}_{i\in[q]},\{\widetilde{d_i}\}_{i\in[m]})$,
    and sends $(\widetilde{C},\{L_{i,x_i}\}_{i\in[n]})$ to $\cA_1$.
    \item At some point, $\cA_1$ sends $\cert\seteq\{\cert_i\}_{i\in [q]}$ to $\cB$ and the internal state to $\cA_2$, respectively.
    \item 
    $\cB$ re-sorts $\cert_{j}=\{\ske.\cert_{\alpha}^{\sigma_{\alpha},\sigma_{\beta}},\ske.\cert_{\beta}^{\sigma_{\alpha},\sigma_{\beta}}\}_{\sigma_{\alpha},\sigma_{\beta}\in\bit}$ according to $\gamma_j$.
    $\cB$ sends 
    \begin{align}
    (\ske.\cert_{\alpha}^{v(\alpha),v(\beta)\oplus 1},
    \ske.\cert_{\beta}^{v(\alpha),v(\beta)\oplus 1},
    \ske.\cert_\alpha^{v(\alpha)\oplus 1,v(\beta)},
    \ske.\cert_\beta^{v(\alpha)\oplus 1,v(\beta)},
    \ske.\cert_\alpha^{v(\alpha)\oplus 1,v(\beta)\oplus 1},
    \ske.\cert_\beta^{v(\alpha)\oplus 1,v(\beta)\oplus 1})
    \end{align}
    to the challenger of $\expb{\Sigma_{\mathsf{cesk}},\cB}{parallel}{cert}{ever}(\secp,b)$ and receives $\bot$ or $(\ske.\sk_\alpha^{'v(\alpha)\oplus 1},\ske.\sk_\beta^{'v(\beta)\oplus 1})$ from the challenger.
    $\cB$ computes $\SKE.\Vrfy(\ske.\vk_{\alpha}^{v(\alpha),v(\beta)},\ske.\cert_{\alpha}^{v(\alpha),v(\beta)})$
    and $\SKE.\Vrfy(\ske.\vk_{\beta}^{v(\alpha),v(\beta)},\ske.\cert_{\beta}^{v(\alpha),v(\beta)})$.
    $\cB$ computes $\mathsf{GateVrfy}(\vk_i,\cert_i)$ for each $i\in\{1,2,\cdots,j-1,j+1,j+2,\cdots,q\}$,
    where $\mathsf{GateVrfy}$ is described in Fig\ref{fig:Gate_Verify}.
    If $\cB$ receives $(\ske.\sk_\alpha^{'v(\alpha)\oplus 1},\ske.\sk_\beta^{'v(\beta)\oplus 1})$ from the challenger,
    $\top\la\SKE.\Vrfy(\ske.\vk_{\alpha}^{v(\alpha),v(\beta)},\ske.\cert_{\alpha}^{v(\alpha),v(\beta)})$,
    $\top\la\SKE.\Vrfy(\ske.\vk_{\beta}^{v(\alpha),v(\beta)},\ske.\cert_{\beta}^{v(\alpha),v(\beta)})$,
    and $\top\la\mathsf{GateVrfy}(\cert_i,\vk_i)$ for all 
    $i\in\{1,2,\cdots,j-1,j+1,j+2,\cdots,q\}$, then $\cB$ sends $\top$ to $\cA_2$.
    Otherwise, $\cB$ sends $\bot$ to $\cA_2$, and aborts.
    \item $\cB$ outputs the output of $\cA_2$.
\end{enumerate}
It is clear that $\Pr[1\la\cB|b=0]=\Pr[\sfhyb{j-1}{}=1]$ and $\Pr[1\la\cB|b=1]=\Pr[\sfhyb{j}{}=1]$.
Therefore, if for an adversary $\cA$,
$
\abs{\Pr[\mathsf{Hyb}_{j-1}=1]-\Pr[\mathsf{Hyb}_{j}=1]}
$
is non-negligible, then
$
\allowbreak\left|\Pr[\expb{\Sigma_{\mathsf{cesk}},\cB}{parallel}{cert}{ever}(\secp,0)=1]-\Pr[\expb{\Sigma_{\mathsf{cesk}},\cB}{parallel}{cert}{ever}\allowbreak(\secp,1)=1]\right|
$
is non-negligible.
From \cref{lem:parallel}, it contradicts the certified everlasting IND-CPA security of $\Sigma_{\mathsf{cesk}}$ , which completes the proof.
\end{proof}

\begin{proof}[Proof of \cref{prop:garble_hyb_q}]
To show \cref{prop:garble_hyb_q}, it is sufficient to prove that the probability distribution of $\widetilde{C}$ in $\expb{\Sigma_{\mathsf{cegc}},\cA}{cert}{ever}{select}(1^\secp,1)$ is statistically identical to
that of $\widetilde{C}$ in $\mathsf{Hyb}_{q}$.

First, let us remind the difference between $\mathsf{Hyb}_{q}$ and $\expb{\Sigma_{\mathsf{cegc}},\cA}{cert}{ever}{select}(1^\secp,1)$. 
In both experiments , $\widetilde{C}$ consists of $\{\widetilde{g_i}\}_{i\in[q]}$ and $\{\widetilde{d_i}\}_{i\in[m]}$.
On the other hand the contents of $\{\widetilde{g_i}\}_{i\in[q]}$ and $\{\widetilde{d_i}\}_{i\in[m]}$ are different in each experiments.
In $\sfhyb{q}{}$,
$\widetilde{g_i}$ consists of $(\ske.\ct_a^{\sigma_a,\sigma_b},\ske.\ct_{b}^{\sigma_a,\sigma_b})$
where 
\begin{align}
&(\ske.\vk_{a}^{\sigma_a,\sigma_b},\ske.\ct_{a}^{\sigma_a,\sigma_b})\la\SKE.\Enc(\ske.\sk_a^{\sigma_a},p^{\sigma_a,\sigma_b}_{c}),\\
&(\ske.\vk_{b}^{\sigma_a,\sigma_b},\ske.\ct_b^{\sigma_a,\sigma_b})\la\SKE.\Enc(\ske.\sk_b^{\sigma_b},p^{\sigma_a,\sigma_b}_{c}\oplus \ske.\sk_c^{v(c)}),
\end{align}
and
$\widetilde{d_i}$ is
\begin{align}
    [(\ske.\sk_{\mathsf{out_i}}^0, 0),(\ske.\sk_{\mathsf{out_i}}^1,1)].
\end{align}

In $\expb{\Sigma_{\mathsf{cegc}},\cA}{cert}{ever}{select}(1^\secp,1)$, 
$\widetilde{g_i}$ consists of $(\ske.\ct_a^{\sigma_a,\sigma_b},\ske.\ct_{b}^{\sigma_a,\sigma_b})$
where
\begin{align}
&(\ske.\vk_{a}^{\sigma_a,\sigma_b},\ske.\ct_{a}^{\sigma_a,\sigma_b})\la\SKE.\Enc(\ske.\sk_a^{\sigma_a},p^{\sigma_a,\sigma_b}_{c}),\\
&(\ske.\vk_{b}^{\sigma_a,\sigma_b},\ske.\ct_b^{\sigma_a,\sigma_b})\la\SKE.\Enc(\ske.\sk_b^{\sigma_b},p^{\sigma_a,\sigma_b}_{c}\oplus \ske.\sk_c^{0}),
\end{align}
and
$\widetilde{d_i}$ is 
\begin{align}
[(\ske.\sk_{\mathsf{out_i}}^0, C(x)_i),(\ske.\sk_{\mathsf{out_i}}^1, C(x)_i\oplus 1)].
\end{align}

The resulting distribution of $(\{\widetilde{g_i}\}_{i\in[q]},\{\widetilde{d_i}\}_{i\in[m]})$ in $\mathsf{Hyb}_{q}$ is statistically identical to
the resulting distribution of $(\{\widetilde{g_i}\}_{i\in[q]},\{\widetilde{d_i}\}_{i\in[m]})$ in $\expb{\Sigma_{\mathsf{cegc}},\cA}{cert}{ever}{select}(1^\secp,1)$.
This is because, at any level that is not output, the keys $\ske.\sk_c^0,\ske.\sk_c^1$ are used completely identically in the subsequent level so there is no difference
between always encrypting $\ske.\sk_c^{v(c)}$ and $\ske.\sk_c^0$.
At the output level, there is no difference between encrypting $\ske.\sk_c^{v(c)}$ and giving the real mapping $\ske.\sk_c^{v(c)}\ra v(c)$ 
or encrypting $\ske.\sk_c^0$ and giving the programming mapping $\ske.\sk_c^0\ra C(x)_i$, which completes the proof.
\end{proof}

We use the following lemma for the proof of \cref{prop:garble_hyb_j}.
The proof is shown with the standard hybrid argument. It is also easy to see that a similar lemma holds for IND-CPA security.
\begin{lemma}\label{lem:parallel}
Let $\Sigma\seteq(\keygen,\Enc,\Dec,\Delete,\Vrfy)$ be a certified everlasting SKE scheme.
Let us consider the following security experiment 
$\expb{\Sigma,\cA}{parallel}{cert}{ever}(\secp,b)$ against $\cA$ consisting of any QPT adversary $\cA_1$ and any unbounded adversary $\cA_2$.
\begin{enumerate}
    \item The challenger generates $(\sk'^{0},\sk'^1)\la\keygen(1^\secp)$.
    \item $\cA_1$ can call encryption queries. More formally, it can do the followings: $\cA_1$ chooses $\beta\in\bit$, $\sk\in\mathcal{SK}$ and $m\in\Ms$.
    $\cA_1$ sends $(\beta,\sk,m)$ to the challenger. 
    \begin{itemize}
    \item 
     If $\beta=0$, the challenger generates $m^*\la\Ms$, computes $(\vk^{0}_{m},\ct^{0}_{m})\la\Enc(\sk'^{0},m^*)$ and $(\vk^{1}_m,\ct^{1}_m)\la\Enc(\sk,m\oplus m^*)$, and sends $\{\vk^\sigma_m,\ct^\sigma_m\}_{\sigma\in\bit}$ to $\cA_1$.
     \item If $\beta=1$, the challenger generates $m^*\la\Ms$, computes $(\vk^{1}_{m},\ct^{1}_{m})\la\Enc(\sk'^{1},m\oplus m^*)$ and
     $(\vk^{0}_{m},\ct^{0}_{m})\la\Enc(\sk,m^*)$, and sends $\{\vk^\sigma_m,\ct^\sigma_m\}_{\sigma\in\bit}$ to $\cA_1$.
    \end{itemize}
    $\cA_1$ can repeat this process polynomially many times.
    \item $\cA_1$ generates $(\sk^0,\sk^1)\la\keygen(1^\secp)$ and chooses two triples of messages $(x_0,y_0,z_0)\in\Ms^3$ and $(x_1,y_1,z_1)\in\Ms^3$, and sends $(\sk^0,\sk^1,\{x_\sigma,y_\sigma,z_\sigma\}_{\sigma\in\bit})$ to the challenger.
    \item The challenger generates $(x,y,z)\la\Ms^3$.
    The challenger computes
    \begin{align}
    &(\vk_{x}^0,\ct_{x}^0)\la\Enc(\sk^0,x),\,\,\, (\vk_{x}^1,\ct_{x}^1)\la\Enc(\sk'^1,x\oplus x_b)\\
    &(\vk_{y}^0,\ct_{y}^0)\la\Enc(\sk'^0,y),\,\,\, (\vk_y^1,\ct_y^1)\la\Enc(\sk^1,y\oplus y_b)\\
    &(\vk_z^0,\ct_z^0)\la \Enc(\sk'^0,z), \,\,\, (\vk_z^1,\ct_z^1)\la \Enc(\sk'^1,z\oplus z_b)
    \end{align}
    and sends $\{\ct_{x}^\sigma,\ct_y^\sigma,\ct_z^\sigma\}_{\sigma\in\bit}$ to $\cA_1$.
    \item $\cA_1$ can call encryption queries. More formally, it can do the followings:
    $\cA_1$ chooses $\beta\in\bit$, $\sk\in\mathcal{SK}$ and $m\in\Ms$.
    $\cA_1$ sends $(\beta,\sk,m)$ to the challenger. 
    \begin{itemize}
    \item 
     If $\beta=0$, the challenger generates $m^*\la\Ms$, computes $(\vk^{0}_{m},\ct^{0}_{m})\la\Enc(\sk'^{0},m^*)$ and $(\vk^{1}_m,\ct^{1}_m)\la\Enc(\sk,m\oplus m^*)$, and sends $\{\vk^\sigma_m,\ct^\sigma_m\}_{\sigma\in\bit}$ to $\cA_1$.
     \item If $\beta=1$, the challenger generates $m^*\la\Ms$, computes $(\vk^{1}_{m},\ct^{1}_{m})\la\Enc(\sk'^{1},m\oplus m^*)$ and
     $(\vk^{0}_{m},\ct^{0}_{m})\la\Enc(\sk,m^*)$, and sends $\{\vk^\sigma_m,\ct^\sigma_m\}_{\sigma\in\bit}$ to $\cA_1$.
    \end{itemize}
    $\cA_1$ can repeat this process polynomially many times.
    \item $\cA_1$ sends $\{\cert_x^\sigma,\cert_y^\sigma,\cert_z^\sigma\}_{\sigma\in\bit}$ to the challenger, and sends the internal state to $\cA_2$.
    \item The challenger computes $\Vrfy(\vk_x^\sigma,\cert_x^\sigma)$, $\Vrfy(\vk_y^\sigma,\cert_y^\sigma)$ and $\Vrfy(\vk_z^\sigma,\cert_z^\sigma)$ for each $\sigma\in\bit$. If all results are $\top$, then the challenger outputs $\top$, and sends $\{\sk'^\sigma\}_{\sigma\in\{0,1\}}$ to $\cA_2$.
    Otherwise, the challenger outputs $\bot$, and sends $\bot$ to $\cA_2$.
    \item $\cA_2$ outputs $b'\in\bit$.
    \item If the challenger outputs $\top$, then the output of the experiment is $b'$. Otherwise, the output of the experiment is $\bot$.
\end{enumerate}
If the $\Sigma$ satisfies the certified everlasting IND-CPA security,
\begin{align}
\advc{\Sigma,\cA}{parallel}{cert}{ever}(\secp)
\seteq \abs{\Pr[ \expb{\Sigma,\cA}{parallel}{cert}{ever}(\secp, 0)=1] - \Pr[ \expb{\Sigma,\cA}{parallel}{cert}{ever}(\secp, 1)=1] }\leq \negl(\secp)
\end{align}
for any QPT adversary $\cA_1$ and any unbounded adversary $\cA_2$.
\end{lemma}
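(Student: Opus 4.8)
The plan is to prove Lemma~\ref{lem:parallel} by a standard hybrid argument that walks from $\expb{\Sigma,\cA}{parallel}{cert}{ever}(\secp,0)$ to $\expb{\Sigma,\cA}{parallel}{cert}{ever}(\secp,1)$ through three intermediate games, changing the challenge triple one coordinate at a time, and reducing each single step to the certified everlasting IND-CPA security of $\Sigma$ (\cref{def:cert_ever_security_cert_ever_ske}). Concretely, I would define $\mathsf{H}_0$ to be the experiment with $b=0$ (all three shares formed from $x_0,y_0,z_0$), then $\mathsf{H}_1$ switches the $x$-coordinate to $x_1$, $\mathsf{H}_2$ additionally switches $y$ to $y_1$, and $\mathsf{H}_3$ switches $z$ to $z_1$, so that $\mathsf{H}_3$ is exactly the experiment with $b=1$. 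The point of each step is that one share of the relevant coordinate is encrypted under a key that the adversary never learns before issuing a valid certificate, so the masked value stays hidden even once the special keys $\sk'^0,\sk'^1$ are handed to the unbounded $\cA_2$ — which is precisely the guarantee of certified everlasting IND-CPA security.

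For the $x$-step ($\mathsf{H}_0\to\mathsf{H}_1$) I would build a reduction $\cB$ that plays $\expd{\Sigma,\cB}{cert}{ever}{ind}{cpa}$ with the challenge key being $\sk'^1$, since the $b$-dependent share $x\oplus x_b$ sits under $\sk'^1$ while the other share $x$ sits under the adversary-generated $\sk^0$ that $\cB$ holds in the clear. Here $\cB$ samples $x$ itself, submits the challenge pair $(x\oplus x_0,\,x\oplus x_1)$ to receive $\ct_x^1$, produces $\ct_x^0=\Enc(\sk^0,x)$ on its own, and obtains every \emph{other} ciphertext under $\sk'^1$ (the $\beta=1$ encryption-query answers and $\ct_z^1$) through the game's encryption oracle, keeping the accompanying verification keys so it can verify those certificates locally. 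The $z$-step is symmetric: challenge key $\sk'^1$ again, with $\ct_z^1$ as the single challenge ciphertext and $\ct_x^1$ (now carrying $x_1$) routed through the oracle. When $\cA_1$ submits $\{\cert_x^\sigma,\cert_y^\sigma,\cert_z^\sigma\}_{\sigma}$, $\cB$ forwards only the certificate of its challenge ciphertext to the IND-CPA challenger to recover $\sk'^1$, verifies the remaining five certificates itself, and outputs $\top$ together with $(\sk'^0,\sk'^1)$ to $\cA_2$ exactly when all six verify.

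The $y$-step ($\mathsf{H}_1\to\mathsf{H}_2$) is the one requiring a small change of variables, because here the $b$-dependent share $y\oplus y_b$ lies under the adversary's own key $\sk^1$, so hiding $y_b$ must instead come from the share $y$ encrypted under the special key $\sk'^0$. I would set $w\seteq y\oplus y_b$, rewrite $\ct_y^0=\Enc(\sk'^0,w\oplus y_b)$ and $\ct_y^1=\Enc(\sk^1,w)$, and observe that in both hybrids $\ct_y^1$ is an encryption of a fresh uniform $w$, while $\ct_y^0$ encrypts $w\oplus y_0$ in $\mathsf{H}_1$ and $w\oplus y_1$ in $\mathsf{H}_2$; thus $\cB$ samples $w$, forms $\ct_y^1$ with the known $\sk^1$, and submits $(w\oplus y_0,\,w\oplus y_1)$ as its IND-CPA challenge under challenge key $\sk'^0$. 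I would then note that across all three steps the distributions of the oracle-produced and self-produced ciphertexts match the parallel experiment verbatim, so $\Pr[1\la\cB\mid b'=0]=\Pr[\mathsf{H}_{k}=1]$ and $\Pr[1\la\cB\mid b'=1]=\Pr[\mathsf{H}_{k+1}=1]$ for each step, and summing the three negligible gaps via the triangle inequality yields the claim.

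The main obstacle I expect is bookkeeping rather than conceptual: the single challenge key is reused across one challenge ciphertext, many encryption-oracle ciphertexts, and the eventual key-release, so I must verify that (i) every ciphertext under the challenge key other than the designated challenge one can genuinely be answered by the oracle with its verification key retained, (ii) the one-time-pad masks ($x$, $w$, $z$) are sampled by $\cB$ so that the two shares of each coordinate stay consistent, and (iii) the certificate-triggered release of $\sk'^1$ (or $\sk'^0$) in the reduction is synchronized with the ``all six certificates valid'' condition of the parallel game — since whenever some certificate fails both games output $\bot\neq1$, the failing branches contribute equally and do not affect the advantage. Once these routing and synchronization checks are in place the reduction is routine, and the analogous IND-CPA statement follows by the same hybrids with the certificate-and-key-release steps removed.
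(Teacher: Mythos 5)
Your proposal is correct and takes essentially the same approach as the paper: the paper likewise interpolates through two intermediate hybrids that change the encrypted triple one coordinate at a time (in the order $z$, then $y$, then $x$) and reduces each step to certified everlasting IND-CPA security, routing all other ciphertexts under the challenge key through encryption queries (with their verification keys retained) and using exactly your change-of-variables for the $y$-coordinate, where the $b$-dependent share sits under the adversary-known key $\sk^1$. The only cosmetic differences are the hybrid ordering and that the paper embeds the $z$-step challenge in the $\sk'^{0}$ share with message pair $(z,\, z\oplus z_0\oplus z_1)$ rather than in the $\sk'^{1}$ share as you do, which is equivalent via the same uniformity-of-mask argument you invoke.
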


\begin{proof}[Proof of \cref{lem:parallel}]
We define the following hybrid experiment.
\begin{description}
\item[$\sfhyb{1}{}$:] This is identical to $\expc{\Sigma,\cA}{parallel}{cert}{ever}(\secp, 0)$ except that the challenger encrypts $(x_0,y_0,z_1)$ instead of encrypting $(x_0,y_0,z_0)$.
\item[$\sfhyb{2}{}$:] This is identical to $\sfhyb{1}{}$ except that the challenger encrypts $(x_0,y_1,z_1)$ instead of encrypting $(x_0,y_0,z_1)$.
\end{description}

\cref{lem:parallel} easily follows from the following \cref{prop:Exp_Hyb_1_parallel,prop:Hyb_1_Hyb_2_parallel,prop:Hyb_2_Hyb_3_parallel} (whose proof is given later.).
\end{proof}

\begin{proposition}\label{prop:Exp_Hyb_1_parallel}
If $\Sigma$ is certified everlasting IND-CPA secure, it holds that
$
\abs{\Pr[\expc{\Sigma,\cA}{parallel}{cert}{ever}(\secp, 0)=1]-\Pr[\sfhyb{1}{}=1]}\leq \negl(\lambda).
$
\end{proposition}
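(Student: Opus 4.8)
The plan is to reduce the claim to the certified everlasting IND-CPA security of $\Sigma$ with respect to the key $\sk'^1$. The only difference between $\expc{\Sigma,\cA}{parallel}{cert}{ever}(\secp,0)$ and $\sfhyb{1}{}$ lies in the plaintext of the single challenge ciphertext $\ct_z^1=\Enc(\sk'^1,z\oplus z_b)$: it is $z\oplus z_0$ in the former and $z\oplus z_1$ in the latter, while every other ciphertext (and in particular $\ct_z^0=\Enc(\sk'^0,z)$, which reuses the same $z$) is generated identically. Hence a reduction $\cB=(\cB_1,\cB_2)$ against $\expd{\Sigma,\cB}{cert}{ever}{ind}{cpa}(\secp,\cdot)$ can embed $\ct_z^1$ as the IND-CPA challenge.

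Concretely, I would have $\cB_1$ sample $\sk'^0\la\keygen(1^\secp)$ by itself, pick a uniform $z$, and simulate the parallel experiment for $\cA_1$ as follows. All ciphertexts that are encryptions under the adversary-supplied keys $\sk,\sk^0,\sk^1$ and under the self-generated $\sk'^0$ are computed directly by $\cB_1$, which therefore also holds their verification keys. Every encryption under $\sk'^1$, namely those produced in the query phases with $\beta=1$ and the challenge component $\ct_x^1=\Enc(\sk'^1,x\oplus x_0)$, is obtained from the encryption oracle of the IND-CPA game, so $\cB_1$ also receives the corresponding $\vk$'s. For the component $\ct_z^1$, $\cB_1$ submits the challenge pair $(z\oplus z_0,\,z\oplus z_1)$; the IND-CPA challenger returns $\Enc(\sk'^1,z\oplus z_{b'})$, which is exactly $\ct_z^1$ for the relevant hidden bit. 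Since $\ct_z^0=\Enc(\sk'^0,z)$ uses the same $z$ and is computed directly, the joint distribution of $(\ct_z^0,\ct_z^1)$ is faithfully reproduced.

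When $\cA_1$ outputs its six certificates $\{\cert_x^\sigma,\cert_y^\sigma,\cert_z^\sigma\}_{\sigma\in\bit}$ and its internal state, $\cB_1$ forwards $\cert_z^1$ to the IND-CPA challenger as the certificate for its challenge, verifies the remaining five certificates itself (it holds all the required $\vk$'s), and passes the outcome of these five checks together with $\cA_1$'s state to $\cB_2$. If the IND-CPA verification of $\cert_z^1$ succeeds, $\cB_2$ receives $\sk'^1$; combining it with the self-generated $\sk'^0$, it hands $\{\sk'^0,\sk'^1\}$ to $\cA_2$ and outputs $\cA_2$'s guess provided the other five checks also passed, and outputs $0$ otherwise. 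This makes the IND-CPA game output $1$ exactly when all six certificates verify and $\cA_2$ outputs $1$, matching the parallel experiment's output convention; when $\cert_z^1$ fails, both games output $\bot\neq 1$, so $\cB_2$'s behaviour is irrelevant in that branch. Consequently $\Pr[\cB\text{ outputs }1\mid b'=0]=\Pr[\expc{\Sigma,\cA}{parallel}{cert}{ever}(\secp,0)=1]$ and $\Pr[\cB\text{ outputs }1\mid b'=1]=\Pr[\sfhyb{1}{}=1]$.

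The simulation is perfect, so the only real work is bookkeeping: confirming that every $\sk'^1$-encryption in both query phases can be routed through the oracle, that all verification keys for the five non-challenge certificates are available to $\cB$, and, which is the main subtlety, that the $\bot$/bit output conventions of the two games are reconciled by letting $\cB_1$ perform the five side checks and having $\cB_2$ abort to $0$ when they fail. Given this, certified everlasting IND-CPA security of $\Sigma$ bounds $\abs{\Pr[\cB\text{ outputs }1\mid b'=0]-\Pr[\cB\text{ outputs }1\mid b'=1]}$ by a negligible function, which is exactly the claimed bound on $\abs{\Pr[\expc{\Sigma,\cA}{parallel}{cert}{ever}(\secp,0)=1]-\Pr[\sfhyb{1}{}=1]}$.
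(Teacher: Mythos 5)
Your reduction is correct, and it follows the same overall template as the paper's proof: simulate the parallel experiment for $\cA_1$, route all encryptions under one designated challenger key through the certified everlasting IND-CPA game, submit a single message pair as the challenge, forward exactly one deletion certificate to the IND-CPA challenger, verify the remaining five certificates locally, and fold those side checks into the final output so that both games output $1$ precisely when all six certificates verify and $\cA_2$ guesses $1$. Your handling of the output conventions (releasing $\{\sk'^0,\sk'^1\}$ to $\cA_2$ whenever the challenge certificate passes, then suppressing the guess if a side check failed) matches the paper's reduction exactly, and is sound because the discrepant branch contributes to neither game's probability of outputting $1$.

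The one genuine difference is \emph{where} you embed the challenge. You target $\sk'^1$ and embed into $\ct_z^1$ with the direct message pair $(z\oplus z_0,\, z\oplus z_1)$, computing $\ct_z^0=\Enc(\sk'^0,z)$ yourself; this gives a perfect simulation with no distributional argument, since you know $z$ and can encode both candidate plaintexts explicitly. The paper instead targets $\sk'^0$ and embeds into $\ct_z^0$ with the pair $(z,\, z\oplus z_0\oplus z_1)$, computing $\ct_z^1=\Enc(\sk'^1, z\oplus z_0)$ itself; this requires the extra observation that, since $z$ is uniform, the plaintext pairs $(z, z\oplus z_1)$ and $(z\oplus z_0\oplus z_1,\, z\oplus z_0)$ are identically distributed. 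The two choices are symmetric in their oracle bookkeeping (you must route the $\beta=1$ query branch and $\ct_x^1$ through the oracle; the paper routes the $\beta=0$ branch and $\ct_y^0$), and both rely on the encryption queries of \cref{def:cert_ever_security_cert_ever_ske} returning verification keys so that the five non-challenge certificates can be checked locally. Your embedding buys a marginally cleaner argument by eliminating the change-of-variables step; the paper's buys nothing extra here beyond consistency with its neighboring hybrid proofs.
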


\begin{proposition}\label{prop:Hyb_1_Hyb_2_parallel}
If $\Sigma$ is certified everlasting IND-CPA secure, it holds that
$
\abs{\Pr[\sfhyb{1}{}=1]-\Pr[\sfhyb{2}{}=1]}\leq \negl(\lambda).
$
\end{proposition}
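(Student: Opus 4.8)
The plan is to reduce the indistinguishability of $\sfhyb{1}{}$ and $\sfhyb{2}{}$ to the certified everlasting IND-CPA security of $\Sigma$ (\cref{def:cert_ever_security_cert_ever_ske}), letting the external challenge secret key play the role of $\sk'^0$. The only difference between the two hybrids lies in the $y$-ciphertexts: both set $\ct_y^0\la\Enc(\sk'^0,y)$, but $\sfhyb{1}{}$ uses $\ct_y^1\la\Enc(\sk^1,y\oplus y_0)$ while $\sfhyb{2}{}$ uses $\ct_y^1\la\Enc(\sk^1,y\oplus y_1)$. The obstacle is that $\sk^1$ is generated by $\cA_1$ itself, so we cannot appeal to secrecy under $\sk^1$; the crucial idea is to move the difference onto the ciphertext encrypted under the \emph{challenger's} key $\sk'^0$, which is the one that must be deleted.

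First I would re-randomize. Since $y\la\Ms$ is uniform and independent, substituting $y'\seteq y\oplus y_b$ shows that $(\Enc(\sk'^0,y),\Enc(\sk^1,y\oplus y_b))$ is distributed identically to $(\Enc(\sk'^0,y'\oplus y_b),\Enc(\sk^1,y'))$ for a fresh uniform $y'$. After this rewriting the message encrypted under the adversary's key $\sk^1$ is $y'$, which is the same in both hybrids, and the sole $b$-dependence sits in $\ct_y^0=\Enc(\sk'^0,y'\oplus y_b)$, i.e. in the ciphertext governed by the deleted key.

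Then I would build $\cB=(\cB_1,\cB_2)$. $\cB_1$ generates $\sk'^1$ itself and answers every encryption query of $\cA_1$: for the shares using $\sk'^1$ or an adversary-supplied key it encrypts directly, and for the shares using $\sk'^0$ it calls its own IND-CPA encryption oracle. On the challenge, $\cB_1$ hardwires the $x$-component to $x_0$ and the $z$-component to $z_1$ (as in both hybrids), obtaining $\ct_z^0\la\Enc(\sk'^0,z)$ from the oracle (so that it also holds $\vk_z^0$); it samples $y'$, sets $\ct_y^1\la\Enc(\sk^1,y')$ directly, and submits the pair $(y'\oplus y_0,\,y'\oplus y_1)$ to the IND-CPA challenger to obtain $\ct_y^0$. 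It forwards all challenge ciphertexts to $\cA_1$. When $\cA_1$ produces its certificates, $\cB_1$ forwards $\cert_y^0$ to the IND-CPA challenger and hands its state, together with the verification keys it generated, to $\cB_2$. If the IND-CPA challenge bit is $0$ then $\ct_y^0=\Enc(\sk'^0,y'\oplus y_0)$, reproducing $\sfhyb{1}{}$; if it is $1$ it reproduces $\sfhyb{2}{}$.

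The delicate step, and the one I would write out carefully, is the acceptance/output bookkeeping. The IND-CPA challenger verifies only $\cert_y^0$ against the $\vk_y^0$ it retains and, if valid, hands $\sk'^0$ to $\cB_2$; meanwhile $\cB_2$ verifies the remaining certificates $\cert_x^\sigma,\cert_y^1,\cert_z^\sigma$ locally with the keys it holds. $\cB_2$ runs $\cA_2$ on $\{\sk'^\sigma\}_\sigma$ exactly when $\cert_y^0$ is valid and all other certificates are valid, and it outputs the bit $b''\seteq[\text{all other certificates valid}]\wedge[\cA_2\text{ outputs }1]$, and $0$ otherwise. Because the IND-CPA experiment reports $b''$ precisely when $\cert_y^0$ is valid and $\bot$ otherwise, a short case analysis gives that the experiment outputs $1$ iff \emph{all} certificates verify and $\cA_2$ outputs $1$, which is exactly the event that the simulated parallel experiment outputs $1$; note that in this event $\cA_2$ is in fact run on the full key set $\{\sk'^\sigma\}_\sigma$, so its behavior is faithfully reproduced. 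Hence $\cB$'s IND-CPA experiment outputs $1$ with probability $\Pr[\sfhyb{1}{}=1]$ under challenge bit $0$ and $\Pr[\sfhyb{2}{}=1]$ under challenge bit $1$, so $\abs{\Pr[\sfhyb{1}{}=1]-\Pr[\sfhyb{2}{}=1]}=\advd{\Sigma,\cB}{cert}{ever}{ind}{cpa}(\secp)\le\negl(\secp)$. This is the exact analogue of the proof of \cref{prop:Exp_Hyb_1_parallel}, with the re-randomization applied to the $y$-component instead of the $z$-component.
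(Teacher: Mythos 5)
Your proposal is correct and takes essentially the same route as the paper, which skips this proof as being ``very similar'' to that of \cref{prop:Exp_Hyb_1_parallel}: you reduce to certified everlasting IND-CPA with the external challenge key playing $\sk'^{0}$, compute $\ct_y^1$ locally under the adversary-supplied $\sk^1$, and your upfront substitution $y'\seteq y\oplus y_b$ with challenge pair $(y'\oplus y_0,\,y'\oplus y_1)$ is exactly the re-randomization the paper performs in the analysis via $m_0\seteq z$, $m_1\seteq z\oplus z_0\oplus z_1$ (observing that $(y,y\oplus y_0)$ and $(y\oplus y_0\oplus y_1,\,y\oplus y_0)$ are identically distributed). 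Your certificate-forwarding and output bookkeeping likewise mirror the paper's reduction, including simulating the auxiliary $x$- and $z$-components with $x_0$, $z_1$ hardwired and obtaining $\ct_z^0$ (with its verification key) through an encryption query.
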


\begin{proposition}\label{prop:Hyb_2_Hyb_3_parallel}
If $\Sigma$ is certified everlasting IND-CPA secure, it holds that
$
\abs{\Pr[\sfhyb{2}{}=1]-\Pr[\expc{\Sigma,\cA}{parallel}{cert}{ever}(\secp, 1)=1]}\leq \negl(\lambda).
$
\end{proposition}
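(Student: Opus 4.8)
The plan is to reduce this final transition directly to the certified everlasting IND-CPA security of $\Sigma$ (\cref{def:cert_ever_security_cert_ever_ske}) under the challenger's key $\sk'^1$. First I would pin down the unique difference between the two experiments: passing from the shares $(x_0,y_1,z_1)$ to $(x_1,y_1,z_1)$ changes only the $x$-component, and the challenge bit enters solely through $\ct_x^1\la\Enc(\sk'^1,x\oplus x_b)$. Every other challenge ciphertext is either independent of $b$ (namely $\ct_x^0,\ct_y^0,\ct_z^0$) or already fixed to the $\sfhyb{2}{}$ value ($\ct_y^1=\Enc(\sk^1,y\oplus y_1)$ and $\ct_z^1=\Enc(\sk'^1,z\oplus z_1)$), hence identical across the two experiments. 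This isolates a single ciphertext under $\sk'^1$ to carry the IND-CPA challenge.

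Next I would build $\cB=(\cB_1,\cB_2)$ playing the parallel challenger toward $\cA=(\cA_1,\cA_2)$. $\cB_1$ samples $\sk'^0$ itself and identifies its own IND-CPA key with $\sk'^1$. All ciphertexts under keys it knows — those under the adversary-supplied $\sk^0,\sk^1$ and under $\sk'^0$, which covers $\ct_x^0,\ct_y^0,\ct_y^1,\ct_z^0$ and every $\beta=0$ encryption-query pair — $\cB_1$ produces directly, since it samples $x,y,z$ and each $m^*$. The remaining ciphertexts under $\sk'^1$, namely $\ct_z^1=\Enc(\sk'^1,z\oplus z_1)$ and the $\beta=1$ query ciphertexts $\Enc(\sk'^1,m\oplus m^*)$, it obtains by forwarding those messages to its own encryption oracle; because that oracle returns the pair $(\vk,\ct)$, $\cB_1$ simultaneously learns the matching verification keys. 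Finally $\cB_1$ obtains $\ct_x^1$ by submitting $(x\oplus x_0,\;x\oplus x_1)$ as its single challenge query, so that the hidden bit $b=0$ reproduces $\sfhyb{2}{}$ and $b=1$ reproduces $\expc{\Sigma,\cA}{parallel}{cert}{ever}(\secp,1)$.

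The delicate part is the certificate-and-reveal phase. $\cB_1$ holds every verification key except $\vk_x^1$ (held by its own challenger, since the challenge oracle returns only $\ct$), so on receiving $\{\cert_x^\sigma,\cert_y^\sigma,\cert_z^\sigma\}_{\sigma\in\bit}$ it checks the five non-challenge certificates itself and forwards $\cert_x^1$ to its challenger. The challenger reveals $\sk'^1$ to $\cB_2$ exactly when $\cert_x^1$ is valid, which is precisely the branch in which the parallel experiment would hand $\sk'^1$ to $\cA_2$; together with the $\sk'^0$ it already holds, $\cB_2$ can then run $\cA_2$ on $\{\sk'^0,\sk'^1\}$. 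I would let $\cB_2$ output $1$ iff all five non-challenge certificates verified and the simulated $\cA_2$ outputs $1$. Then the IND-CPA experiment outputs $1$ iff $\cert_x^1$ is additionally valid and $\cB_2$ outputs $1$, i.e. iff all six certificates are valid and $\cA_2$ outputs $1$ — matching the parallel-experiment output exactly. Hence $\Pr[\expd{\Sigma,\cB}{cert}{ever}{ind}{cpa}(\secp,b)=1]$ equals $\Pr[\sfhyb{2}{}=1]$ for $b=0$ and $\Pr[\expc{\Sigma,\cA}{parallel}{cert}{ever}(\secp,1)=1]$ for $b=1$, and the proposition follows.

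I expect the main obstacle to be this last alignment rather than the simulation itself: I must make sure every verification key needed to check the non-challenge certificates (in particular $\vk_z^1$ and the $\sk'^1$-query keys) is genuinely recovered through the encryption oracle, and that $\cB_2$'s output faithfully translates the single-certificate success condition of the IND-CPA game into the all-six-certificates condition of the parallel game. It is worth stressing that certified everlasting — rather than plain — IND-CPA security is what the reduction consumes, since $\cB_2$ is unbounded and only receives $\sk'^1$ after $\cert_x^1$ has been accepted; the companion transitions \cref{prop:Exp_Hyb_1_parallel,prop:Hyb_1_Hyb_2_parallel} are proved by analogous reductions embedding the single challenge in the $z$- and $y$-ciphertexts that sit under a key held by the challenger.
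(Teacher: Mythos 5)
Your reduction is correct and is essentially the same argument the paper intends when it skips this proof as ``very similar'' to that of \cref{prop:Exp_Hyb_1_parallel}: you identify the IND-CPA challenge key with $\sk'^1$, produce $\ct_z^1$ and the $\beta=1$ query ciphertexts through the encryption oracle (which returns the verification keys you need), forward only $\cert_x^1$ to the challenger, and translate the single-certificate reveal condition into the all-six-certificates output condition exactly as the paper's $\cB$ does for the $z$-transition. The only difference is cosmetic: you embed the challenge directly as the pair $(x\oplus x_0,\,x\oplus x_1)$ instead of the re-randomized pair $m_0=z$, $m_1=z\oplus z_0\oplus z_1$ used in the $z$-transition, which is the forced (and trivially sound) adaptation here, since the share $x$ sits under the adversary's own key $\sk^0$ and both challenge messages are fully known to the reduction.
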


\begin{proof}[Proof of \cref{prop:Exp_Hyb_1_parallel}]
We assume that
$
\abs{\Pr[\expc{\Sigma,\cA}{parallel}{cert}{ever}(\secp, 0)=1]-\Pr[\sfhyb{1}{}(1)=1]}
$
is non-negligible, and construct an adversary $\cB$ that breaks the security experiment of $\expd{\Sigma,\cB}{cert}{ever}{ind}{cpa}(\secp, b)$. 
This contradicts the certified everlasting IND-CPA security of $\Sigma$.
Let us describe how $\cB$ works.
\begin{enumerate}
    \item The challenger of $\expd{\Sigma,\cB}{cert}{ever}{ind}{cpa}(\secp,b)$ generates $\sk'^{0}\la\keygen(1^\secp)$, and $\cB$ generates $\sk'^{1}\la\keygen(1^\secp)$.
    \item $\cA_1$ chooses $\beta\in\bit$, $\sk\in\Ks$ and $m\in\Ms$.
    $\cA_1$ sends $(\beta,\sk,m)$ to $\cB$.
    \begin{itemize} 
    \item If $\beta=0$, $\cB$ generates $m^*\la\Ms$, sends $m^*$ to the challenger, receives $(\vk_m^0,\ct_m^0)$ from the challenger, computes $(\vk^{1}_m,\ct^{1}_m)\la\Enc(\sk,m\oplus m^*)$, and sends $\{\vk^\sigma_m,\ct^\sigma_m\}_{\sigma\in\bit}$ to $\cA_1$.
    \item If $\beta=1$, $\cB$ generates $m^*\la\Ms$, computes $(\vk^{1}_{m},\ct^{1}_{m})\la\Enc(\sk'^{1},m\oplus m^*)$ and $(\vk^{0}_{m},\ct^{0}_{m})\la\Enc(\sk,m^*)$ and sends $\{\vk^\sigma_m,\ct^\sigma_m\}_{\sigma\in\bit}$ to $\cA_1$.
    \end{itemize}
    $\cB$ repeats this process when $(\beta,\sk,m)$ is sent from $\cA_1$.
    \item $\cB$ receives $(\sk^0,\sk^1,\{x_\sigma,y_\sigma,z_\sigma\}_{\sigma\in\bit})$ from $\cA_1$.
    \item $\cB$ generates $(x,y,z)\la\Ms^3$.
          $\cB$ computes 
          \begin{align}
          &(\vk_{x}^0,\ct_{x}^0)\la\Enc(\sk^0,x), (\vk_{x}^1,\ct_{x}^1)\la\Enc(\sk'^1,x\oplus x_0),\\
          &\hspace{2cm}(\vk_y^1,\ct_y^1)\la\Enc(\sk^1,y\oplus y_0),\\
          &\hspace{2cm}(\vk_z^1,\ct_z^1)\la \Enc(\sk'^1,z\oplus z_0).
          \end{align}
    \item $\cB$ sets $m_0\seteq z$ and $m_1\seteq z\oplus z_0\oplus z_1$. 
          $\cB$ sends $(m_0,m_1)$ to the challenger.
    \item The challenger computes $(\vk_z^0,\ct_z^0)\la \Enc(\sk'^0,m_{b})$, and sends $\ct_z^0$ to $\cB$.
    \item $\cB$ sends an encryption query $y$ to the challenger, and receives $(\vk_y^0,\ct_y^0)$.
    \item $\cB$ sends $\{\ct_{x}^\sigma,\ct_y^\sigma,\ct_z^\sigma\}_{\sigma\in\bit}$ to $\cA_1$.
    \item $\cA_1$ chooses $\beta\in\bit$, $\sk\in\Ks$ and $m\in\Ms$.
    $\cA_1$ sends $(\beta,\sk,m)$ to $\cB$.
    \begin{itemize} 
    \item If $\beta=0$, $\cB$ generates $m^*\la\Ms$, sends $m^*$ to the challenger, receives $(\vk_m^0,\ct_m^0)$ from the challenger, computes $(\vk^{1}_m,\ct^{1}_m)\la\Enc(\sk,m\oplus m^*)$, and sends $\{\vk^\sigma_m,\ct^\sigma_m\}_{\sigma\in\bit}$ to $\cA_1$.
    \item If $\beta=1$, $\cB$ generates $m^*\la\Ms$, computes $(\vk^{1}_{m},\ct^{1}_{m})\la\Enc(\sk'^{1},m\oplus m^*)$ and $(\vk^{0}_{m},\ct^{0}_{m})\la\Enc(\sk,m^*)$ and sends $\{\vk^\sigma_m,\ct^\sigma_m\}_{\sigma\in\bit}$ to $\cA_1$.
    \end{itemize}
    $\cB$ repeats this process when $(\beta,\sk,m)$ is sent from $\cA_1$.
    \item $\cA_1$ sends $\{\cert^{\sigma}_x,\cert^{\sigma}_y,\cert^{\sigma}_z\}_{\sigma\in\bit}$ to $\cB$, and sends the internal state to $\cA_2$.
    \item $\cB$ sends $\cert_z^0$ to the challenger, and receives $\sk'^0$ or $\bot$ from the challenger.
    If $\cB$ receives $\bot$, it outputs $\bot$ and aborts.
    \item $\cB$ sends $\{\sk'^{\sigma}\}_{\sigma\in\bit}$ to $\cA_2$.
    \item $\cA_2$ outputs $b'$.
    \item $\cB$ computes $\Vrfy(\vk^\sigma_x,\cert_x^{\sigma})$ and $\Vrfy(\vk^\sigma_y,\cert_y^{\sigma})$ for each $\sigma\in\bit$, and $\Vrfy(\vk_z^{1},\cert_z^1)$.
    If all results are $\top$, $\cB$ outputs $b'$.
    Otherwise, $\cB$ outputs $\bot$.
\end{enumerate}
It is clear that $\Pr[1\la\cB|b=0]=\Pr[\expc{\Sigma,\cA}{parallel}{cert}{ever}(\secp, 0)=1]$.
Since $z$ is uniformly distributed, $(z,z\oplus z_1)$ and $(z\oplus z_0\oplus z_1,z\oplus z_0)$ are identically distributed. 
Therefore, it holds that $\Pr[1\la\cB|b=1]=\Pr[\sfhyb{1}{}=1]$.
By assumption, $\abs{\Pr[\expc{\Sigma,\cA}{parallel}{cert}{ever}(\secp, 0)=1]-\Pr[\sfhyb{1}{}=1]}$ is non-negligible, and therefore $\abs{\Pr[1\la\cB|b=0]-\Pr[1\la\cB|b=1]}$ is non-negligible, which contradicts the certified everlasting IND-CPA security of $\Sigma_{\mathsf{cesk}}$.
\end{proof}

\begin{proof}[Proof of \cref{prop:Hyb_1_Hyb_2_parallel}]
The proof is very similar to that of \cref{prop:Exp_Hyb_1_parallel}.
Therefore we skip the proof.
\end{proof}

\begin{proof}[Proof of \cref{prop:Hyb_2_Hyb_3_parallel}]
The proof is very similar to that of \cref{prop:Exp_Hyb_1_parallel}.
Therefore, we skip the proof.
\end{proof}

\section{Proof of \cref{thm:func_ever_non_ad_security}}\label{sec:proof_non_ad_fe}
\begin{proof}[Proof of \cref{thm:func_ever_non_ad_security}]
Let us describe how the simulator $\Sim$ works.
\begin{description}
\item[$\Sim(\MPK,\mathcal{V},1^{|m|})$:]$ $
\begin{enumerate}
    \item Parse $\MPK=\{\pke.\pk_{i,\alpha}\}_{i\in[s],\alpha\in\bit}$ and $\mathcal{V}=\{f(m),f,(f,\{\pke.\sk_{i,f[i]}\}_{i\in[s]})\} \,\, or \,\,\emptyset$.
    \item If $\mathcal{V}=\emptyset$, generate $f\la\bit^s$.
    \item Generate $\{L_{i,\alpha}\}_{i\in[s],\alpha\in\bit}\la\mathsf{GC}.\Samp(1^\secp)$ and $L_{i,f[i]\oplus1}^*\la\mathcal{L}$ for every $i\in[s]$.
    \item Compute $(\widetilde{U},\mathsf{gc}.\vk)\la\mathsf{GC}.\Sim(1^\secp,1^{|f|},U(f,m),\{L_{i,f[i]}\}_{i\in[s]})$.
    \item Compute $(\pke.\vk_{i,f[i]},\pke.\ct_{i,f[i]})\la \PKE.\Enc(\pke.\pk_{i,f[i]},L_{i,f[i]})$ and $(\pke.\vk_{i,f[i]\oplus 1},\pke.\ct_{i,f[i]\oplus 1})\la \PKE.\Enc(\pke.\pk_{i,f[i]\oplus 1},L^*_{i,f[i]\oplus 1})$ for every $i\in[s]$.
    \item Output $\vk\seteq (\mathsf{gc}.\vk,\{\pke.\vk_{i,\alpha}\}_{i\in[s],\alpha\in\bit})$ and $\ct\seteq (\widetilde{U},\{\pke.\ct_{i,\alpha}\}_{i\in[s],\alpha\in\bit})$.
\end{enumerate}
\end{description}
Let us define the sequence of hybrids as follows.

\begin{description}
\item [$\sfhyb{0}{}$:] This is identical to $\expd{\Sigma_{\mathsf{cefe}},\cA}{cert}{ever}{non}{adapt}(\secp,0)$.
\begin{enumerate}
    \item The challenger generates $(\pke.\pk_{i,\alpha},\pke.\sk_{i,\alpha})\la\PKE.\keygen(1^\secp)$ for every $i\in [s]$ and $\alpha\in\bit$, and sends
    $\{\pke.\pk_{i,\alpha}\}_{i\in[s],\alpha\in\bit}$ to $\cA_1$.
    \item\label{step:function_query_single_fe}
    $\cA_1$ is allowed to call a key query at most one time.
    If a key query is called,
    the challenger receives an function $f$ from $\cA_1$, and sends $(f,\{\pke.\sk_{i,f[i]}\}_{i\in[s]})$ to $\cA_1$.
    \item $\cA_1$ chooses $m\in\Ms$, and sends $m$ to the challenger.
    \item \label{step:encryption_single_non_ad_fe}
    The challenger computes $\{L_{i,\alpha}\}_{i\in[s],\alpha\in\bit}\la\mathsf{GC}.\Samp(1^\secp)$, $(\widetilde{U},\mathsf{gc}.\vk)\la\mathsf{GC}.\Garble(1^\secp,U(\cdot,m),\{L_{i,\alpha}\}_{i\in[s],\alpha\in\bit})$, and
    $(\pke.\vk_{i,\alpha},\pke.\ct_{i,\alpha})\la\PKE.\Enc(\pke.\pk_{i,\alpha},L_{i,\alpha})$ for every $i\in[s]$ and $\alpha\in\bit$,
    and sends $(\widetilde{U},\{\pke.\ct_{i,\alpha}\}_{i\in[s],\alpha\in\bit})$ to $\cA_1$.
    \item $\cA_1$ sends $(\mathsf{gc}.\cert,\{\pke.\cert_{i,\alpha}\}_{i\in[s],\alpha\in\bit})$ to the challenger,
    and sends its internal state to $\cA_2$.
    \item If $\top\la \mathsf{GC}.\Vrfy(\mathsf{gc}.\vk,\mathsf{gc}.\cert)$, and $\top\la\PKE.\Vrfy(\pke.\vk_{i,\alpha},\pke.\cert_{i,\alpha})$ for every $i\in[s]$ and $\alpha\in\bit$, the challenger outputs $\top$, and  sends $\{\pke.\sk_{i,\alpha}\}_{i\in[s],\alpha\in\bit}$ to $\cA_2$.
    Otherwise, the challenger outputs $\bot$, and sends $\bot$ to $\cA_2$.
    \item $\cA_2$ outputs $b'$. If the challenger outputs $\top$, the output of the experiment is $b'$.
    Otherwise, the output of the experiment is $\bot$.
\end{enumerate}
\item[$\sfhyb{1}{}$:]This is identical to $\sfhyb{0}{}$ except for the following four points.
First, the challenger generates $f\in\bit^s$ if a key query is not called in step~\ref{step:function_query_single_fe}.
Second, the challenger randomly generates $L^*_{i,f[i]\oplus 1}\la \mathcal{L}$ for every $i\in[s]$
and $\{L_{i,\alpha}\}_{i\in[s],\alpha\in\bit}\la\mathsf{GC}.\Samp(1^\secp)$ in step~\ref{step:function_query_single_fe} regardless of whether a key query is called or not.
Third, the challenger does not compute $\{L_{i,\alpha}\}_{i\in[s],\alpha\in\bit}\la\mathsf{GC}.\Samp(1^\secp)$ in step~\ref{step:encryption_single_non_ad_fe}.
Fourth, 
the challenger computes 
$(\pke.\vk_{i,f[i]\oplus 1},\pke.\ct_{i,f[i]\oplus 1})\la\PKE.\Enc(\pke.\pk_{i,f[i]\oplus 1},L^*_{i,f[i]\oplus 1})$ for every $i\in[s]$ instead of computing 
$(\pke.\vk_{i,f[i]\oplus 1},\pke.\ct_{i,f[i]\oplus 1})\la\PKE.\Enc(\pke.\pk_{i,f[i]\oplus 1},L_{i,f[i]\oplus 1})$ for every $i\in[s]$.

\item[$\sfhyb{2}{}$:]This is identical to $\sfhyb{1}{}$ except for the following point.
The challenger computes $(\widetilde{U},\mathsf{gc}.\vk)\la\GC.\Sim(1^\secp,1^{|f|},\allowbreak U(f,m), \{L_{i,f[i]}\}_{i\in[s]})$
instead of computing $(\widetilde{U},\mathsf{gc}.\vk)\la\mathsf{GC}.\Garble(1^\secp,U(\cdot,m),\{L_{i,\alpha}\}_{i\in[s],\alpha\in\bit})$.
\end{description}
From the definition of $\expd{\Sigma_{\mathsf{cefe}},\cA}{cert}{ever}{non}{adapt}(\secp, b)$ and $\Sim$,
it is clear that $\Pr[\sfhyb{0}{}=1]=\Pr[\expd{\Sigma_{\mathsf{cefe}},\cA}{cert}{ever}{non}{adapt}(\secp,0)=1]$ and
$\Pr[\sfhyb{2}{}=1]=\Pr[\expd{\Sigma_{\mathsf{cefe}},\cA}{cert}{ever}{non}{adapt}(\secp,1)=1]$.
Therefore, \cref{thm:func_ever_non_ad_security} easily follows from the following \cref{prop:Exp_Hyb_1_non_ad_fe,prop:Hyb_1_Hyb_2_non_ad_fe}. (whose proof is given later.)
\end{proof}

\begin{proposition}\label{prop:Exp_Hyb_1_non_ad_fe}
If $\Sigma_{\mathsf{cepk}}$ satisfies the certified everlasting IND-CPA security,
\begin{align}
    \abs{\Pr[\sfhyb{0}{}=1]-\Pr[\sfhyb{1}{}=1]}\leq\negl(\lambda).
\end{align}
\end{proposition}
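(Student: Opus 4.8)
The plan is to prove \cref{prop:Exp_Hyb_1_non_ad_fe} by a reduction to the multi-instance certified everlasting IND-CPA security of $\Sigma_{\mathsf{cepk}}$ as captured by \cref{lem:cut_and_choose_pke}. Recall that the only difference between $\sfhyb{0}{}$ and $\sfhyb{1}{}$ is in the ciphertexts encrypted under the ``wrong-branch'' public keys $\{\pke.\pk_{i,f[i]\oplus 1}\}_{i\in[s]}$: in $\sfhyb{0}{}$ these encrypt the real labels $L_{i,f[i]\oplus 1}$, while in $\sfhyb{1}{}$ they encrypt freshly sampled random labels $L^*_{i,f[i]\oplus 1}$. (The other three changes in $\sfhyb{1}{}$ are purely syntactic bookkeeping: forcing $f$ to be defined even when no key query is made, and moving the $\mathsf{GC}.\Samp$ call earlier — none of these change the distribution.) The labels on the ``correct branch'' $\{L_{i,f[i]}\}_{i\in[s]}$ and the garbled circuit $\widetilde U$ are identically generated in both hybrids, so the reduction can produce them itself.

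First I would assume toward a contradiction that $\abs{\Pr[\sfhyb{0}{}=1]-\Pr[\sfhyb{1}{}=1]}$ is non-negligible, and build an adversary $\cB=(\cB_1,\cB_2)$ against $\expc{\Sigma_{\mathsf{cepk}},\cB}{multi}{cert}{ever}(\secp,b)$. Here the role of the index set $[s]$ of \cref{lem:cut_and_choose_pke} is played by the $s$ input wires of the universal circuit, and the string $f$ chosen by $\cB$ is exactly the function $f$ that $\cA_1$ submits (or a random $f$ if no key query is made). $\cB_1$ receives $\{\pke.\pk_{i,\alpha}\}_{i\in[s],\alpha\in\bit}$ from the challenger and forwards them as $\MPK$ to $\cA_1$. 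When $\cA_1$ makes its (single) key query $f$, $\cB_1$ sets its own challenge string to $f$; upon receiving the plaintext message $m$, it runs $\{L_{i,\alpha}\}_{i\in[s],\alpha\in\bit}\la\mathsf{GC}.\Samp(1^\secp)$, samples the random $L^*_{i,f[i]\oplus 1}$, and submits to the \cref{lem:cut_and_choose_pke}-challenger the message pair $(m_0,m_1)$ with $m_0[i]=L_{i,f[i]\oplus 1}$ and $m_1[i]=L^*_{i,f[i]\oplus 1}$ for the wrong-branch slots. The challenger returns $\{\pke.\sk_{i,f[i]}\}_{i\in[s]}$ together with the challenge ciphertexts $\{\pke.\ct_{i,f[i]\oplus 1}\}_{i\in[s]}$ encrypting $m_b$. $\cB_1$ encrypts the correct-branch labels $L_{i,f[i]}$ itself, garbles $U(\cdot,m)$ via $\mathsf{GC}.\Garble$, assembles the FE secret key $(f,\{\pke.\sk_{i,f[i]}\}_{i\in[s]})$ and ciphertext, and hands them to $\cA_1$. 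When $\cA_1$ outputs a certificate $(\mathsf{gc}.\cert,\{\pke.\cert_{i,\alpha}\}_{i\in[s],\alpha\in\bit})$, $\cB_1$ relays exactly the wrong-branch pieces $\{\pke.\cert_{i,f[i]\oplus 1}\}_{i\in[s]}$ to its own challenger and checks the correct-branch PKE certificates together with $\mathsf{gc}.\cert$ itself; upon receiving $\{\pke.\sk_{i,f[i]\oplus 1}\}_{i\in[s]}$ (or $\bot$), $\cB_2$ reconstructs the full $\{\pke.\sk_{i,\alpha}\}_{i\in[s],\alpha\in\bit}$, forwards it to $\cA_2$, and outputs whatever $\cA_2$ outputs (aborting to $\bot$ on any verification failure). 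By construction $\Pr[1\la\cB\mid b=0]=\Pr[\sfhyb{0}{}=1]$ and $\Pr[1\la\cB\mid b=1]=\Pr[\sfhyb{1}{}=1]$, giving the contradiction with \cref{lem:cut_and_choose_pke}, hence with certified everlasting IND-CPA security of $\Sigma_{\mathsf{cepk}}$.

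The main obstacle I anticipate is lining up the verification/reveal interface precisely. The certified everlasting games only release the secret key to the unbounded $\cA_2$ when \emph{all} verifications pass, and $\cB$ must emulate this gating faithfully while its own challenger independently gates the release of $\{\pke.\sk_{i,f[i]\oplus 1}\}_{i\in[s]}$ on the wrong-branch certificates alone. I would therefore argue that the total verification predicate of the FE scheme factors cleanly: it accepts iff $\mathsf{GC}.\Vrfy$ accepts, \emph{and} every correct-branch $\PKE.\Vrfy$ accepts (both checkable by $\cB$ using verification keys it generated), \emph{and} every wrong-branch $\PKE.\Vrfy$ accepts (delegated to $\cB$'s challenger). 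Because these checks are on disjoint certificate components and $\cB$ holds the verification keys for all components it did not delegate, the joint accept/reject decision — and the conditional release of the reconstructed master key to $\cA_2$ — is distributed identically to the real hybrids, so the reduction is perfect and the claimed bound follows.
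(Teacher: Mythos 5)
Your overall strategy is exactly the paper's: a reduction to the multi-instance game of \cref{lem:cut_and_choose_pke}, with the wrong-branch ciphertexts $\{\pke.\ct_{i,f[i]\oplus 1}\}_{i\in[s]}$ delegated to the lemma's challenger, the correct-branch encryptions, the garbling, and the corresponding verifications handled locally by $\cB$, and the wrong-branch certificates relayed for the conditional key release to $\cA_2$. Your factoring of the verification predicate into disjoint certificate components is also how the paper argues the gating is faithful.

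However, there is a genuine timing flaw in your reduction as written. In $\sfhyb{0}{}$ and $\sfhyb{1}{}$ the key query is answered \emph{before} $\cA_1$ chooses $m$: the challenger must hand $(f,\{\pke.\sk_{i,f[i]}\}_{i\in[s]})$ to $\cA_1$ at step 2, and $\cA_1$'s choice of $m$ may depend on this secret key. In the game of \cref{lem:cut_and_choose_pke}, the correct-branch secret keys $\{\sk_{i,f[i]}\}_{i\in[s]}$ are only released together with the challenge ciphertexts, i.e., \emph{after} $\cB$ has committed to the string $f$ and the $2s$ challenge plaintexts. Since you defer running $\mathsf{GC}.\Samp$, sampling the $L^*_{i,f[i]\oplus 1}$, and submitting the message pairs until after receiving $m$, your $\cB$ cannot answer $\cA_1$'s key query at the point the hybrid requires it, so the simulation breaks for any $\cA_1$ whose message choice depends on $\sk_f$. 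The fix is forced and is precisely what the paper does: because the labels output by $\mathsf{GC}.\Samp$ are independent of $m$, commit to the label pairs $(L_{i,f[i]\oplus 1},L^*_{i,f[i]\oplus 1})_{i\in[s]}$ and send them to the lemma's challenger already at the key-query step, receive $\{\pke.\sk_{i,f[i]}\}_{i\in[s]}$ there, answer the key query, and only later (after $m$ arrives) garble $U(\cdot,m)$ and encrypt the correct-branch labels. Note that the reordering of $\mathsf{GC}.\Samp$ into step 2 of $\sfhyb{1}{}$, which you dismissed as purely syntactic bookkeeping, exists exactly to make this commit-early reduction line up; with that correction your argument coincides with the paper's proof.
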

\begin{proposition}\label{prop:Hyb_1_Hyb_2_non_ad_fe}
If $\Sigma_{\mathsf{cegc}}$ satisfies the certified everlasting selective security,
\begin{align}
    \abs{\Pr[\sfhyb{1}{}=1]-\Pr[\sfhyb{2}{}=1]}\leq\negl(\lambda).
\end{align}
\end{proposition}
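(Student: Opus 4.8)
The plan is to prove \cref{prop:Hyb_1_Hyb_2_non_ad_fe} by a direct reduction to the certified everlasting selective security of the garbling scheme $\Sigma_{\mathsf{cegc}}$ (\cref{def:ever_sec_ever_garb}). The only difference between $\sfhyb{1}{}$ and $\sfhyb{2}{}$ is that the garbled circuit $\widetilde{U}$ is produced by $\mathsf{GC}.\Garble$ in $\sfhyb{1}{}$ and by $\mathsf{GC}.\Sim$ in $\sfhyb{2}{}$; everything else is identical. The key observation that makes the reduction go through is that in both hybrids the only labels that ever appear outside the garbled circuit are the \emph{correct} labels $\{L_{i,f[i]}\}_{i\in[s]}$: in $\sfhyb{1}{}$ the wrong-position labels $L_{i,f[i]\oplus 1}$ were already replaced by independent random labels $L^*_{i,f[i]\oplus 1}$ inside the public-key ciphertexts, so the genuine labels from $\mathsf{GC}.\Samp$ at the wrong positions are consumed only inside $\mathsf{GC}.\Garble$ (resp. $\mathsf{GC}.\Sim$). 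This is exactly the interface exposed by the garbling challenger, which reveals $(\widetilde{U},\{L_{i,x_i}\}_{i\in[n]})$ but never the off-path labels.

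Concretely, I would build an adversary $\cB=(\cB_1,\cB_2)$ against $\expb{\cB,\Sigma_{\mathsf{cegc}}}{cert}{ever}{selct}(1^\secp,b)$, where $\cB_1$ is QPT and $\cB_2$ is unbounded, mirroring $(\cA_1,\cA_2)$. $\cB_1$ generates all the PKE key pairs $(\pke.\pk_{i,\alpha},\pke.\sk_{i,\alpha})$ itself and hands $\MPK$ to $\cA_1$, answering the single key query for $f$ with $\{\pke.\sk_{i,f[i]}\}_i$ (picking $f\la\bit^s$ if no query is made, exactly as in the hybrids). Upon receiving $m$, $\cB_1$ submits the pair $(C,x)=(U(\cdot,m),f)$ to the garbling challenger and receives $(\widetilde{U},\{L_{i,f[i]}\}_i)$, where $\widetilde{U}$ is either honestly garbled ($b=0$) or simulated ($b=1$) and $U(f,m)=f(m)$. $\cB_1$ then forms the challenge ciphertext by encrypting the received correct labels $L_{i,f[i]}$ under $\pke.\pk_{i,f[i]}$ and fresh random labels $L^*_{i,f[i]\oplus 1}$ under $\pke.\pk_{i,f[i]\oplus 1}$, and sends $\ct=(\widetilde{U},\{\pke.\ct_{i,\alpha}\})$ to $\cA_1$. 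When $\cA_1$ outputs $\cert=(\mathsf{gc}.\cert,\{\pke.\cert_{i,\alpha}\})$, $\cB_1$ forwards $\mathsf{gc}.\cert$ to the garbling challenger and passes its internal state to $\cB_2$. By inspection, $\cB$ reproduces $\sfhyb{1+b}{}$ verbatim.

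The step I expect to be the main obstacle is matching the \emph{verification} bookkeeping, because the garbling challenger only checks $\mathsf{GC}.\Vrfy$, whereas the FE verification accepts iff $\mathsf{GC}.\Vrfy$ \emph{and} all $\PKE.\Vrfy$ succeed. The resolution is that $\cB$ holds every PKE verification key (it generated both the honest-position and the random-position encryptions itself), so $\cB_2$ can run all the checks $\PKE.\Vrfy(\pke.\vk_{i,\alpha},\pke.\cert_{i,\alpha})$ on its own. I would have $\cB_2$ proceed only when the garbling challenger returns $\top$ and all PKE certificates verify: in that case it releases $\MSK=\{\pke.\sk_{i,\alpha}\}$ to $\cA_2$ (matching the everlasting experiment), and otherwise it sends $\bot$ to $\cA_2$. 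Finally $\cB_2$ outputs $1$ iff all PKE certificates verified and $\cA_2$ output $1$, and $0$ otherwise. Since the garbling experiment returns $1$ exactly when $\mathsf{GC}.\Vrfy=\top$ and $\cB_2$ outputs $1$, this choice gives $\Pr[\expb{\cB,\Sigma_{\mathsf{cegc}}}{cert}{ever}{selct}(1^\secp,b)=1]=\Pr[\sfhyb{1+b}{}=1]$ for $b\in\{0,1\}$; folding the $\PKE.\Vrfy$ outcome into $\cB_2$'s bit (rather than trying to reproduce the $\bot$ output literally) is harmless because only $\Pr[\cdot=1]$ enters the advantage. The claimed bound then follows from the negligible certified everlasting selective advantage of $\Sigma_{\mathsf{cegc}}$.
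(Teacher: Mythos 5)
Your proof is correct and takes essentially the same route as the paper: the paper's reduction likewise has $\cB$ generate all PKE key pairs itself, answer the single key query with $\{\pke.\sk_{i,f[i]}\}_{i\in[s]}$ (sampling $f\la\bit^s$ if no query occurs), submit $(U(\cdot,m),f)$ to the garbling challenger, encrypt the returned on-path labels together with fresh random off-path labels $L^*_{i,f[i]\oplus 1}$, forward $\gc.\cert$ to the challenger, and run all $\PKE.\Vrfy$ checks locally. The only cosmetic difference is ordering: the paper's $\cB$ releases $\{\pke.\sk_{i,\alpha}\}_{i\in[s],\alpha\in\bit}$ to $\cA_2$ whenever the garbling challenger accepts and folds the PKE verification outcome into its final output bit, whereas you gate the key release on the PKE checks as well; as you correctly observe, both choices yield the same probability of the experiment outputting $1$, so the two reductions are interchangeable.
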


\begin{proof}[Proof of \cref{prop:Exp_Hyb_1_non_ad_fe}]
%Let us define the intermediate hybrid experiment $\sfhyb{0}{*}$ as follows.
%This is identical to $\sfhyb{0}{}$ except for the following three points.
%First, the challenger generates $f\in\bit^s$ in step~\ref{step:function_query_single_fe} if a key query is not called.
%Second, the challenger randomly generates $L^*_{i,f[i]\oplus 1}\la \mathcal{L}$ for every $i\in[s]$
%and $\{L_{i,\alpha}\}_{i\in[s],\alpha\in\bit}\la\mathsf{GC}.\Samp(1^\secp)$ in step~\ref{step:function_query_single_fe} regardless of whether a key query is called or not.
%Third, the challenger does not compute $\{L_{i,\alpha}\}_{i\in[s],\alpha\in\bit}\la\mathsf{GC}.\Samp(1^\secp)$ in step~\ref{step:encryption_single_fe}.
%It is clear that $\Pr[\sfhyb{0}{}=1]=\Pr[\sfhyb{0}{*}=1]$.

For the proof, we use \cref{lem:cut_and_choose_pke} whose statement and proof is given in \cref{sec:proof_rnce}.
We assume that $\abs{\Pr[\sfhyb{0}{}=1]-\Pr[\sfhyb{1}{}=1]}$ is non-negligible, and construct an adversary $\cB$ that breaks the security experiment of  $\expc{\Sigma_{\mathsf{cepk}},\cB}{multi}{cert}{ever}(\secp,b)$ defined in \cref{lem:cut_and_choose_pke}.
This contradicts the certified everlasting IND-CPA of $\Sigma_{\mathsf{cepk}}$ from \cref{lem:cut_and_choose_pke}.
Let us describe how $\cB$ works below.
\begin{enumerate}
    \item $\cB$ receives $\{\pke.\pk_{i,\alpha}\}_{i\in[s],\alpha\in\bit}$ from the challenger of $\expc{\Sigma_{\mathsf{cepk}},\cB}{multi}{cert}{ever}(\secp,b)$,
    and sends $\{\pke.\pk_{i,\alpha}\}_{i\in[s],\alpha\in\bit}$ to $\cA_1$.
    \item $\cA_1$ is allowed to call a key query at most one time.
    If a key query is called,
    $\cB$ receives an function $f$ from $\cA_1$, generates $L^*_{i,f[i]\oplus 1}\la \mathcal{L}$ for every $i\in[s]$ and  $\{L_{i,\alpha}\}_{i\in[s],\alpha\in\bit}\la\mathsf{GC}.\Samp(1^\secp)$.
    If a key query is not called, $\cB$ generates $f\la\bit^s$, $L^*_{i,f[i]\oplus 1}\la \mathcal{L}$ for every $i\in[s]$ and  $\{L_{i,\alpha}\}_{i\in[s],\alpha\in\bit}\la\mathsf{GC}.\Samp(1^\secp)$. 
    \item $\cB$ sends $(f,L_{1,f[1]\oplus 1},L_{2,f[2]\oplus 1},\cdots ,L_{s,f[s]\oplus 1},L^*_{1,f[1]\oplus 1},L^*_{2,f[2]\oplus 1},\cdots, L^{*}_{s,f[s]\oplus 1})$ to the challenger of $\expc{\Sigma_{\mathsf{cepk}},\cB}{multi}{cert}{ever}(\secp,b)$.
    \item $\cB$ receives $(\{\pke.\sk_{i,f[i]}\}_{i\in[s]},\{\pke.\ct_{i,f[i]\oplus 1}\}_{i\in[s]})$ from the challenger.
    If a key query is called, $\cB$ sends $(f,\{\pke.\sk_{i,f[i]}\}_{i\in[s]})$ to $\cA_1$.
    \item $\cA_1$ chooses $m\in\Ms$, and sends $m$ to $\cB$.
    \item $\cB$ computes
    $(\widetilde{U},\mathsf{gc}.\vk)\la\mathsf{GC}.\Garble(1^\secp,U(\cdot,m),\{L_{i,\alpha}\}_{i\in[s],\alpha\in\bit})$ and
    $(\pke.\vk_{i,f[i]},\pke.\ct_{i,f[i]})\la\PKE.\Enc(\allowbreak \pke.\pk_{i,f[i]},\allowbreak L_{i,f[i]})$ for every $i\in[s]$, and sends $(\widetilde{U},\{\pke.\ct_{i,\alpha}\}_{i\in[s],\alpha\in\bit})$ to $\cA_1$.
    \item $\cA_1$ sends $(\mathsf{gc}.\cert,\{\pke.\cert_{i,\alpha}\}_{i\in[s],\alpha\in\bit})$ to $\cB$, and sends its internal state to $\cA_2$.
    \item $\cB$ sends $\{\pke.\cert_{i,f[i]\oplus1}\}_{i\in[s]}$ to the challenger, and receives $\{\pke.\sk_{i,f[i]\oplus 1}\}_{i\in[s]}$ or $\bot$ from the challenger. If $\cB$ receives $\bot$ from the challenger, it outputs $\bot$ and aborts. 
    \item $\cB$ sends $\{\pke.\sk_{i,\alpha}\}_{i\in[s],\alpha\in\bit}$ to $\cA_2$.
    \item $\cA_2$ outputs $b'$.
    \item $\cB$ computes $\mathsf{GC}.\Vrfy$ for $\gc.\cert$ and $\PKE.\Vrfy$ for all $\{\pke.\cert_{i,f[i]}\}_{i\in[s]}$, and outputs $b'$ if all results are $\top$.
    Otherwise, $\cB$ outputs $\bot$. 
\end{enumerate}
It is clear that $\Pr[1\la\cB|b=0]=\Pr[\sfhyb{0}{}=1]$ and 
$\Pr[1\la\cB|b=1]=\Pr[\sfhyb{1}{}=1]$.
By assumption, $\allowbreak \abs{\Pr[\sfhyb{0}{}=1]- \Pr[\sfhyb{1}{}=1]}$ is non-negligible, and therefore $\abs{\Pr[1\la\cB|b=0]-\Pr[1\la\cB|b=1]}$ is non-negligible, which contradicts the certified everlasting IND-CPA security of $\Sigma_{\mathsf{cepk}}$ from \cref{lem:cut_and_choose_pke}.
\end{proof}

\begin{proof}[Proof of \cref{prop:Hyb_1_Hyb_2_non_ad_fe}]
%Let us define the intermediate hybrid experiment $\sfhyb{1}{*}$ and $\sfhyb{2}{*}$ as follows.
%$\sfhyb{1}{*}$ is identical to $\sfhyb{1}{}$ except that the challenger generates $\{L_{i,\alpha}\}_{i\in[s],\alpha\in\bit}\la\mathsf{GC}.\Samp(1^\secp)$ in %step~\ref{step:encryption_single_fe} instead of generating them in step~\ref{step:function_query_single_fe}.
%$\sfhyb{2}{*}$ is identical to $\sfhyb{2}{}$ except that the challenger generates $\{L_{i,\alpha}\}_{i\in[s],\alpha\in\bit}\la\mathsf{GC}.\Samp(1^\secp)$ in step~\ref{step:encryption_single_fe} instead of generating them in step~\ref{step:function_query_single_fe}.
%It is clear that $\Pr[\sfhyb{1}{}=1]=\Pr[\sfhyb{1}{*}=1]$ and $\Pr[\sfhyb{2}{}=1]=\Pr[\sfhyb{2}{*}=1]$.

We assume that $\abs{\Pr[\sfhyb{1}{}=1]-\Pr[\sfhyb{2}{}=1]}$ is non-negligible, and construct an adversary $\cB$ that breaks the certified everlasting selective security of $\Sigma_{\mathsf{cegc}}$.
Let us describe how $\cB$ works below.
\begin{enumerate}
    \item $\cB$ generates $(\pke.\pk_{i,\alpha},\pke.\sk_{i,\alpha})\la\PKE.\keygen(1^\secp)$ for every $i\in [s]$ and $\alpha\in\bit$, and sends
    $\{\pke.\pk_{i,\alpha}\}_{i\in[s],\alpha\in\bit}$ to $\cA_1$.
    \item $\cA_1$ is allowed to call a key query at most one time.
    If a key query is called, $\cB$ receives an function $f$ from $\cA_1$, generates $L^*_{i,f[i]\oplus 1}\la \mathcal{L}$ for every $i\in[s]$, and sends $(f,\{\pke.\sk_{i,f[i]}\}_{i\in[s]})$ to $\cA_1$.
    If a key query is not called, $\cB$ generates $f\la\bit^s$ and $L^*_{i,f[i]\oplus 1}\la \mathcal{L}$ for every $i\in[s]$.
    \item $\cA_1$ chooses $m\in\Ms$, and sends $m$ to $\cB$.
    \item $\cB$ sends a circuit $U(\cdot,m)$ and an input $f\in\bit^s$ to the challenger of $\expb{\cB,\Sigma_{\mathsf{cegc}}}{cert}{ever}{selct}(1^\secp,b)$.
    \item The challenger computes $\{L_{i,\alpha}\}_{i\in[s],\alpha\in\bit}\la\mathsf{GC}.\Samp(1^\secp)$ and does the following:
    \begin{itemize}
        \item If $b=0$, the challenger computes $(\widetilde{U},\mathsf{gc}.\vk)\la\mathsf{GC}.\Garble(1^\secp,U(\cdot,m),\{L_{i,\alpha}\}_{i\in[s],\alpha\in\bit})$,
        and sends $(\widetilde{U},\{L_{i,f[i]}\}_{i\in[s]})$ to $\cB$.
        \item If $b=1$, the challenger computes $(\widetilde{U},\mathsf{gc}.\vk)\la\GC.\Sim(1^\secp,1^{|f|},U(f,m),\{L_{i,f[i]}\}_{i\in[s]})$,
        and sends $(\widetilde{U},\{L_{i,f[i]}\}_{i\in[s]})$ to $\cB$.
    \end{itemize}
    \item $\cB$ computes $(\pke.\vk_{i,f[i]},\pke.\ct_{i,f[i]})\la\PKE.\Enc(\pke.\pk_{i,f[i]},L_{i,f[i]})$ and $(\pke.\vk_{i,f[i]\oplus 1},\pke.\ct_{i,f[i]\oplus 1})\la\PKE.\Enc(\pke.\pk_{i,f[i]\oplus 1},L^*_{i,f[i]\oplus 1}) $ for every $i\in[s]$.
    \item $\cB$ sends $(\widetilde{U},\{\pke.\ct_{i,\alpha}\}_{i\in[s],\alpha\in\bit})$ to $\cA_1$.
    \item $\cA_1$ sends $(\mathsf{gc}.\cert,\{\pke.\cert_{i,\alpha}\}_{i\in[s],\alpha\in\bit})$ to the challenger, and sends its internal state to $\cA_2$.
    \item $\cB$ sends $\mathsf{gc}.\cert$ to the challenger, and receives $\top$ or $\bot$ from the challenger.
    If $\cB$ receives $\bot$ from the challenger, it outputs $\bot$ and aborts.
    \item $\cB$ sends $\{\pke.\sk_{i,\alpha}\}_{i\in[s],\alpha\in\bit}$ to $\cA_2$.
    \item $\cA_2$ outputs $b'$.
    \item $\cB$ computes $\PKE.\Vrfy$ for all $\pke.\cert_{i,\alpha}$, and outputs $b'$ if all results are $\top$.
    Otherwise, $\cB$ outputs $\bot$.
\end{enumerate}
It is clear that $\Pr[1\la\cB|b=0]=\Pr[\sfhyb{1}{}=1]$ and $\Pr[1\la\cB|b=1]=\Pr[\sfhyb{2}{}=1]$.
By assumption, $\abs{\Pr[\sfhyb{1}{}=1]-\Pr[\sfhyb{2}{}=1]}$ is non-negligible, and therefore $\abs{\Pr[1\la\cB|b=0]-\Pr[1\la\cB|b=1]}$ is non-negligible,
which contradicts the certified everlasting selective security of $\Sigma_{\mathsf{cegc}}$. 
\end{proof}

% !TEX root = main.tex
% !TEX spellcheck = en-US

\section{Proof of \cref{thm:ever_security_single_ad}}\label{sec:proof_ada_fe}
\begin{proof}[Proof of \cref{thm:ever_security_single_ad}]
%The certified everlasting adaptive security,
%where the adversary does not make a key query or makes a key query before it sends a challenge query,
%easily follows from the certified everlasting non-adaptive security of $\Sigma_{\NAD}$.
%Therefore, we consider the case, where the adversary that makes a key query after it sends a challenge query below.

For a given $2n$-qubit, let $A$ be the $n$-qubit of the first half of the $2n$-qubit, and let $B$ be the $n$-qubit of the second half of the $2n$-qubit.
Let $\NAD.\Sim$ be the simulating algorithm of the ciphertext $\nad.\ct$ .
Let us describe how the simulator $\Sim=(\Sim_1,\Sim_2,\Sim_3)$ works below.
\begin{description}
\item[$\Sim_1(\MPK,\mathcal{V},1^{|m|})$:]$ $
\begin{enumerate}
\item Parse $\MPK=(\nad.\MPK,\nce.\pk)$ and $\mathcal{V}=(f,f(m),(\nad.\sk_f,\nce.\sk))\,\, or\,\, \emptyset$.
\footnote{
If an adversary calls a key query before the adversary receives a challenge ciphertext,
then $\mathcal{V}=(f,f(m),(\nad.\sk_f,\nce.\sk))$.
Otherwise, $\mathcal{V}=\emptyset$.
}
\item $\Sim_1$ does the following:
\begin{itemize}
\item If $\mathcal{V}=\emptyset$, generate $\ket{\widetilde{0^n0^n}}$ and $(\nce.\vk,\widetilde{\nce.\ct},\nce.\aux)\la\NCE.\Fake(\nce.\pk)$.
Let $\Psi_A\seteq\Tr_B(\ket{\widetilde{0^n0^n}}\bra{\widetilde{0^n0^n}})$ and
$\Psi_B\seteq\Tr_A(\ket{\widetilde{0^n0^n}}\bra{\widetilde{0^n0^n}})$.
Output $\ct\seteq (\Psi_A,\widetilde{\nce.\ct})$ and $\state\seteq (\nce.\aux,\nce.\pk,\nad.\MPK,\Psi_B,1^{|m|},\nce.\vk,0)$.
\item If $\mathcal{V}=(f,f(m),(\nad.\sk_f,\nce.\sk))$, generate $a,c\la\bit^n$, $(\nce.\vk,\nce.\ct)\la\NCE.\Enc(\nce.\pk,(a,c))$, $(\nad.\vk,\nad.\ct)\la\NAD.\Sim(\nad.\MPK,(f,f(m),\nad.\sk_f),1^{|m|})$ and $\Psi\seteq Z^cX^a\nad.\ct X^a Z^c$.
Output $\ct\seteq (\Psi,\nce.\ct)$ and $\state\seteq(\nad.\vk,\nce.\vk,a,c,1)$.
\end{itemize}

\end{enumerate}
\item[$\Sim_2(\MSK,f,f(m),\state)$:]$ $
\begin{enumerate}
\item Parse $\MSK\seteq (\nad.\MSK,\nce.\MSK)$ and $\state =(\nce.\aux,\nce.\pk,\nad.\MPK,\Psi_B,1^{|m|},\nce.\vk,0)$.
\item Compute $\nad.\sk_f\la\NAD.\keygen(\nad.\MSK,f)$.
\item Compute $(\nad.\vk,\nad.\ct)\la\NAD.\Sim(\nad.\MPK,(f,f(m),\nad.\sk_f),1^{|m|})$. 
Measure the $i$-th qubit of $\mathsf{nad}.\ct$ and $\Psi_B$ in the Bell basis and let $(x_i,z_i)$ be the measurement outcome for all $i\in[N]$.
\item Compute $\widetilde{\nce.\sk}\la\NCE.\Reveal(\nce.\pk,\nce.\MSK,\nce.\aux,(x,z))$.
\item Output $\sk_f\seteq (\nad.\sk_f,\widetilde{\nce.\sk})$ and $\state'\seteq(\nad.\vk,\nce.\vk,x,z,1)$.
\end{enumerate}
\item[$\Sim_3(\state^*)$:]$ $
\begin{enumerate}
\item Parse  $\state^* = (\nad.\vk,\nce.\vk,x^*,z^*,1)$ or $\state^* =(\nce.\aux,\nce.\pk,\nad.\MPK,\Psi_B,1^{|m|},\nce.\vk,0)$.
\item $\Sim_3$ does the following:
\begin{itemize}
    \item If the final bit of $\state^*$ is $0$,
    compute $(\nad.\vk,\nad.\ct)\la\NAD.\Sim(\nad.\MPK,\emptyset,1^{|m|})$. 
    Measure the $i$-th qubit of $\mathsf{nad}.\ct$ and $\Psi_B$ in the Bell basis and let $(x_i,z_i)$ be the measurement outcome for all $i\in[N]$.
     Output $\vk\seteq (\nad.\vk,\nce.\vk,x,z)$.
     \item If the final bit of $\state^*$ is $1$, output $\vk\seteq(\nad.\vk,\nce.\vk,x^*,z^*)$.
\end{itemize}
\end{enumerate}
Let us define the sequence of hybrids as follows.
%We highlight the difference in blue.
\item[$\sfhyb{0}{}$:]This is identical to $\expc{\Sigma_{\mathsf{cefe},\cA}}{cert}{ever}{adapt}(0)$.
\begin{enumerate}
    \item The challenger generates $(\mathsf{nad}.\MPK,\mathsf{nad}.\MSK)\la\mathsf{NAD}.\Setup(1^\secp)$ and $(\nce.\pk,\nce.\MSK)\la\NCE.\Setup(1^{\secp})$, and sends $(\nad.\MPK,\nce.\pk)$ to $\cA_1$.
    \item\label{step:key_query_adapt_fe_1} $\cA_1$ is allowed to make an arbitrary key query at most one time.
    For a key query, the challenger receives $f\in\mathcal{F}$,  
    computes $\mathsf{nad}.\sk_{f}\la\mathsf{NAD}.\keygen(\mathsf{nad}.\MSK,f)$
    and $\nce.\sk\la\NCE.\keygen(\nce.\MSK)$,
    and sends $(\nad.\sk_f,\nce.\sk)$ to $\cA_1$.
    \item $\cA_1$ chooses $m\in\Ms$, and sends $m$ to the challenger.
    \item\label{step:encryption_single_fe} The challenger generates $a,c\la\bit^n$,
    computes $(\mathsf{nad}.\vk,\mathsf{nad}.\CT)\la \mathsf{NAD}.\Enc(\mathsf{nad}.\MPK,m)$,
    $\Psi\seteq Z^{c}X^{a}\mathsf{nad}.\ct X^{a}Z^{c}$
    and $(\nce.\vk,\nce.\ct)\la\NCE.\Enc(\nce.\pk,(a,c))$,
    and sends $(\Psi,\nce.\CT)$ to $\cA_1$.
    \item\label{step:key_query_adapt_fe_2}
    If a key query is not called in step~\ref{step:key_query_adapt_fe_1},
    $\cA_1$ is allowed to make an arbitrary key query at most one time.
    For a key query, the challenger receives $f\in\mathcal{F}$,  
    computes $\mathsf{nad}.\sk_{f}\la\mathsf{NAD}.\keygen(\mathsf{nad}.\MSK,f)$
    and $\nce.\sk\la\NCE.\keygen(\nce.\MSK)$,
    and sends $(\nad.\sk_f,\nce.\sk)$ to $\cA_1$.
    \item $\cA_1$ sends $(\nad.\cert,\nce.\cert)$ to the challenger and its internal state to $\cA_2$.
    \item\label{step:vrfy_fe} The challenger computes $\mathsf{nad}.\cert^*\la\NAD.\Modify(a,c,\nad.\cert)$.
        The challenger computes $\NCE.\Vrfy(\allowbreak \nce.\vk,\nce.\cert)$ and $\mathsf{NAD}.\Vrfy(\mathsf{nad}.\vk,\mathsf{nad}.\cert^*)$.
        If the results are $\top$, the challenger outputs $\top$ and sends $(\nad.\MSK,\nce.\MSK)$ to $\cA_2$.
        Otherwise, the challenger outputs $\bot$ and sends $\bot$ to $\cA_2$.
    \item $\cA_2$ outputs $b'$. The output of the experiment is $b'$ if the challenger outputs $\top$.
    Otherwise, the output of the experiment is $\bot$.
\end{enumerate}
\item[$\sfhyb{1}{}$:]$ $
%This is identical to $\sfhyb{0}{}$ when a key query is called in step~\ref{step:key_query_adapt_fe_1}.
This is different from $\sfhyb{0}{}$ in the following second points.
First, when a key query is not called in step~\ref{step:key_query_adapt_fe_1}, the challenger computes $(\nce.\vk,\widetilde{\nce.\ct},\nce.\aux)\la\NCE.\Fake(\nce.\pk)$ and sends $(\Psi,\widetilde{\nce.\ct})$ to $\cA_1$
instead of computing $(\nce.\vk,\nce.\ct)\la\NCE.\Enc(\nce.\pk,(a,c))$
and sending $(\Psi,\nce.\ct)$ to $\cA_1$.
Second, in step~\ref{step:key_query_adapt_fe_2}, the challenger computes $\widetilde{\nce.\sk}\la\NCE.\Reveal(\nce.\pk,\nce.\MSK,\nce.\aux,(a,c))$ and sends $(\nad.\sk_f,\nce.\sk)$ to $\cA_1$ instead of computing $\nce.\sk\la\NCE.\keygen(\nce.\MSK)$ and sending $(\nad.\sk_f,\nce.\sk)$ to $\cA_1$.

\item[$\sfhyb{2}{}$:]$ $
%This is identical to $\sfhyb{1}{}$ when a key query is called in step \ref{step:key_query_adapt_fe_1}.
This is different from $\sfhyb{1}{}$ in the following three points.
First, when a key query is not called in step \ref{step:key_query_adapt_fe_1}, the challenger generates $\ket{\widetilde{0^n0^n}}$ instead of generating $a,c\la\bit^n$ and $\Psi=Z^{c}X^{a}\mathsf{nad}.\ct X^{a}Z^{c}$.
Let $\Psi_A\seteq \Tr_B(\ket{\widetilde{0^n0^n}}\bra{\widetilde{0^n0^n}})$ and $\Psi_B\seteq \Tr_A(\ket{\widetilde{0^n0^n}}\bra{\widetilde{0^n0^n}})$.
Second, when a key query is not called in step \ref{step:key_query_adapt_fe_1}, the challenger sends $(\Psi_A,\widetilde{\nce.\ct})$ to $\cA_1$ instead of sending $(\Psi,\widetilde{\nce.\ct})$ to $\cA_1$ and then that measures the $i$-th qubit of $\nad.\ct$ and $\Psi_B$ in the Bell basis for all $i\in[n]$.
Let $(x_i,z_i)$ be the measurement outcome for all $i\in[n]$.
Third, the challenger computes $\widetilde{\nce.\sk}\la\NCE.\Reveal(\nce.\pk,\nce.\MSK,\nce.\aux,(x,z))$ instead of computing $\widetilde{\nce.\sk}\la\NCE.\Reveal(\nce.\pk,\nce.\MSK,\nce.\aux,(a,c))$ in step~\ref{step:key_query_adapt_fe_2} and computes $\mathsf{nad}.\cert^*\la\NAD.\Modify(x,z,\nad.\cert)$ instead of computing
$\mathsf{nad}.\cert^*\la\NAD.\Modify(a,c,\nad.\cert)$ in step~\ref{step:vrfy_fe}.

\item[$\sfhyb{3}{}$:]$ $
%This is identical to $\sfhyb{2}{}$ when a key query is called in step~\ref{step:key_query_adapt_fe_1}.
This is different from $\sfhyb{2}{}$ in the following three points.
First, when a key query is not called in step~\ref{step:key_query_adapt_fe_1}, the challenger does not generate $(\mathsf{nad}.\vk,\mathsf{nad}.\CT)\la \mathsf{NAD}.\Enc(\mathsf{nad}.\MPK,m)$ and measure the $i$-th qubit of $\nad.\ct$ and $\Psi_B$ in the Bell basis in step~\ref{step:encryption_single_fe}.
Second, if a key query is called in step~\ref{step:key_query_adapt_fe_2}, the challenger computes $(\mathsf{nad}.\vk,\mathsf{nad}.\CT)\la \mathsf{NAD}.\Enc(\mathsf{nad}.\MPK,m)$
and measures the $i$-th qubit of $\nad.\ct$ and $\Psi_B$ in the Bell basis for all $i\in[n]$ after it computes $\mathsf{nad}.\sk_{f}\la\mathsf{NAD}.\keygen(\mathsf{nad}.\MSK,f)$.
Third, if a key query is not called throughout the experiment,
the challenger computes $(\mathsf{nad}.\vk,\mathsf{nad}.\CT)\la \mathsf{NAD}.\Enc(\mathsf{nad}.\MPK,m)$, measures the $i$-th qubit of $\nad.\ct$ and $\Psi_B$ in the Bell basis after step~\ref{step:key_query_adapt_fe_2}.

\item[$\sfhyb{4}{}$:]$ $
This is identical to $\sfhyb{3}{}$ except that the challenger computes $(\mathsf{nad}.\vk,\mathsf{nad}.\CT)\la \mathsf{NAD}.\Sim(\mathsf{nad}.\MPK,\mathcal{V},1^{|m|})$ instead of computing $(\mathsf{nad}.\vk,\mathsf{nad}.\CT)\la \mathsf{NAD}.\Enc(\mathsf{nad}.\MPK,m)$, where $\mathcal{V}=(f,f(m),\nad.\sk_f)$ if a key query is called and $\mathcal{V}=\emptyset$ if a key query is not called.

\end{description}
From the definition of $\expc{\Sigma_{\mathsf{cefe}},\cA}{cert}{ever}{adapt}(\secp, b)$ and $\Sim=(\Sim_1,\Sim_2,\Sim_3)$,
it is clear that $\Pr[\sfhyb{0}{}=1]=\Pr[\expc{\Sigma_{\mathsf{cefe}},\cA}{cert}{ever}{adapt}(\secp,0)=1]$ and
$\Pr[\sfhyb{4}{}=1]=\Pr[\expc{\Sigma_{\mathsf{cefe}},\cA}{cert}{ever}{adapt}(\secp,1)=1]$.
Therefore, \cref{thm:ever_security_single_ad} easily follows from \cref{prop:exp_hyb_1_cefe_si_ad,prop:hyb_1_hyb_2_cefe_si_ad,prop:hyb_2_hyb_3_cefe_si_ad,prop:hyb_3_hyb_4_cefe_si_ad}.
(Whose proof is given later.)
\end{proof}

\begin{proposition}\label{prop:exp_hyb_1_cefe_si_ad}
If $\Sigma_{\mathsf{cence}}$ is certified everlasting RNC secure, 
it holds that
\begin{align}
\abs{\Pr[\sfhyb{0}{}=1]-\Pr[\sfhyb{1}{}=1]}\leq\negl(\secp).
\end{align}
\end{proposition}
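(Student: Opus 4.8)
The plan is to reduce the indistinguishability of $\sfhyb{0}{}$ and $\sfhyb{1}{}$ to the certified everlasting RNC security of $\Sigma_{\mathsf{cence}}$ (\cref{def:cert_ever_rec_nc_security_classic}). The only difference between the two hybrids is that in $\sfhyb{0}{}$ the $\NCE$ component of the challenge ciphertext is an honest encryption $\nce.\ct\la\NCE.\Enc(\nce.\pk,(a,c))$ together with an honest key $\nce.\sk\la\NCE.\keygen(\nce.\MSK)$, whereas in $\sfhyb{1}{}$ these are replaced by $\widetilde{\nce.\ct}$ from $\NCE.\Fake$ and $\widetilde{\nce.\sk}$ from $\NCE.\Reveal(\nce.\pk,\nce.\MSK,\nce.\aux,(a,c))$. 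Crucially, this replacement is only performed when $\cA_1$ does \emph{not} issue a key query before the challenge ciphertext; if a key query is made in step~\ref{step:key_query_adapt_fe_1}, then neither change applies and the two hybrids are literally identical. Hence, letting $E$ be the event that no pre-challenge key query occurs, we have $\Pr[\sfhyb{0}{}=1\wedge\overline{E}]=\Pr[\sfhyb{1}{}=1\wedge\overline{E}]$, so it suffices to bound the difference conditioned on $E$.

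Assuming $\abs{\Pr[\sfhyb{0}{}=1]-\Pr[\sfhyb{1}{}=1]}$ is non-negligible, I would build an adversary $\cB=(\cB_1,\cB_2)$ against $\expd{\Sigma_{\mathsf{cence}},\cB}{cert}{ever}{rec}{nc}$, with $\cB_1$ QPT and $\cB_2$ unbounded. On receiving $\nce.\pk$, $\cB_1$ generates $(\nad.\MPK,\nad.\MSK)\la\NAD.\Setup(1^\secp)$ itself and hands $\MPK=(\nad.\MPK,\nce.\pk)$ to $\cA_1$. If $\cA_1$ ever makes a pre-challenge key query (event $\overline{E}$), $\cB$ aborts and forces its RNC output to be non-accepting. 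Otherwise, upon the challenge message $m$, $\cB_1$ samples $a,c\la\bit^n$, computes $(\nad.\vk,\nad.\ct)\la\NAD.\Enc(\nad.\MPK,m)$ and $\Psi\seteq Z^cX^a\nad.\ct X^aZ^c$, forwards $(a,c)$ as its own challenge message to the RNC challenger, and receives back the pair $(\nce.\ct,\nce.\sk)$ (honest or faked-and-revealed depending on the RNC bit). It sends $(\Psi,\nce.\ct)$ to $\cA_1$; for a post-challenge key query $f$ it returns $(\nad.\sk_f,\nce.\sk)$ with $\nad.\sk_f\la\NAD.\keygen(\nad.\MSK,f)$. When $\cA_1$ outputs the certificate $(\nad.\cert,\nce.\cert)$, $\cB_1$ locally computes $\nad.\cert^*\la\NAD.\Modify(a,c,\nad.\cert)$ and checks $\NAD.\Vrfy(\nad.\vk,\nad.\cert^*)$, while forwarding $\nce.\cert$ to the RNC challenger to have the $\NCE.\Vrfy$ check performed there. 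If the RNC challenger returns $\nce.\MSK$ (so $\NCE.\Vrfy$ passed) and the local $\NAD.\Vrfy$ also passed, $\cB_2$ hands $\cA_2$ the combined key $(\nad.\MSK,\nce.\MSK)$ and outputs whatever $\cA_2$ outputs; otherwise it outputs a non-accepting symbol.

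By construction, when the RNC bit is $0$ the view of $(\cA_1,\cA_2)$ in event $E$ is exactly that of $\sfhyb{0}{}$, and when it is $1$ it is exactly that of $\sfhyb{1}{}$; moreover the FE experiment accepts iff both $\NCE.\Vrfy$ and $\NAD.\Vrfy$ accept, which is precisely the condition under which $\cB$'s RNC experiment outputs $\cA_2$'s bit. Therefore $\cB$'s distinguishing advantage equals $\abs{\Pr[\sfhyb{0}{}=1]-\Pr[\sfhyb{1}{}=1]}$, contradicting certified everlasting RNC security. The main subtlety to get right is the pre-challenge key query: because $\cB$ only learns $\nce.\sk$ after committing to the message $(a,c)$, it cannot answer a key query that must be served before $(a,c)$ is even sampled; this timing gap is exactly why one first observes that the two hybrids coincide on $\overline{E}$ and restricts the reduction to event $E$. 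A secondary point requiring care is the split verification---delegating the $\NCE$ certificate check to the RNC challenger while performing the $\NAD.\Modify$/$\NAD.\Vrfy$ check internally---and combining the two acceptance conditions so that $\cB$'s output faithfully tracks the output rule of the FE experiment.
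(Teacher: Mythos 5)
Your proposal is correct and matches the paper's own proof essentially step for step: the same observation that the hybrids coincide when a pre-challenge key query is made, and the same reduction that samples $(a,c)$ itself, submits it as the RNC challenge message, masks $\nad.\ct$ with $X^aZ^c$, answers the post-challenge key query with the RNC-provided secret key, and splits verification between a local $\NAD.\Modify$/$\NAD.\Vrfy$ check and forwarding $\nce.\cert$ to the RNC challenger. Your explicit event-$E$ decomposition is a slightly more careful phrasing of the paper's informal case split, but it is the same argument.
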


\begin{proposition}\label{prop:hyb_1_hyb_2_cefe_si_ad}
\begin{align}
\Pr[\sfhyb{1}{}=1]=\Pr[\sfhyb{2}{}=1].
\end{align}
\end{proposition}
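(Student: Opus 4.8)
The plan is to show that $\sfhyb{1}{}$ and $\sfhyb{2}{}$ are \emph{perfectly} indistinguishable by invoking quantum teleportation (\cref{lem:quantum_teleportation}), so that the two winning probabilities are equal on the nose rather than merely up to a negligible gap. The only place where the experiments differ is how the quantum part of the challenge ciphertext is produced when no key query is made in the pre-challenge phase: in $\sfhyb{1}{}$ the challenger samples $a,c\la\bit^n$ and hands $\cA_1$ the state $\Psi=Z^cX^a\,\nad.\ct\,X^aZ^c$, recording $(a,c)$ for later use in $\NCE.\Reveal$ and $\NAD.\Modify$; in $\sfhyb{2}{}$ the challenger instead hands $\cA_1$ the half $\Psi_A$ of $n$ Bell pairs and later recovers a Pauli key $(x,z)$ by a Bell measurement of $\nad.\ct$ against the other half $\Psi_B$, using $(x,z)$ wherever $\sfhyb{1}{}$ used $(a,c)$.

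First I would line up the registers with the statement of \cref{lem:quantum_teleportation}: the half $\Psi_A$ given to $\cA_1$ plays the role of the teleportation output register $B$ in the lemma, while the half $\Psi_B$ and the $n$ qubits of $\nad.\ct$ play the roles of the lemma's measured registers $A$ and $C$, and the Bell measurement in $\sfhyb{2}{}$ is exactly the teleportation measurement. By \cref{lem:quantum_teleportation} the outcome $(x,z)$ is uniform over $\bit^n\times\bit^n$ and, conditioned on $(x,z)$, the state held by $\cA_1$ is $X^xZ^z\,\nad.\ct\,Z^zX^x$. The key algebraic observation is that a global phase cancels under conjugation, so $Z^cX^a\,\nad.\ct\,X^aZ^c = X^aZ^c\,\nad.\ct\,Z^cX^a$ for all $a,c$ (the two Pauli correction operators differ only by the scalar $(-1)^{\innerp{a}{c}}$). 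Hence, under the identification $(a,c)=(x,z)$, the joint distribution of (state delivered to $\cA_1$, recorded correction) in $\sfhyb{1}{}$ coincides exactly with the joint distribution of (state delivered to $\cA_1$, measurement outcome) in $\sfhyb{2}{}$.

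Next I would argue that this joint identity suffices. The recorded pair is consumed only through $\NCE.\Reveal(\nce.\pk,\nce.\MSK,\nce.\aux,\cdot)$ and $\NAD.\Modify(\cdot,\cdot,\nad.\cert)$, whose first and second slots are precisely the $X$- and $Z$-exponents of the correction, so the identification $(a,c)=(x,z)$ makes these calls agree; all remaining quantities ($\widetilde{\nce.\ct}$, $\nce.\aux$, the $\NAD$ components, the verification, and $\cA$'s own operations) are produced identically in both hybrids and are independent of the choice between direct Pauli masking and teleportation. One point to handle carefully is the timing of the Bell measurement in $\sfhyb{2}{}$: $\cA_1$ receives $\Psi_A$ before the measurement is carried out, but since that measurement acts only on $\Psi_B$ and $\nad.\ct$ whereas $\cA_1$ acts only on $\Psi_A$, the two commute, and by the deferred-measurement principle the final joint distribution does not depend on when the measurement is performed. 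Combining these facts, the two experiments induce identical distributions on the output bit, yielding $\Pr[\sfhyb{1}{}=1]=\Pr[\sfhyb{2}{}=1]$.

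The main obstacle is bookkeeping rather than depth: one must carefully match the register roles and the $X/Z$-exponent conventions across three places—the scheme's masking $Z^cX^a(\cdot)X^aZ^c$, the modification-correctness convention of $\NAD.\Modify$ (first argument the $X$-exponent, second the $Z$-exponent), and the teleportation lemma's form $X^xZ^z(\cdot)Z^zX^x$—and verify that the phase cancellation makes the identification $(a,c)=(x,z)$ genuinely exact rather than correct only up to a sign. Once this alignment is checked, the equality is immediate.
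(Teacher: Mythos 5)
Your proof is correct and takes essentially the same route as the paper: both reduce the equality to \cref{lem:quantum_teleportation}, identifying the teleportation outcome $(x,z)$ with the Pauli key $(a,c)$ and using uniformity of the outcome together with the fact that the conditional state in the unmeasured half equals the Pauli-masked $\nad.\ct$. Your explicit treatment of the global-phase cancellation $Z^cX^a(\cdot)X^aZ^c = X^aZ^c(\cdot)Z^cX^a$ and of the commuting/deferred Bell measurement merely spells out details the paper leaves implicit.
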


\begin{proposition}\label{prop:hyb_2_hyb_3_cefe_si_ad}
\begin{align}
    \Pr[\sfhyb{2}{}=1]=\Pr[\sfhyb{3}{}=1].
\end{align}
\end{proposition}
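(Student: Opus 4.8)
The plan is to show that $\sfhyb{2}{}$ and $\sfhyb{3}{}$ describe \emph{exactly the same} distribution on the experiment's output, so the equality is perfect rather than merely negligible. The sole difference between the two experiments is the moment at which the challenger (i) runs $\nad.\ct\la\NAD.\Enc(\nad.\MPK,m)$ and (ii) performs the Bell-basis measurement of the registers $(\nad.\ct,\Psi_B)$ that yields the teleportation outcome $(x,z)$: in $\sfhyb{2}{}$ both actions happen during the encryption step, whereas in $\sfhyb{3}{}$ they are postponed until just after $\nad.\sk_f\la\NAD.\keygen(\nad.\MSK,f)$ has been computed (or until right after the second key-query step when no key query is ever made). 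I would therefore argue that this is a mere reordering of operations that act on disjoint registers and are drawn from the same distributions, hence leaves every joint distribution untouched. Note that the correctness of the teleportation setup itself (that the state held by $\cA_1$ equals $X^xZ^z\nad.\ct Z^zX^x$ with $(x,z)$ uniform) was already established in the previous proposition, so here only a commutativity argument is required.

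Concretely, I would first observe that $\NAD.\Enc(\nad.\MPK,m)$ takes as input only $\nad.\MPK$ (fixed at setup) and the challenge message $m$ (fixed before the encryption step), together with fresh independent randomness; it does not depend on the queried function $f$, on $\cA_1$'s messages, or on the Bell pair $\ket{\widetilde{0^n0^n}}$. Consequently $\nad.\ct$ has the same distribution whether generated in the encryption step or later, and it is statistically independent of everything happening in between. Second, I would note that the deferred Bell measurement acts only on the challenger's private registers $(\nad.\ct,\Psi_B)$, which are disjoint from every register touched by the intervening operations, namely $\cA_1$'s arbitrary action on its own state (including the transmitted half $\Psi_A$) and the key generation $\NAD.\keygen(\nad.\MSK,f)$. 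Since quantum operations on disjoint registers commute, moving the measurement past these operations does not alter the joint distribution of the outcome $(x,z)$ together with the adversary's view and all of the challenger's internal values.

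Finally I would verify that postponement never deprives the challenger of $(x,z)$ when it is needed: the outcome is consumed only in $\widetilde{\nce.\sk}\la\NCE.\Reveal(\nce.\pk,\nce.\MSK,\nce.\aux,(x,z))$ during the key-query step and in $\nad.\cert^*\la\NAD.\Modify(x,z,\nad.\cert)$ during verification, and in $\sfhyb{3}{}$ the deferred measurement is still completed before each of these uses. Combining the three points, the two experiments implement the identical collection of channels and measurements up to a reordering of commuting, mutually independent pieces, so the induced distributions on $\cA_2$'s output and on the challenger's $\top/\bot$ verdict coincide; hence $\Pr[\sfhyb{2}{}=1]=\Pr[\sfhyb{3}{}=1]$. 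The only real obstacle is making the commutativity bookkeeping airtight — precisely isolating the registers on which each step acts and confirming that deferring the measurement past $\NAD.\keygen$ (the whole purpose of this step, since it is what later enables the switch to $\NAD.\Sim$ in $\sfhyb{4}{}$) creates no hidden dependence — but once the register accounting is in place the equality is immediate and needs no computational assumption, so it holds even against the unbounded adversary $\cA_2$.
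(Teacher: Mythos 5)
Your proposal is correct and takes essentially the same route as the paper: the paper's proof is exactly the observation that $\sfhyb{2}{}$ and $\sfhyb{3}{}$ differ only in the order in which $\NAD.\Enc$ and the Bell measurement on $(\nad.\ct,\Psi_B)$ are performed, so the joint distribution of the ciphertext and decryption key given to the adversary is unchanged. Your more detailed register accounting (independence of $\NAD.\Enc$ from the intervening steps, commutativity of operations on disjoint registers, and availability of $(x,z)$ before its uses in $\NCE.\Reveal$ and $\NAD.\Modify$) simply makes explicit what the paper asserts in one line.
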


\begin{proposition}\label{prop:hyb_3_hyb_4_cefe_si_ad}
If $\Sigma_{\mathsf{cegc}}$ is certified everlasting selective secure,
it holds that
\begin{align}
    \abs{\Pr[\sfhyb{3}{}=1]-\Pr[\sfhyb{4}{}=1]}\leq\negl(\secp).
\end{align}
\end{proposition}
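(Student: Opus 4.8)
The plan is to reduce the gap to the certified everlasting selective security of $\Sigma_{\mathsf{cegc}}$. The first observation is that $\sfhyb{3}{}$ and $\sfhyb{4}{}$ differ in exactly one place: the register that is teleported through the fake RNCE Bell pairs (in the branch $\mathcal{V}=\emptyset$) or that is used to form $\Psi$ (in the branch $\mathcal{V}\neq\emptyset$) is produced by $\mathsf{NAD}.\Enc(\nad.\MPK,m)$ in $\sfhyb{3}{}$ and by $\mathsf{NAD}.\Sim(\nad.\MPK,\mathcal{V},1^{|m|})$ in $\sfhyb{4}{}$. Because $\Sigma_{\mathsf{nad}}$ is instantiated by the garbling-based scheme of \cref{sec:const_func_non_adapt}, this register $\nad.\ct$ consists of a garbled universal circuit $\widetilde{U}$ together with the $\Sigma_{\mathsf{cepk}}$-encryptions of the input labels, and $\mathsf{NAD}.\Sim$ is the simulator of \cref{thm:func_ever_non_ad_security}: it encrypts freshly sampled fake labels $L^*_{i,f[i]\oplus1}$ in the slots not selected by $f$ and replaces the honest garbling $\mathsf{GC}.\Garble$ by the garbling simulator $\mathsf{GC}.\Sim$.

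To isolate the garbling change---the one governed by the hypothesis---I would first pass through an internal sub-hybrid $\sfhyb{3.5}{}$, identical to $\sfhyb{3}{}$ except that the unused label slots already encrypt random $L^*_{i,f[i]\oplus1}$ rather than the real labels; indistinguishability of $\sfhyb{3}{}$ and $\sfhyb{3.5}{}$ is exactly the argument of \cref{prop:Exp_Hyb_1_non_ad_fe} via the cut-and-choose \cref{lem:cut_and_choose_pke}. After this step the sole remaining discrepancy between $\sfhyb{3.5}{}$ and $\sfhyb{4}{}$ is whether $\widetilde{U}$ is honestly garbled or simulated.

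The core step is a reduction $\cB=(\cB_1,\cB_2)$ against $\expb{\cB,\Sigma_{\mathsf{cegc}}}{cert}{ever}{selct}(1^\secp,b)$. The QPT machine $\cB_1$ samples the PKE key pairs $\{\pke.\pk_{i,\alpha},\pke.\sk_{i,\alpha}\}$ (these constitute $\nad.\MPK,\nad.\MSK$) and runs $\NCE.\Setup$ to obtain $(\nce.\pk,\nce.\MSK)$ on its own, sends $\MPK=(\nad.\MPK,\nce.\pk)$ to $\cA_1$, and answers any pre-challenge key query honestly since it holds $\nad.\MSK$ and $\nce.\MSK$. On the challenge message $m$, $\cB_1$ forwards the circuit $U(\cdot,m)$ and input $f$ (the queried function, or a uniformly random $f$ when none was queried, matching $\mathsf{NAD}.\Sim$) to the garbling challenger and receives $(\widetilde{U},\{L_{i,f[i]}\})$. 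It then assembles $\nad.\ct$ itself by $\PKE.\Enc$-ing $L_{i,f[i]}$ in the selected slots and fresh $L^*_{i,f[i]\oplus1}$ in the others, teleports this $\nad.\ct$ through the fake Bell pairs $\Psi_B$ (respectively embeds it into $\Psi$) exactly as in $\sfhyb{3.5}{}/\sfhyb{4}{}$, and runs the RNCE $\Fake$/$\Reveal$ machinery locally. When $\cA_1$ outputs its certificate, $\cB$ relays $\mathsf{gc}.\cert$ to the garbling challenger, verifies the PKE and RNCE certificate components itself, and---only if the garbling challenger returns $\top$ and all local checks pass---releases $\MSK=(\nad.\MSK,\nce.\MSK)$ to the unbounded $\cA_2$ and outputs $\cA_2$'s guess. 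For $b=0$ this is a perfect simulation of $\sfhyb{3.5}{}$ and for $b=1$ of $\sfhyb{4}{}$, so a non-negligible gap breaks the certified everlasting selective security of $\Sigma_{\mathsf{cegc}}$ (\cref{def:ever_sec_ever_garb}).

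The main obstacle I expect is the certificate and teleportation bookkeeping in the $\mathcal{V}=\emptyset$ branch. Since the certified everlasting selective experiment returns to $\cB_2$ only a $\top/\bot$ flag and no secret material, $\cB$ must itself reconstruct the whole $\MSK$ for $\cA_2$, and it must ensure that the Bell-measurement outcomes $(x,z)$ feeding $\NCE.\Reveal(\nce.\pk,\nce.\MSK,\nce.\aux,(x,z))$ and $\NAD.\Modify(x,z,\nad.\cert)$ remain jointly consistent with the teleported state, mirroring the identity already established in \cref{prop:hyb_2_hyb_3_cefe_si_ad}. Checking that $\cB$'s emulation of both branches is exact, and that relaying $\top/\bot$ composes cleanly with $\cB$'s own verification of the PKE and RNCE certificates, is the delicate part; the garbling indistinguishability itself then closes the argument.
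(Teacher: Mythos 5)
Your proposal is correct in outline but takes a genuinely different, non-modular route from the paper. The paper's proof of \cref{prop:hyb_3_hyb_4_cefe_si_ad} treats $\Sigma_{\mathsf{nad}}$ as a black box: it is a single reduction to the 1-bounded certified everlasting \emph{non-adaptive} security of $\Sigma_{\mathsf{nad}}$ (\cref{def:ever_non_ada_security_cert_ever_func_sim}), in which $\cB$ receives $\nad.\MPK$ from the non-adaptive challenger, uses the teleportation deferral (the state handed to $\cA_1$ is only half of Bell pairs) to postpone its challenge query until the function $f$ is known, relays the $\nad$-certificate, and receives $\nad.\MSK$ from the challenger to pass to $\cA_2$. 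The hypothesis printed in the proposition (``$\Sigma_{\mathsf{cegc}}$ is certified everlasting selective secure'') is evidently a typo inherited from the non-adaptive proof; the hypothesis the paper actually uses is the one in \cref{thm:ever_security_single_ad}. You instead unfold the instantiation of $\Sigma_{\mathsf{nad}}$ from \cref{sec:const_func_non_adapt} and replay its two reductions inside the adaptive proof: the cut-and-choose step (\cref{lem:cut_and_choose_pke}, as in \cref{prop:Exp_Hyb_1_non_ad_fe}) for the off-path label slots, then the garbling reduction. Your key insight---that the deferral of $\mathsf{NAD}$-ciphertext creation to key-query time makes the \emph{selective} garbling game compatible with post-challenge key queries---is exactly right and is the same mechanism the paper exploits. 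What your route costs is modularity and hypotheses: it proves the proposition only for that specific $\Sigma_{\mathsf{nad}}$ (whereas \cref{sec:const_fe_adapt} is generic), and your sub-hybrid $\sfhyb{3}{}\to\sfhyb{3.5}{}$ additionally requires the certified everlasting IND-CPA security of $\Sigma_{\mathsf{cepk}}$, which is not part of the proposition's hypothesis; invoking \cref{thm:func_ever_non_ad_security} once, as the paper does, packages both reductions and keeps the proof generic.

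One step of your reduction needs repair as written: you say $\cB$ ``relays $\mathsf{gc}.\cert$ to the garbling challenger,'' but $\cA_1$ deletes a Pauli-one-time-padded (teleported) copy of $\nad.\ct$, so the raw garbling component of $\nad.\cert$ will in general \emph{fail} verification under the challenger's $\mathsf{gc}.\vk$. Before relaying, $\cB$ must apply the garbling scheme's $\Modify$ algorithm with the teleportation outcomes $(x,z)$ restricted to the $\widetilde{U}$ register (and likewise $\PKE.\Modify$ on the components sent to the cut-and-choose challenger). This is precisely why the paper imposes modification correctness on $\Sigma_{\mathsf{cegc}}$ and $\Sigma_{\mathsf{cepk}}$, and it silently uses the fact that $\NAD.\Modify$ for this instantiation decomposes componentwise across the garbled circuit and the PKE ciphertexts. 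You flag the ``certificate and teleportation bookkeeping'' as the delicate part but leave it unresolved; with the $\Modify$ corrections inserted, $\cB$'s conjunction of the challenger's $\top/\bot$ with its local PKE and RNCE checks exactly matches the hybrids' verification (the garbling challenger releasing no secret is harmless here, since $\cB$ generated $\nad.\MSK$ and $\nce.\MSK$ itself), and your argument goes through.
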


\begin{proof}[Proof of \cref{prop:exp_hyb_1_cefe_si_ad}]
When an adversary makes key queries in step \ref{step:key_query_adapt_fe_1}, it is clear that $\Pr[\sfhyb{0}{}=1]=\Pr[\sfhyb{1}{}=1]$.
Hence, we consider the case where the adversary does not make a key query in step $\ref{step:key_query_adapt_fe_1}$ below.

We assume that $\abs{\Pr[\sfhyb{0}{}=1]-\Pr[\sfhyb{1}{}=1]}$ is non-negligible, and construct an adversary $\cB$ that breaks the certified everlasting RNC security of $\Sigma_{\mathsf{cence}}$.
Let us describe how $\cB$ works below.
\begin{enumerate}
    \item $\cB$ receives $\nce.\pk$ from the challenger of $\expd{\Sigma_{\mathsf{cence}},\cB}{cert}{ever}{rec}{nc}(\secp,b)$,
    generates $(\mathsf{nad}.\MPK,\mathsf{nad}.\MSK)\la \mathsf{NAD}.\keygen(1^\secp)$,
    and sends $(\nad.\MPK,\nce.\pk)$ to $\cA_1$.
    %\item\label{step:proof_prop_hyb_0_hyb_1_ad_fe} $\cA_1$ is allowed to send a key query at most one time.
    %For a key query, $\cB$ receives an function $f$, generates $\mathsf{nad}.\sk_f\la\mathsf{NAD}.\keygen(\mathsf{nad}.\MSK,f)$ and $\nce.\sk\la\NCE.\keygen(\nce.\MSK)$,
    %and sends $(\nce.\sk,\mathsf{nad}.\sk_f)$ to $\cA_1$.
    \item $\cB$ receives a message $m\in\Ms$, computes $(\mathsf{nad}.\vk,\mathsf{nad}.\ct)\la\mathsf{NAD}.\Enc(\mathsf{nad}.\MPK,m)$, generates $a,c\la\bit^n$, computes $\Psi\seteq Z^cX^a\mathsf{nad}.\ct X^aZ^c$, sends $(a,c)$ to the challenger, receives $(\nce.\ct^*,\nce.\sk^*)$ from the challenger, and sends $(\Psi,\nce.\ct^*)$ to $\cA_1$.
    \item $\cA_1$ is allowed to send a key query at most one time.
    For a key query, $\cB$ receives an function $f$, generates $\mathsf{nad}.\sk_f\la\mathsf{NAD}.\keygen(\mathsf{nad}.\MSK,f)$,
    and sends $(\nad.\sk_f,\nce.\sk^*)$ to $\cA_1$.
    \item $\cA_1$ sends $(\nad.\cert,\nce.\cert)$ to $\cB$ and its internal state to $\cA_2$.
    \item $\cB$ sends $\nce.\cert$ to the challenger, and receives $\nce.\MSK$ or $\bot$ from the challenger.
    $\cB$ computes $\mathsf{nad}.\cert^*\la\NAD.\Modify(a,c,\nad.\cert)$ and $\mathsf{NAD}.\Vrfy(\mathsf{nad}.\vk,\mathsf{nad}.\cert^*)$.
    If the result is $\top$ and $\cB$ receives $\nce.\MSK$ from the challenger,
    $\cB$ sends $(\nad.\MSK,\nce.\MSK)$ to $\cA_2$.
    Otherwise, $\cB$ outputs $\bot$, sends $\bot$ to $\cA_2$, and aborts.
    \item $\cA_2$ outputs $b'$.
    \item $\cB$ outputs $b'$. 
\end{enumerate}
It is clear that $\Pr[1\la\cB|b=0]=\Pr[\sfhyb{0}{}=1]$ and $\Pr[1\la\cB|b=1]=\Pr[\sfhyb{1}{}=1]$.
By assumption, $\abs{\Pr[\sfhyb{0}{}=1]-\Pr[\sfhyb{1}{}=1]}$ is non-negligible, and therefore $\allowbreak\abs{\Pr[1\la\cB|b=0]-\allowbreak\Pr[1\la\cB|b=1]}$ is non-negligible,
which contradicts the certified everlasting RNC security of $\Sigma_{\mathsf{cence}}$. 
\end{proof}

\begin{proof}[Proof of \cref{prop:hyb_1_hyb_2_cefe_si_ad}]
%When an adversary makes key queries in step \ref{step:key_query_adapt_fe_1}, it is clear that $\Pr[\sfhyb{1}{}=1]=\Pr[\sfhyb{2}{}=1]$ since $\sfhyb{1}{}$ is identical to $\sfhyb{2}{}$. Hence, we consider the case where the adversary does not make a key query in step $\ref{step:key_query_adapt_fe_1}$ below.

We clarify the difference between $\sfhyb{1}{}$ and $\sfhyb{2}{}$.
First, in $\sfhyb{2}{}$, the challenger uses $(x,z)$ instead of using $(a,c)$ as in $\sfhyb{1}{}$.
Second, in $\sfhyb{2}{}$, the challenger sends $\Psi_A$ to $\cA_1$ instead of sending $Z^{c}X^{a}\mathsf{nad}.\ct X^{a}Z^{c}$ to $\cA_1$ as in $\sfhyb{1}{}$.
Hence,
it is sufficient to prove that $x$ and $z$ are uniformly randomly distributed and
$\Psi_A$ is identical to $Z^{z}X^{x}\mathsf{nad}.\ct X^{x}Z^{z}$.
These two things are obvious from \cref{lem:quantum_teleportation}.
\end{proof}

\begin{proof}[Proof of \cref{prop:hyb_2_hyb_3_cefe_si_ad}]
The difference between $\sfhyb{2}{}$ and $\sfhyb{3}{}$ is only the order of operating the algorithm $\NAD.\Enc$ and the Bell measurement on $\nad.\CT$ and $\Psi_B$.
Therefore, it is clear that the probability distribution of the ciphertext and the decryption key given to the adversary in $\sfhyb{2}{}$ is identical to that the ciphertext and the decryption key given to the adversary in $\sfhyb{3}{}$.
\end{proof}

\begin{proof}[Proof of \cref{prop:hyb_3_hyb_4_cefe_si_ad}]
We assume that $\abs{\Pr[\sfhyb{3}{}=1]-\Pr[\sfhyb{4}{}=1]}$ is non-negligible, and construct an adversary $\cB$ that breaks the 1-bounded certified everlasting non-adaptive security of $\Sigma_{\mathsf{nad}}$.
Let us describe how $\cB$ works below.
\begin{enumerate}
    \item $\cB$ receives $\mathsf{nad}.\MPK$ from the challenger of $\expd{\Sigma_{\mathsf{nad}},\cB}{cert}{ever}{non}{adapt}(\secp,b)$, 
    generates $(\nce.\pk,\nce.\MSK)\la\NCE.\Setup(1^\secp)$, and sends $(\nad.\MPK,\nce.\pk)$ to $\cA_1$.
    \item\label{step:key_query_adapt_fe_hyb_3_hyb_4}  $\cA_1$ is allowed to call a key query at most one time.
    For a key query, $\cB$ receives $f$ from $\cA_1$, sends $f$ to the challenger as a key query, receives $\mathsf{nad}.\sk_f$ from the challenger, computes $\nce.\sk\la\NCE.\keygen(\nce.\MSK)$, and sends $(\nad.\sk_f,\nce.\sk)$ to $\cA_1$.
    \item $\cA_1$ chooses $m\in\Ms$ and sends $m$ to $\cB$.
    \item $\cB$ does the following.
    \begin{itemize}
        \item If a key query is called in step~\ref{step:key_query_adapt_fe_hyb_3_hyb_4}, $\cB$ sends a challenge query $m$ to the challenger, receives $\nad.\ct$ from the challenger, generates $a,c\la\bit^n$,
        $\Psi\seteq Z^{c}X^{a}\mathsf{nad}.\ct X^{a}Z^{c}$ and $(\nce.\vk,\nce.\ct)\la\NCE.\Enc(\nce.\pk,(a,c))$,
        and sends $(\Psi,\nce.\CT)$ to $\cA_1$.
        \item If a key query is not called in step~\ref{step:key_query_adapt_fe_hyb_3_hyb_4}, $\cB$ generates $\ket{\widetilde{0^n0^n}}$.
        Let $\Psi_A\seteq \Tr_B(\ket{\widetilde{0^n0^n}}\bra{\widetilde{0^n0^n}})$ and $\Psi_B\seteq \Tr_A(\ket{\widetilde{0^n0^n}}\bra{\widetilde{0^n0^n}})$.
        $\cB$ computes $(\nce.\vk,\widetilde{\nce.\ct},\nce.\aux)\la\NCE.\Fake(\nce.\pk)$ and sends $(\Psi_A,\widetilde{\nce.\ct})$ to $\cA_1$.
    \end{itemize}
    \item
    If a key query is not called in step~\ref{step:key_query_adapt_fe_hyb_3_hyb_4}, 
    $\cA_1$ is allowed to make a key query at most one time.
    If $\cB$ receives an function $f$ as key query, $\cB$ sends $f$ to the challenger as key query, and receives $\mathsf{nad}.\sk_f$ from the challenger.
    $\cB$ sends a challenge query $m$ to the challenger, receives $\nad.\ct$, measures the $i$-th qubit of $\nad.\ct$ and $\Psi_B$ in the Bell basis, and let $(x_i,z_i)$ be the measurement outcome for all $i\in[n]$. 
    $\cB$ computes $\widetilde{\nce.\sk}\la\NCE.\Reveal(\nce.\pk,\nce.\MSK,\nce.\aux,(x,z))$ and sends $(\nad.\sk_f,\widetilde{\nce.\sk})$ to $\cA_1$.
    \item If $\cB$ does not receive a key query throughout the experiment,
    $\cB$ sends a challenge query $m$ to the challenger, receives $\mathsf{nad}.\ct$, and measures the $i$-th qubit of $\mathsf{nad}.\ct$ and $\Psi_B$ in the Bell basis and let $(x_i,z_i)$ be the measurement outcome for all $i\in[n]$.
    \item $\cA_1$ sends $(\nad.\cert,\nce.\cert)$ to $\cB$ and its internal state to $\cA_2$.
    \item $\cB$ computes $\mathsf{nad}.\cert^*\la\NAD.\Modify(x^*,z^*,\nad.\cert)$, where $(x^*,z^*)=(a,c)$ if a key query is called in step~\ref{step:key_query_adapt_fe_hyb_3_hyb_4} and $(x^*,z^*)=(x,z)$ if a key query is not called in step~\ref{step:key_query_adapt_fe_hyb_3_hyb_4}.
        $\cB$ sends $\mathsf{nad}.\cert$ to the challenger, and receives $\mathsf{nad}.\MSK$ or $\bot$ from the challenger.
        $\cB$ computes $\NCE.\Vrfy(\nce.\vk,\nce.\cert)$.
        If the result is $\top$ and $\cB$ receives $\mathsf{nad}.\MSK$ from the challenger, $\cB$ sends $(\nad.\MSK,\nce.\MSK)$ to $\cA_2$.
        Otherwise, $\cB$ outputs $\bot$, sends $\bot$ to $\cA_2$, and aborts.
    \item $\cA_2$ outputs $b'$.
    \item $\cB$ outputs $b'$. 
\end{enumerate}

It is clear that $\Pr[1\la\cB|b=0]=\Pr[\sfhyb{3}{}=1]$ and $\Pr[1\la\cB|b=1]=\Pr[\sfhyb{4}{}=1]$.
By assumption, $\abs{\Pr[\sfhyb{3}{}=1]-\Pr[\sfhyb{4}{}=1]}$ is non-negligible, and therefore $\abs{\Pr[1\la\cB|b=0]-\Pr[1\la\cB|b=1]}$ is non-negligible, which contradicts the 1-bounded certified everlasting non-adaptive security of $\Sigma_{\mathsf{nad}}$. 
\end{proof}

%\input{Sec_cut_choose}
% !TEX root = main.tex
% !TEX spellcheck = en-US

\section{Proof of \cref{thm:ever_bounded_fe}}\label{Sec:Proof_bounded_fe}
\begin{proof}[Proof of \cref{thm:ever_bounded_fe}]
Let us denote the simulating algorithm of $\Sigma_{\one}$ as $\ONE.\Sim=\ONE.(\Sim_1,\Sim_2,\Sim_3)$.
Let us describe how the simulator $\Sim=(\Sim_1,\Sim_2,\Sim_3)$ works below.
\begin{description}
\item[$\Sim_1(\MPK,\mathcal{V},1^{|x|})$:] Let $q^*$ be the number of times that $\cA_1$ has made key queries before it sends a challenge query. 
\begin{enumerate}
    \item Parse $\MPK\seteq \{\one.\MPK_{i}\}_{i\in[N]}$ and $\mathcal{V}\seteq \{C_j,C_j(x),(\Gamma_j,\Delta_j,\{\one.\sk_{C_j,\Delta_j,i}\}_{i\in[\Gamma_j]})\}_{j\in[q^*]}$.
    \item Generate a uniformly random set $\Gamma_i\subseteq[N]$ of size $Dt+1$ and a uniformly random set $\Delta_i\subseteq[S]$ of size $v$ for all $i\in\{q^*+1,\cdots, q\}$.
    Let $\Delta_0\seteq \emptyset$. Let $\mathcal{L}\seteq \bigcup_{i\neq i'}(\Gamma_i\cap \Gamma_{i'})$.
    $\Sim_1$ aborts if $\left|\mathcal{L}\right|> t$ or
    there exists some $i\in[q]$ such that $\Delta_i \setminus (\bigcup_{j\neq i}\Delta_j)=\emptyset$.
    \item $\Sim_1$ uniformly and independently samples $\ell$ random degree $t$ polynomials $\mu_1,\cdots, \mu_\ell$ whose constant terms are all $0$. 
    \item $\Sim_1$ samples the polynomials $\xi_1,\cdots, \xi_S$ as follows for $j\in[q]$:
    \begin{itemize}
        \item fix $a^*\in\Delta_j \setminus (\Delta_0\cup\cdots\cup\Delta_{j-1})$;
        \item for all $a\in(\Delta_j \setminus (\Delta_0\cup\cdots\cup \Delta_{j-1})) \setminus \{a^*\}$, set $\xi_a$ to be a uniformly random degree $Dt$ polynomial whose constant term is $0$;
        \item if $j\leq q^*$, pick a random degree $Dt$ polynomial $\eta_j(\cdot)$ whose constant term is $C_j(x)$; if $j>q^*$, pick random values for $\eta_j(i)$ for all $i\in\mathcal{L}$;
        \item the evaluation of $\xi_{a^*}$ on the points in $\mathcal{L}$ is defined by the relation:
        \begin{align}
            \eta_j(\cdot)=C_j(\mu_1(\cdot),\cdots, \mu_\ell(\cdot))+\sum_{a\in\Delta_j}\xi_a(\cdot).
        \end{align}
        \item Finally, for all $a\notin (\Delta_1\cup\cdots \cup \Delta_q)$, set $\xi_a$ to be a uniformly random degree $Dt$ polynomial whose constant term is $0$.
    \end{itemize}
    \item For each $i\in\mathcal{L}$, $\Sim_1$ computes 
    \begin{align}
        (\one.\vk_i,\one.\ct_i)\la\ONE.\Enc(\one.\MPK_i,(\mu_1(i),\cdots,\mu_\ell(i),\xi_1(i),\cdots, \xi_S(i) )).
    \end{align}
    \item For each $i\notin \mathcal{L}$, $\Sim_1$ does the following:
    \begin{itemize}
        \item If $i\in \Gamma_j$ for some $j\in[q^*]$
        \footnote
        { Note that $j$ is uniquely determined since $i\notin \mathcal{L}$ .
        },
        computes
        \begin{align}
        (\one.\ct_i,\one.\state_i)\la\ONE.\Sim_1(\one.\MPK_i,(G_{C_j,\Delta_j,i},\eta_j(i),\one.\sk_{C_j,\Delta_j,i}),1^{|m|}).
        \end{align}
        \item If $i\notin\Gamma_j$ for all $j\in[q^*]$, 
        computes 
        \begin{align}
            (\one.\ct_i,\one.\state_i)\la\ONE.\Sim_1(\one.\MPK_i,\emptyset,1^{|m|}).
        \end{align}
    \end{itemize}
    \item Output $\ct\seteq \{\one.\ct_i\}_{i\in[N]}$ and $\state\seteq (\{\Gamma_i\}_{i\in[q]},\{\Delta_i\}_{i\in[q]},\{\eta_j(i)\}_{j\in\{q^*+1,\cdots,q\},i\in\mathcal{L}},\{\one.\state_i\}_{i\in[N] \setminus \mathcal{L}},\allowbreak \{\one.\vk_i\}_{i\in\mathcal{L}})$.
\end{enumerate}
\item[$\Sim_2(\MSK,C_j,C_j(x),\state)$:]The simulator simulates the $j$-th key query for $j>q^*$.
\begin{enumerate}
    \item Parse $\MSK\seteq\{\one.\MSK_i\}_{i\in[N]}$ and $\state_{j-1} \seteq  (\{\Gamma_i\}_{i\in[q]},\{\Delta_i\}_{i\in[q]},\{\eta_s(i)\}_{s\in\{q^*+1,\cdots,q\},i\in\mathcal{L}},\{\one.\state_i\}_{i\in[N] \setminus \mathcal{L}},\allowbreak \{\one.\vk_i\}_{i\in\mathcal{L}})$.
    \item For each $i\in\Gamma_j\cap \mathcal{L}$, generate $\one.\sk_{C_j,\Delta_j,i}\la\ONE.\keygen(\one.\MSK_i,G_{C_j,\Delta_j})$.
    \item For each $i\in\Gamma_j \setminus \mathcal{L}$, 
    generate a random degree $Dt$ polynomial $\eta_j(\cdot)$ whose constant term is $C_j(x)$ and subject to the constraints on the values in $\mathcal{L}$ chosen earlier,
    and generate 
    \begin{align}
    (\one.\sk_{C_j,\Delta_j,i},\one.\state_i^*)\la\ONE.\Sim_2(\one.\MSK_i,\eta_j(i),G_{C_j,\Delta_j},\one.\state_i).
    \end{align}
    For simplicity, let us denote $\one.\state_i^*$ as $\one.\state_i$ for $i\in\Gamma_j \setminus \mathcal{L}$.
    \item Output $\sk_{C_j}\seteq(\Gamma_j,\Delta_j,\{\one.\sk_{C_j,\Delta_j,i}\}_{i\in\Gamma_j})$ and
    $\state_j\seteq (\{\Gamma_i\}_{i\in[q]},\{\Delta_i\}_{i\in[q]},\{\eta_j(i)\}_{j\in\{q^*+1,\cdots,q\},i\in\mathcal{L}},\allowbreak \{\one.\state_i\}_{i\in[N]\setminus \mathcal{L}},\{\one.\vk_i\}_{i\in\mathcal{L}})$.
\end{enumerate}
\item[$\Sim_3(\state^*)$:]The simulator simulates a verification key.
\begin{enumerate}
    \item Parse $\state^*\seteq (\{\Gamma_i\}_{i\in[q]},\{\Delta_i\}_{i\in[q]},\{\eta_j(i)\}_{j\in\{q^*+1,\cdots,q\},i\in\mathcal{L}}, \{\one.\state_i\}_{i\in[N] \setminus \mathcal{L}},\{\one.\vk_i\}_{i\in\mathcal{L}})$.
    \item For each $i\in [N] \setminus \mathcal{L}$, compute $\one.\vk_i\la\ONE.\Sim_3(\one.\state_i)$.
    \item Output $\vk\seteq\{\one.\vk_i\}_{i\in[N]}$.
\end{enumerate}
Let us define the sequence of hybrids as follows.

\item [$\sfhyb{0}{}$:] This is identical to $\expc{\Sigma_{\mathsf{cefe}},\cA}{cert}{ever}{adapt}(\secp, 0)$.
\begin{enumerate}
    \item\label{step:keygen_bounded_fe} The challenger generates $(\one.\MPK_i,\one.\MSK_i)\la\ONE.\Setup(1^\secp)$ for $i\in[N]$.
    \item $\cA_1$ is allowed to call key queries at most $q$ times.
    For the $j$-th key query, the challenger receives an function $C_{j}$ from $\cA_1$, generates a uniformly random set $\Gamma_{j}\in[N]$ of size $Dt+1$ and $\Delta_{j}\in [S]$ of size $v$.
    For $i\in \Gamma_{j}$, the challenger generates $\mathsf{one}.\sk_{C_{j},\Delta_{j},i}\la\mathsf{ONE}.\keygen(\one.\MSK_i,G_{C_{j},\Delta_{j}})$, and sends $(\Gamma_{j},\Delta_{j},\{\one.\sk_{C_{j},\Delta_{j},i}\}_{i\in\Gamma_{j}})$ to $\cA_1$.
    Let $q^*$ be the number of times that $\cA_1$ has called key queries in this step.
    \item $\cA_1$ chooses $x\in\Ms$ and sends $x$ to the challenger.
    \item\label{step:enc_bounded_fe} The challenger generates a random degree $t$ polynomial $\mu_i(\cdot)$ whose constant term is $x[i]$ for $i\in[\ell]$
    and a random degree $Dt$ polynomial $\xi_i(\cdot)$ whose constant term is $0$.
    For $i\in[N]$, the challenger computes $(\one.\vk_i,\one.\ct_i)\la\ONE.\Enc(\one.\MPK_i,(\mu_1(i),\cdots,\mu_{\ell}(i),\xi_1(i),\cdots,\xi_S(i)))$,
    and sends $\{\one.\ct_i\}_{i\in[N]}$ to $\cA_1$.
    \item\label{step:key_bounded_fe_2} $\cA_1$ is allowed to call a key query at most $q-q^*$ times.
    For the $j$-th key query, the challenger receives an function $C_{j}$ from $\cA_1$, generates a uniformly random set $\Gamma_{j}\in[N]$ of size $Dt+1$ and $\Delta_{j}\in [S]$ of size $v$.
    For $i\in \Gamma_{j}$, the challenger generates $\mathsf{one}.\sk_{C_{j},\Delta_{j},i}\la\mathsf{ONE}.\keygen(\one.\MSK_i,G_{C_{j},\Delta_{j}})$, and sends $(\Gamma_{j},\Delta_{j},\{\one.\sk_{C_{j},\Delta_{j},i}\}_{i\in\Gamma_{j}})$ to $\cA_1$.
    %Let $q'$ be the number of times that $\cA_1$ has called key queries in this step.
    \item $\cA_1$ sends $\{\one.\cert_i\}_{i\in[N]}$ to the challenger and its internal state to $\cA_2$.
    \item If $\top\la \ONE.\Vrfy(\one.\vk_i,\one.\cert_i)$ for all $i\in[N]$, the challenger outputs $\top$ and sends $\{\one.\MSK_i\}_{i\in[N]}$ to $\cA_2$.
    Otherwise, the challenger outputs $\bot$ and sends $\bot$ to $\cA_2$.
    \item $\cA_2$ outputs $b$.
    \item The experiment outputs $b$ if the challenger outputs $\top$.
    Otherwise, the experiment outputs $\bot$.
\end{enumerate}
\item[$\sfhyb{1}{}$:] This is identical to $\sfhyb{0}{}$ except for the following three points.
First, the challenger generates uniformly random set $\Gamma_i\in[N]$ of size $Dt+1$ and $\Delta_{i}\in [S]$ of size $v$ for $i\in\{q^*+1,\cdots,q\}$ in step \ref{step:enc_bounded_fe} instead of generating them when a key query is called.
Second, if $\left|\mathcal{L}\right|> t$, the challenger aborts and the experiment outputs $\bot$.
Third, if there exists some $i\in[q]$ such that $\Delta_i \setminus (\bigcup_{j\neq i}\Delta_j)=\emptyset$, the challenger aborts and the experiment outputs $\bot$.

\item[$\sfhyb{2}{}$:]This is identical to $\sfhyb{1}{}$ except that the challenger samples $\xi_1,\cdots,\xi_S, \eta_1,\cdots,\eta_q$ as in the simulator $\Sim_1$ described above.

\item[$\sfhyb{3}{}$:]
This is identical to $\sfhyb{2}{}$ except that the challenger generates $\{\one.\ct_i\}_{i\in[N] \setminus \{\mathcal{L}\}}$,
$\{\one.\sk_{C_j,\Delta_j,i}\}_{i\in\Gamma_j}$ for $j\in\{q^*+1,\cdots ,q'\}$,
and $\vk\seteq\{\one.\vk_i\}_{i\in[N] \setminus \{\mathcal{L}\}}$ as in the simulator $\Sim=(\Sim_1,\Sim_2,\Sim_3)$ described above,
where $q'$ is the number of key queries that the adversary makes in total.

\item[$\sfhyb{4}{}$:]
This is identical to $\sfhyb{3}{}$ except that the challenger generates $\mu_1,\cdots, \mu_\ell$ as in the simulator $\Sim_1$ described above.
\end{description}

From the definition of $\expc{\Sigma_{\mathsf{cefe}},\cA}{cert}{ever}{adapt}(\secp, b)$ and $\Sim=(\Sim_1,\Sim_2,\Sim_3)$,
it is clear that $\Pr[\sfhyb{0}{}=1]=\Pr[\expc{\Sigma_{\mathsf{cefe}},\cA}{cert}{ever}{adapt}(\secp,0)=1]$ and
$\Pr[\sfhyb{4}{}=1]=\Pr[\expc{\Sigma_{\mathsf{cefe}},\cA}{cert}{ever}{adapt}(\secp,1)=1]$.
Therefore, \cref{thm:ever_bounded_fe} easily follows from \cref{prop:hyb_0_hyb_1_bounded_fe,prop:hyb_1_hyb_2_bounded_fe,prop:hyb_2_hyb_3_bounded_fe,prop:hyb_3_hyb_4_bounded_fe} (whose proofs are given later).
\end{proof}

\begin{proposition}\label{prop:hyb_0_hyb_1_bounded_fe}
\begin{align}
    \abs{\Pr[\sfhyb{0}{}=1]-\Pr[\sfhyb{1}{}=1]}\leq \negl(\lambda).
\end{align}
\end{proposition}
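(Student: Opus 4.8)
The plan is to argue that $\sfhyb{0}{}$ and $\sfhyb{1}{}$ induce \emph{identical} views for the adversary except on a ``bad'' event that the parameter choices make negligibly likely, and then to bound that event by two elementary combinatorial estimates. First I would dispense with the change in the \emph{timing} of the sampling: moving the generation of $\{\Gamma_i,\Delta_i\}_{i\in\{q^*+1,\dots,q\}}$ from the key-query steps into the encryption step leaves the joint distribution of all values delivered to $\cA_1$ and $\cA_2$ unchanged, since each $\Gamma_i$ is a uniformly random size-$(Dt+1)$ subset of $[N]$ and each $\Delta_i$ a uniformly random size-$v$ subset of $[S]$, drawn independently of everything else. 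Consequently the only genuine difference is the two abort conditions added in $\sfhyb{1}{}$. Writing $F$ for the event ``$\left|\mathcal{L}\right|>t$ \emph{or} there exists $i\in[q]$ with $\Delta_i\subseteq\bigcup_{j\neq i}\Delta_j$'' (where $\mathcal{L}=\bigcup_{i\neq i'}(\Gamma_i\cap\Gamma_{i'})$), the two experiments run identically conditioned on $\overline{F}$, and $\sfhyb{1}{}=1$ is impossible on $F$; hence by the difference lemma (\cref{lem:defference}) it suffices to show $\Pr[F]\leq\negl(\lambda)$.

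Second I would bound $\Pr[\left|\mathcal{L}\right|>t]$. Writing $\left|\mathcal{L}\right|=\sum_{k\in[N]}Y_k$ with $Y_k$ the indicator that element $k$ lies in at least two of the sets $\Gamma_1,\dots,\Gamma_q$, linearity of expectation together with $\Exp[\left|\Gamma_i\cap\Gamma_{i'}\right|]=(Dt+1)^2/N$ gives $\Exp[\left|\mathcal{L}\right|]\leq\binom{q}{2}(Dt+1)^2/N$, and substituting $N=\Theta(D^2q^2t)$ shows $\Exp[\left|\mathcal{L}\right|]\leq t/2$ for a suitable constant. Because the $\Gamma_i$ are independent and the membership indicators of a fixed-size random subset are negatively associated, the $Y_k$ are negatively associated, so a Chernoff-type bound applies to their sum, yielding $\Pr[\left|\mathcal{L}\right|>t]\leq\Pr[\left|\mathcal{L}\right|>2\Exp[\left|\mathcal{L}\right|]]\leq e^{-\Omega(t)}$. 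Since $t=\Theta(q^2\lambda)=\Omega(\lambda)$, this is negligible; this is precisely where the $\lambda$ factor baked into $t$ is used.

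Third I would bound the cover-freeness failure $\Pr[\exists i\colon\Delta_i\subseteq\bigcup_{j\neq i}\Delta_j]$. Fixing $i$, the set $\bigcup_{j\neq i}\Delta_j$ has size at most $(q-1)v$, so the probability that the uniformly random $v$-subset $\Delta_i$ falls entirely inside it is at most $\binom{(q-1)v}{v}/\binom{S}{v}\leq\big((q-1)v/S\big)^{v}$; with $S=\Theta(vq^2)$ this is $O\!\big((1/q)^{v}\big)$, which is negligible because $v=\Theta(\lambda)$. A union bound over $i\in[q]$ keeps it negligible. Combining the two estimates gives $\Pr[F]\leq\negl(\lambda)$, and the proposition follows.

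The step I expect to be the main obstacle is the concentration bound for $\left|\mathcal{L}\right|$: a crude Markov bound only yields $\Pr[\left|\mathcal{L}\right|>t]\leq 1/2$, and a naive bounded-difference (McDiarmid) estimate over the $q$ sets is also too weak, because changing a single $\Gamma_i$ can move $\left|\mathcal{L}\right|$ by as much as $Dt+1$. The clean route is to view $\left|\mathcal{L}\right|$ as a sum of negatively associated element-indicators and invoke a Chernoff-type tail inequality; verifying the negative association (or otherwise justifying the exponential tail) under the stated parameters is the technical heart of the argument, and the calibration $t=\Theta(q^2\lambda)$, $N=\Theta(D^2q^2t)$, $S=\Theta(vq^2)$, $v=\Theta(\lambda)$ is exactly what makes both estimates succeed.
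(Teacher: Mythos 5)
Your proposal is correct, and its reduction skeleton coincides with the paper's: both observe that moving the sampling of $\{\Gamma_i,\Delta_i\}_{i\in\{q^*+1,\dots,q\}}$ into the encryption step leaves the adversary's view unchanged, and both invoke the difference lemma (\cref{lem:defference}) to reduce the proposition to bounding the bad-event probability; you merge the two abort conditions into a single event $F$ and apply the lemma once, where the paper interposes two intermediate hybrids ($\sfhyb{0}{'}$ and $\sfhyb{0}{*}$) and applies it twice --- an inessential difference. Where you genuinely diverge is in the combinatorics: the paper simply cites \cref{lem:small_pair,lem:coverfree} from \cite{C:GorVaiWee12} as black boxes, whereas you prove both bounds inline. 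Your cover-freeness estimate $\binom{(q-1)v}{v}/\binom{S}{v}\le\bigl((q-1)v/S\bigr)^{v}=O(q^{-v})$ with a union bound over $i\in[q]$ is exactly right and recovers the cited $q\cdot 2^{-\Omega(\lambda)}$. Your tail bound for $|\mathcal{L}|$ via negative association is also sound (indicators of a fixed-size random subset are negatively associated, negative association is preserved under independent unions and under monotone functions of disjoint blocks, and Chernoff--Hoeffding bounds hold for such sums), but one step is glossed over: $\Pr[|\mathcal{L}|>2\Exp[|\mathcal{L}|]]\le e^{-\Exp[|\mathcal{L}|]/3}$ yields $e^{-\Omega(t)}$ only when $\Exp[|\mathcal{L}|]=\Omega(t)$, and your calculation supplies only the \emph{upper} bound $\Exp[|\mathcal{L}|]\le t/2$; if the constant hidden in $N=\Theta(D^2q^2t)$ drives the mean well below $t$, repair this with a two-case split (if $\Exp[|\mathcal{L}|]\le t/(2e)$ use $\Pr[|\mathcal{L}|\ge t]\le(e\Exp[|\mathcal{L}|]/t)^{t}\le 2^{-t}$; otherwise $e^{-\Exp[|\mathcal{L}|]/3}\le e^{-t/(6e)}$), after which $t=\Theta(q^2\lambda)$ gives negligibility as you say. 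It is also worth knowing that the original \cite{C:GorVaiWee12} argument you are re-deriving avoids negative association over element indicators altogether: it bounds each pairwise intersection $|\Gamma_i\cap\Gamma_{i'}|$ by a hypergeometric tail bound and union-bounds over the $\binom{q}{2}$ pairs, using $|\mathcal{L}|\le\sum_{i<i'}|\Gamma_i\cap\Gamma_{i'}|$, which is slightly more elementary, though your global argument reaches the bound in one shot.
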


\begin{proposition}\label{prop:hyb_1_hyb_2_bounded_fe}
\begin{align}
    \Pr[\sfhyb{1}{}=1]=\Pr[\sfhyb{2}{}=1].
\end{align}
\end{proposition}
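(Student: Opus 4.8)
The plan is to show that $\sfhyb{1}{}$ and $\sfhyb{2}{}$ are in fact \emph{identically} distributed, so that the two acceptance probabilities coincide exactly with no loss. First I would note that the two experiments differ only in how the polynomials $\xi_1,\dots,\xi_S$ (together with the bookkeeping polynomials $\eta_1,\dots,\eta_q$) are generated: the random sets $\{\Gamma_i\}_{i\in[q]}$ and $\{\Delta_i\}_{i\in[q]}$, the polynomials $\mu_1,\dots,\mu_\ell$, the two abort conditions (on $|\mathcal{L}|$ and on each $\Delta_i$ having a private index), and the rule by which the honest ciphertexts $\one.\ct_i$, keys $\one.\sk_{C_j,\Delta_j,i}$, verification keys $\one.\vk_i$, and final verification outcome are computed from these data are all verbatim the same. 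Hence it suffices to prove that, conditioned on any non-aborting choice of $\{\Gamma_i\},\{\Delta_i\}$ and any fixed $\mu_1,\dots,\mu_\ell$, the honest sampling of $\sfhyb{1}{}$ and the $\Sim_1$-style sampling of $\sfhyb{2}{}$ induce the same joint law on $(\xi_1,\dots,\xi_S,\eta_1,\dots,\eta_q)$; everything downstream is a fixed deterministic-plus-fresh-randomness function of these polynomials and is computed identically in both hybrids.

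The heart of the argument is a change of variables. Conditioned on non-abort, for every $j$ the distinguished index $a^*_j\in\Delta_j\setminus(\Delta_0\cup\cdots\cup\Delta_{j-1})$ used by $\Sim_1$ exists, and these indices are pairwise distinct since $a^*_{j'}\notin\Delta_j$ whenever $j'>j$. Fixing all $\xi_a$ with $a\notin\{a^*_1,\dots,a^*_q\}$ (sampled identically—uniform degree-$Dt$ with constant term $0$—in both hybrids), I would study the affine map $\Phi$ sending $(\xi_{a^*_1},\dots,\xi_{a^*_q})$ to $(\eta_1,\dots,\eta_q)$ via $\eta_j=C_j(\mu_1,\dots,\mu_\ell)+\sum_{a\in\Delta_j}\xi_a$. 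Because $a^*_j\in\Delta_j$ while $a^*_{j'}\notin\Delta_j$ for $j'>j$, the linear part of $\Phi$, ordered by $j=1,\dots,q$, is lower triangular with all diagonal entries equal to $1$, hence invertible over the space of degree-$Dt$ polynomials; moreover $\Phi$ matches the constant-term constraints, since $\xi_{a^*_j}(0)=0$ for all $j$ holds iff $\eta_j(0)=C_j(\mu_1(0),\dots,\mu_\ell(0))=C_j(x)$ for all $j$. Therefore pushing the uniform law on the $\xi_{a^*_j}$ (constant term $0$) through $\Phi$ yields exactly the uniform law on the $\eta_j$ (constant term $C_j(x)$); equivalently, $\sfhyb{2}{}$'s procedure of drawing the $\eta_j$ uniformly and solving through $\Phi^{-1}$ reproduces the uniform law on the $\xi_{a^*_j}$ of $\sfhyb{1}{}$.

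Finally I would reconcile the two regimes treated differently by $\Sim_1$. For $j\le q^*$ the value $C_j(x)$ is known, $\eta_j$ is a full uniform degree-$Dt$ polynomial with the correct constant term, and $\Phi$ applies to full polynomials. For $j>q^*$, $\Sim_1$ fixes $\eta_j$ only on the $\le t$ points of $\mathcal{L}$; at each such point the relation is an invertible affine shift between $\eta_j(i)$ and $\xi_{a^*_j}(i)$, so drawing $\eta_j|_{\mathcal{L}}$ uniformly is the same as drawing $\xi_{a^*_j}|_{\mathcal{L}}$ uniformly and needs no knowledge of $C_j$. Since $0\notin\mathcal{L}$ and $|\mathcal{L}|+1\le t+1\le Dt+1$, completing $\xi_{a^*_j}$ off $\mathcal{L}$ by an independent uniform choice (as $\sfhyb{2}{}$ does to build the honest ciphertexts) gives a uniform degree-$Dt$ polynomial with constant term $0$, which is exactly the marginal of the honest $\xi_{a^*_j}$ in $\sfhyb{1}{}$. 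I expect the main obstacle to be precisely this bookkeeping: confirming the triangular (hence invertible) structure of $\Phi$, checking that the constant-term constraints correspond under $\Phi$, and matching the point-wise sampling on $\mathcal{L}$ with the full-polynomial sampling without miscounting degrees of freedom. Once these are in place the two hybrids are perfectly indistinguishable, and $\Pr[\sfhyb{1}{}=1]=\Pr[\sfhyb{2}{}=1]$ holds with no appeal to any computational assumption.
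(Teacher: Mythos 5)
Your proposal is correct and is essentially the paper's own argument: the paper proves \cref{prop:hyb_1_hyb_2_bounded_fe} by observing that in $\sfhyb{1}{}$ the $\xi_{a^*}$ are sampled and the $\eta_j$ defined by the relation, while $\Sim_1$ samples the $\eta_j$ (or their values on $\mathcal{L}$ for $j>q^*$) and solves for $\xi_{a^*}$, and ``reversing the order of how the polynomials are chosen produces the same distribution.'' Your triangular change-of-variables map $\Phi$ with unit diagonal, the matching of constant-term constraints, and the uniform completion of $\xi_{a^*_j}$ off $\mathcal{L}$ are exactly the bookkeeping the paper leaves implicit, worked out correctly.
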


\begin{proposition}\label{prop:hyb_2_hyb_3_bounded_fe}
If $\Sigma_{\one}$ is 1-bounded certified everlasting adaptive secure,
\begin{align}
    \abs{\Pr[\sfhyb{2}{}=1]-\Pr[\sfhyb{3}{}=1]}\leq \negl(\lambda).
\end{align}
\end{proposition}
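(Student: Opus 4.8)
The plan is to interpolate between $\sfhyb{2}{}$ and $\sfhyb{3}{}$ by switching the instances indexed by $i\in[N]\setminus\mathcal{L}$ from honestly generated to simulated one at a time, bounding each single switch by a reduction to the 1-bounded certified everlasting adaptive security of $\Sigma_{\one}$ (\cref{def:ever_ada_security_cert_ever_func_sim}). The crucial structural observation, which is exactly why $\mathcal{L}\seteq\bigcup_{i\neq i'}(\Gamma_i\cap\Gamma_{i'})$ was singled out, is that every index $i\notin\mathcal{L}$ lies in at most one set $\Gamma_j$; hence the corresponding instance $\one.\MPK_i$ receives at most one functional key query over the whole experiment, so that the \emph{1}-bounded game is applicable to it. Fixing an ordering $i_1,\dots,i_L$ of $[N]\setminus\mathcal{L}$ (with $L\seteq|[N]\setminus\mathcal{L}|$), I would define sub-hybrids $H^{(0)},\dots,H^{(L)}$ where in $H^{(k)}$ the instances $i_1,\dots,i_k$ are produced by $\ONE.(\Sim_1,\Sim_2,\Sim_3)$ and $i_{k+1},\dots,i_L$ honestly, while every index of $\mathcal{L}$ is treated honestly throughout, matching both endpoints. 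Then $H^{(0)}=\sfhyb{2}{}$ and $H^{(L)}=\sfhyb{3}{}$, so it suffices to bound $\abs{\Pr[H^{(k-1)}=1]-\Pr[H^{(k)}=1]}$ for each $k$.

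Next I would build the reduction $\cB$ against $\expc{\Sigma_{\one},\cB}{cert}{ever}{adapt}(\secp,\beta)$ that embeds the external challenge at position $i_k$. Concretely, $\cB$ receives $\one.\MPK_{i_k}$ from its challenger, samples all other master keys itself, and answers $\cA_1$ as follows. For a key query $C_j$ with $i_k\in\Gamma_j$, $\cB$ forwards the function $G_{C_j,\Delta_j}$ to its own challenger (as a pre- or post-challenge key query, matching when $\cA_1$ asks it) and plants the returned key into coordinate $i_k$, generating the remaining coordinates $i\in\Gamma_j$ honestly or via the simulator according to the current hybrid's rule. When $\cA_1$ submits the challenge plaintext $x$, $\cB$ samples the Shamir polynomials $\mu_1,\dots,\mu_\ell,\xi_1,\dots,\xi_S$ by the common sampling rule of $\sfhyb{2}{}$/$\sfhyb{3}{}$ (the $\Sim_1$ rule for $\xi,\eta$, honest constant terms $x[i]$ for $\mu$), and sends the single plaintext $(\mu_1(i_k),\dots,\mu_\ell(i_k),\xi_1(i_k),\dots,\xi_S(i_k))$ as its challenge message, planting the answer $\one.\ct_{i_k}$ into the $i_k$-th slot of $\ct$. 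At the certificate stage $\cB$ forwards $\one.\cert_{i_k}$ to its challenger, verifies the remaining $\one.\cert_i$ itself, and relays $\MSK$ to $\cA_2$ only if all checks pass. When $\beta=0$ the instance $i_k$ is honest and $\cB$ perfectly simulates $H^{(k-1)}$; when $\beta=1$ it is simulated and $\cB$ perfectly simulates $H^{(k)}$, so a non-negligible gap contradicts the 1-bounded certified everlasting adaptive security of $\Sigma_{\one}$.

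The hard part will be the adaptive bookkeeping: $\cB$ must issue its single key query to instance $i_k$ at exactly the right moment (before or after the challenge), so that the external $\Sim_1/\Sim_2/\Sim_3$ are invoked in precisely the order and with precisely the inputs that the global simulator of $\sfhyb{3}{}$ expects. In particular the function value $\eta_j(i_k)=G_{C_j,\Delta_j}$ evaluated on the planted plaintext, which the external challenger computes for its $\mathcal{V}$, must coincide with the value dictated by the relation $\eta_j(\cdot)=C_j(\mu_1(\cdot),\dots,\mu_\ell(\cdot))+\sum_{a\in\Delta_j}\xi_a(\cdot)$; this holds because $i_k\notin\mathcal{L}$, so the relevant polynomials are fully and consistently sampled at $i_k$. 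The verification key $\one.\vk_{i_k}$ emitted by the external $\Sim_3$ at the very end must likewise align with the global $\Sim_3$ call. I also have to carry through the abort conditions introduced in $\sfhyb{1}{}$ (aborting when $|\mathcal{L}|>t$ or when some $\Delta_i\setminus\bigcup_{j\neq i}\Delta_j=\emptyset$); these are syntactic events depending only on the sets $\Gamma_i,\Delta_i$, which $\cB$ can evaluate without touching the embedded instance, so they do not disturb the reduction. Chaining the $L$ single-instance bounds by the triangle inequality then gives $\abs{\Pr[\sfhyb{2}{}=1]-\Pr[\sfhyb{3}{}=1]}\leq L\cdot\negl(\secp)\leq\negl(\secp)$, since $L\leq N=\poly(\secp)$.
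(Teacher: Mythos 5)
Your proposal is correct and takes essentially the same route as the paper: the paper likewise defines hybrids $\sfhyb{2}{s}$ for $s\in[N]$ that switch the instances in $[s]\setminus\mathcal{L}$ to simulated one at a time (observing that steps with $s\in\mathcal{L}$ are trivially identical), and bounds each step by embedding the external 1-bounded certified everlasting adaptive challenge at position $s$ — forwarding $G_{C_j,\Delta_j}$ as the single key query permitted by $s\notin\mathcal{L}$, sending $(\mu_1(s),\dots,\mu_\ell(s),\xi_1(s),\dots,\xi_S(s))$ as the challenge plaintext, and relaying $\one.\cert_s$ and the master secret keys exactly as you describe. Your indexing of the sub-hybrids over $[N]\setminus\mathcal{L}$ rather than all of $[N]$ is only a cosmetic difference.
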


\begin{proposition}\label{prop:hyb_3_hyb_4_bounded_fe}
\begin{align}
    \Pr[\sfhyb{3}{}=1]=\Pr[\sfhyb{4}{}=1].
\end{align}
\end{proposition}

\begin{proof}[Proof of \cref{prop:hyb_0_hyb_1_bounded_fe}]
Let $\sfhyb{0}{'}$ be the experiment identical to $\sfhyb{0}{}$ except that the challenger generates a set $\Gamma_i\in[N]$ and $\Delta_{i}\in [S]$ for $i\in\{q^*+1,\cdots ,q\}$
in step~\ref{step:enc_bounded_fe}.
It is clear that $\Pr[\sfhyb{0}{}=1]=\Pr[\sfhyb{0}{'}=1]$. 

Let $\sfhyb{0}{*}$ be the experiment identical to $\sfhyb{0}{'}$ except that it outputs $\bot$ if $|\mathcal{L}|>t$.
It is clear that $\Pr[\sfhyb{0}{'}=1\wedge (|\mathcal{L}|\leq t)]=\Pr[\sfhyb{0}{*}=1\wedge (|\mathcal{L}|\leq t)]$.
Hence, it holds that
\begin{align}
    \abs{\Pr[\sfhyb{0}{'}=1]-\Pr[\sfhyb{0}{*}=1]}\leq \Pr[|\mathcal{L}|>t]
\end{align}
from \cref{lem:defference}.

Let $\mathsf{Collide}$ be the event that
there exists some $i\in[q]$ such that $\Delta_i \setminus (\bigcup_{j\neq i}\Delta_j)=\emptyset$.
$\sfhyb{0}{*}$ is identical to $\sfhyb{1}{}$ when $\mathsf{Collide}$ does not occur.
Hence, it is clear that $\Pr[\sfhyb{0}{*}=1\wedge\overline{\mathsf{Collide}}]=\Pr[\sfhyb{1}{}=1\wedge \overline{\mathsf{Collide}}]$.
Therefore, it holds that 
\begin{align}
    \abs{\Pr[\sfhyb{0}{*}=1]-\Pr[\sfhyb{1}{}=1]}\leq \Pr[\mathsf{Collide}]
\end{align}
from \cref{lem:defference}.

From the discussion above, we have 
\begin{align}
    \abs{\Pr[\sfhyb{0}{}=1]-\Pr[\sfhyb{1}{}=1]}\leq \Pr[|\mathcal{L}|>t]+\Pr[\mathsf{Collide}].
\end{align}
The following \cref{lem:small_pair,lem:coverfree} shows that 
$\Pr[|\mathcal{L}|>t]\leq 2^{-\Omega(\lambda)}$
and $\Pr[\mathsf{Collide}]\leq q2^{-\Omega(\lambda)}$, which completes the proof.
%\footnote{The formal proof of $\Pr[|\mathcal{L}|>t]\leq 2^{-\Omega(\lambda)}$ is given in \cref{sec:prob_1} and that of $\Pr[\mathsf{Collide}]\leq q2^{-\Omega(\lambda)}$ is given in \cref{sec:prob_2}.}.
\end{proof}

\begin{lemma}[\cite{C:GorVaiWee12}]\label{lem:small_pair}
Let $\Gamma_1,\cdots ,\Gamma_q\subseteq [N]$ be randomly chosen subsets of size $tD+1$.
Let $t=\Theta(q^2\lambda)$ and $N=\Theta(D^2q^2 t)$.
Then, 
\begin{align}
    \Pr\left[\left|\bigcup_{i\neq i'}(\Gamma_i\cap \Gamma_i)\right|>t\right]\leq2^{-\Omega(\lambda)}
\end{align}
where the probability is over the random choice of the subsets $\Gamma_1,\cdots,\Gamma_q$.
\end{lemma}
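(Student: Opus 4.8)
The plan is to prove this purely combinatorial statement by a direct union bound over ``collision configurations.'' First I would reinterpret the quantity of interest: the set $\mathcal{L}=\bigcup_{i\neq i'}(\Gamma_i\cap\Gamma_{i'})$ is precisely the set of coordinates $x\in[N]$ that are contained in at least two of the $q$ random subsets $\Gamma_1,\dots,\Gamma_q$. Consequently, the event $|\mathcal{L}|>t$ implies that there exist $t+1$ distinct coordinates $x_1,\dots,x_{t+1}$ together with, for each $x_k$, a witnessing pair of indices $(i_k,i_k')$ with $i_k\neq i_k'$ such that $x_k\in\Gamma_{i_k}\cap\Gamma_{i_k'}$.

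Next I would bound the probability of this event by summing over all such configurations. The number of ways to choose the $(t+1)$-element coordinate set is $\binom{N}{t+1}$, and the number of ways to assign a witnessing pair of indices to each chosen coordinate is $\binom{q}{2}^{t+1}$. For a fixed configuration, grouping the required containments by subset gives, for each index $i$, a set $S_i$ of coordinates that $\Gamma_i$ must contain, where $\sum_i|S_i|=2(t+1)$ since each coordinate contributes to exactly two subsets. Because the $\Gamma_i$ are drawn independently and each is a uniformly random size-$(tD+1)$ subset, the probability that all the containments hold factorizes across $i$, and using $\Pr[S_i\subseteq\Gamma_i]=\binom{N-|S_i|}{(tD+1)-|S_i|}/\binom{N}{tD+1}\le\big((tD+1)/N\big)^{|S_i|}$ it is at most $\big((tD+1)/N\big)^{2(t+1)}$.

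Combining these pieces and bounding the binomial coefficients via $\binom{N}{t+1}\le(eN/t)^{t+1}$ and $\binom{q}{2}^{t+1}\le(q^2/2)^{t+1}$ yields
\[
\Pr\!\left[|\mathcal{L}|>t\right]\;\le\;\binom{N}{t+1}\binom{q}{2}^{t+1}\left(\frac{tD+1}{N}\right)^{2(t+1)}\;\le\;\left(\frac{eN}{t}\cdot\frac{q^2}{2}\cdot\frac{(tD+1)^2}{N^2}\right)^{t+1}.
\]
I would then substitute the parameter settings $tD+1=\Theta(tD)$ and $N=c\,D^2q^2t$: the bracketed base simplifies to $\Theta(1)/c$, so taking the hidden constant $c$ in $N=\Theta(D^2q^2t)$ large enough (larger than a fixed absolute constant) forces the base to be at most $1/2$. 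This gives $\Pr[|\mathcal{L}|>t]\le 2^{-(t+1)}$, and since $t=\Theta(q^2\secp)=\Omega(\secp)$ we conclude $\Pr[|\mathcal{L}|>t]\le 2^{-\Omega(\secp)}$, as claimed.

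The main obstacle is essentially bookkeeping rather than conceptual: one must verify that the constant hidden in $N=\Theta(D^2q^2t)$ is genuinely chosen large enough that the per-configuration probability beats the configuration count, i.e.\ that the bracketed base is strictly below $1$ after absorbing the constant-factor slack from approximating $tD+1$ by $tD$. The only step that requires a little care is the factorization of the containment probability, which relies both on the independence of the $\Gamma_i$ across $i$ and on the elementary hypergeometric estimate $\Pr[S\subseteq\Gamma_i]\le\big((tD+1)/N\big)^{|S|}$; everything else is standard manipulation of binomial coefficients.
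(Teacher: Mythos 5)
Your proof is correct, but it takes a genuinely different route from the one the paper relies on: the paper gives no proof of this lemma at all, deferring to \cite{C:GorVaiWee12}, and the GVW12 argument is a per-pair concentration bound rather than your global counting argument. Concretely, GVW12 fix a single pair $i\neq i'$, observe that $\mathbb{E}[|\Gamma_i\cap\Gamma_{i'}|]=(tD+1)^2/N=\Theta(t/q^2)=\Theta(\lambda)$ under the stated parameters, apply a Chernoff-type tail bound (valid for hypergeometric/negatively associated indicators) to get $\Pr[|\Gamma_i\cap\Gamma_{i'}|>O(t/q^2)]\leq 2^{-\Omega(\lambda)}$, and then union-bound over the $\binom{q}{2}$ pairs, so that with overwhelming probability $|\mathcal{L}|\leq\sum_{i<i'}|\Gamma_i\cap\Gamma_{i'}|\leq\binom{q}{2}\cdot O(t/q^2)\leq t$. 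Your argument instead union-bounds over witness configurations (a coordinate set of size $t+1$ plus a witnessing pair per coordinate) and uses only the elementary containment estimate $\Pr[S\subseteq\Gamma_i]\leq\bigl((tD+1)/N\bigr)^{|S|}$ together with independence of the $\Gamma_i$; the grouping $\sum_i|S_i|=2(t+1)$ and the resulting base $\Theta(q^2tD^2/N)$ are handled correctly, and the bound holds trivially even when some $|S_i|>tD+1$. What each approach buys: yours is fully self-contained and avoids invoking concentration inequalities for dependent sampling, at the cost of the configuration-counting bookkeeping; GVW12's is modular and yields the stronger per-pair statement $|\Gamma_i\cap\Gamma_{i'}|\leq O(t/q^2)$, which is conceptually cleaner and reusable elsewhere. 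Your caveat about needing the hidden constant in $N=\Theta(D^2q^2t)$ to be sufficiently large is not a gap relative to the paper: the $\Theta(\cdot)$ parameter settings are chosen by the construction, and the GVW12 proof needs the analogous slack in exactly the same way.
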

\begin{lemma}[\cite{C:GorVaiWee12}]\label{lem:coverfree}
Let $\Delta_1,\cdots,\Delta_q\subseteq[S]$ be randomly chosen subsets of size $v$.
Let $v(\lambda)=\Theta(\lambda)$ and $S(\lambda)=\Theta(vq^2)$.
Let $\mathsf{Collide}$ be the event that there exists some $i\in[q]$ such that $\Delta_i\backslash(\bigcup_{j\neq i}\Delta_j)=\emptyset$.
Then, we have
\begin{align}
    \Pr\left[\mathsf{Collide}\right]\leq q2^{-\Omega(\lambda)}
\end{align}
where the probability is over the random choice of subsets $\Delta_1,\cdots ,\Delta_q$.
\end{lemma}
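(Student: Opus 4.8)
The plan is to establish the bound by a union bound over the $q$ subsets, reducing the claim to a single-set estimate: for a fixed index $i$, the probability that the randomly chosen size-$v$ subset $\Delta_i$ happens to be entirely swallowed by the union of the remaining $q-1$ subsets. First I would write
\begin{align}
\Pr[\mathsf{Collide}] \le \sum_{i=1}^{q} \Pr\left[\Delta_i \subseteq \bigcup_{j\neq i}\Delta_j\right],
\end{align}
so that it suffices to show each summand is at most $2^{-\Omega(\lambda)}$.

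Next I would fix $i$ and condition on the choice of all the other subsets $\{\Delta_j\}_{j\neq i}$. Writing $U \seteq \bigcup_{j\neq i}\Delta_j$, we have $|U| \le (q-1)v < qv$ deterministically, since each $\Delta_j$ has size $v$. Because $\Delta_i$ is sampled uniformly among the size-$v$ subsets of $[S]$ independently of the other sets, the conditional probability that $\Delta_i \subseteq U$ equals $\binom{|U|}{v}/\binom{S}{v}$ (and is $0$ when $|U|<v$). I would then estimate this ratio as a telescoping product,
\begin{align}
\frac{\binom{|U|}{v}}{\binom{S}{v}} = \prod_{k=0}^{v-1}\frac{|U|-k}{S-k} \le \left(\frac{|U|}{S}\right)^v \le \left(\frac{qv}{S}\right)^v,
\end{align}
where the first inequality uses that each factor $(|U|-k)/(S-k)$ is nonincreasing in $k$ (as $|U|\le S$), so it is maximized at $k=0$. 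Since this bound is independent of the particular outcome of $\{\Delta_j\}_{j\neq i}$, the same estimate survives averaging over the conditioning.

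Finally I would substitute the parameter setting $S(\lambda)=\Theta(vq^2)$. Fixing the hidden constant so that $S \ge 2vq^2$, we obtain $qv/S \le 1/(2q) \le 1/2$, whence $\left(qv/S\right)^v \le 2^{-v}$. With $v(\lambda)=\Theta(\lambda)$ this yields $\Pr[\Delta_i \subseteq \bigcup_{j\neq i}\Delta_j] \le 2^{-v} = 2^{-\Omega(\lambda)}$ for every $i$, and plugging back into the union bound gives $\Pr[\mathsf{Collide}] \le q\,2^{-\Omega(\lambda)}$, as claimed. The only steps requiring genuine care are the binomial-ratio inequality and the verification that the constant hidden in $S=\Theta(vq^2)$ is large enough to push $qv/S$ strictly below a constant less than $1$; the remainder is a routine union bound and so I do not expect any real obstacle beyond bookkeeping the constants.
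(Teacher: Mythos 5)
Your proof is correct. The paper states this lemma without proof, importing it from \cite{C:GorVaiWee12}, and your argument---a union bound over $i$ followed by conditioning on $\{\Delta_j\}_{j\neq i}$ and the estimate $\binom{|U|}{v}/\binom{S}{v}=\prod_{k=0}^{v-1}\frac{|U|-k}{S-k}\le\left(\frac{qv}{S}\right)^{v}\le(2q)^{-v}\le 2^{-v}=2^{-\Omega(\lambda)}$---is exactly the standard proof from that source (the only implicit convention, which you correctly flag, is that the constant hidden in $S(\lambda)=\Theta(vq^{2})$ is chosen by the scheme designer large enough, e.g.\ $S\ge 2vq^{2}$).
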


\begin{proof}[Proof of \cref{prop:hyb_1_hyb_2_bounded_fe}]
In the encryption in $\sfhyb{1}{}$, $\xi_{a^*}$ is chosen at random and $\eta_j(\cdot)$ is defined by the relation.
$\Sim$ essentially chooses $\eta_j(\cdot)$ at random which defines $\xi_{a^*}$.
It is easy to see that reversing the order of how the polynomials are chosen produces the same distribution.
\end{proof}

\begin{proof}[Proof of \cref{prop:hyb_2_hyb_3_bounded_fe}]
To prove the proposition, let us define a hybrid experiment $\sfhyb{2}{s}$ for each $s\in[N]$ as follows.
\begin{description}
\item[$\sfhyb{2}{s}$:] This is identical to $\sfhyb{2}{}$ except for the following three points.
First, the challenger generates $\{\one.\ct_i\}_{i\in[s]\setminus\mathcal{L}}$ as in the simulator $\Sim_1$.
Second, the challenger generates $\{\one.\sk_{C_j,\Delta_j,i}\}_{i\in\Gamma_j\cap[s]}$ for $j\in\{q^*+1,\cdots ,q'\}$ as in the simulator $\Sim_2$,
where $q'$ is the number of key queries that the adversary makes in total.
Third, the challenger generates $\{\one.\vk_i\}_{i\in[s]\setminus\mathcal{L}}$ as in the simulator $\Sim_3$.
\end{description}
Let us denote $\sfhyb{2}{}$ as $\sfhyb{2}{0}$.
It is clear that $\Pr[\sfhyb{2}{N}=1]=\Pr[\sfhyb{3}{}=1]$.
Furthermore, we can show that
\begin{align}
    \abs{\Pr[\sfhyb{2}{s-1}=1]-\Pr[\sfhyb{2}{s}=1]}\leq \negl(\lambda)
\end{align}
for $s\in[N]$. (Its proof is given later.)
From these facts, we obtain \cref{prop:hyb_2_hyb_3_bounded_fe}.

Let us show the remaining one.
In the case $s\in\mathcal{L}$, it is clear that $\sfhyb{2}{s-1}$ is identical to $\sfhyb{2}{s}$.
Hence, we consider the case $s\notin\mathcal{L}$.
To show the inequality above, let us assume that $\abs{\Pr[\sfhyb{2}{s-1}=1]-\Pr[\sfhyb{2}{s}=1]}$ is non-negligible.
Then, we can construct an adversary $\cB$ that can break the 1-bounded certified everlasting adaptive security of $\Sigma_{\one}$ as follows.

\begin{enumerate}
    \item $\cB$ receives $\one.\MPK$ from the challenger of $\expc{\Sigma_{\mathsf{one}},\cA}{cert}{ever}{adapt}(\secp,b)$.
    $\cB$ sets $\one.\MPK_s\seteq \one.\MPK$.
    \item $\cB$ generates $(\one.\MPK_i,\one.\MSK_i)\la\ONE.\Setup(1^\secp)$ for all $i\in[N] \setminus s$, and sends $\{\one.\MPK_i\}_{i\in[N]}$ to $\cA_1$.
    \item $\cA_1$ is allowed to call key queries at most $q$ times.
    For the $j$-th key query, $\cB$ receives an function $C_{j}$ from $\cA_1$, generates a uniformly random set $\Gamma_{j}\in[N]$ of size $Dt+1$ and $\Delta_{j}\in [S]$ of size $v$.
    For $i\in \Gamma_{j} \setminus s$, $\cB$ generates $\mathsf{one}.\sk_{C_{j},\Delta_{j},i}\la\mathsf{ONE}.\keygen(\one.\MSK_i,G_{C_{j},\Delta_{j}})$.
    If $s\in\Gamma_j$, $\cB$ sends $G_{C_j,\Delta_j}$ to the challenger,
    receives $\one.\sk_{C_j,\Delta_j,s}$ from the challenger,
    and sends $(\Gamma_{j},\Delta_{j},\{\one.\sk_{C_{j},\Delta_{j},i}\}_{i\in\Gamma_{j}})$ to $\cA_1$.
    Let $q^*$ be the number of times that $\cA_1$ has called key queries in this step.
    \item $\cA_1$ chooses $x\in\Ms$, and sends $x$ to $\cB$.
    \item\label{step:enc_bounded_fe_hyb_2} $\cB$ generates uniformly random set $\Gamma_i\in[N]$ of size $Dt+1$ and $\Delta_{i}\in [S]$ of size $v$ for $i\in\{q^*+1,\cdots,q\}$.
    $\cB$ generates a random degree $t$ polynomial $\mu_i(\cdot)$ whose constant term is $x[i]$ for $i\in[\ell]$, and $\xi_1,\cdots,\xi_S, \eta_1,\cdots,\eta_q$ as in the simulator $\Sim_1$.
    For $i\in[s-1] \setminus \mathcal{L}$, $\cB$ generates $\one.\ct_i$ as in the simulator $\Sim_1$.
    For $i\in\{s+1,\cdots N\}\cup \mathcal{L}$, $\cB$ generates $(\one.\vk_i,\one.\ct_i)\la \ONE.\Enc(\one.\MPK_i, (\mu_1(i),\cdots,\mu_\ell(i),\xi_1(i),\cdots, \xi_S(i)))$.
    $\cB$ sends $\mu_1(s),\cdots,\mu_\ell(s),\xi_1(s),\cdots ,\xi_S(s)$ to the challenger, and receives $\one.\ct_s$ from the challenger. 
    $\cB$ sends $\{\one.\ct_i\}_{i\in[N]}$ to $\cA_1$.
    \item $\cA_1$ is allowed to call key queries at most $q-q^*$ times.
    For the $j$-th key query, $\cB$ receives an function $C_j$ from $\cA_1$.
    For $i\in\Gamma_j \setminus [s]$, $\cB$ generates $\one.\sk_{C_j,\Delta_j,i}\la\ONE.\keygen(\one.\MSK_i,G_{C_j,\Delta_j})$.
    For $i\in \Gamma_j\wedge[s-1]$, $\cB$ generates $\one.\sk_{C_j,\Delta_j,i}$ as in the simulator $\Sim_2$.
    If $s\in\Gamma_j$, $\cB$ sends $G_{C_j,\Delta_j}$ to the challenger, and receives $\one.\sk_{C_j,\Delta_j,s}$ from the challenger.
    $\cB$ sends $(\Gamma_j,\Delta_j,\{\one.\sk_{C_j,\Delta_j,i}\}_{i\in \Gamma_j})$ to $\cA_1$.
    \item For $i\in[s-1] \setminus \mathcal{L}$, $\cB$ generates $\one.\vk_i$ as in the simulator $\Sim_3$
    \footnote{
    For $i\in\{s+1,\cdots N\}\cup \mathcal{L}$, $\cB$ generated $\one.\vk_i$ in step~\ref{step:enc_bounded_fe_hyb_2}.
    }.
    \item $\cA_1$ sends $\{\one.\cert_i\}_{i\in[N]}$ to $\cB$ and its internal state to $\cA_2$.
    \item $\cB$ sends $\one.\cert_s$ to the challenger, and receives $\one.\MSK_s$ or $\bot$ from the challenger.
    $\cB$ computes $\ONE.\Vrfy(\one.\vk_i,\allowbreak \one.\cert_i)$ for all $i\in[N] \setminus s$.
    If the results are $\top$ and $\cB$ receives $\one.\MSK_s$ from the challenger, $\cB$ sends $\{\one.\MSK_i\}_{i\in[N]}$ to $\cA_2$.
    Otherwise, $\cB$ aborts.
    \item $\cA_2$ outputs $b'$.
    \item $\cB$ outputs $b'$.
\end{enumerate}
It is clear that $\Pr[1\la\cB|b=0]=\Pr[\sfhyb{2}{s-1}=1]$ and $\Pr[1\la\cB|b=1]=\Pr[\sfhyb{2}{s}=1]$.
By assumption, $\abs{\Pr[\sfhyb{2}{s-1}=1]-\Pr[\sfhyb{2}{s}=1]}$ is non-negligible, and therefore $\abs{\Pr[1\la\cB|b=0]-\Pr[1\la\cB|b=1]}$ is non-negligible, which contradicts the 1-bounded certified everlasting adaptive security of $\Sigma_{\one}$.
\end{proof}

\begin{proof}[Proof of \cref{prop:hyb_3_hyb_4_bounded_fe}]
In $\sfhyb{3}{}$, the polynomials $\mu_1,\cdots ,\mu_\ell$ are chosen with constant terms $x_1,\cdots ,x_\ell$, respectively.
In $\sfhyb{4}{}$, these polynomials are now chosen with $0$ constant terms.
This only affects the distribution of $\mu_1,\cdots, \mu_\ell$ themselves and polynomials $\xi_1,\cdots,\xi_S$.
Moreover, only the evaluations of these polynomials on the points in $\mathcal{L}$ affect the outputs of the experiments.
Now observe that:
\begin{itemize}
    \item The distribution of the values $\{\mu_1(i),\cdots, \mu_\ell(i)\}_{i\in\mathcal{L}}$ are identical to both $\sfhyb{3}{}$ and $\sfhyb{4}{}$.
    This is because in both experiments, we choose these polynomials to be random degree $t$ polynomials (with different constraints in the constant term),
    so their evaluation on the points in $\mathcal{L}$ are identically distributed, since $|\mathcal{L}|\leq t$.
    \item The values $\{\xi_1(i),\cdots ,\xi_S(i)\}_{i\in\mathcal{L}}$ depend only on the values $\{\mu_1(i),\cdots,\mu_{\ell}(i)\}_{i\in\mathcal{L}}$.
\end{itemize}
\cref{prop:hyb_3_hyb_4_bounded_fe} follows from these observations.
\end{proof}

\fi

\ifnum\cameraready=1
\else
\ifnum\submission=1
\newpage
\setcounter{tocdepth}{1}
\tableofcontents
\else
\fi
\fi

\end{document}